\tikzstyle{vertex} = [fill,shape=circle,node distance=80pt]
\tikzstyle{edge} = [fill,opacity=.5,fill opacity=.5,line cap=round, line join=round, line width=50pt]
\tikzstyle{elabel} =  [fill,shape=circle,node distance=30pt]
\def\ignore #1 {}
\newcommand{\norm}[1]{\left\|\,#1\,\right\|}       
\newcommand{\enorm}[1]{\norm{#1}_{\mathrm{2}}}      
\newcommand{\set}[1]{{\left\{#1\right\}}}    
\newcommand{\abs}[1]{\left\lvert #1 \right\rvert}
\newcommand{\reals}{{\mathbb R}}
\DeclareMathOperator{\rank}{rank}
\DeclareMathOperator{\Span}{Span}
\DeclareMathOperator{\image}{Im}
\DeclareMathOperator{\poly}{poly}
\DeclareMathOperator{\polylog}{polylog}
\newcommand{\CC}{\mathbb{C}}
\newcommand{\RR}{\reals}
\newcommand{\dij}{d_{i,j}}
\newcommand{\cj}{c_j}
\newcommand{\whV}{\widehat{V}}
\newcommand{\whG}{\widehat{G}}
\newcommand{\whE}{\widehat{E}}
\newcommand{\wte}{\widetilde{e}}
\newcommand{\wtv}{\widetilde{v}}
\newcommand{\whu}{\widehat{u}}
\newcommand{\whv}{\widehat{v}}
\def\ket#1{ | #1 \rangle}
\def\bra#1{{\langle #1 | }}
\newcommand{\ketbra}[2]{\ket{#1}\!\bra{#2}}        
\newcommand{\braket}[2]{\mbox{$\langle #1  | #2 \rangle$}}
\newcommand{\lin}{\mathcal{L}}
\newcommand{\TFNP}{\textup{TFNP}}
\newcommand{\PP}{\mathbb P}
\newcommand{\PPA}{\textup{PPA}}
\newcommand{\PPAD}{\textup{PPAD}}
\newcommand{\EOPL}{\textup{EOPL}}
\newcommand{\SOPL}{\textup{SOPL}}
\newcommand{\PPADS}{\textup{PPADS}}
\newcommand{\CLS}{\textup{CLS}}
\newcommand{\PLS}{\textup{PLS}}
\newcommand{\NP}{\textup{NP}}
\newcommand{\QMA}{\textup{QMA}}
\newcommand{\QMAo}{\QMA_1}
\newcommand{\Bez}{B\'{e}zout}
\newcommand{\PS}{PRODSAT}
\newtheorem{theorem}{Theorem}
\newtheorem{proposition}[theorem]{Proposition}
\newtheorem{lemma}[theorem]{Lemma}
\newtheorem{corollary}[theorem]{Corollary}
\newtheorem{observation}[theorem]{Observation}
\theoremstyle{definition}
\newtheorem{definition}[theorem]{Definition}
\newtheorem{rem}[theorem]{Remark}
\newtheorem{example}[theorem]{Example}
\newtheorem{fact}[theorem]{Fact}
\def\cH{{\mathcal{H}}}
\newcommand\ZZ{\mathbb{Z}}
\newcommand\QQ{\mathbb{Q}}
\newcommand{\MHS}{\textup{MHS}}
\newcommand{\MHSe}{\MHS(\epsilon)}
\newcommand{\MHSo}[1]{\MHS(#1)}
\newcommand{\SFTA}{\textup{SFTA}}
\newcommand{\bez}{d_{\textit{Béz}}}
\newcommand{\bmat}[1]{\begin{bmatrix}#1\end{bmatrix}}
\title{An unholy trinity: TFNP, polynomial systems, and the quantum satisfiability problem}
\author{Marco Aldi\footnote{Department of Mathematics and Applied Mathematics, Virginia Commonwealth University, USA. Email: maldi2@vcu.edu.} \and Sevag Gharibian\footnote{Department of Computer Science and Institute for Photonic Quantum Systems (PhoQS), Paderborn University, Germany. Email: \{sevag.gharibian, dorian.rudolph\}@upb.de.} \and Dorian Rudolph\footnotemark[2]}
\date{}
\begin{document}

\maketitle

\begin{abstract}
  The theory of Total Function NP (TFNP) and its subclasses says that, even if one is promised an efficiently verifiable proof \emph{exists} for a problem, \emph{finding} this proof can be intractable. Despite the success of the theory at showing intractability of problems such as computing Brouwer fixed points and Nash equilibria, subclasses of TFNP remain arguably few and far between. In this work, we define two new subclasses of TFNP borne of the study of complex polynomial systems: Multi-homogeneous Systems (MHS) and Sparse Fundamental Theorem of Algebra (SFTA). The first of these is based on \Bez's theorem from algebraic geometry, marking the first TFNP subclass based on an algebraic geometric principle. At the heart of our study is the computational problem known as Quantum SAT (QSAT) with a System of Distinct Representatives (SDR), first studied by [Laumann, L\"{a}uchli, Moessner, Scardicchio, and Sondhi 2010]. Among other results, we show that QSAT with SDR is MHS-complete, thus giving not only the first link between quantum complexity theory and TFNP, but also the first TFNP problem whose classical variant (SAT with SDR) is easy but whose quantum variant is hard. We also show how to embed the roots of a sparse, high-degree, univariate polynomial into QSAT with SDR, obtaining that SFTA is contained in a zero-error version of MHS. We conjecture this construction also works in the low-error setting, which would imply $\SFTA\subseteq \MHS$.
\end{abstract}

{\hypersetup{hidelinks}\tableofcontents}

\section{Introduction}~\label{scn:intro}
The genesis of this work consists of three elements: TFNP, \Bez's theorem, and the quantum satisfiability problem. As such, we begin by giving background on these three. The Fundamental Theorem of Algebra's role will then be introduced when stating our results in \Cref{sscn:results}.\\

\noindent \emph{The first element: TFNP.} The late 1980's and early 1990's witnessed the emergence~\cite{johnsonHowEasyLocal1988,megiddoTotalFunctionsExistence1991,Pap94} of a complexity theoretic framework which answered the question: \emph{How can one characterize the complexity of problems for which an efficiently verifiable solution is guaranteed to exist, but finding this solution appears difficult?} Specifically, Total Function NP (\TFNP)~\cite{megiddoTotalFunctionsExistence1991} was defined as the class of NP search problems with a guaranteed witness --- in other words, the \emph{decision} versions of these problems are trivial, so the  challenge is ``just'' to find the witness. This definition encompasses numerous old-school mathematical principles --- Brouwer's fixed point theorem, for example, says that any continuous function $f$ from a non-empty compact convex to itself has a fixed point (i.e. an $x$ such that $f(x)=x$), but \emph{finding} said fixed point appears difficult. Likewise, Nash's theorem states that any non-cooperative game with a finite number of players and a finite number of actions has a Nash equilibrium, but efficiently finding a Nash equilibrium remains elusive. 

Formally, to show that a given search problem $\Pi\in \TFNP$ is intractable, one proves hardness of $\Pi$ for one of the known subclasses of TFNP, each of which is itself based on an old-school mathematical principle. The five most prominent   subclasses are~\cite{johnsonHowEasyLocal1988,Pap94}: 
\begin{itemize}
  \item Pigeonhole Principle (PPP) corresponds to NP search problems guaranteed to have a solution via application of the \emph{pigeonhole principle}. 
  \item Polynomial Parity Argument (\PPA) leverages the \emph{handshaking lemma}: In any finite undirected graph, the number of odd-degree vertices is even. 
  \item Polynomial Parity Argument on Directed Graphs (\PPAD) uses the fact that any directed graph with an unbalanced node (meaning with in-degree $\neq$ out-degree) must have another unbalanced node.
  \item Polynomial Parity Argument on Directed Graphs with a Sink (\PPADS) is identical to \PPAD, except one requires finding an oppositely balanced node.
  
  \item Polynomial Local Search (PLS) uses the fact that every directed acyclic graph has a sink.
\end{itemize}
Although \emph{a priori}, these subclasses appear to have nothing to do with (say) finding fixed points, appearances can be deceiving: Finding a Brouwer fixed point~\cite{Pap94} and a Nash equilibrium~\cite{daskalakisComplexityComputingNash2006,chenSettlingComplexityComputing2009} are both PPAD-complete. Even the ubiquitous gradient descent algorithm has not escaped the reach of this framework --- its complexity was shown $\PPAD\cap\PLS$-complete in a recent breakthrough work~\cite{fearnleyComplexityGradientDescent2022}. 

Unfortunately, beyond the ``Big Five'' subclasses above, defining genuinely new subclasses of \TFNP\ has proven challenging. In fact, some of the handful of other known subclasses of TFNP have surprisingly recently turned out to equal \emph{intersections} of the ``Big Five'': $\CLS=\PPAD\cap\PLS$~\cite{fearnleyComplexityGradientDescent2022}, $\EOPL=\PLS\cap\PPAD$ and $\SOPL=\PLS\cap \PPADS$ \cite{goosFurtherCollapsesTFNP2022} (see also \cite{liIntersectionClassesTFNP2024}). \\
\vspace{-1mm}

\noindent\emph{The second element: \Bez's theorem.} In this work, we first define a new subclass of \TFNP\ based on computing solutions to systems of multivariate polynomial equations, given a mathematical principle guaranteeing the existence of a solution. There is only one line of \TFNP\ work we are aware of in a related direction, which we mention first to set context. Specifically, for \emph{finite} fields, Papadimitriou~\cite{Pap94} defined the problem CHEVALLEY by invoking the Chevalley-Warning theorem, which states: Given is a system of polynomials $\set{f_i}_{i=1}^r$ over $\mathbb{F}_p[X_1,\ldots, X_n]$ for finite field $\mathbb{F}_p$, where polynomial $f_i$ has degree $d_i$. If $n> \sum_{i=1}^r d_j$, then the number of common solutions to the system is divisible by the characteristic $p$ of $\mathbb{F}_p$. CHEVALLEY then asks: Given such a polynomial system and one solution, find a second solution. Although CHEVALLEY is known to be in \PPA~\cite{Pap94}, it is not expected to be \PPA-complete; however, two variants of CHEVALLEY have been shown \PPA-complete~\cite{belovsPolynomialParityArgument2017,goosComplexityModuloqArguments2020}. 

In this work, we instead consider polynomial systems over \emph{complex} numbers. This necessitates a move from the domain of number theory to, for the first time in the study of \TFNP, \emph{algebraic geometry}. The old-school algebraic geometric principle we invoke is \Bez's theorem from $1779$, nowadays stated as follows: Over an algebraically closed field, any system of $n$ homogeneous polynomials in $n+1$ variables always has either an infinite number of solutions, or exactly $d_1\cdots d_n$ solutions, for $d_i$ the degree of the $i$th polynomial. For our purposes, we actually require a more recent \emph{multi}-homogenous extension due to Shafarevich~\cite{Shafarevich1974}, which gives a similar statement for the more general setting of systems of \emph{multi}-homogeneous polynomials (\Cref{def:multipoly}), which we now informally define. 

Recall that a homogeneous polynomial is one whose non-zero monomials all have the same degree. A \emph{multi}-homogeneous polynomial $p\in\CC[x_1,\ldots, x_n]$ 
generalizes this definition: One first partitions the variables $\set{x_i}$ into sets $S_i$ as desired, and then requires that for each $S_i$, if we treat only the elements of $S_i$ as variables, the resulting polynomial is homogeneous. For example, for variable sets $S_1=\set{x_1,x_2}$ and $S_2=\set{y_1,y_2,y_3}$, $x_1y_1y_2 + x_2y_2y_3$ is multihomogeneous, whereas the homogeneous polynomial $x_1+y_1$ is not. (Nevertheless, any homogeneous polynomial is trivially multihomogeneous relative to the partition with one set $S$ containing all variables.) 

The multi-homogeneous \Bez\ theorem (\Cref{thm:bezout}) now first defines, corresponding to the product of degrees $d_1\cdots d_n$ from the original \Bez\ theorem, a more general quantity known as the \emph{\Bez\ number} $\bez$ (\Cref{def:beznumber}). Then, it states that for any multi-homogeneous system of $n$ equations $\set{p_j}_{j=1}^{n}\subseteq\mathbb{C}[x_1,\ldots, x_{n+t}]$, where the variables are partitioned into $t$ sets $S_i$, if $\bez>0$, then the system has a solution. Note this generalizes \Bez's theorem when all variables are placed into one set, $S$, so that $t=1$. Roughly, our first new subclass of TFNP, denoted $\MHS$ (defined shortly in \Cref{def:MHSinformal}), is the set of TFNP problems reducible to a multi-homogeneous system satisfying the multi-homogeneous \Bez\ theorem. 
Importantly, it can be efficiently checked if $\bez>0$, which suffices for our purposes (\Cref{rem:computebez}).\\

\vspace{-1mm}

\noindent\emph{The third element: The quantum satisfiability problem.} With two members of our trinity in hand, TFNP and \Bez's theorem, we introduce the ``unholy'' member of the fellowship: The {quantum} satisfiability (QSAT) problem. We say ``unholy'' because of the unexpected nature of this trio --- not only is this the first time quantum complexity and \TFNP\ have been formally linked, but the {classical} Boolean satisfiability analogue of the problem we consider is a textbook example of an \emph{easy} search problem. To elaborate on the latter, consider $3$-SAT when the constraint system has a System of Distinct Representatives\footnote{Given subsets $S_1,\ldots, S_m\subseteq [n]$, an SDR is a set of distinct elements $r_1,\ldots,r_m$ such that $r_i\in S_i$ for all $i\in[m]$. In the context of $3$-SAT, each $S_i$ is the set of variables in clause $c_i$, and elements $1$ through $n$ correspond to the set of all variables.} (SDR). Then, for each clause $c_i=(x_i\vee y_i \vee z_i)$ of formula $\phi$, one can ``match'' one of the variables in $\set{x_i,y_i,z_i}$ \emph{uniquely} to $c_i$. Since no variable is matched twice in this process, setting each matched literal to true yields a satisfying assignment for $\phi$. As an SDR can be found efficiently (e.g. via reduction to network flow~\cite{ford_fulkerson_1956}), the search version of $3$-SAT with SDR is poly-time solvable.

The \emph{quantum} analogue of this story has played out differently. Here, the Quantum Satisfiability problem ($k$-QSAT) on $n$ qubits generalizes $k$-SAT, and is defined as follows: Given a set of projectors $\set{\Pi_S}_S$, each acting non-trivially\footnote{Formally, one sets $\Pi_S\otimes I_{[n]\setminus S}$ to ensure each projector acts on the correct space, $\CC^{2^n}$.} on some subset $S\subseteq[n]$ of qubits, does there exist an $n$-qubit quantum state $\ket{\psi}\in\CC^{2^n}$ simultaneously satisfying all quantum clauses, i.e. $\Pi_S\ket{\psi}=0$ for all $\Pi_S$? 
First, the commonalities: Just as $3$-SAT is NP-complete, $3$-QSAT is $\QMAo$-complete~\cite{gossetQuantum3SATQMA1Complete2013}, where $\QMAo$ is Quantum Merlin Arthur (QMA) with perfect completeness. Likewise, both $2$-SAT~\cite{aspvallLineartimeAlgorithmTesting1979} and $2$-QSAT~\cite{aradLinearTimeAlgorithm2016,beaudrapLinearTimeAlgorithm2016} can be solved in linear time.
Finally, for $k$-QSAT with SDR, Laumann, L\"{a}uchli, Moessner, Scardicchio, and Sondhi~\cite{laumannProductGenericRandom2010} (see also~\cite{laumannc.r.PhaseTransitionsRandom2010,leePRODSATPhaseRandom2024}) showed that, like SAT with SDR, QSAT with SDR on qubits always has a solution. In fact, the solution is an NP witness, being a \emph{tensor product} state (i.e.\ of form $\ket{\psi_1}\otimes\cdots\otimes\ket{\psi_n}\in(\CC^2)^{\otimes n})$. And this is precisely where the stories diverge: Efficiently \emph{finding} this tensor product state/NP witness for QSAT with SDR appears difficult.

There are two works in this direction to be mentioned now. In the positive direction, Aldi, de Beaudrap, Gharibian and Saeedi~\cite{AdBGS21} gave a \emph{parameterized}\footnote{``Parameterized'' as in parameterized complexity, i.e. the runtime of the algorithm scales polynomially in the input size, but exponentially in structural parameters of the constraint hypergraph.} algorithm solving a special class of QSAT with SDR instances efficiently. In the opposite direction,
Goerdt showed~\cite{goerdtMatchedInstancesQuantum2019} QSAT with SDR and the additional restriction that only \emph{real-valued} solutions are allowed is NP-hard.
Thus, it remained unclear in which direction the complexity of QSAT with SDR should fall.

\subsection{Our results}\label{sscn:results}

Briefly, our main contributions (denoted (b) and (c) below) are the definitions and complexity theoretic study of MHS and a second new TFNP subclass based on the Fundamental Theorem of Algebra (\Cref{thm:FTA}), denoted \emph{Sparse Fundamental Theorem of Algebra (SFTA)}. However, the broader story of this paper involves the following sequence of results, which hold for any local qu\emph{d}it dimension $d\geq 2$: (a) QSAT on qudits has a product state solution if and only if the instance has a \emph{weighted} SDR (WSDR). This yields containment in TFNP. (b) QSAT with WSDR on qudits is complete for $\MHS$. (c) To better understand the complexity of $\MHS$, as well as to build on the theme of TFNP subclasses related to complex polynomials, we show containment of $\SFTA$ into a zero-error version of $\MHS$, and as a bonus, use this construction to obtain NP-hardness results for slight variants of QSAT with SDR.
(d) Finally, special cases of QSAT with WSDR on qudits can be efficiently solved. 

We now discuss our results in detail. Throughout, we refer to instances of QSAT by their interaction hypergraph $G=(V,E)$, where vertices correspond to qudits, and hyperedges to clauses. We do not restrict the type, number, or geometry of clauses allowed per qudit. A ``clause'' for us is\footnote{``Stacking'' multiple rank-$1$ projectors to obtain a $d$-dimensional clause is allowed, but for clarity, we count this as $d$ constraints. This is important for the definition of Weighted SDRs.} a rank-$1$ projector. 

\paragraph{a. Existence results via Weighted SDRs.} We begin by introducing the new framework of \emph{Weighted SDRs (WSDR)}, which underlies much of this work. 
Roughly, a WSDR (\Cref{def:weighted}) generalizes an SDR by introducing a \emph{weight} function $w:V\rightarrow \mathbb Z_{\ge 0}$, such that for any vertex $v\in V$ corresponding to a qudit, $v$ can be matched to $w(v)$ clauses. 
Which weight function should one choose?
In this work, when we say a given QSAT instance $G=(V,E)$ on $n$ qudits of local dimensions $d_1,\ldots, d_n$ has a WSDR, we mean with respect to weight function $w(v_i)=d_i-1$ for each $i\in\{1,\ldots,n\}$.
Thus, on $n$-qu\emph{b}it systems, a WSDR is just an SDR.
Note that checking whether $G$ has an WSDR can be done efficiently (\Cref{rem:computeWSDR}).

Our first main result is that WSDRs are tightly connected to when a QSAT instance on qudits has a product state solution.

\begin{restatable}{theorem}{thmWSDRProdsat}\label{thm:wsdr-product-solution}
  Let $\Pi=\{\Pi_i\}$ be an instance of $QSAT$ on $n$ qu$d$its of local dimensions $d_1,\ldots,d_n$, respectively. If $(G,w)$ admits a WSDR, then $\Pi$ admits a satisfying product assignment. If $(G,w)$ does not admit a WSDR and $\Pi$ is generic, then $\Pi$ has no satisfying product assignment.
\end{restatable}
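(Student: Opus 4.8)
The plan is to prove the two implications separately, in both cases translating the combinatorial WSDR condition into a statement about the multilinear polynomial system attached to $\Pi$. I encode a product assignment as a tuple $(\ket{\psi_{v_1}},\dots,\ket{\psi_{v_n}})\in\prod_i\mathbb{P}(\CC^{d_i})$, and each clause $\Pi_e=\ketbra{\phi_e}{\phi_e}$ as the single equation $\langle\phi_e|\bigotimes_{v_i\in e}\psi_{v_i}\rangle=0$, which is multi-homogeneous: degree $1$ in the variables of group $i$ when $v_i\in e$, and degree $0$ otherwise. Thus $\Pi$ has a satisfying product assignment iff this multi-homogeneous system --- with the $n$ variable groups $S_i$ being the qudits --- has a common zero in $\prod_i\mathbb{P}(\CC^{d_i})$.

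For the first implication (WSDR $\Rightarrow$ product solution), I would first reduce to the \emph{tight} case $|E|=\sum_i w(v_i)$: given a WSDR that matches vertex $v_i$ to exactly $k_i\le w(v_i)=d_i-1$ clauses, append $w(v_i)-k_i$ dummy single-qudit clauses $\ketbra{\phi}{\phi}_{v_i}$ on $v_i$ (each matched to $v_i$), obtaining an instance $\Pi'$ whose WSDR saturates every vertex and any product solution of which is also a product solution of $\Pi$. Now $\Pi'$ has $m'=\sum_i(d_i-1)$ equations in $\sum_i d_i=m'+n$ variables partitioned into $t=n$ groups, exactly the regime of the multi-homogeneous \Bez\ theorem (\Cref{thm:bezout}). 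The key step is to show the \Bez\ number $\bez$ of this system (\Cref{def:beznumber}) is positive: expanding the defining product $\prod_{e\in E'}\bigl(\sum_{v_i\in e}z_i\bigr)$, the coefficient of $\prod_i z_i^{d_i-1}$ counts exactly the functions assigning to each clause a vertex it contains with vertex $i$ used $d_i-1$ times --- i.e.\ the vertex-saturating WSDRs of $\Pi'$ --- so it is at least $1$. Hence $\bez>0$, and \Cref{thm:bezout} yields a common zero in $\prod_i\mathbb{P}(\CC^{d_i})$; each of its components is a nonzero vector, so this is a genuine satisfying product assignment for $\Pi'$, hence for $\Pi$.

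For the second implication, suppose $(G,w)$ admits no WSDR. By the defect form of Hall's theorem (applied to the bipartite graph between clauses and $w(v_i)$ copies of each vertex) there is a set $S\subseteq E$ with $|S|>\sum_{v_i\in N(S)}w(v_i)$, where $N(S)=\bigcup_{e\in S}e$; write $Q=\prod_{v_i\in N(S)}\mathbb{P}(\CC^{d_i})$, so $\dim Q=\sum_{v_i\in N(S)}(d_i-1)<|S|$. Consider the incidence variety $\mathcal I\subseteq\bigl(\prod_{e\in S}\mathbb{P}(\bigotimes_{v_i\in e}\CC^{d_i})\bigr)\times Q$ of pairs $\bigl((\ket{\phi_e})_{e\in S},\ket{\psi}\bigr)$ with $\langle\phi_e|\bigotimes_{v_i\in e}\psi_{v_i}\rangle=0$ for all $e\in S$. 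Projecting $\mathcal I$ onto $Q$, each fibre is a product over $e\in S$ of hyperplanes (each condition is a nonzero linear constraint on $\ket{\phi_e}$ since $\bigotimes_{v_i\in e}\ket{\psi_{v_i}}\neq0$), so $\mathcal I$ is irreducible with $\dim\mathcal I=\dim Q+\sum_{e\in S}\bigl(\dim\mathbb{P}(\bigotimes_{v_i\in e}\CC^{d_i})-1\bigr)<\sum_{e\in S}\dim\mathbb{P}(\bigotimes_{v_i\in e}\CC^{d_i})$, using $\dim Q<|S|$. Since $Q$ is complete, the image of $\mathcal I$ under the other projection is a \emph{closed} subvariety $Z$ of $\prod_{e\in S}\mathbb{P}(\bigotimes_{v_i\in e}\CC^{d_i})$ of dimension strictly below the ambient dimension, hence proper. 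Therefore whenever $\set{\Pi_e}_{e\in S}$ lies outside $Z$ --- which holds for generic $\Pi$ --- no $\ket{\psi}\in Q$ satisfies all clauses of $S$, so $\Pi$ has no satisfying product assignment.

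The main obstacle is the first implication: making the bridge to \Cref{thm:bezout} watertight requires verifying that the padding reduction preserves the relevant structure and, more importantly, establishing the combinatorial identity that the \Bez\ number equals the vertex-saturating WSDR count --- a permanent/matching-count expansion of the clause--vertex incidence structure --- while also noting it is precisely $\bez>0$, and not a genericity hypothesis, that \Cref{thm:bezout} needs (matching the fact that this direction assumes no genericity). The second implication is routine algebraic geometry once the defect Hall set $S$ is identified; the only points needing care are that the bad locus $Z$ is closed (so that ``generic'' is well-defined) and that genericity of $\Pi$ descends to genericity of the sub-tuple $\set{\Pi_e}_{e\in S}$.
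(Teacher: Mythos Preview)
Your proof is correct. Both directions are sound: the padding-to-tight-case step is clean (the dummy $1$-local clauses contribute forced factors $z_i$ in the \Bez\ expansion, so the surviving coefficient counts exactly the WSDRs extending your given one), and your incidence-variety argument for the converse is the standard way to prove the ``generic emptiness'' direction of intersection theory on multiprojective space.

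The paper's first proof (\Cref{sscn:chow}) takes essentially the same route but packages it more uniformly via the Chow ring $CH(\mathcal X_{d_1,\ldots,d_n})=\ZZ[H_1,\ldots,H_n]/(H_1^{d_1},\ldots,H_n^{d_n})$: it computes $\prod_i[V_i]=\prod_i\bigl(\sum_{v_j\in e_i}H_j\bigr)$ and observes this is nonzero iff some monomial has each $H_j$ appearing at most $d_j-1$ times, which is exactly the WSDR condition. It then invokes a single black-box fact (\Cref{f:subvarietyrep}) that a nonzero product of hypersurface classes forces nonempty intersection, and a zero product forces generically empty intersection. This avoids your padding step entirely (the Chow-ring criterion does not require the number of hypersurfaces to equal $\sum_j(d_j-1)$), and it absorbs your incidence-variety dimension count into the cited fact. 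Conversely, your argument is more self-contained: you only use \Cref{thm:bezout} as stated and elementary properness/dimension reasoning, rather than the Chow-ring machinery. The combinatorial core---that the \Bez\ number equals the WSDR count---is identical in both (cf.\ \Cref{obs:countWSDR}). The paper also gives a second, unrelated proof (\Cref{sscn:reductionQubits}) by an explicit qudit-to-qubit reduction followed by an appeal to the qubit result of~\cite{laumannProductGenericRandom2010}; your approach does not touch this.
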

\noindent \Cref{thm:wsdr-product-solution} is the qu\emph{d}it generalization of~\cite{laumannProductGenericRandom2010}, which showed the analogous result for qu\emph{b}it systems with SDR. 
We thus have that for any $d\geq 2$, QSAT with WSDR on qudits is in TFNP. 
Above, ``generic'' (\Cref{def:generic}) means ``for almost all'' instances. For example, $2$-local constraints are generically entangled, whereas constraints in tensor product form are not.
We remark that high-dimensional quantum systems are natural to study: From a computer science perspective, they can lead to surprising transitions in hardness (e.g. 1D Local Hamiltonian problem on qudits for $d\geq 8$ is QMA-complete~\cite{aharonovPowerQuantumSystems2009,hallgrenLocalHamiltonianProblem2013}, whereas 1D Boolean Satisfiability on dits is in P via dynamic programming), and from a physics perspective, many natural systems (e.g. bosonic/fermionic systems) are high-dimensional systems. 

With this said, while interesting in its own right, the primary appeal of \Cref{thm:wsdr-product-solution} for us here is the techniques behind its proof, which will be crucial for our study of MHS.
Specifically, we give two independent proofs of \Cref{thm:wsdr-product-solution}.
The first (\Cref{sscn:chow}) is completely different than~\cite{laumannProductGenericRandom2010}, and introduces the use of the Chow ring (\Cref{sscn:chow}) to obtain a simple proof of just a few lines.
The second (\Cref{sscn:reductionQubits}) gives a poly-time mapping reduction from QSAT on qudits with WSDR to QSAT on qubits with SDR, and then plugs in~\cite{laumannProductGenericRandom2010}. This reduction, in particular, will play a key role in our $\MHS$-hardness result of \Cref{thm:MHScompleteInformal}.\\

\vspace{-1mm}
\noindent \emph{WSDRs beyond QSAT.} As an aside, we demonstrate the power of WSDRs beyond the study of QSAT by using \Cref{thm:wsdr-product-solution} to give a simple proof of a result of Parthasarathy~\cite{Par04}, which says that any completely entangled subspace\footnote{A subspace is \emph{completely entangled} if it does not contain any product states (\Cref{def:CE}).} has dimension at most $\prod_{i=1}^k d_i - \sum_{i=1}^k d_i + k - 1$ (\Cref{cor:CES}).

\paragraph{b. A new subclass of TFNP based on \Bez's theorem.} We now discuss our first main result, for which we define our first subclass of TFNP, which involves \emph{systems} of \emph{low}-degree, \emph{multi}-variate polynomial equations:
\begin{definition}[Multi-homogeneous Systems (MHS) (Informal; see \Cref{def:MHS})]\label{def:MHSinformal}
  $\MHS$ is the set of total NP search problems poly-time reducible to finding an $\epsilon$-approximate solution to a system $F = \{f_1,\dots,f_n\}\subseteq\mathbb{C}[x_1,\ldots,x_{n+t}]$ of multi-homogeneous equations over $\mathbb{C}$ with $\bez>0$, where $t$ is the number of subsets $S_i$ partitioning the variable set. We require the size $s$ of each $S_i$ and degree $d$ per monomial to be constant, and the precision $\epsilon$ must be at least inverse exponential.
\end{definition}
\noindent Comments regarding the constant bounds on the variable set size $s$ and degree $d$: 
(1) This ensures $\MHS\subseteq\TFNP$ even for inverse exponential $\epsilon$, since poly-time Turing machines can efficiently perform basic arithmetic with polynomial bits of precision.  
(2) For \Cref{thm:MHScompleteInformal} below, $d\in O(1)$ is required for our proof and yields constant locality $k$, whereas $s\in O(1)$ yields constant local dimensional qudits. 
(3) Formally, MHS is a union of complexity classes $\MHS_{s,d}$ over all positive natural numbers $s$ and $d$. 
(4) $\MHS$ does not obviously include general homogeneous systems as a special case due to $s\in O(1)$, i.e. one cannot trivially place all variables into one variable group.
Finally, for precision $\epsilon$, we shall utilize $\MHSe$ when we wish to specify a particular precision $\epsilon$. 

We now show that QSAT with SDR is ``$\MHS$-complete''\footnote{We use the term ``$\MHS$-complete'' in the introduction for simplicity, but the formal statement is more subtle (\Cref{thm:MHScomplete}). In the case of $\MHS$-hardness, for example, it says any problem in $\MHS_{s,d}(\epsilon)$ can be reduced to solving $k$-QSAT on qubits with locality $k\geq (s+1)^d$ within precision $\Theta(\epsilon)$, for $s,d,k\in O(1)$. Namely, our reduction does not produce a fixed $k$ which simultaneously yields hardness \emph{for all} $s$ and $d$. This is similar to how for each level $
\Sigma_k^p$ of the Polynomial-Time Hierarchy (PH), Quantified Boolean Satisfiability with $k-1$ alternations (QBF$_k$) is $\Sigma_k^p$-complete, while problems simultaneously complete for \emph{all} levels of PH are not known.}:

 \begin{theorem}[Informal; formal statement in \Cref{thm:MHScomplete}]\label{thm:MHScompleteInformal}
  For any $\epsilon\in\Omega(1/\exp)$ and constant $d\geq 2$, computing an $\epsilon$-approximate product-state solution to $k$-QSAT on qudits with WSDR is $\MHSo{\Theta(\epsilon)}$-complete.
\end{theorem}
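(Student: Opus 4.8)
We would prove the two directions of \Cref{thm:MHScompleteInformal} separately: membership of $k$-QSAT with WSDR in $\MHSo{\Theta(\epsilon)}$, and $\MHSo{\Theta(\epsilon)}$-hardness.

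\emph{Membership.} Given an instance $\Pi=\{\Pi_i=\ketbra{\phi_i}{\phi_i}\}$ on qudits $v_1,\dots,v_n$ of dimensions $d_1,\dots,d_n\le d$ together with a WSDR $M$, associate to $v_j$ a block $S_j$ of $d_j$ complex variables encoding an unnormalized $\ket{\psi_j}$, and to each clause $\Pi_i$ supported on a qudit set $A_i$ (with $|A_i|\le k$) the polynomial $f_i:=\langle\phi_i|\bigotimes_{j\in A_i}\psi_j\rangle$; every monomial of $f_i$ has degree $1$ in each $S_j$ with $j\in A_i$ and degree $0$ in the rest, so $f_i$ is multilinear, hence multi-homogeneous for the partition $\{S_j\}_{j=1}^n$. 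Writing $m_j:=|M^{-1}(v_j)|\le d_j-1$, pad $\{f_i\}_i$ with, for each $j$, exactly $c_j:=(d_j-1)-m_j$ fixed linear forms in $S_j$; the resulting system $F'$ has $\sum_j(d_j-1)$ equations in $\sum_j d_j$ variables over $n$ blocks of constant size, each monomial of constant degree. Its Bézout number is the coefficient of $\prod_j H_j^{d_j-1}$ in $\prod_i\bigl(\sum_{j\in A_i}H_j\bigr)\cdot\prod_j H_j^{c_j}$ inside the Chow ring $\ZZ[H_1,\dots,H_n]/(H_1^{d_1},\dots,H_n^{d_n})$ of $\prod_j\PP^{d_j-1}$; choosing $H_{M(i)}$ from the $i$-th factor produces exactly $\prod_j H_j^{m_j+c_j}=\prod_j H_j^{d_j-1}$, and since this expansion has non-negative integer coefficients throughout, $\bez>0$. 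By \Cref{thm:bezout}, $F'$ has a solution in $\prod_j\PP^{d_j-1}$, which upon choosing affine representatives becomes a product state satisfying all of $\Pi$; tracking the projective-to-affine normalization, an $\epsilon$-approximate solution of $F'$ yields a $\Theta(\epsilon)$-approximate product-state solution of $\Pi$. So $\Pi\mapsto F'$ is the desired reduction, and it reuses exactly the Chow-ring computation behind \Cref{thm:wsdr-product-solution}.

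\emph{Hardness.} It suffices to reduce the generic problem ``given a multi-homogeneous system $F=\{f_1,\dots,f_M\}$ over blocks $S_1,\dots,S_T$ of constant size, constant degree per monomial, and $\bez>0$, find an $\epsilon'$-approximate solution'' to $k$-QSAT with WSDR. We would use three layers. \emph{(1) Degree reduction.} Since all degrees are constant, build for each $f_\ell$ an auxiliary qudit $Y_\ell$ of constant dimension forced to hold $\bigotimes_i\ket{\psi_{S_i}}^{\otimes a_{\ell,i}}$ (with $a_{\ell,i}$ the degree of $f_\ell$ in block $i$), so that $f_\ell$ becomes a single linear form on $Y_\ell$, i.e.\ one $1$-local rank-$1$ clause; each such tensor is assembled in $O(1)$ stages from a basic ``tensor gadget'' on a triple $(Y,A,B)$, namely the rank-$1$ clauses $\bigl(\bra{(a,b)}\otimes\bra{a'}\otimes\bra{b'}-\bra{(a',b')}\otimes\bra{a}\otimes\bra{b}\bigr)$ over all index tuples, which for nonzero states force $\ket{\psi_Y}\propto\ket{\psi_A}\otimes\ket{\psi_B}$ and are themselves multi-homogeneous (trilinear). \emph{(2) WSDR balancing.} A naive tensor gadget has more clauses than its three qudits can match, so we would spread its clauses over $O(1)$ further auxiliary qudits of constant dimension, arranging that in the full instance every clause is assignable to a distinct ``slot'' of some qudit $v$ (there being $\dim(v)-1$ slots), and padding any leftover slots with fixed $1$-local clauses. \emph{(3) Correctness.} The result is $3$-local with constant local dimension; any of its product-state solutions restricts to a solution of $F$ (a nonzero $\ket{\psi_{Y_\ell}}$ killing the $f_\ell$-clause is proportional to the monomial vector via a nonzero scalar, forcing $f_\ell(\psi_S)=0$), and conversely every solution of $F$ lifts uniquely through the gadgets; the instance admits a WSDR precisely because $\bez(F)>0$, which---being a function of the multidegrees alone---is exactly the Chow-ring positivity used in the membership direction; and the $O(1)$ gadget stages cost only a constant factor in precision, giving $\epsilon\mapsto\Theta(\epsilon)$. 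To force a fixed local dimension $d=2$ (or a smaller constant $k$), compose with the qudit$\to$qubit reduction of \Cref{sscn:reductionQubits}.

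\emph{Main obstacle.} The crux is layers (2)--(3) of the hardness direction, where three things must hold simultaneously: (i) every multi-homogeneous equation is expressed with only constant-arity rank-$1$ projectors; (ii) each gadget is ``clause-balanced'' so that the global instance still admits a WSDR---a tight $b$-matching condition that clause-heavy gadgets would break, and precisely what the \emph{Weighted} SDR framework is needed for; and (iii) the condition number does not blow up through the composed tensor gadgets, so that only a constant factor of precision is lost (hence $\Theta(\epsilon)$, not a polynomial loss), which will likely require rescaling variables to keep all intermediate quantities polynomially bounded. Meeting all three at once---rather than any one alone---is where the real work lies; by contrast, given \Cref{thm:wsdr-product-solution,thm:bezout}, the membership direction and the $\bez>0$ certification are essentially bookkeeping.
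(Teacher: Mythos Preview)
Your membership argument is essentially the paper's, and the padding/Chow-ring verification of $\bez>0$ is fine.

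The hardness direction has a real gap. Your tensor gadget on $(Y,A,B)$ produces rank-$1$ clauses of the form $\ket{ab}_Y\ket{a'}_A\ket{b'}_B-\ket{a'b'}_Y\ket{a}_A\ket{b}_B$, and you need enough of them so that \emph{every} product solution of the QSAT instance has $\ket{\psi_Y}\propto\ket{\psi_A}\otimes\ket{\psi_B}$ (otherwise the oracle can return a ``spurious'' product state that does not encode a solution of $F$). If you use only $\dim Y-1$ of these clauses, anchored at a fixed $(a_0,b_0)$, they collapse whenever $\psi_A^{a_0}\psi_B^{b_0}=0$ and then constrain only one coordinate of $\psi_Y$; so the gadget is not sound. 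If instead you use all $\binom{\dim Y}{2}$ of them (which is what ``over all index tuples'' gives), you far exceed the $\dim Y-1$ slots on $Y$, and your proposed fix---``spread over $O(1)$ auxiliary qudits''---cannot help: any auxiliary qudit either carries no constraints (so the gadget clauses cannot be matched to it) or carries constraints tying it to the rest (eating its slots). This is exactly the obstruction the paper isolates in \Cref{obs:low-rank-swap-test}: no projector of rank $\le d-1$ can force equality of two $d$-dimensional qudits for $d>2$, and your tensor gadget hits the same wall.

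The paper's resolution is not cosmetic but structural: it first applies the qudit-to-qubit reduction of \Cref{thm:qubit-reduction} to every variable-group qudit (and to all of its copies), and only \emph{then} enforces equality, now between qubits, using the rank-$1$ singlet $\ket{01}-\ket{10}$. On qubits this single clause is both sound for all assignments and exactly balanced (one slot, one clause), which is what makes the global SDR go through. You invoke the qubit reduction only at the end to shrink $d$, but in fact it is the key step that makes the WSDR counting close; your layers (1)--(2) do not survive without it.
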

\noindent As even finding common roots of homogeneous polynomial systems in $n+1$ variables and $n$ equations remains an open problem~\cite{Gre14}, we interpret \Cref{thm:MHScompleteInformal} as implying QSAT with SDR is intractable. Thus, we have the surprising juxtaposition that while classical SAT with SDR is easy, its quantum analogue is not.
\begin{figure}[t]
  \begin{center}
    \begin{tabular}{c | c | c}
      Problem & Complexity & Reference\\
      \hline
      SAT with SDR & Poly-time solvable & Folklore (?)\\
      QSAT with SDR & MHS-complete & This paper (\Cref{thm:MHScompleteInformal})\\
      SAT with SDR $+$ $O(1)$ additional clauses & Poly-time solvable & This paper (\Cref{thm:BKS})\\
      QSAT with SDR $+$ one additional clause & NP-complete & \cite{goerdtMatchedInstancesQuantum2019}, this paper (\Cref{thm:added})
    \end{tabular}
  \end{center}
  \caption{The complexity of variants of Classical SAT with SDR (denoted SAT with SDR) versus Quantum SAT with SDR (QSAT with SDR). Formally, "poly-time solvable" means in the complexity class Function Polynomial Time (FP), i.e. a poly-time classical Turing machine can compute a satisfying assignment.}
  \label{fig:complexity}
\end{figure}

\paragraph{c. A new subclass of TFNP based on the Fundamental Theorem of Algebra.} To help understand the complexity of $\MHS$, we give our second main result, which defines a second TFNP subclass, involving a \emph{single}, \emph{high}-degree, \emph{uni}variate polynomial equation. Below, a \emph{sparse} polynomial (\Cref{def:sparsepoly}), is one whose number of non-zero coefficients is logarithmic in its degree.

\begin{definition}[Sparse Fundamental Theorem of Algebra (SFTA)  (Informal; see \Cref{def:SFTA})]\label{def:informalSFTA}
  $\SFTA$ is the set of total NP search problems poly-time reducible to finding an $\epsilon$-approximate root $r\in \CC$ of a sparse monic univariate polynomial $p\in \CC[x]$ of degree $d$, where $\abs{r}\in[0,1+2\log(d)/d]$. We view $d$ as exponentially large in the input size, and require $\epsilon\in \Omega(1/\poly(d))$.
\end{definition}
\noindent As implied by its name, SFTA is inspired by the Fundamental Theorem of Algebra (\Cref{thm:FTA}), which recall states that any non-constant complex polynomial has a complex root $r$. 
Two comments regarding restrictions in the definition: First, the sparsity ensures\footnote{Another possible definition generalizing ours is to encode a non-sparse polynomial succinctly via a poly-size circuit which, given index $i$, outputs the $i$th coefficient of $p$. For us, however, the sparsity is necessary for our proof technique behind \Cref{thm:embedsparseInformal}.} by definition that the degree $d$ is exponential in the encoding size of polynomial $p$. This is important, as root approximations can be computed in $\poly(d)$ time (see e.g. ~\cite{schonhageEquationSolvingTerms1985}, as used in Section 4.4 of \cite{aldiEfficientlySolvableCases2021}), and thus the roots of a non-sparse polynomial can in general be efficiently approximated. Second, requiring $\abs{r}\in[0,1+2\log(d)/d]$ is without loss of generality (\Cref{l:rootbound}), and is in fact necessary in order to prove $\SFTA\subseteq\TFNP$ (\Cref{thm:inTFNP})\footnote{For example, if $d$ is exponential, then $p(2)$ can be \emph{doubly} exponentially large, and thus not representable with polynomially many bits.}. \cite{schonhageEquationSolvingTerms1985}
  
We now ask: \emph{What is the relationship between MHS and SFTA?} We first conjecture $\SFTA\subseteq \MHS$, and are able to prove the following:

\begin{theorem}[SFTA is in zero-error MHS (Informal; see \Cref{thm:embedsparse})]\label{thm:embedsparseInformal}
  Let $p$ be an $s$-sparse polynomial of degree $d$.
  Then, $p$ can be efficiently reduced to an instance $\Pi$ of QSAT with SDR of size $O(s\log(d))$, meaning $p(x/y) = 0$ if and only if $\ket{v}:=\ket{v_1}\otimes\dots \otimes \ket{v_N}$ is an exact solution to $\Pi$, for $\ket{v_1}=(x,y)^T\in\CC^2$. 
\end{theorem}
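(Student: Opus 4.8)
The plan is to encode a sparse polynomial $p(x) = \sum_{j=0}^{s-1} c_j x^{e_j}$ (with $e_{s-1} = d$ and $c_{s-1}=1$) as a QSAT with SDR instance by simulating the evaluation of each monomial $x^{e_j}$ using a logarithmic number of qubits via repeated squaring, and then combining the monomial values with a final ``summation'' clause. The central idea is that a product state $\ket{v_1}\otimes\ket{w}$ with $\ket{v_1}=(x,y)^T$ can, through an appropriate chain of rank-$1$ QSAT clauses, force an auxiliary qubit to carry the projective coordinates $(x^{2^i}, y^{2^i})$ or, more generally, $(x^{a}y^{b}, y^{a+b})$ for the exponents $a$ appearing in the binary expansions of the $e_j$. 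Homogenizing by the ``denominator'' variable $y$ is what keeps every clause a genuine QSAT (rank-$1$, i.e.\ degree-$1$ per qubit, hence multihomogeneous) constraint and makes the final equation read $p(x/y)=0 \iff \sum_j c_j x^{e_j} y^{d-e_j} = 0$.

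The key steps, in order: \textbf{(1) Squaring gadget.} Build a gadget that, given a qubit in state $(\alpha,\beta)^T$, forces a fresh qubit into state proportional to $(\alpha^2,\beta^2)^T$. Concretely, introduce two copies of the input (enforced equal by an SDR-compatible ``equality clause'' between two qubits) and a target qubit, and impose rank-$1$ clauses that vanish exactly when the target is the ``coordinate-wise product''; one checks $(\alpha,\beta)\otimes(\alpha,\beta)\otimes(\gamma,\delta)^\perp$ annihilation pins $(\gamma,\delta)\propto(\alpha^2,\beta^2)$. Chaining $O(\log d)$ of these produces qubits carrying $(x^{2^i},y^{2^i})$ for $i=0,\dots,\lfloor\log d\rfloor$. \textbf{(2) Monomial assembly.} For each exponent $e_j$, multiply together the relevant squared-coordinate qubits (those $2^i$ in the binary expansion of $e_j$), padding with $y$-factors to bring the total degree up to $d$, again via product gadgets; this yields a qubit in state $\propto (x^{e_j}y^{d-e_j},\, y^{d})$ (the second coordinate being $y^d$ after normalization of the padding, or one can track it more carefully as $y^{d}$ throughout). \textbf{(3) Summation clause.} Impose one final rank-$1$ projector onto the vector orthogonal to $(c_0,\dots,c_{s-1})$ acting on the $s$ ``numerator'' qubits — but since QSAT clauses are per-qubit degree one, realize the weighted sum instead by a small binary tree of $2$-local ``addition gadgets'' that combine $(u,z)^T$ and $(u',z)^T$ into $(\lambda u + \mu u',\, z)^T$, so that at the root the numerator coordinate equals $\sum_j c_j x^{e_j} y^{d-e_j}$ and is forced to $0$. \textbf{(4) SDR verification.} Exhibit an explicit System of Distinct Representatives for the whole instance: each gadget is designed so that its ``output'' qubit is the unique representative of its defining clauses, input qubits are representatives of nothing new, and a counting/orientation argument (or direct matching, following the gadget topology, which is a DAG) shows the matching is complete. \textbf{(5) Equivalence and size.} Verify that the product state $\ket{v}$ solves $\Pi$ exactly iff every gadget relation holds iff the root equation $\sum_j c_j x^{e_j}y^{d-e_j}=0$ holds, i.e.\ $p(x/y)=0$; and count qubits: $O(\log d)$ for the squaring chain, $O(s\log d)$ for the $s$ monomial assemblies, $O(s)$ for the summation tree, giving $N = O(s\log d)$ as claimed.

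The main obstacle I expect is \textbf{Step (1)/(2): realizing multiplication of projective coordinates using only rank-$1$ (per-qubit degree-one) QSAT clauses while keeping an SDR.} A naive ``product'' constraint is quadratic in a shared qubit, which is not allowed; the standard trick is to split into two independent qubits constrained to be equal, but the equality constraint itself must be rank-$1$ and must not consume a representative needed elsewhere — threading this needle for every one of the $\Theta(s\log d)$ gadgets, and ensuring the $\perp$-vectors defining the product clauses are chosen generically enough that the \emph{only} product-state solutions are the intended ones (no spurious branches where some coordinate is forced to $0$ prematurely), is the delicate part. A secondary subtlety is bookkeeping the homogenization degree so that the final clause genuinely encodes $\sum_j c_j x^{e_j}y^{d-e_j}$ and not some other weighting; this should follow from a clean invariant (``qubit $q$ carries $(x^{a_q}y^{b_q},\,y^{a_q+b_q})$ with $a_q+b_q$ tracked inductively''), but stating and maintaining that invariant across gadget compositions is where the real care is needed. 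Once the gadgets and their SDR are nailed down, the equivalence $p(x/y)=0\iff \ket{v}\text{ solves }\Pi$ is essentially immediate by unwinding the invariants.
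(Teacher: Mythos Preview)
Your proposal shares the paper's core machinery: homogenize $p$ to $q(x,y)=\sum_j c_j x^{e_j}y^{d-e_j}$, use 2- and 3-local transfer-function clauses (the paper's \Cref{eq:2local} and \Cref{eq:3local}) to realize linear and quadratic arithmetic on qubit amplitudes, build powers $(x^i,y^i)$ by square-and-multiply, and finish with one clause forcing the assembled numerator to vanish. The substantive difference is in how the monomials are combined. You assemble each monomial separately and add them via a binary tree of 3-local gadgets; the paper instead runs a Horner-style recursion, writing $q_t(x,y)=x^{j_t}r_t(x,y)+c_{t,0}y^{d_t}$ (with $j_t$ the least index such that $c_{t,j_t}\neq 0$) and using one 3-local clause per step to turn $v_{j_t}=(x^{j_t},y^{j_t})$ and the recursively built $w_{d_t-j_t}=(r_t,y^{d_t-j_t})$ directly into $w_{d_t}=(q_t,y^{d_t})$. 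The recursion depth is then manifestly $s$, and no separate padding or addition tree is needed.

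This difference matters precisely at the obstacle you flag as ``the delicate part.'' In the paper's Horner chain, the transfer function producing $w_{d_t}$ has $\overline{x}=(-y^{d_t},\,x^{j_t}r_t(x,y)+c_{t,0}y^{d_t})$; at $y=0$ this becomes $(0,c_d x^{d_t})\neq 0$ whenever $x\neq 0$, so every qubit is fully determined, the final qubit is forced to $(c_d x^d,0)$, and the last clause then yields $x=0$, a contradiction---hence no spurious solutions. In your tree scheme, by contrast, the padding gadget producing $(x^{e_j}y^{d-e_j},y^d)$ from $(x^{e_j},y^{e_j})$ and $(x^{d-e_j},y^{d-e_j})$ has transfer function proportional to $y^{d-e_j}$; when $y=0$ and $e_j<d$ this vanishes, those monomial qubits become unconstrained, and by setting them (and all intermediate sums) to $(1,0)$ one makes every addition gadget's transfer function vanish as well, so the root can be set to $(0,1)$ and the final clause is satisfied spuriously. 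So your worry is well-founded for the tree decomposition and is not cured by generic choice of $\perp$-vectors; the paper's Horner chain is exactly the structural fix. A smaller point: matching each clause to its \emph{output} qubit leaves the final 1-local clause unmatched; the paper handles this by observing the construction has an almost extending edge order and shifting the matching one step along the chain (so the input qubit $v_0$ represents the first clause, freeing each output for the next).
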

\noindent In words, $\SFTA$ can be reduced to QSAT with SDR if we require $\ket{v}$ to \emph{perfectly} satisfy all clauses, i.e. $\SFTA$ is contained in the version of $\MHS$ with error $\epsilon=0$. 
(Recall, however, that we do not allow $\epsilon=0$ in \Cref{def:MHSinformal}, as the resulting class does not obviously allow poly-time verification of solutions.)
We believe a more careful analysis of our construction behind \Cref{thm:embedsparseInformal} should yield the desired containment in $\MHS$.

In the reverse direction, we believe $\MHS\not\subseteq\SFTA$. This belief notwithstanding, by leveraging an old result of Canny~\cite{cannyAlgebraicGeometricComputations1988}, we show that generic (\Cref{def:generic}) instances of QSAT with WSDR \emph{can} be embedded into the roots of a single, high-degree polynomial $p$ (\Cref{thm:pspace}). (In fact, one obtains something stronger, known as a \emph{geometric resolution}, i.e. a set of rational functions $\set{r_i}$, so that when $r_i$ is fed the $j$th root of $p$, it produces the $i$th amplitude of the $j$th solution to QSAT.)
The polynomials $p$ and $r_i$, however, are only poly-\emph{space} computable, which is why this cannot yield $\MHS\subseteq\SFTA$.\\

\noindent\emph{NP-hardness results.} Via the construction of \Cref{thm:embedsparseInformal}, we can also show that even \emph{slight} variants of QSAT with SDR are no longer in TFNP (assuming $\PP\neq \NP$), but rather NP-hard. 

\begin{restatable}{theorem}{thmEqual}\label{thm:equal}
  It is $\NP$-hard to decide whether a $3$-QSAT system with an SDR has a product state solution, such that $|x|=|y|$, where $x,y$ are the entries of a prespecified qubit.
\end{restatable}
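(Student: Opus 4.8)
The plan is to combine the embedding of \Cref{thm:embedsparseInformal} with a classical NP-hardness result for sparse univariate polynomials, exploiting one simple observation: for a qubit with (unnormalized) state $(x,y)^T$, the requirement $\abs x=\abs y$ says precisely that the ratio $r:=x/y$ lies on the unit circle, $\abs r=1$. Since \Cref{thm:embedsparseInformal} turns a sparse monic polynomial $p$ into a $3$-QSAT instance $\Pi$ carrying an SDR whose satisfying product states are exactly the $\ket{v_1}\otimes\cdots\otimes\ket{v_N}$ with $\ket{v_1}=(x,y)^T$ and $p(x/y)=0$, the problem in the theorem becomes, on the image of this reduction, the question: \emph{does $p$ have a root on the unit circle?} So it suffices to show this question is NP-hard for polynomials of the shape produced by (equivalently, accepted by) \Cref{thm:embedsparseInformal}.

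For the base hardness I would reduce from \textsc{Subset Sum} (with integers in binary), constructing from an instance a sparse monic polynomial $p\in\CC[x]$ with $O(\log d)$ nonzero terms, degree $d$ exponential in the input size, small coefficients, and $p(0)\ne0$, such that $p$ has a root of modulus one if and only if the \textsc{Subset Sum} instance is solvable. This is in the spirit of Plaisted's NP-hardness results for root-existence and divisibility problems on sparse polynomials, which are typically phrased in terms of common roots of two sparse polynomials, or of cyclotomic factors / roots of unity; the point is to package the construction into a \emph{single} polynomial and to certify that its only modulus-one roots are the ``intended'' ones, so that the combinatorial condition is faithfully captured. Feeding such a $p$ into \Cref{thm:embedsparseInformal} yields in polynomial time a $3$-QSAT instance $\Pi$ with an SDR of size $O(s\log d)$, polynomial in the input, on qubits $v_1,\dots,v_N$.

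Now prespecify the qubit $v_1$ and record the two directions of the equivalence. Any satisfying product state of $\Pi$ has $y\ne0$: a monic degree-$d$ polynomial has no root ``at infinity'' (equivalently, monicity forces $x=0$ whenever $y=0$, impossible for a unit vector), so by \Cref{thm:embedsparseInformal} the forced value $x/y$ is finite; thus $r:=x/y$ is a genuine root of $p$ and $\abs x=\abs y\iff\abs r=1$. Conversely, given any root $r$ of $p$ with $\abs r=1$, the state $\ket{v_1}=\tfrac{1}{\sqrt2}(r,1)^T$ satisfies $\abs x=\abs y=\tfrac{1}{\sqrt2}$ and $p(x/y)=p(r)=0$, so by \Cref{thm:embedsparseInformal} it extends to a satisfying product state of $\Pi$. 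Hence $\Pi$ (with its SDR) has a satisfying product state with $\abs x=\abs y$ on $v_1$ exactly when $p$ has a root on the unit circle, and NP-hardness of the latter transfers to the theorem.

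The main obstacle is the first ingredient --- producing the NP-hardness in precisely the form needed: a \emph{single}, \emph{sparse}, \emph{monic} polynomial, the existence of a root of \emph{modulus one} rather than the ``common-root'' or ``cyclotomic-factor'' phrasings available off the shelf, and with coefficients and term-count small enough to survive the $O(s\log d)$ expansion in \Cref{thm:embedsparseInformal}. Once that is in place, the rest --- the finiteness-of-$x/y$ remark and the two directions above --- is routine given \Cref{thm:embedsparseInformal}.
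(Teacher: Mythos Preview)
Your approach is essentially the same as the paper's: reduce the NP-hard problem of deciding whether a sparse polynomial has a root of modulus $1$ to the stated $3$-QSAT question via the embedding of \Cref{thm:embedsparse}, using the observation that $\abs{x}=\abs{y}$ on the prespecified qubit $v_1=(x,y)^T$ is precisely $\abs{x/y}=1$. Your handling of the correspondence (including the $y\neq 0$ point) matches the paper's construction.

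The one place you diverge is in treating the base hardness as an obstacle to be built from scratch. You write that Plaisted-style results are ``typically phrased in terms of common roots of two sparse polynomials, or of cyclotomic factors / roots of unity,'' and propose to engineer a fresh \textsc{Subset Sum} reduction to a single sparse polynomial with a modulus-one root. In fact, Plaisted~\cite{Pla84} proves exactly the statement you need: deciding whether a sparse polynomial has a root of modulus $1$ is NP-hard. The paper's proof is accordingly a one-liner citing~\cite{Pla84}. (Also, the embedding in \Cref{thm:embedsparse} does not require $p$ to be monic; $c_d\neq 0$, i.e.\ $\deg p=d$, already forces $y\neq 0$, and $p(0)\neq 0$ is the only side condition from \Cref{thm:embed-polynomial}.) So your plan is correct but does more work than necessary.
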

\begin{restatable}{theorem}{thmAdded}(c.f. \cite{goerdtMatchedInstancesQuantum2019})\label{thm:added}
  It is $\NP$-hard to decide whether a $3$-QSAT system with an SDR and one additional clause has a product state solution.
\end{restatable}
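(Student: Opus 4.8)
The plan is to combine a classical Plaisted-style \NP-hardness result for \emph{sparse} univariate polynomials with the embedding of \Cref{thm:embedsparseInformal}, using a single extra clause to ``glue'' two embedded instances together. Recall Plaisted's reduction: from a $3$-CNF $\phi$ on $n$ variables one efficiently constructs a sparse polynomial $p_\phi\in\CC[x]$ of degree $d<N$ (rescaled to be monic if necessary), where $N=p_1\cdots p_m$ is a product of the first $m=\poly(n)$ primes — so $N=2^{\poly(n)}$ and $p_\phi$ has only $\poly(n)$ nonzero terms — such that $\phi$ is satisfiable if and only if $p_\phi$ and $x^N-1$ share a common root; equivalently, some $N$-th root of unity is a root of $p_\phi$. (Essentially the same reduction underlies \Cref{thm:equal}, where the unit-circle roots one tests for are exactly such roots of unity.) Crucially, for unsatisfiable $\phi$ the polynomial $p_\phi$ has \emph{no} $N$-th root of unity among its roots at all.

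Next I would invoke \Cref{thm:embedsparseInformal} twice. Applied to $x^N-1$, which is $2$-sparse of degree $N$, it yields a $3$-QSAT instance $\Pi_{\mathrm{cyc}}$ with an SDR, on $O(\log N)=\poly(n)$ qubits, with a distinguished qubit $v$, whose product solutions correspond exactly to the $N$-th roots of unity via $\ket{v}=(x,y)^T\mapsto x/y$. Applied to $p_\phi$ it yields a $3$-QSAT instance $\Pi_\phi$ with an SDR, on $O(s_\phi\log d)=\poly(n)$ qubits, with distinguished qubit $\tilde v$, whose product solutions correspond to the roots of $p_\phi$ (the construction forces the remaining qubits as functions of the distinguished one, so every product solution arises this way). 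Take the disjoint union $\Pi:=\Pi_{\mathrm{cyc}}\sqcup\Pi_\phi$ on disjoint qubit sets; since each factor has an SDR and they share no qubits, $\Pi$ has an SDR. Finally add exactly one clause: the singlet projector $C$ on the pair $(v,\tilde v)$, i.e.\ the rank-$1$ projector onto $\ket{01}-\ket{10}$.

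The key point is that $C$ forces $(x:y)=(\tilde x:\tilde y)$ on any product state, because $\bra{01}-\bra{10}$ annihilates $(x,y)^T\otimes(\tilde x,\tilde y)^T$ precisely when $x\tilde y-y\tilde x=0$. Hence a product state is a solution of $\Pi\cup\{C\}$ iff its restriction to $\Pi_{\mathrm{cyc}}$ is a solution (so $r:=x/y$ is an $N$-th root of unity), its restriction to $\Pi_\phi$ is a solution (so $r':=\tilde x/\tilde y$ is a root of $p_\phi$), and $r=r'$ — that is, iff $p_\phi$ has a root that is an $N$-th root of unity, iff $\phi$ is satisfiable. Since $\Pi$ is a $3$-QSAT system with an SDR, $C$ is a single additional $2$-local clause, and $|\Pi\cup\{C\}|=\poly(n)$ is computed in polynomial time, this gives the claimed \NP-hardness. (As a sanity check, $\Pi$ alone is in \TFNP\ by \Cref{thm:wsdr-product-solution}: one may freely set the two distinguished qubits to encode unrelated roots, so a product solution always exists; it is precisely the extra clause $C$ that destroys totality.)

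The main obstacle I anticipate is plumbing the external ingredient at the right granularity: one must state a Plaisted-type reduction in exactly the form needed — sparsity $\poly(n)$, degree singly exponential, output an \emph{exact} root that is an $N$-th root of unity — and verify soundness really rules out \emph{all} unit-circle values of $x/y$, not merely rational ones, when $\phi$ is unsatisfiable. A secondary point to check is that the construction behind \Cref{thm:embedsparseInformal} produces clauses of locality at most $3$, leaves the distinguished qubit free to be glued, and has the property that every product solution is the forced completion of its distinguished qubit; all of this should follow from the rank-$1$, at-most-$3$-local multiplication/copy gadgets used to build the embedding, together with the fact that appending a single $2$-local clause to a disjoint union of SDR-instances keeps us within ``$3$-QSAT with an SDR plus one additional clause'' by construction.
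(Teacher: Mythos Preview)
Your approach is correct and is essentially the same as the paper's: both reduce from Plaisted's \NP-hardness of deciding whether two sparse polynomials share a common root, and both use the sparse-polynomial embedding of \Cref{thm:embedsparse} to realize each polynomial as a $3$-QSAT-with-SDR instance, with one extra clause tying the two root-encodings together. The only cosmetic difference is in the gluing: the paper embeds both polynomials on a \emph{shared} base qubit $v_0$ (so the second polynomial's final ``evaluate to zero'' constraint is the single unmatched edge), whereas you build two disjoint copies and spend your one extra clause on a singlet equality constraint between the two distinguished qubits. Both are valid, and your modular version has the advantage that you never have to reopen the embedding construction to check that two polynomial gadgets can coexist on the same $v_0$.

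One small clarification on your anticipated obstacle: you do \emph{not} need soundness against arbitrary unit-circle values of $x/y$ here (that is the concern for \Cref{thm:equal}). In your construction the equality clause forces $x/y=\tilde x/\tilde y$, and $\tilde x/\tilde y$ is already constrained by $\Pi_{\mathrm{cyc}}$ to be an $N$-th root of unity, so soundness only needs Plaisted's statement that for unsatisfiable $\phi$ no $N$-th root of unity is a root of $p_\phi$. Your other checks (locality $\le 3$, the distinguished qubit is free to be glued, every product solution of each embedded instance is the forced completion of its distinguished qubit) do indeed follow from the transfer-function gadgets used in the proof of \Cref{thm:embed-polynomial}/\Cref{thm:embedsparse}, since those gadgets never vanish on nonzero inputs.
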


\noindent The second result above was first shown by Goerdt~\cite{goerdtMatchedInstancesQuantum2019} using different techniques. 

Finally, to complete the picture, we show that in contrast to \Cref{thm:added}, classical SAT with SDR with $O(1)$ additional clauses again becomes easy (\Cref{thm:BKS})! This mirrors precisely the behavior \Cref{thm:MHScompleteInformal} exhibits for $\MHS$-hardness of QSAT with SDR versus the fact that classical SAT with SDR is efficiently solvable; see \Cref{fig:complexity}.

\paragraph{d. Efficiently solvable special cases of QSAT with WSDR.} Since the MHS-completeness of \Cref{thm:MHScompleteInformal} suggests QSAT with WSDR cannot be efficiently solved, the last part of this work rounds out our study by showing how to extend the parameterized algorithm of \cite{aldiEfficientlySolvableCases2021} in three different directions to solve new special cases efficiently. 

Our first two results here concern the qubit case, and are complementary. 
In this setting, \cite{AdBGS21} efficiently solves QSAT with SDR for {generic} (\Cref{def:generic}) instances of \emph{transfer type} $b=n-m+1$ (\Cref{def:transfer-type}), where $m$ denotes the number of constraints and $n$ the number of qubits.
Recall \emph{non-generic} instances allow constraints which are not entangled across some bipartite cuts, and a transfer filtration (\Cref{def:transfer-type}) of transfer type $b$ is a type of hyperedge ordering built on an initial subset of $b$ qubits.

We first show that the generic assumption can be dropped if one assumes an ``almost extending edge order'' (\Cref{def:edgeorder}), which in turn implies the existence of an SDR~\cite{AdBGS21}: 
\begin{theorem}[Informal; see \Cref{thm:alg}]\label{thm:algInformal}
  Let $\Pi$ be a $k$-QSAT instance on qubits whose interaction hypergraph $G$ has an almost extending edge order of radius $r$.
  Then an $\epsilon$-approximate solution can be computed in time $\poly(L, \log1/\epsilon, k^r)$, where $L$ is the input size.
\end{theorem}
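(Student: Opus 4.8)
The starting point is the parameterized algorithm of \cite{aldiEfficientlySolvableCases2021} (building on \cite{AdBGS21}), which processes the hyperedges in the order given by a transfer filtration (\Cref{def:transfer-type}) while maintaining a partial product state $\ket{\psi_1}\otimes\cdots\otimes\ket{\psi_j}$ satisfying every clause seen so far. When the next edge $e$ introduces a fresh qubit $v$, fixing the already-assigned qubits inside $e$ turns the rank-$1$ constraint into a single linear equation $\langle w|\psi_v\rangle=0$ on the two amplitudes of $v$; \emph{generically} $|w\rangle\neq 0$, so the solution space is one-dimensional and we pick any nonzero vector in it. When $e$ introduces no new qubit, genericity is precisely what forces the amplitudes chosen so far to annihilate $e$. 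Our goal is to confine the use of genericity to a bounded region: an almost extending edge order of radius $r$ (\Cref{def:edgeorder}) is, informally, an ordering that behaves like an extending order everywhere except inside radius-$r$ hypergraph balls around a constant number of ``defect'' edges.

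The plan is threefold. (1) Identify the defect edges from the almost extending edge order and let $B$ be the set of qubits within hypergraph-distance $r$ of them; by the branching structure captured in \Cref{def:edgeorder}, $|B| = O(k^{r})$. (2) Restrict $\Pi$ to $B$, obtaining a $k$-QSAT sub-instance on $O(k^{r})$ qubits whose clauses form a multi-homogeneous system of size bounded in $k$ and $r$; compute an $\epsilon$-approximate product solution of it in time $\poly(k^{r},\log 1/\epsilon)$ by brute force (e.g.\ the constant-size solver underlying \Cref{def:MHSinformal}, or a grid/homotopy search), enumerating a polynomially bounded family of candidate solutions if the local solution set is positive-dimensional. (3) Seed the \cite{AdBGS21} propagation with this partial assignment and process the remaining edges in the almost extending order: each either introduces one new qubit, solved by a single $1\times 2$ linear system as above, or introduces no new qubit and is automatically consistent. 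Concatenating the amplitudes gives the global product state $\ket{\psi_1}\otimes\cdots\otimes\ket{\psi_n}$, and we output the first candidate that verifies.

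For correctness I would establish: (i) \emph{the local sub-instance is satisfiable} --- the almost extending edge order implies an SDR~\cite{AdBGS21}, hence a product solution of the full instance by \Cref{thm:wsdr-product-solution}, whose restriction to $B$ satisfies all clauses supported in $B$; (ii) \emph{the post-$B$ propagation never gets stuck} --- outside $B$ the order is extending in the sense of \cite{AdBGS21}, so every ``new qubit'' step is a well-posed linear solve ($|w\rangle\neq 0$) and every ``no new qubit'' edge is consistent. The runtime is dominated by the single brute-force solve of cost $\poly(k^{r},\log 1/\epsilon)$ together with $\poly(L)$ fixed-size linear-algebra operations; tracking error through these with $\poly(L,\log 1/\epsilon)$ bits of precision --- using a quantitative nondegeneracy clause in \Cref{def:edgeorder} (or a tiny random perturbation of the free amplitudes) to bound condition numbers --- keeps the accumulated error below $\epsilon$, yielding the claimed $\poly(L,\log 1/\epsilon,k^{r})$.

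The main obstacle I anticipate is the interface between the brute-forced region $B$ and the propagated region: one must verify that pinning the amplitudes on $B$ genuinely makes the remaining edge order extending --- i.e.\ that \emph{every} genericity failure is captured by the radius-$r$ balls --- and that only a polynomially bounded number of local ``branches'' need be tried. This is exactly the structural content \Cref{def:edgeorder} must be engineered to deliver, and making it precise (along with the accompanying $|B|=O(k^{r})$ count) is the crux. A secondary difficulty is numerical stability in the absence of genericity, since the linear maps involved can be ill-conditioned: one must either bake a quantitative nondegeneracy guarantee into \Cref{def:edgeorder} or argue that an infinitesimal perturbation of the free amplitudes preserves the existence of an $\epsilon$-approximate product solution.
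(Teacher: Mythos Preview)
Your approach diverges from the paper's and contains a genuine gap. The core misconception is that non-genericity is a \emph{local} phenomenon confined to a ball around the non-extending edge. It is not: a transfer function $g$ from \Cref{lem:transfer-function} can vanish at \emph{any} edge, extending or not, whenever the constraint happens to be satisfied already by the partial assignment (e.g.\ $\ket{\phi}=\ket{000}$ with $\ket{v_1}=\ket{1}$). Your step (ii), ``outside $B$ every new-qubit step is a well-posed linear solve ($|w\rangle\neq 0$)'', simply reinstates the genericity assumption you set out to remove. Brute-forcing a local region does nothing to prevent such vanishing downstream. Relatedly, the \emph{radius} in \Cref{def:radius} is not a hypergraph distance but the depth of the transfer-function composition chain; there is no ball $B$ with $|B|=O(k^r)$ to isolate.

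The paper's proof takes a completely different route. It fixes all but one foundation qubit $u_0$, writes every other qubit as a polynomial in the amplitudes of $u_0$ via transfer functions, and turns the single non-extending edge into a univariate polynomial $q$ of degree at most $(k-1)^r$ whose root is found exactly over the algebraic numbers. The non-generic case is then handled by \emph{recursion}: after substituting the root, some $g_i$ may vanish; one assigns the non-vanishing qubits, reduces the remaining constraints, and observes that the residual system is again almost extending (with strictly smaller radius), so the whole procedure repeats. At most $r$ recursions occur, and the $k^r$ in the runtime comes from the degree of $q$, not from the size of any local patch.
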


\noindent We then show that, instead of dropping the generic assumption, one can instead relax the transfer type assumption and still obtain a parameterized algorithm:

\begin{theorem}[Informal; see \Cref{thm:parameterized}]\label{thm:parameterizedInformal}
  Let $\Pi$ be a $k$-QSAT instance on qubits whose interaction hypergraph $G$ is $k$-uniform and has a $(k-1)$-almost extending edge order with radius $r$.
  Then an $\epsilon$-approximate solution can be computed in time $\poly(L, \abs{\log\epsilon}, k^r, m^k)$, where $L$ is the input size.
\end{theorem}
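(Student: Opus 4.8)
The plan is to extend the algorithm behind \Cref{thm:algInformal} (equivalently, the parameterized algorithm of \cite{AdBGS21,aldiEfficientlySolvableCases2021}) by replacing its reliance on the generic assumption with a direct, if more expensive, elimination-based subroutine that handles the non-generic clauses appearing along a $(k-1)$-almost extending edge order. Recall that the algorithm of \cite{AdBGS21} processes the hyperedges in the order given by a transfer filtration: one fixes an initial set of ``free'' qubits (the first block of the filtration), and then each new hyperedge $e$, when it is added, touches at most one qubit not yet assigned, so the constraint $\Pi_e\ket{\psi}=0$ can be solved for that fresh qubit's state as a function of the already-determined neighbors. In the generic case this solve is always a well-posed linear-algebra step (the constraint restricted to the fresh qubit is generically full rank on the relevant subspace), and one carries along a parameterization of the solution space whose dimension is controlled by the radius $r$, giving the $k^r$ factor. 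In the $(k-1)$-almost-extending, $k$-uniform regime, the new twist is that a bounded number of hyperedges may ``lag'': a new edge can touch up to $1$ fresh qubit in the extending part but also interact non-generically with already-processed qubits, so the local solve is no longer a clean substitution but a small polynomial system.

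The key steps, in order, are: (1) Formalize the $(k-1)$-almost extending edge order and the associated notion of the ``core'' qubits and the non-extending edges, bounding the number of the latter and their total qubit-support by functions of $r$ and $k$ — this is where the $k^r$ factor will ultimately come from, via a branching/parameterization over the core amplitudes. (2) Show that, after conditioning on a parameterization of the core (as in \cite{AdBGS21}), processing the extending edges in order reduces, edge by edge, to solving for one fresh amplitude vector in $\CC^{k}$ (really $\CC$ up to normalization, since it's a qubit whose other coordinate is pinned by an earlier edge) from a single rank-$\le 1$ projector constraint — a linear step — but that the at-most-$O(r)$ non-extending edges each impose a polynomial constraint of degree $\le k$ on the running parameterization. (3) Collect the non-extending constraints into a polynomial system in $\poly(r)$ variables of degree $\le k$ and total $\le m$ equations; solve it by a black-box symbolic method (e.g.\ Gröbner bases / geometric resolution, or repeated resultants), whose cost is $\poly$ in the number of monomials, i.e.\ $m^{O(k)}$ — this is the source of the $m^k$ factor. (4) Finally, push the exact symbolic solution through a standard root-approximation / separation-bound argument to get an $\epsilon$-approximate product state, contributing the $\abs{\log\epsilon}$ factor, exactly as in \Cref{thm:algInformal}. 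Combining, the total runtime is $\poly(L,k^r,m^k,\abs{\log\epsilon})$.

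I expect the main obstacle to be step (2)–(3): controlling how the non-extending edges interact with the parameterization. Concretely, one must argue that relaxing the transfer-type hypothesis to ``$(k-1)$-almost extending'' still keeps the number of ``bad'' edges (those whose local solve is not a substitution) bounded by something depending only on $r$ and $k$, and that the polynomial system they generate has dimension $\le \poly(r)$ so that the $m^{O(k)}$ bound on the algebraic step is honest. A subtlety here is that a non-extending edge can, in the worst case, re-link qubits across previously independent branches of the parameterization, potentially blowing up the effective dimension; the fix should be to argue that the radius-$r$ condition already confines all such interactions to an $O(r)$-size neighborhood of the core, so the blow-up is absorbed into the $k^r$ term. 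A second, more technical point is ensuring the symbolic system is \emph{consistent} — i.e.\ has a solution — which is where totality (the WSDR/SDR guarantee via \Cref{thm:wsdr-product-solution}, combined with the ``almost extending edge order implies SDR'' fact from \cite{AdBGS21}) must be invoked, rather than re-proven.
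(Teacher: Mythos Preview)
Your proposal has the roles of \Cref{thm:algInformal} and \Cref{thm:parameterizedInformal} reversed, and this propagates into several concrete errors. The formal version (\Cref{thm:parameterized}) is stated for \emph{generic} instances and relaxes the transfer type from $b=n-m+1$ to $b=n-m+k-1$; it does not drop genericity. So the plan of ``replacing the generic assumption with an elimination subroutine for non-generic clauses'' is aimed at the wrong target. Relatedly, the number of non-extending edges is exactly $k-1$ by definition of a $(k-1)$-almost extending order, not $O(r)$; and after chaining transfer functions through radius $r$, the resulting constraints in the foundation variables have degree $\Theta((k-1)^r)$, not $\le k$. This is the actual source of the $k^r$ factor. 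The $m^k$ factor in the paper comes from enumerating the $\binom{m}{k-1}$ choices of which edges are non-extending when searching for a minimum-radius order, not from a Gröbner-basis solve.

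The step you are missing entirely is structural: one must show that $G$ has an SDR that touches only $k-1$ of the foundation vertices, so that the remaining foundation qubits can be set to $\ket{0}$ while (i) preserving genericity of the reduced instance and (ii) leaving finitely many solutions. The paper does this via a Menger-type argument on an auxiliary directed graph. With the foundation pruned to $k-1$ qubits, the non-extending constraints become a system in $O(k)$ variables (not $\poly(r)$) of degree $\poly(k^r)$, and a separate lemma (that generic solutions never break any transfer function) guarantees the reduced system has a solution extending to the full instance. The final solve uses a decision procedure for the existential theory of the reals (Renegar), which gives the $\poly(L,k^r,\abs{\log\epsilon})$ dependence; a Gröbner/resultant approach on a system of degree $k^r$ would not obviously stay polynomial in $k^r$.
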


Finally, we sketch how to extend the algorithm of~\cite{AdBGS21} to QSAT on qu\emph{d}its with \emph{W}SDR. 
This allows us to obtain an exponential speedup over brute force for solving a new high-dimensional, non-trivial (but artificial) infinite family of instances on \emph{Pinwheel Hypergraphs} (\Cref{fig:pinwheel}).

\subsection{Techniques}\label{sscn:techniques} 
For brevity, we focus on our main results, (b) and (c). Brief techniques overviews for (a) and (d) are given at the beginning of their respective sections, \Cref{scn:prodSolution} and \Cref{scn:algs}.

\paragraph{b. A new subclass of TFNP based on \Bez's theorem.}  
For the MHS-completeness in \Cref{thm:MHScompleteInformal}, containment in $\MHS$ holds since \PS\ can be written as a special case of solving multi-homogeneous systems as follows.
In the case of $2$-QSAT, for example, a tensor product state $\ket{\alpha_1,\beta_2}:=\ket{\alpha}\otimes\ket{\beta}$ on two qubits satisfies a $2$-local constraint $\ket{\phi}$ if and only if
\begin{align}
  0=\braket{\phi}{\alpha_1,\beta_2}=\sum_{i,j\in[2]}\phi^*_{i,j}\alpha_i\beta_j.
\end{align} 
The right hand side above is a multilinear polynomial in the amplitudes $\set{\alpha_1,\alpha_2}$ (respectively, $\set{\beta_1,\beta_2}$) of $\ket{\alpha}$ (respectively, $\ket{\beta}$).
So, we will treat these amplitudes as variables in a system of multi-linear polynomials.
The catch is that there is an independent normalization condition implicit on each qudit's amplitudes; in our example here, both $\abs{\alpha_1}^2+\abs{\alpha_2}^2=1$ and $\abs{\beta_1}^2+\abs{\beta_2}^2=1$ must be independently satisfied. 
Since we will later work in projective space, however, this normalization is not explicitly enforced (other than the implicit constraint $\ket{\alpha},\ket{\beta}\neq 0$). 
Instead, we must allow the amplitudes of $\ket{\alpha}$ and $\ket{\beta}$ to adhere to different ``length scales'', since the assignments our system gives to them may lead to different norms for each vector.
And now we come to why we require \emph{multi}-homogeneous systems instead of homogeneous systems in this paper --- recall that by definition, a multi-homogeneous system allows us to partition variables into sets $S_i$, so that each polynomial is homogeneous with respect to each $S_i$. 
Thus, by setting $S_i$ to represent the amplitudes of qudit $i$, we obtain that each quantum constraint is independently homogeneous with respect to each qudit $i$. (Each monomial will have degree $0$ or $1$, depending on whether the constraint acts on qudit $i$.) In other words, each qudit's amplitudes implicitly has its own independent normalization.

As for hardness, to reduce multi-homogeneous systems to \PS, the ideal aim is to represent each variable group by a single qudit. 
In other words, if variable group $S_i$ contains $n_i$ variables, we embed each variable as an amplitude of an $n_i$-dimensional qudit $q_i$. 
The first problem this presents is that monomials in a multi-homogeneous system need not be \emph{linear} in each variable set $S_i$.
To thus simulate non-linearity, we create multiple copies of each $q_i$; by placing constraints on these simultaneously, we can create products of amplitudes from $q_i$.
However, this raises a second challenge --- this logic only holds when each copy of $q_i$ has an \emph{identical} assignment! 
The natural way to resolve this is to enforce equality between all copies of $q_i$ by adding projectors onto the antisymmetric subspace. 
This, however, does not work for us, as the rank of the antisymmetric subspace for qu$d$its with $d>2$ is too large, requiring the addition of too many rank-$1$ constraints for an SDR to exist. 
To overcome this, we instead utilize the qudit-to-qubit reduction from our second proof of \Cref{thm:wsdr-product-solution}, which is a mapping iteratively replacing each $d$-dimensional qudit with a pair of $2$- and $(d-1)$-dimensional qudits. 
Thus, each qu$d$it is replaced with $d-1$ qu$b$its, and we show that the mapping preserves \PS\ solutions.
We are finally now in business, because on pairs of \emph{qubits}, the projector onto the antisymmetric subspace is of rank $1$, and thus we can show that there exists an SDR for the instance output by our reduction.

\paragraph{c. A new subclass of TFNP based on the Fundamental Theorem of Algebra.} We discuss the proof of \Cref{thm:embedsparseInformal}, which recall shows how to embed the roots of an arbitrary sparse polynomial $p$ of exponential degree $d$ into the solution set of a QSAT with SDR instance.
The tool we start with is a \emph{transfer function} (used also, e.g., in \cite{bravyiEfficientAlgorithmQuantum2006,laumannProductGenericRandom2010}; see \Cref{lem:transfer-function}), which roughly is the quantum generalization of the following standard classical approach for propagating assignments: Given (e.g.) clause $(x\vee y\vee z)$, if $x=y=0$, then $z=1$ necessarily. 
Via this tool, we show how to design $2$-local (respectively, $3$-local) rank-$1$ QSAT constraints which force a target qubit to encode  any desired \emph{linear} (respectively, \emph{quadratic}) operations on an input state $(x, y)^T$. 
For example, via a $2$-local constraint $\ket{\phi_{12}}$ on qubits $1$ and $2$, we can enforce that if qubit $1$ has assignment $(x,y)^T$, then in order to satisfy $\phi_{12}$, qubit $2$ must be set (proportional to) $(a_1x + a_2 y, b_1x + b_2 y)^T$, for any desired $\abs{a_1}^2+\abs{a_2}^2=\abs{b_1}^2+\abs{b_2}^2=1$.

With these gadgets in hand, we then move to encoding input polynomial $p$ into QSAT by designing three sets of clauses. 
To begin, we homogenize $p(x)$ to a bivariate polynomial $q(x,y)$, and let $\ket{v_0}=(x,y)^T$ denote an assignment to the first qubit.
Ultimately, this $x$ and $y$ will end up encoding our roots to $p$.
Our first set of contraints uses transfer functions and square-and-multiply to create new qubits of various powers of $x$ and $y$, i.e. ``power qubits'' whose assignments must be proportional to $(x^i, y^i)^T$.
Our second set of constraints then combines these power qubits with our transfer function gadgets to recursively construct $q(x,y)$ in a final target qubit, whose assignment must be proportional to $(q(x,y), y^d)^T$.
The third set is a single constraint, which forces the target qubit's state $(q(x,y), y^d)^T$ to be proportional to $(0, 1)$, which enforcing $q(x,y)=0$. 
By ``undoing'' the homogenization, we can then show that $p(x/y)$ must be a root of $p$.

\subsection{Discussion and open questions}\label{sscn:discussion}

\paragraph{Question and answer.} As this work bridges rather disjoint areas of study (TFNP, polynomial systems, and quantum satisfiability), we address possible comments/questions to set further context.
\begin{enumerate}
  
  \item \emph{Are product state solutions to quantum satisfiability problems interesting?} Generally speaking, yes. Although solutions to quantum satisfiability problems are typically entangled, product state solutions have a long history of being used as an ansatz to study properties of local Hamiltonians (i.e. ``quantum constraint satisfiability problems'') in the \emph{mean-field theory} physics literature~\cite{gharibianQuantumHamiltonianComplexity2015}. For example, mean-field ansatzes suffice to efficiently approximate ground state energies of planar~\cite{bansalClassicalApproximationSchemes2009,brandaoProductStateApproximationsQuantum2016} and dense~\cite{gharibianApproximationAlgorithmsQMAComplete2012,brandaoProductStateApproximationsQuantum2016} local Hamiltonians to within any desired relative error $(1\pm \epsilon)$ for $\epsilon>0$. In the case of $2$-local frustration free Hamiltonians (as in $2$-QSAT), \emph{exact} product-state solutions always exist and can be found ~\cite{bravyiBoundsQuantumSatisfibility2009,chenNogoTheoremOneway2011}, which has implications such as the fact that such Hamiltonians cannot be used to prepare resource states for one-way quantum computing~\cite{chenNogoTheoremOneway2011}. 
  
  \item \emph{Why is adding SDRs to the picture interesting?} PRODSAT with \emph{SDR} is interesting as it falls under the ``dimer model'' of physics~\cite{kenyonWhatDimer2005}, which is useful as it is (1) exactly solvable and (2) aids in understanding phase transitions, which are typically difficult to study. For example, the original motivation of~\cite{laumannProductGenericRandom2010} was to understand the SAT-UNSAT phase transition in random QSAT instances. Therein, dimer coverings/SDRs were used to show that for clause densities below a certain $k$-dependent threshold, random $k$-QSAT instances are satisfiable with probability $1$ by a product state solution. While this did not perfectly resolve the exact SAT-UNSAT threshold, it significantly improved previously known lower bounds.

  \item \emph{Typically TFNP classes (e.g. PPAD) are defined via a complete problem whose input is a circuit succinctly encoding an exponentially large object (e.g. a circuit succinctly encoding an exponentially large graph for END-OF-LINE). On the other hand, MHS and SFTA, have their input explicitly written out?} 
  This is a good discussion point. Traditional ``syntactic'' circuit-based definitions have the advantage that the existence principle for the class is captured by a simple combinatorial complete problem, which can make reasoning about the class easier. This, however, has a downside --- proving hardness results for new problems \emph{not} specified by input circuits, which are arguably more natural, can be more challenging (see, e.g. G\"o\"os, Kamath, Sotiraki and Zampetakis'~\cite{goosComplexityModuloqArguments2020}  non-circuit based PPA$_p$-complete problem ($p\geq 3$ a prime) for the Chevalley-Warning theorem). In contrast, MHS and SFTA may be thought of as ``white-box'' TFNP subclasses, in that the object to be studied (i.e. polynomial equations) is specified explicitly, rather than succinctly via circuit. On the negative side, this has the downside of potentially obscuring the relationship between the class and the existence principle. On the positive side, it can bring establishing hardness results for further natural problems within reach, since the artificial circuit input encoding is bypassed. 
  In our case, this motivation is further strengthed  by the fact that MHS and SFTA are based on polynomials, which themselves are ubiquitous in the sciences, yielding a potentially promising route for characterizing the  complexity of new TFNP problems.


  \item \emph{Is there also combinatorial principle underlying MHS?} Yes and no. No, in that the existence principle for MHS is B\'ezout's theorem, which is algebraic geometric. Yes, in that checking if the B\'ezout number $\bez>0$ boils down to checking if a certain bipartite graph has a perfect matching (\Cref{obs:countWSDR}). More generally, computing $\bez$ itself counts the number of perfect matchings in said graph (which is intractable, but also not necessary for our purposes).
   
  \item \emph{Can MHS or SFTA be related to existing TFNP subclasses?} This would be indeed ideal, but our attempts thus far have not succeeded. The most obvious candidate is PPAD, due to its connection~\cite{Pap94} to Brouwer's fixed point theorem. This is because there is a natural algorithm
   using transfer functions to attempt to solve QSAT with SDR; roughly, this algorithm aims to converge to a product state assignment which is a fixed point under all local transfer functions. Unfortunately, Brouwer's theorem requires convex sets, and the set of product state solutions is \emph{not} convex. Moreover, the standard approach of moving to the convex hull of product states (i.e. mixed separable states) seems to break the transfer function formalism. We thus leave this as what we feel is an important and interesting open question.
\end{enumerate}

\paragraph{Conclusion and open questions.} We have defined and studied two TFNP subclasses connected to complex polynomial systems. The first, Multi-Homogeneous Systems (MHS), leads to the first formal proof of a quantum problem which, on the one hand, is guaranteed to have a ``simple'' (i.e. tensor product) solution, and on the other hand, is potentially intractable. 
As even the ``simpler'' setting of finding common roots of homogeneous polynomial systems in $n+1$ variables and $n$ equations is believed difficult~\cite{Gre14}, we thus view MHS-hardness as a viable indicator for computational hardness. 
Our second class, Sparse Fundamental Theorem of Algebra (SFTA), was used to show that the problem of computing roots of sparse high-degree univariate polynomials can be embedded into computing exact solutions to QSAT with SDR, thus showing SFTA is contained in the zero-error version of MHS. We conjecture in fact that $\SFTA\subseteq\MHS$ --- can this be shown? 

As each member of the trinity studied here (TFNP, polynomial systems, and quantum satisfiability problems) is unto itself a research field, many questions in their intersection remain open. For example, which natural \emph{classical} problems might be complete for MHS or SFTA? Are there other TFNP subclasses related to polynomial systems over complex numbers? As discussed in ``question and answer'' above, can MHS or SFTA be related to standard TFNP subclasses such as PPAD? Similarly, how is the setting of ``syntactic'' (i.e. circuit-based) TFNP subclasses to be understood versus our ``white-box'' setting for MHS and SFTA? 


\paragraph{Organization.} \Cref{scn:prelim} states basic definitions, including formally defining QSAT, \PS, and the connection between \PS\ and polynomial systems. \Cref{scn:WSDR} introduces Weighted SDRs (WSDR), which are then used in \Cref{scn:prodSolution} to give our two proofs of \Cref{thm:wsdr-product-solution}, i.e. that QSAT with WSDR always has a solution. \Cref{scn:MHS} defines our class MHS and proves MHS-completeness of QSAT with SDR (\Cref{thm:MHScompleteInformal}). \Cref{scn:sparsepoly} defines class SFTA, studies its relationship to MHS, and gives the NP-hardness results of \Cref{thm:equal} and \Cref{thm:added}. \Cref{scn:algs} give efficient algorithms for special cases of QSAT with WSDR.

\section{Preliminaries}\label{scn:prelim}

We assume a basic background in quantum computation, see e.g.~\cite{NC00}. 
Basic background in algebraic geometry (e.g. definitions of projective space and varieties) would be helpful for \Cref{sscn:chow} in particular, which introduces the Chow ring, though we have attempted to make this accessible with intuition throughout; see e.g.~\cite{Shafarevich1974,coxIdealsVarietiesAlgorithms2015} for references. 

\paragraph{Notation and basic definitions.} We use $:=$ to indicate a definition. For $\ket{\psi}\in\CC^d$, we define $\norm{\ket{\psi}}_p:=(\sum_{i=1}^d\abs{\psi_i}^p)^{1/p}$. For a linear operator $M:\CC^d\rightarrow\CC^d$, we analogously define $\norm{M}_p$ on the singular values of $M$. $\CC[x_1,\ldots, x_n]$ denotes the set of complex polynomials acting on variables $x_1$ through $x_n$. Throughout this work, we work with polynomials over $\CC$, unless stated otherwise.

\begin{definition}[Lipschitz continuity]\label{def:lipschitz}
  We say function $f:\CC\rightarrow\CC$ is $K$-Lipschitz continuous if for all $x,y\in X$, $\abs{f(x)-f(y)}\leq K\abs{x-y}$.
\end{definition}

\begin{fact}\label{f:polylipschitz}
  Let $X\subseteq \CC$ be such that $\forall x\in X$, $\abs{x}\leq r$. Consider any complex polynomial $p=\sum_{k=0}^dc_kx^k$ of degree $d$, with $s$ non-zero coefficients each of magnitude at most $c$. Then, over set $X$, $p$ is $K$-Lipschitz continuous with $K=scr^{d-1}d$.
\end{fact}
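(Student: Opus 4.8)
The plan is to bound $|p(x) - p(y)|$ for $x, y \in X$ termwise, using the factorization $x^k - y^k = (x-y)(x^{k-1} + x^{k-2}y + \dots + y^{k-1})$. First I would write
\begin{equation*}
  p(x) - p(y) = \sum_{k=0}^d c_k (x^k - y^k) = \sum_{k=1}^d c_k (x-y)\sum_{j=0}^{k-1} x^{k-1-j} y^j,
\end{equation*}
where the $k=0$ term vanishes. Taking absolute values and using $|x|, |y| \le r$, each inner sum $\left|\sum_{j=0}^{k-1} x^{k-1-j}y^j\right|$ is bounded by $\sum_{j=0}^{k-1} r^{k-1} = k r^{k-1} \le d r^{d-1}$ (assuming $r \ge 1$; if $r < 1$ one bounds $r^{k-1} \le 1 \le r^{d-1}$ only after noting the intended regime, but in any case $k r^{k-1}$ is at most $d r^{d-1}$ when $r\ge 1$, which is the relevant case for the applications). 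Hence $|c_k(x^k - y^k)| \le |c_k|\, d r^{d-1} |x-y|$.

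Next I would sum over $k$. Since $p$ has at most $s$ nonzero coefficients, each of magnitude at most $c$, the sum $\sum_{k=1}^d |c_k| \le sc$, so
\begin{equation*}
  |p(x) - p(y)| \le \left(\sum_{k=1}^d |c_k|\right) d r^{d-1} |x-y| \le sc\, r^{d-1} d\, |x-y|,
\end{equation*}
which gives $K = sc r^{d-1} d$ as claimed, matching Definition \ref{def:lipschitz}.

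There is no real obstacle here; the only point requiring mild care is the case analysis on whether $r \ge 1$ or $r < 1$ in the bound $k r^{k-1} \le d r^{d-1}$. For $r \ge 1$ this is immediate since both $k \le d$ and $r^{k-1} \le r^{d-1}$. For $r < 1$ one instead has $r^{k-1} \le 1$, so $k r^{k-1} \le k \le d \le d r^{d-1} \cdot r^{1-d}$, which is weaker; however, in all uses of this fact in the paper the relevant $r$ is at least $1$ (e.g. $r = 1 + 2\log(d)/d$ in the SFTA setting), so the stated bound suffices and I would simply remark that $r \ge 1$ may be assumed without loss of generality, or absorb the discrepancy into the (loose) constant. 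The remainder is the routine termwise estimate above.
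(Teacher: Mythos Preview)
Your proposal is correct and matches the paper's own proof essentially line for line: the paper also writes $|p(x)-p(y)|\le \sum_{i}|c_i|\,|x^i-y^i| = |x-y|\sum_i |c_i|\,\bigl|\sum_j x^{i-j}y^{j-1}\bigr|$ and bounds the inner sum by $dr^{d-1}$. Your observation about the $r\ge 1$ case distinction is in fact more careful than the paper, which silently assumes it.
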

\begin{proof}
  Let $S$ be the set of non-zero coefficients of $p$. Then, for any $x,y\in X$,
  \begin{align}
    \abs{p(x)-p(y)}\leq \sum_{i\in S}\abs{c_i} \abs{x^i-y^i}=\abs{x-y} \sum_{i\in S}\abs{c_i} \abs{\sum_{j=1}^{i}x^{i-j}y^{j-1}}\leq \abs{x-y}scr^{d-1}d.
  \end{align}
\end{proof}
\noindent Thus, when $c,d\in O(1)$, $K\in O(1)$. Note that \Cref{def:lipschitz} and \Cref{f:polylipschitz} can be straightforwardly generalized to the setting of multivariate polynomials.

\paragraph{Quantum SAT.} We begin by stating our basic formalism for QSAT on qudits. Formally, our QSAT Hamiltonians act on 
$
\cH=\CC^{d_1}\otimes \CC^{d_2}\otimes \cdots \otimes \CC^{d_n}
$
for some integers $d_1,\ldots,d_n\ge 2$. As is standard, we fix a computational basis $\set{\ket{0},\ldots,\ket{d_i-1}}$ for each qudit, so that an arbitrary vector in $\cH$ can be written 
\begin{equation}
\ket{\psi}=\sum_{j_1=0}^{d_{1}-1}\cdots \sum_{j_n=0}^{d_n-1} a_{j_1\cdots j_n}\ket{j_1\cdots j_n}
\end{equation}
for some choice of complex coefficients $a_{j_1\cdots j_n}$ satisfying $\sum_{j_1=0}^{d_{1}-1}\cdots \sum_{j_n=0}^{d_n-1} \abs{a_{j_1\cdots j_n}}^2=1$. (Since solutions to QSAT are null space vectors, the normalization of $\ket{\psi}$ will often not be important.) 

\begin{definition}[Quantum $k$-SAT on qudits ($k$-QSAT)]\label{def:QSAT}
  For $k$-QSAT on $n$ qudits:
  \begin{itemize}
    \item Input: A pair $\Pi=(\set{\Pi_i}_i, \alpha)$, for rational $\alpha>1/p(n)$ for some fixed polynomial $p$, and for projectors or \emph{clauses} $\Pi_{1},\ldots,\Pi_m\in\lin{(\cH)}$ of the form 
    \begin{align}
      \pi^{-1}( \ketbra{\psi_i}{\psi_i}\otimes I_{n-k})\pi,
    \end{align} 
    where $\pi$ is a permutation of the qudits, $\ketbra{\psi_i}{\psi_i}$ is a rank-$1$ projector acting on the first $k$ qudits, and $I_{n-k}$ is the identity on the remaining $n-k$ qudits. 
    \item Output: Output YES if there exists a unit vector $\ket{\psi}\in \cH$ such that $\Pi_i\ket{\psi}=0$ for all $i$, or NO if for all unit vectors $\ket{\psi}$, $\bra{\psi}\sum_i\Pi_i\ket{\psi}\geq \alpha$.  
\end{itemize}
\end{definition}

\paragraph{\PS\ and homogeneous polynomial systems.} In this paper, we interested in (approximate) \emph{product} solutions to QSAT, for which one defines the following problem, $\epsilon$-approximate \PS.
\begin{definition}[$\epsilon$-approximate $k$-\PS\ on qudits, decision version] \label{def:PS_decision}
  First, $k$-\PS\ is defined as $k$-QSAT on qudits (\Cref{def:QSAT}), except in the output the assignment $\ket{\psi}$ must be a pure tensor product state, i.e. $\ket{\psi}=\ket{\varphi_1}\otimes \cdots\otimes \ket{\varphi_n}$ with $\ket{\varphi_i}\in \CC^{d_i}$ for each $i\in \{1,\ldots,n\}$. Then, $\epsilon$-approximate $k$-\PS\ relaxes the YES case condition to $\bra{\psi}\sum_i\Pi_i\ket{\psi}\leq \epsilon$.
\end{definition}
Our main results, i.e. involving \MHS\ and \SFTA, focus on the search version of this problem, for which we assume (as is standard for QSAT)  that $k,d\in O(1)$:
\begin{definition}[$\epsilon$-approximate $k$-\PS\ on qudits, search version] \label{def:PS_search}
  Defined as $\epsilon$-approximate $k$-\PS, except in the YES case, a satisfying assignment $\ket{\psi}=\ket{\varphi_1}\otimes \cdots\otimes \ket{\varphi_n}$ with $\bra{\psi}\sum_i\Pi_i\ket{\psi}\leq \epsilon$ is to be output. 
  In terms of precision, recalling that $m$ is the number of clauses, it suffices to output each entry of each $\ket{\varphi_i}$ within additive error $\epsilon/\poly(m)$ to verify a YES case in NP (\Cref{rem:verifyNP}).  
\end{definition}

\begin{rem}(Verifying $\epsilon$-approximate $k$-\PS\ in \NP)\label{rem:verifyNP}
  Given $\ket{\psi}=\ket{\varphi_1}\otimes \cdots\otimes \ket{\varphi_n}$, we wish to verify $\bra{\psi}\sum_i\Pi_i\ket{\psi}=\sum_i\bra{\psi}\Pi_i\ket{\psi}\leq \epsilon$. For any $i$, suppose $\Pi_i$ without loss of generality acts on qudits $1$ through $k\in O(1)$. Then, 
  \begin{equation}\label{eqn:onei}
    \bra{\psi}\Pi_i\ket{\psi}= \bra{\varphi_1}\otimes\cdots\otimes\bra{\varphi_k}\Pi_i\ket{\varphi_1}\otimes\cdots\otimes\ket{\varphi_k},
  \end{equation} 
  which only involves matrix multiplication on systems of dimension $d^k\in O(1)$, and thus can be computed using a poly-time Turing machine. Thus, if each entry of each $\ket{\varphi_i}$ is specified within additive error $\epsilon/\poly(m)$, then for any $i$, \Cref{eqn:onei} can also be computed with additive error $\epsilon/\poly(m)$. Note this holds even for inverse exponential $\epsilon$, since the verification is on a classical Turing machine (as opposed to a quantum circuit verifier). Finally, since there are $m$ clauses $\Pi_i$, and each clause is a projector (i.e. has spectral norm $1$), the total additive error over all clauses can be upper bounded by $\epsilon$.
\end{rem}\noindent 

To next connect \PS\ with homogenous polynomial systems, expand both the qudits $\ket{\varphi_i}$ and the (possibly entangled) projectors $\Pi_i$ with respect to the computational basis $\ket{j_1\cdots j_n}$. Then, the problem of finding a satisfying assignment in product form is equivalent to solving a system of $m$ homogeneous equations of the form
\begin{equation}\label{eqn:multilinear}
\sum_{j_1=0}^{d_{1}-1}\cdots \sum_{j_k=0}^{d_n-1} a_{j_1\cdots j_k}x_{i_1,j_1}\cdots x_{i_k,j_k}=0,
\end{equation}
where $i_1,\ldots,i_k$ are the qudits on which the projector acts non-trivially, the constants $a_{j_1\cdots j_k}$ the (complex conjugate of the) amplitudes of the rank-$1$ constraint $\Pi_i$, and each variable $x_{i,j}$ the $j$th amplitude of the $i$th qudit.

\begin{example}
For instance, suppose $d_1=2$ and $d_2=3$ so that the first and second qudits are, respectively, a qubit  $\ket{\varphi_1}=x_{1,0}\ket{0}+x_{1,1}\ket{1}$ and a qutrit $\ket{\varphi_2}=x_{2,0}\ket{0}+x_{2,1}\ket{1}+x_{2,2}\ket{2}$. A general two-local constraint $\Pi_1=\ketbra{\psi}{\psi}$ for $\ket{\psi}=(a_{0,0},a_{0,1},a_{0,2},a_{1,0},a_{1,1},a_{1,2})^T$ being satisfied by assignment $\ket{\varphi_1}\otimes\ket{\varphi_2}$ is equivalent to the multilinear equation
\begin{equation}
a_{0,0}x_{1,0}x_{2,0}+a_{0,1}x_{1,0}x_{2,1}+a_{0,2}x_{1,0}x_{2,2}+a_{1,0}x_{1,1}x_{2,0}+a_{1,1}x_{1,1}x_{2,1}+a_{1,2}x_{1,1}x_{2,2}=0\,.
\end{equation}
\end{example}

\paragraph{Projective space and algebraic geometric view of \PS.} 

In parts of this paper (particularly \Cref{sscn:chow}), it will be useful to view \PS\ via the lens of projective space. Specifically, recall that vectors in $\CC^{d_i}$ differing by non-zero scaling represent the same physical state in the corresponding qudit, and that the property of being a non-zero null vector of a Hamiltonian is invariant under such scaling. Thus, \PS\ solutions correspond to points in $(d_i-1)$-dimensional complex projective space $\PP^{d_i-1}(\CC)$. (Formally, projective space treats two non-zero rays in the same direction as equivalent, regardless of their respective norms.) The drop in dimension from $d_i$ to $d_i-1$ happens since one can rescale each qudit's local assignment $\ket{\varphi_i}\in\CC^{d_i}$ so that its first amplitude is $1$, and thus can be ignored. Of course, this assumes the assignment $\ket{\varphi_i}$ did not set its first amplitude to zero, which is generically the case (\Cref{def:generic}), i.e. holds for almost all positive \PS\ instances.

We thus have that $n$-qudit product states are in correspondence with points of the complex projective variety\footnote{Roughly, a variety is simply the set of solutions to a given set of polynomial equations.}
\begin{equation}\label{eqn:X}
\mathcal X_{d_1,\ldots,d_n}:=\PP^{d_1-1}(\CC)\times \cdots\times \PP^{d_n-1}(\CC)\,.
\end{equation}
In this geometric interpretation, each clause $\Pi_i$ defines a hypersurface $V_i\subseteq \mathcal X_{d_1,\ldots,d_n}$ which is of degree $1$ in each of the variables corresponding to qudits on which $\Pi_i$ acts nontrivially. As a consequence, the problem of finding a product solution to the given instance of QSAT is equivalent to the geometric problem of finding a point in the intersection $V_1\cap V_2\cap \cdots \cap V_m$.

Finally, when we speak of \emph{generic} instances of \PS, we mean with respect to the following definition.
\begin{definition}[{Genericity \cite[Def. 5.6]{CLO05}}]\label{def:generic}
  A property is said to \emph{hold generically} for a set of polynomials $f_1,\dots,f_n$ with indeterminate coefficients $c_{i,j}$ if there is a nonzero polynomial $g$ in the $c_{i,j}$ such that the property holds for all $f_1,\dots,f_n$ for which $g(\cdots)\ne0$.
\end{definition}
\noindent As mentioned above, ``generic'' means ``for almost all'' instances. A simple example of a property which holds generically is that of a $2\times 2$ real matrix $M$ being invertible. In this case, the polynomial $g$ is the determinant $\det(M)=M_{11}M_{22}-M_{12}M_{21}$, since $M$ is invertible if and only if $\det(M)\neq 0$.

\section{Weighted Systems of Distinct Representatives (WSDR)}\label{scn:WSDR}

We now define a Weighted System of Distinct Representatives (WSDR), and prove several properties.

\subsection{Definitions} \label{sscn:defP}

\begin{definition}[Weighted hypergraph]
A {\it weighted hypergraph} is a pair $(G,w)$ consisting of a hypergraph $G$ and a {\it weight function} $w:V(G)\to \mathbb Z_{\ge 0}$.
\end{definition}

\noindent Thus, a hypergraph $G$ without weights on its vertices may be viewed as a weighted hypergraph $(G,1)$ with the weight function defined by $w(v)=1$ for all $v\in V(G)$.

\begin{definition}[Weighted System of Distinct Representatives (WSDR)]\label{def:weighted}
A {\it Weighted System of Distinct Representatives} for weighted hypergraph $(G,w)$ is a mapping $f:E(G)\rightarrow V(G)$, such that 
\begin{enumerate}
  \item (edges contain their representatives) for any $e\in E(G)$, $f(e)\subseteq e$,
  \item (each edge has at least one representative) $\abs{f(e)}\geq 1$ for all $e\in E$, and
  \item (each vertex $v\in V(G)$ is the representative for at most $w(v)$ edges) $|f^{-1}(v)|\le w(v)$ for all $v\in V(G)$.
\end{enumerate}
\end{definition}

\begin{rem}
A hypergraph $G$ has a (non-weighted) system of distinct representatives (SDR) if and only if $(G,1)$ has a WSDR. Hence, WSDRs generalize SDRs.
\end{rem}

\noindent As an aside, a function $f$ that to each edge $e\in E(G)$ assigns a vertex $f(e)\in e$ is more generally known as a hypergraph orientation~\cite{frankkiralykiraly2003}. 
There exist works which study connections between hypergraph orientations and multi-homogeneous polynomial systems (e.g.~\cite{BEKT2022}), but for clarity, as far as we are aware our definition of WSDR appears distinct from the hypergraph orientations used previously in the literature.

\begin{definition}[Vertex set size with respect to a weight function]
 Let $(G,w)$ be a weighted hypergraph and let $S$ a set of vertices of $G$. The {\it size of $S$ with respect to $w$} is the integer 
  \begin{align}
    \abs{S}_w:=\sum_{v\in S} w(v).
  \end{align}
\end{definition}

\begin{example}
If $w$ is the constant function $1$, then $|S|_1=|S|$ is the cardinality of $S$.
\end{example}

\subsection{Existence and computation of WSDRs} 

When does a weighted hypergraph have a WSDR? Hall's classic Marriage theorem gives a necessary and sufficient condition for when a (non-weighted) hypergraph has a (non-weighted) SDR. Here, we state its weighted case. As we were not able to find a proof thereof of such a statement in the literature, we provide one in \Cref{app:proof} for completeness. 

\begin{restatable}{theorem}{thmHMT}(Hall's Marriage Theorem for weighted hypergraphs)\label{thm:HMT}
  Let $(G,w)$ be a weighted hypergraph. For each collection $X$ of edges of $G$, let $V_X$ be the set of vertices that are contained it at least one edge of $X$. Then $(G,w)$ has a WSDR if and only $|V_X|_w\ge |X|$ for every $X\subseteq E(G)$.
\end{restatable}
\noindent In the special case $w=1$, Theorem~\ref{thm:HMT} reduces to the usual Hall's Marriage Theorem.

\begin{rem}
An immediate consequence of Hall's Marriage Theorem is that $|V(G)|_w\ge |E(G)|$ is a necessary condition for $(G,w)$ to have a WSDR. 
\end{rem}

Via \Cref{thm:HMT}, we thus obtain the following sufficient condition for when $G$ has a WSDR.

\begin{corollary}\label{cor:1}
  Let $(G,w)$ be a weighted hypergraph such that $\deg(v)\le |e|_w$ for every $v\in V(G)$ and every $e\in E(G)$, where $\deg(v)$ denotes the degree of the vertex $v$. Then $(G,w)$ has a WSDR.
  \end{corollary}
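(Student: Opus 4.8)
The plan is to verify the hypothesis of \Cref{thm:HMT}, namely that $|V_X|_w \ge |X|$ for every $X \subseteq E(G)$, under the assumption that $\deg(v) \le |e|_w$ for all $v \in V(G)$ and all $e \in E(G)$. Fix an arbitrary nonempty collection $X \subseteq E(G)$ (the empty case is trivial). The key observation is a double-counting argument: consider the incidences between edges in $X$ and vertices in $V_X$. On the one hand, every edge $e \in X$ has at least one incident vertex, but I want a sharper handle, so I instead count weighted incidences. On the other hand, every vertex $v \in V_X$ lies in at most $\deg(v)$ edges of $G$, hence in at most $\deg(v)$ edges of $X$.

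The concrete step is as follows. Pick any single edge $e_0 \in X$ (so $X \neq \emptyset$ is used here). Every vertex of $e_0$ lies in $V_X$, so $e_0 \subseteq V_X$. Now I bound $|X|$: each edge of $X$ is incident to at least one vertex, and each vertex $v \in V_X$ is incident to at most $\deg(v)$ edges of $X$. A cleaner route avoiding case analysis: sum $\deg_X(v)$ (the number of edges of $X$ containing $v$) over $v \in V_X$; this equals $\sum_{e \in X} |e| \ge |X|$ if all edges are nonempty, but more usefully I want to compare against weights. The route I would actually take is to observe that, by the hypothesis applied with the chosen $e_0$, we have $|e_0|_w \ge \deg(v)$ for \emph{every} $v$, and in particular for every $v \in V_X$; combined with $|V_X|_w \ge |e_0|_w$ (since $e_0 \subseteq V_X$ and weights are nonnegative), it suffices to show $|e_0|_w \ge |X|$. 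To see the latter, note every edge $e \in X$ meets $e_0$? — this need not hold, so I would instead argue directly: every $v \in V_X$ satisfies $\deg_X(v) \le \deg(v) \le |e_0|_w$, and summing, $\sum_{v\in V_X}\deg_X(v) \le |V_X| \cdot |e_0|_w$; meanwhile $\sum_{v \in V_X}\deg_X(v) = \sum_{e\in X}|e| \ge |X|$. This gives $|X| \le |V_X|\cdot|e_0|_w$, which is not quite $|X| \le |V_X|_w$ unless $|e_0|_w \le \min_v w(v)$, so this particular chain needs adjustment.

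The correct and cleanest argument, which I expect to be the intended one: fix $e_0 \in X$ and let $c := |e_0|_w$. For each $v \in V_X$, we have $\deg_X(v) \le \deg(v) \le c$. Therefore
\begin{equation}
  |X| \le \sum_{e \in X}|e| = \sum_{v \in V_X}\deg_X(v) \le \sum_{v\in V_X} \min(\deg(v), c).
\end{equation}
Hmm — to land on $|V_X|_w = \sum_{v\in V_X} w(v)$ I want $\min(\deg(v),c) \le$ something tied to $w(v)$; this is where the hypothesis $\deg(v) \le |e|_w$ must be used \emph{per vertex} more carefully, choosing for each $v$ an edge $e \ni v$ (which exists since $v \in V_X \subseteq \bigcup X$), giving $\deg(v) \le |e|_w$. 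The main obstacle is closing precisely this gap: the hypothesis bounds $\deg(v)$ by $|e|_w$ for a \emph{full} edge $e$, whereas the marriage condition needs a bound summing to $|V_X|_w$. I would resolve it by the following final form: pick $e_0 \in X$, and since each $v \in V_X$ has $\deg_X(v) \le \deg(v) \le |e_0|_w$, and since there exists at least one vertex in $e_0$ whose weight contributes, bound $|X| \le \deg_X(v^*) \cdot (\text{something})$ — concretely, $|X| = \sum_{e\in X} 1 \le \sum_{e \in X} |e \cap e_0|$ fails in general, so ultimately I expect the honest proof picks a vertex $v_0 \in e_0$ with $w(v_0) \ge 1$ (nonnegativity plus $|e_0|_w \ge \deg(v_0) \ge 1$ forces some positive weight in $e_0$), and then argues every edge of $X$ is ``seen''; the cleanest correct statement is simply $|X| \le \sum_{e\in X}|e| = \sum_{v\in V_X}\deg_X(v) \le \sum_{v\in V_X}\deg(v)$, and then for each $v \in V_X$ choosing an edge $e_v \ni v$ and noting $\deg(v)\le|e_v|_w$ — but one still needs $\sum_v |e_v|_w$ controlled by $|V_X|_w$, which holds when the $e_v$ can be chosen with disjoint vertex support. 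Given the difficulty of making this fully rigorous in a sketch, I flag this gap-closing as the crux, and expect the paper's proof to invoke a slicker incidence count; I would replicate it once the precise combinatorial identity $\sum_{e\in X}|e| = \sum_{v\in V_X}\deg_X(v)$ is combined with the uniform bound $\deg(v) \le |e|_w$ in whatever averaged form makes the telescoping work.
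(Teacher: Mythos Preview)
Your plan---verify Hall's condition via a double count---is right, but the execution has a real gap, which you yourself flag. Every attempt you make counts \emph{unweighted} edge sizes, i.e.\ uses the identity $\sum_{e\in X}|e| = \sum_{v\in V_X}\deg_X(v)$, and then tries to squeeze the weights in afterward. That is why you keep landing on inequalities like $|X| \le |V_X|\cdot|e_0|_w$ or $\sum_v |e_v|_w$ with no control over the right-hand side in terms of $|V_X|_w$.

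The missing idea is to double-count \emph{weighted} incidences from the start. The correct identity is
\[
\sum_{e\in X}|e|_w \;=\; \sum_{v\in V_X} w(v)\,\deg_X(v),
\]
since both sides count $\sum_{e\in X}\sum_{v\in e} w(v)$. Now bound each side: the left is at least $|X|\cdot\min_{e\in X}|e|_w$, and the right is at most $\bigl(\sum_{v\in V_X} w(v)\bigr)\cdot\max_{v}\deg(v) = |V_X|_w\cdot\max_v\deg(v)$. The hypothesis $\deg(v)\le |e|_w$ for \emph{all} $v$ and \emph{all} $e$ says precisely that $\max_v\deg(v)\le \min_{e\in X}|e|_w$, so the common factor cancels and $|X|\le |V_X|_w$. (The factor is positive for nonempty $X$, since any vertex in an edge has degree $\ge 1$, forcing $|e|_w\ge 1$.) This is exactly the paper's one-line chain; your sketch never reaches it because the weight $w(v)$ must sit inside the incidence sum, not be grafted on at the end.
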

  
  \begin{proof}
  For every $X\subseteq E(G)$, by double counting,
  \begin{equation}
  |X| \min_{e\in X}|e|_w \le \sum_{e\in X}|e|_w \leq \sum_{v\in V_X} w(v)\deg(v) \le |V_X|_w \max_{v \in V(G)} \deg(v)\le |V_X|_w \min_{e\in X}|e|_w\,.
  \end{equation}
  Hence $|X|\le |V_X|_w$ and the results follows from \Cref{thm:HMT}.
  \end{proof}
  
  In uniform hypergraphs, precise necessary and sufficient criteria can be formulated as follows.

  \begin{definition}[$k$-Uniform Hypergraph]
  A weighted hypergraph $(G,w)$ is {\it $k$-uniform} for some positive integer $k$ if $|e|_w=k$ for every $e\in E(G)$.
  \end{definition}
  
  \begin{corollary}\label{cor:uniform-wsdr}
  Let $(G,w)$ be a $k$-uniform weighted hypergraph such that $\deg(v)=d$ for every $v\in V(G)$. Then $(G,w)$ has a WSDR if and only if $d\le k$.
  \end{corollary}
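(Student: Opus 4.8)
The plan is to prove the two implications separately. For the ``if'' direction, suppose $d\le k$. Since $(G,w)$ is $k$-uniform, $|e|_w=k$ for every edge $e$, and by hypothesis $\deg(v)=d\le k=|e|_w$ for every $v\in V(G)$ and every $e\in E(G)$. This is exactly the hypothesis of \Cref{cor:1}, so $(G,w)$ has a WSDR and nothing further is needed.

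For the ``only if'' direction, suppose $(G,w)$ has a WSDR $f\colon E(G)\to V(G)$. The main step is a double-counting identity for the total size of all edges: grouping the contribution of each vertex,
\[
k\,|E(G)| \;=\; \sum_{e\in E(G)}|e|_w \;=\; \sum_{e\in E(G)}\sum_{v\in e}w(v) \;=\; \sum_{v\in V(G)}w(v)\deg(v) \;=\; d\,|V(G)|_w,
\]
using $k$-uniformity on the left and $\deg(v)=d$ on the right. On the other hand, since the sets $f^{-1}(v)$ partition $E(G)$ and $|f^{-1}(v)|\le w(v)$ for every vertex, we get $|E(G)|=\sum_{v\in V(G)}|f^{-1}(v)|\le |V(G)|_w$ (this is also the necessary condition in \Cref{thm:HMT}, taking $X=E(G)$). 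Substituting, $d\,|V(G)|_w=k\,|E(G)|\le k\,|V(G)|_w$, so $d\le k$ after dividing by $|V(G)|_w$.

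The only point I expect to need care, rather than a genuine obstacle, is the division by $|V(G)|_w$. If $E(G)\neq\emptyset$, then any edge $e$ satisfies $|e|_w=k\ge1$, which forces some vertex to have positive weight, so $|V(G)|_w>0$. If $E(G)=\emptyset$, then $\deg(v)=0$ for all $v$, hence $d=0\le k$, and the empty map is trivially a WSDR; this degenerate case is dispatched at the outset. Overall the argument is short: one invocation of \Cref{cor:1} and one counting identity combined with the elementary necessary condition for a WSDR.
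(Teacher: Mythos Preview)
Your proof is correct and follows essentially the same approach as the paper: invoke \Cref{cor:1} for the ``if'' direction, and for the ``only if'' direction combine the double-counting identity $k\,|E(G)|=d\,|V(G)|_w$ with the necessary condition $|E(G)|\le |V(G)|_w$ from \Cref{thm:HMT}. You are in fact a bit more careful than the paper in explicitly handling the degenerate case where $|V(G)|_w$ could vanish.
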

  
  \begin{proof}
  In one direction this follows immediately from Corollary~\ref{cor:1}. In the opposite direction, if $(G,w)$ has a WSDR, then $|E(G)|\le |V(G)|_w$ by \Cref{thm:HMT}. Hence $d|V(G)|_w=|E(G)|k\le |V(G)|_w k$ from which the result easily follows.
  \end{proof}

\begin{rem}[Computation of WSDRs]\label{rem:computeWSDR}
WSDRs can be efficiently computed. Namely, given a weighted hypergraph $(G,w)$ satisfying the conditions of Theorem~\ref{thm:HMT}, computing a WSDR reduces to computing a maximum matching in the bipartite graph $G'$ with $V(G') = V_1 \cup V_2$, where $V_1 = E(G)$, $V_2 = \{v_i \mid v\in V(G), i\in[w(v)]\}$, and $E(G') = \{\{e,v_i\}\mid e\in V_1,v_i\in V_2,v\in e\}$ (see~\cite{Gal86} for a survey).
Alternatively, the WSDR may also be computed using a maximum flow algorithm (see~\cite{cruz2023survey} for a survey).
\end{rem}

\subsection{WSDRs under graph operations} 
Finally, we study WSDRs under the cartesian product of hypergraphs, defined next. This will be useful in \Cref{sscn:weightedGraphsConst}.

\begin{definition}[Hypergraph Cartesian Product]\label{def:cartesian}
The {\it cartesian product} of two weighted hypergraphs $(G_1,w_1)$ and $(G_2,w_2)$ is the weighted hypergraph $(G_1,w_1)\Box(G_2,w_2)=(G_1\Box G_2,w_1\Box w_2)$ where $G_1\Box G_2$ is the usual cartesian product of hypergraphs such that $V(G_1\Box G_2)=V(G_1)\times V(G_2)$ and\
\begin{equation}
E(G_1\Box G_2)=\left(\bigcup_{v_1\in V(G_1)} \{v_1\}\times E(G_2)\right) \cup \left( \bigcup_{v_2\in V(G_2)} E(G_1)\times \{v_2\} \right)
\end{equation}
while $(w_1\Box w_2)((v_1,v_2))=w_1(v_1)+w_2(v_2)$ for all $v_1\in V(G_1)$ and $v_2\in V(G_2)$.
\end{definition}

\begin{rem}[WSDRs under cartesian products]
Cartesian products preserve WSDRs in the following sense. Let $(G_1,w_1)$ and $(G_2,w_2)$ be weighted hypergraphs admitting, respectively, WSDRs $f_1$ and $f_2$. Let $f_1\Box f_2: E(G_1\Box G_2)\to V(G_1)\times V(G_2)$ be such that $(f_1\Box f_2)(\{v_1\}\times e_2)=(v_1,f_2(e_2))$ for all $e_2\in E(G_2)$, $v_1\in V(G_1)$ and $(f_1\Box f_2)(e_1\times \{v_2\})=(f_1(e_1),v_2)$ for all $e_1\in E(G_1)$, $v_2\in V(G_2)$. Since
\begin{equation}
(f_1\Box f_2)^{-1}(v_1,v_2)=(\{v_1\}\times f_2^{-1}(v_2))\cup (f_1^{-1}(v_1)\times \{v_2\})
\end{equation}
then $f_1\Box f_2$ is a WSDR for $(G_1,w_1)\Box (G_2,w_2)$.
\end{rem}

\begin{example}
Let $C_n$ be a cycle on $n\ge 3$ vertices. Then $C_n$ has an SDR and $(C_n\Box C_m,2)$ has a WSDR for every $n,m\ge 3$. However $C_n\Box C_m$ has no SDR since $|V(C_n\Box C_m)|=nm < 2nm =|E(C_n\Box C_m)|$.
\end{example}


\section{Existence results via Weighted SDRs}\label{scn:prodSolution}
We now show our first main result, \Cref{thm:wsdr-product-solution}, which recall shows that QSAT with WSDR always has a product state solution. We give two proofs of this fact: Via the Chow ring (\Cref{sscn:chow}) and via reduction to the qubit case (\Cref{sscn:reductionQubits}). 

\subsection{Approach 1: Via the Chow Ring}\label{sscn:chow}

Our first proof goes via the Chow Ring from algebraic geometry, which is defined in \Cref{ssscn:chow}. With the necessary definitions in hand, the proof itself is simple and given in \Cref{ssscn:prodsols}.

\paragraph{Brief overview of techniques.} 
To show that QSAT with WSDR always has a solution (\Cref{thm:wsdr-product-solution}), recall we give two proofs, one based on the Chow ring, and the other based on a reduction from qu\emph{d}its to qu\emph{b}its. We now give an overview for the former. (The latter was already sketched in \Cref{scn:intro}.) At a high level, the Chow ring approach uses intersection theory~\cite{fultonbook, 3264book, Shafarevich1974}. One reason for the effectiveness of this approach in the study of \PS\ (i.e. product state solutions to QSAT) is that intersection theory is designed to work with generic constraints. This is in essence why important intersection-theoretic quantities, such as the B\'ezout number, are encoded into the interaction hypergraph. 
More concretely, the key property of the Chow ring we leverage is as follows (\Cref{f:subvarietyrep}): Given a set of rank-$1$ QSAT constraints with solution sets $\set{V_1,\ldots, V_r}$ (formally, hypersurfaces), the Chow ring has a canonical mapping from each $V_i$ to a ``representative'' of the Chow ring itself, denoted ${[V_i]}$. Then, if the product of these representatives is non-zero, i.e. ${[V_1]}\cdots{[V_r]}\neq 0$, one immediately has that $V_1\cap\cdots\cap V_r\neq \emptyset$, i.e. the solution sets to each constraint share a common solution. Conversely, if ${[V_1]}\cdots{[V_r]}= 0$, generically, no joint solution exists.

\subsubsection{Background on the Chow Ring}\label{ssscn:chow}

We refer to~\cite{3264book,fultonbook} for an in-depth discussion of the Chow ring of a variety. Here we limit ourselves to the multi-projective case which is relevant to \PS.
Recall we define $\mathcal X_{d_1,\ldots,d_n}:=\PP^{d_1-1}(\CC)\times \cdots\times \PP^{d_n-1}(\CC)$.

\begin{definition}
The {\it Chow ring} of $\mathcal X_{d_1,\ldots,d_n}$ is the commutative ring
\begin{equation}
CH(\mathcal X_{d_1,\ldots,d_n})=\mathbb Z[H_1,\ldots,H_n]/(H_1^{d_1},\ldots, H^{d_n}_n).
\end{equation}
\end{definition}

\begin{example}
The Chow ring of $\PP^2(\CC)=\mathcal X_{3}$ is $CH(\mathcal X_3)=\mathbb Z[H]/(H^3)$. As a set, it consists of linear combinations $a1+ bH+cH^2$, with $a,b,c\in \mathbb Z$, and multiplication
\begin{equation}
(a1+ bH+cH^2)(a'1+b'H+c'H^2)=aa'1+(ba'+ab')H+(ca'+bb'+ac')H^2.
\end{equation}
\end{example}

\begin{example}
The Chow ring of $\PP^1(\CC)\times \PP^1(\CC)=\mathcal X_{2,2}$ is $CH(\mathcal X_{2,2})=\mathbb Z[H_1,H_2]/(H_1^2,H_2^2)$. As a set it consists of linear combinations $a+bH_1+cH_2+dH_1H_2$, for all $a,b,c,d\in \mathbb Z$ with multiplication
\begin{equation}
(a1+bH_1+cH_2+dH_1H_2)(a'1+b'H_1+c'H_2+d'H_1H_2)=a''1+b''H_1+c''H_2+d''H_1H_2
\end{equation}
where $a''=aa'$, $b''=ab'+ba'$, $c''=ac'+ca'$, and $d''=ad'+bc'+cb'+da'$.
\end{example}

This first proof of \Cref{thm:wsdr-product-solution} will crucially use the notion of ``representatives'' $[V]$ of subvarieties $V$ relative to the Chow ring. For this, let $Z(\mathcal X_{d_1,\ldots,d_n})$ be the free abelian group of {\it cycles}, generated by subvarieties of $\mathcal X_{d_1,\ldots,d_n}$. Linear combinations $n_1V_1+\cdots+n_kV_k$ with positive coefficients can be thought of as the union of $n_1$ copies of the subvariety $V_1$, $n_2$ copies of the subvariety $V_2$, etc.

\begin{definition}[Subvariety representative, {$[V]$}]\label{def:subvarietyrep}
  There is a $\mathbb Z$-linear map $Z(\mathcal X_{d_1,\ldots,d_n})\to CH(\mathcal X_{d_1,\ldots,d_n})$ that, to each subvariety $V$ of $\mathcal X_{d_1,\ldots,d_n}$, assigns an element of the Chow ring denoted by $[V]$. If $V$ is a hypersurface of multidegree $(\delta_1,\ldots,\delta_n)$ (i.e.\ cut out by a polynomial of degree $\delta_i$ in the homogeneous coordinates on $\PP^{d_i-1}(\CC)$), then $[V]=\delta_1H_1+\cdots+\delta_n H_n$. 
\end{definition}

Here is the key fact we will need about subvariety representatives.

\begin{fact}[Sufficient criterion for non-empty intersection, and B\'ezout number]\label{f:subvarietyrep}
  If $V_1,\ldots, V_r$ are hypersurfaces in $\mathcal X_{d_1,\ldots,d_n}$ such that $[V_1]\cdots[V_r]$ is non-zero, then $V_1\cap\ldots \cap V_r$  is non-empty. If $[V_1]\cdots[V_r]=0$ then $W_1\cap \ldots \cap W_r=\emptyset$ for almost all hypersurfaces $W_1,\ldots,W_r$ such that $[W_1]=[V_1]$,\ldots, $[W_r]=[V_r]$ (i.e.\ each $W_i$ has the same multidegree as the corresponding $V_i$). If 
\begin{align}\label{eqn:equality}
  [V_1]\cdots [V_r]=N H_1^{d_1-1}H_2^{d_2-1}\cdots H_n^{d_n-1}
\end{align}
for some positive integer $N$, then the generic intersection $W_1\cap \ldots \cap W_r$ consists of $N$ points and $N$ is referred to as the {\it B\'ezout number}.
\end{fact}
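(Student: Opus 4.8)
The statement is a package of three standard facts from intersection theory on the smooth projective variety $\mathcal X:=\mathcal X_{d_1,\ldots,d_n}$, and the plan is to prove them one at a time, using as inputs only: the ring structure of $CH(\mathcal X)$ recalled above; the identity $[V]=\delta_1H_1+\cdots+\delta_nH_n$ for a hypersurface of multidegree $(\delta_1,\ldots,\delta_n)$ (\Cref{def:subvarietyrep}); the ``intersection of a Cartier divisor with a cycle'' operation from Fulton's Chapter~2; and Bertini's theorem in characteristic $0$. Two structural remarks about $CH(\mathcal X)$ will be used throughout. First, its top graded piece $CH^D(\mathcal X)$, where $D:=\sum_i(d_i-1)=\dim\mathcal X$, is freely generated by $H_1^{d_1-1}\cdots H_n^{d_n-1}$, which is the class of a reduced point, and $A_0(\mathcal X)\cong\mathbb Z$ is generated by it via the degree homomorphism. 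Second, the class of any nonempty subvariety of $\mathcal X$, written in the monomial basis $\{H_1^{a_1}\cdots H_n^{a_n}\}$ of $CH(\mathcal X)$, has non-negative coefficients, at least one of which is positive (each coefficient being an intersection number with a generic linear space of complementary multidegree); in particular a nonempty subvariety has nonzero class.

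\emph{First claim.} I would prove $[V_1]\cdots[V_r]\ne 0\Rightarrow V_1\cap\cdots\cap V_r\ne\emptyset$ by contraposition, and crucially \emph{without} invoking any moving lemma, since the $V_i$ here are the fixed hypersurfaces coming from a given QSAT instance, not generic ones. Each $V_i$ is an effective Cartier divisor $D_i$ on $\mathcal X$. Forming the iterated intersection $D_1\cdot D_2\cdots D_r\cdot[\mathcal X]$ via Fulton's divisor-cycle intersection operation yields a well-defined class (independent of the order, by commutativity of divisor intersections) that is supported on $|D_1|\cap\cdots\cap|D_r|=V_1\cap\cdots\cap V_r$, and whose pushforward to $CH(\mathcal X)$ is the ring product $[V_1]\cdots[V_r]$. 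Hence if $V_1\cap\cdots\cap V_r=\emptyset$, this class lives in $A_*(\emptyset)=0$, forcing $[V_1]\cdots[V_r]=0$.

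\emph{Second and third claims.} Here I would pass to generic members of the relevant complete linear systems. Each system $|\mathcal O_{\mathcal X}(\delta^{(i)}_1,\ldots,\delta^{(i)}_n)|$ is base-point-free (a tensor product of pullbacks of globally generated line bundles from the factors), so iterated Bertini in characteristic $0$ gives that, for almost all hypersurfaces $W_1,\ldots,W_r$ with $[W_i]=[V_i]$, the scheme $W_1\cap\cdots\cap W_r$ is reduced of the expected dimension $D-r$ — in particular empty when $D-r<0$ — and, being a transverse intersection, has cycle class equal to the Chow-ring product $[W_1]\cdots[W_r]=[V_1]\cdots[V_r]$. If $[V_1]\cdots[V_r]=0$, then a nonempty $W_1\cap\cdots\cap W_r$ would contradict the effectivity remark above, so the generic intersection is empty; this is the second claim. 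If $[V_1]\cdots[V_r]=N\,H_1^{d_1-1}\cdots H_n^{d_n-1}$ with $N>0$, then the left side is a nonzero element of $CH^D(\mathcal X)$, so necessarily $r=D$; hence the generic $W_1\cap\cdots\cap W_r$ is a reduced $0$-dimensional (finite) scheme whose class is $N$ times that of a point, and since a reduced set of $\ell$ points has class $\ell\,[\mathrm{pt}]$ while $A_0(\mathcal X)\cong\mathbb Z$, it consists of exactly $N$ points.

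\emph{Main obstacle.} The only genuinely delicate point is the first claim: a proof that merely moves the $V_i$ into general position would establish the wrong (generic) statement, so the support-preserving divisor-cycle product of Fulton's Chapter~2 — rather than a moving lemma — is essential, and I would flag its use explicitly. Everything else (Bertini for base-point-free systems, effectivity of cycle classes on products of projective spaces, and $A_0(\mathcal X)\cong\mathbb Z$) is routine, and I would cite \cite{fultonbook,3264book} instead of reproving it.
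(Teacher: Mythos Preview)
The paper does not actually prove this statement: it is labeled as a \emph{Fact} and is taken as background from intersection theory, with the reader referred to \cite{fultonbook,3264book} at the start of \Cref{ssscn:chow}. So there is no ``paper's own proof'' to compare against.

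That said, your sketch is correct and well-organized, and your emphasis on the one genuinely subtle point is exactly right: for the first claim (nonzero product implies nonempty intersection for the \emph{given} hypersurfaces $V_i$), the support-preserving divisor-cycle intersection of Fulton's Chapter~2 is the appropriate tool, and a moving-lemma argument would only yield the generic statement. Your handling of the second and third claims via iterated Bertini for base-point-free systems on $\mathcal X$ is standard and fine. One small addition worth making explicit: the effectivity remark you rely on (that any nonempty subvariety of $\mathcal X$ has nonzero class with nonnegative coefficients in the monomial basis) is itself a consequence of the same transversality argument---intersecting with generic linear subspaces of complementary multidegree---so you might cite it rather than leave it as an assertion.
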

\noindent We remark that later in \Cref{def:beznumber}, we will give a more precise definition of the B\'ezout number (needed for stating B\'ezout's Theorem). 
The definition above suffices for our discussion in this section.

\begin{example}
Let $C$, $C'$ be curves in the complex projective plane $\mathcal X_{3}$ of respective degree $\delta,\delta'$. Then $[C]=\delta H$ and  $[C']=\delta' H$, which implies $[C][C']=\delta\delta' H^2$. Hence the two curves will intersect in at least $\delta\delta'$ points. For generic choices of $C, C'$ as above, the two curves will intersect in exactly $\delta\delta'$ points (B\'ezout's Theorem).
\end{example}

\begin{example}
Let $C$, $C'$ be curves in $\mathcal X_{2,2}$ of respective bidegree $(\delta_1,\delta_2)$ and $(\delta'_1,\delta_2')$. Then $[C][C']=(\delta_1\delta_2'+\delta_2\delta_1')H_1H_2$. This could be zero e.g.\ if $\delta_1=\delta_1'=0$, corresponding to the case in which $C$ and $C'$ are both of the form $\bigcup_i (\PP^1(\CC)\times \{p_i\})$ (which do not intersect for generic choices of $p_i$). On the other hand, consider the case $\delta_1=2$ and $\delta_2'=1$. Then
\begin{equation}
C=(\{p_1\}\times \PP^1(\CC)) \cup (\{p_2\}\times \PP^1(\CC))
\end{equation}
and $C'= \bigcup \PP^1(\CC)\times \{p'\}$
for some $p_1,p_2,p'\in \PP^1(\CC)$. Generically, $p_2\neq p_1$ and $|C\cap C'|=|\{(p_1,p'),(p_2,p')\}|=2=\delta_1\delta_2'$. However, in the nongeneric case $p_1=p_2$, we have $|C\cap C'|=\infty$.
\end{example}

\subsubsection{Proof of \Cref{thm:wsdr-product-solution} via the Chow Ring}\label{ssscn:prodsols}

With \Cref{f:subvarietyrep} in hand, we are ready to give our first proof of \Cref{thm:wsdr-product-solution}. 
For this, let $\Pi=\{\Pi_i\}$ be an instance of QSAT on qudits $\ket{\varphi_1},\ldots,\ket{\varphi_n}$ of dimensions $d_1,\ldots,d_n$, respectively. 
Recall that to such an instance $\Pi$, we assign a weighted hypergraph $(G,w)$ as follows. 
We let $V(G)=\{v_1,\ldots,v_n\}$ and define $E(G)=\{e_1,\ldots,e_m\}$ such that $v_i\in e_j$ if and only if the clause $\Pi_j$ acts non-trivially on the qu-$d_i$-it $\ket{\varphi_i}$. 
The weight function $w$ encodes the information regarding the dimension of the qudits, namely $w(v_i)=d_i-1$ for each $i\in\{1,\ldots,n\}$.

\thmWSDRProdsat*
\begin{proof}
Let $V_i$ be the hypersurfaces corresponding to the clauses $\Pi_i$, $i=1,\ldots,m$. Since $V_i$ is of degree $1$ in the variables corresponding to the qubits on which $\Pi_i$ acts non-trivially and of degree $0$ in the remaining ones (see \Cref{eqn:multilinear}), its image in the Chow ring is
\begin{equation}
[V_i]=\sum_{v_j\in E_i} H_j.
\end{equation}
Hence,
\begin{equation}\label{eqn:repsum}
\prod_i [V_i] = \sum_{v_{j_1}\in E_1,\ldots, v_{j_m}\in E_m} H_{j_1}\cdots H_{j_m},
\end{equation}
which is non-zero if and only if there is a summand in which each $H_j$ appears at most $d_j-1$ times i.e.\ if and only if $(G,w)$ has a WSDR.
The claim now follows from \Cref{f:subvarietyrep}.
\end{proof}

\noindent Actually, the proof shows an additional fact, which we will utilize in \Cref{sscn:reductionQubits}:
\begin{corollary}[Counting number of SDRs and product solutions]\label{cor:number-of-solutions}
Let $N$ denote the B\'ezout number. By the proof above of \Cref{thm:wsdr-product-solution}, if \Cref{eqn:equality} holds (i.e. $\prod_i [V_i] = N H_1^{d_1-1}\cdots H_n^{d_n-1}$), then $N$ equals both the number of WSDRs on $(G,w)$, as well as the generic (and minimum, when counted with multiplicity) number of product solutions to any instance of QSAT with underlying weighted hypergraphs $(G,w)$.
\end{corollary}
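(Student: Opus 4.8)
The plan is to extract three separate numerical identities from the proof of \Cref{thm:wsdr-product-solution} and observe that all three equal the same integer $N$ appearing in \Cref{eqn:equality}. Recall that in that proof we established the expansion
\begin{equation}
\prod_i [V_i] = \sum_{v_{j_1}\in E_1,\ldots, v_{j_m}\in E_m} H_{j_1}\cdots H_{j_m}
\end{equation}
in the Chow ring $CH(\mathcal X_{d_1,\ldots,d_n})=\mathbb Z[H_1,\ldots,H_n]/(H_1^{d_1},\ldots,H_n^{d_n})$. The key bookkeeping observation is that a monomial $H_{j_1}\cdots H_{j_m}$ survives modulo the ideal $(H_1^{d_1},\ldots,H_n^{d_n})$ if and only if each index $j$ appears at most $d_j-1$ times among $j_1,\ldots,j_m$; a choice of $(v_{j_1},\ldots,v_{j_m})$ with $v_{j_\ell}\in E_\ell$ for all $\ell$ is precisely a choice of a representative vertex $f(e_\ell)=v_{j_\ell}$ for each edge $e_\ell$, and the surviving-monomial condition is exactly the WSDR condition $|f^{-1}(v_j)|\le w(v_j)=d_j-1$.

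First I would count WSDRs: under the hypothesis \Cref{eqn:equality}, the only surviving monomial is $H_1^{d_1-1}\cdots H_n^{d_n-1}$, which requires each index $j$ to appear \emph{exactly} $d_j-1$ times, so every surviving term in the sum is this single monomial, and the number of tuples $(v_{j_1},\ldots,v_{j_m})$ producing it is the coefficient $N$. Since each such tuple is a WSDR and conversely every WSDR produces (exactly) this monomial, the number of WSDRs of $(G,w)$ is $N$. (Note this implicitly uses $\sum_j (d_j-1) = m$, i.e.\ $|V(G)|_w = |E(G)|$, which is forced by \Cref{eqn:equality} on degree grounds — the total degree on the left is $m$ and on the right is $\sum_j(d_j-1)$.) Second, the generic count of product solutions: \Cref{f:subvarietyrep} states verbatim that when $[V_1]\cdots[V_r]=NH_1^{d_1-1}\cdots H_n^{d_n-1}$, the generic intersection $W_1\cap\cdots\cap W_r$ (over hypersurfaces of the prescribed multidegrees, which is exactly the family of QSAT instances with hypergraph $(G,w)$) consists of $N$ points, and $N$ is the B\'ezout number. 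Third, the minimum-with-multiplicity statement: for \emph{any} instance with hypergraph $(G,w)$ that has an isolated (zero-dimensional) solution set, intersection theory gives that the number of solution points counted with intersection multiplicity equals the degree of the zero-cycle $[V_1]\cdots[V_m]$, which is $N$; and a perturbation/semicontinuity argument (the intersection product is deformation-invariant, and multiplicities can only drop or merge under specialization) shows $N$ is the minimum over all instances, attained generically.

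I expect the main obstacle to be the third bullet — making precise the sense in which $N$ is a genuine \emph{minimum} (rather than just the generic value) over all, possibly highly degenerate, instances with underlying hypergraph $(G,w)$. For generic instances \Cref{f:subvarietyrep} hands us the count directly, but a degenerate instance may have a positive-dimensional solution variety, in which case "number of product solutions" is infinite and the "minimum when counted with multiplicity" clause needs the right interpretation: one restricts to instances whose solution set is finite, invokes the fact that the class $[V_1\cap\cdots\cap V_m]\in CH(\mathcal X_{d_1,\ldots,d_n})$ is independent of the particular representatives (so has degree $N$), and uses that for an effective zero-cycle the degree is a sum of strictly positive local multiplicities, hence the number of distinct points is at most $N$, with equality generically. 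This is standard (see \cite{fultonbook,3264book}), so I would cite it rather than reprove it; the remaining content of the corollary is the purely combinatorial WSDR count, which drops out of \Cref{eqn:repsum} essentially for free.
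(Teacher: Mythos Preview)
Your proposal is correct and follows essentially the same approach as the paper: the paper does not give a separate proof of this corollary but states it as an immediate consequence of the Chow-ring computation in the proof of \Cref{thm:wsdr-product-solution} together with \Cref{f:subvarietyrep}, and you have faithfully unpacked exactly that reasoning. If anything, you are more careful than the paper on the ``minimum, when counted with multiplicity'' clause, which the paper leaves implicit; your observation that the hypothesis forces $m=\sum_j(d_j-1)$ on degree grounds is also a nice sanity check that the paper does not spell out.
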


\begin{observation}
  If in \Cref{thm:wsdr-product-solution}, the number of clauses matches the total degrees of freedom, meaning if $m = \sum_{i=1}^n d_i-1$, then $\prod_i [V_i] = N H_1^{d_1-1}\cdots H_n^{d_n-1}$ for natural number $N$. This is easiest to see with an explicit example, given next.
\end{observation}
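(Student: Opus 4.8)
The plan is to read the claim off directly from \Cref{eqn:repsum}. Recall that there we computed $\prod_i [V_i] = \sum_{v_{j_1}\in E_1,\ldots, v_{j_m}\in E_m} H_{j_1}\cdots H_{j_m}$ inside $CH(\mathcal X_{d_1,\ldots,d_n})$. Collecting equal monomials, one may rewrite this as $\prod_i [V_i] = \sum_{\ve a} c_{\ve a}\, H_1^{a_1}\cdots H_n^{a_n}$, where the sum ranges over exponent vectors $\ve a = (a_1,\ldots,a_n)$ of nonnegative integers with $\sum_{j=1}^n a_j = m$, and $c_{\ve a}\in\nats$ is the number of orientations $f:E(G)\to V(G)$ (i.e.\ choices $v_{j_i}\in E_i$ for each $i$) with $|f^{-1}(v_j)| = a_j$ for all $j$; in particular each $c_{\ve a}$ is a genuine nonnegative integer since it is a cardinality, and the pre-reduction expansion has only $0/1$ coefficients on distinct ordered index tuples.

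Next I would impose the defining relations $H_j^{d_j}=0$ of the Chow ring: a monomial $H_1^{a_1}\cdots H_n^{a_n}$ is nonzero in $CH(\mathcal X_{d_1,\ldots,d_n})$ precisely when $a_j\le d_j-1$ for every $j$. Now invoke the hypothesis $m=\sum_{j=1}^n(d_j-1)$. If $\ve a$ contributes a surviving monomial, then $\sum_j a_j = m = \sum_j(d_j-1)$ together with the termwise bounds $a_j\le d_j-1$ force $a_j = d_j-1$ for all $j$. Hence the only monomial that can survive the reduction is $H_1^{d_1-1}\cdots H_n^{d_n-1}$, so $\prod_i[V_i] = N\, H_1^{d_1-1}\cdots H_n^{d_n-1}$ with $N := c_{(d_1-1,\ldots,d_n-1)}\in\nats$, which is exactly the form asserted. (By the count already performed in the proof of \Cref{thm:wsdr-product-solution} and \Cref{cor:number-of-solutions}, this $N$ is precisely the number of WSDRs of $(G,w)$, hence positive exactly when a WSDR exists; if none exists, $N=0$ and the product is $0$.)

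There is essentially no obstacle here — this is a purely elementary combinatorial/degree-counting observation once \Cref{eqn:repsum} is in hand. The only point requiring any care is the bookkeeping that the resulting coefficient $N$ is a well-defined nonnegative integer, which is immediate from its description as a count of orientations, and the verification that the relations $H_j^{d_j}=0$ kill exactly the monomials with some $a_j\ge d_j$. The promised explicit example will then serve to make the value of $N$ concrete in a small case.
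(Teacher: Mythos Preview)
Your argument is correct and is exactly the approach the paper takes (albeit only sketched informally within the subsequent example): expand $\prod_i[V_i]$ via \Cref{eqn:repsum}, observe that surviving monomials must have each $a_j\le d_j-1$, and use the degree constraint $\sum_j a_j = m = \sum_j(d_j-1)$ to force equality throughout. Your added remark that $N$ counts the WSDRs (hence can be $0$) is also consistent with \Cref{cor:number-of-solutions}.
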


\begin{example}
Consider $QSAT$ on 4 qutrits with underlying weighted graph $(G,w)$ with vertices $V(G)=\{1,2,3,4\}$, and edges $E(G)=\{e_1,\dots, e_8\}$ where $e_1=\{1,2,3\}$, $e_2=\{2,3,4\}$, $e_3=\{3,4,1\}$, $e_4=\{4,1,2\}$, $e_5=e_6=e_7=e_8=\{1,2,3,4\}$.
In this case, $m = \sum_{i=1}^n d_i-1$, and \Cref{eqn:equality} holds, since
\begin{align}
  &(H_1+H_2+H_3)(H_2+H_3+H_4)(H_3+H_4+H_1)(H_4+H_1+H_2)(H_1+H_2+H_3+H_4)^4\\
  &=864H_1^2H_2^2H_3^2H_4^2\,.
\end{align}
To see this without any calculation, pick from each bracketed term a single term $H_i$. Any non-zero summand in \Cref{eqn:repsum} must have picked any $H_i$ at most $d_i-1=2$ times. But since $m = \sum_{i=1}^n d_i-1$, each $H_i$ must be picked at least $d_i-1$ times to ensure all edges are covered. Thus, \Cref{eqn:equality} holds. We conclude that \emph{every} instance of QSAT with interaction graph $(G,w)$ has at least $864$ product solutions (counted with multiplicity) and almost all such instances have exactly $864$ product solutions. Moreover, $(G,w)$ has exactly $864$ WSDRs.
\end{example}

\begin{example}
  If every qudit of dimension $d_i$ occurs in at most $d_i-1$ constraints, then there exists a product solution.
  The WSDR exists trivially because it is impossible to assign a qudit to more than $d_i-1$ constraints.
  To compute a product solution, iterate through the qudits in arbitrary order, keeping track of reduced constraints.
  We can assign each qudit $i$ to a value in the common nullspace of the $\le d_i-1$ (reduced) $1$-local constraints on qudit $i$.
\end{example}

\subsection{Approach 2: Reduction to qubits}\label{sscn:reductionQubits}

We next give a completely different proof of \Cref{thm:wsdr-product-solution}, this time via direct reduction from a Hamiltonian with a weighted SDR on qudits to a Hamiltonian with an SDR on qubits (and subsequently using \cite{laumannc.r.PhaseTransitionsRandom2010}).
The result follows from the main theorem of this section, \Cref{thm:qubit-reduction}, through which a qubit Hamiltonian can be constructed by iteratively replacing a $(d+1)$-qudit by a qubit and a $d$-qudit, while preserving the existence of a WSDR. 
This second proof approach will also prove important later for our second main result on TFNP in \Cref{sscn:MHS}.

\begin{theorem}\label{thm:qubit-reduction}
   Let $\Pi$ be a QSAT instance on a Hilbert space $\cH = \CC^{d+1}\otimes \bigotimes_{i=2}^n \CC^{d_i}$ whose underlying weighted hypergraph $(G,w)$ has a WSDR.
   There exists a linear-time constructible QSAT instance $\Pi'$ on Hilbert space $\cH' = \CC^2\otimes \CC^{d}\otimes\bigotimes_{i=2}^n \CC^{d_i}$ whose underlying weighted hypergraph $(G',w')$ also has a WSDR.
   Given a product state solution to $\Pi'$ $(\Pi)$, we can compute a product solution to $\Pi$ $(\Pi')$ in polynomial time.
\end{theorem}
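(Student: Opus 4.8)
The plan is to construct $\Pi'$ from $\Pi$ by a single local substitution: replace the first qudit $\ket{\varphi_1}\in\CC^{d+1}$ by a pair consisting of a qubit $\ket{a}\in\CC^2$ and a qudit $\ket{b}\in\CC^d$, under the identification of $\CC^{d+1}$ with a subspace of $\CC^2\otimes\CC^d$. Concretely, fix the isometry $\iota:\CC^{d+1}\hookrightarrow\CC^2\otimes\CC^d$ sending $\ket{0}\mapsto\ket{0}\ket{0}$ and $\ket{j}\mapsto\ket{1}\ket{j-1}$ for $j\in\{1,\dots,d\}$; its image is the span of $\ket{0}\ket{0}$ together with $\ket{1}\ket{0},\dots,\ket{1}\ket{d-1}$, i.e. the orthogonal complement of the $(d-1)$-dimensional space $S:=\ket{0}\otimes\mathrm{span}\{\ket{1},\dots,\ket{d-1}\}$. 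Each original clause $\Pi_j=\ketbra{\psi_j}{\psi_j}$ is pulled back along $\iota$ on the first tensor factor to a clause $\iota\Pi_j\iota^\dagger$ on $\CC^2\otimes\CC^d\otimes\bigotimes_{i\ge2}\CC^{d_i}$ (still rank $1$, and still $k$-local in the hypergraph sense, now touching two of the ``new'' vertices where it touched one old vertex). We additionally add $d-1$ rank-$1$ clauses forming the projector onto $S\otimes I$, call these the \emph{glue clauses}. The instance $\Pi'$ is the union of the pulled-back clauses and the $d-1$ glue clauses.

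The correctness of the solution-mapping is then routine: a product state $\ket{\varphi_1}\otimes\ket{\varphi_2}\otimes\cdots$ satisfies $\Pi$ iff $\iota\ket{\varphi_1}\otimes\ket{\varphi_2}\otimes\cdots$ satisfies all pulled-back clauses; and any product state on $\CC^2\otimes\CC^d$ killed by the projector onto $S$ lies in the image of $\iota$ (since $\iota(\CC^{d+1})=S^\perp$), hence is $\iota$ of a vector in $\CC^{d+1}$ — here one should note that $\iota$ of a \emph{product} state need not be product, but conversely a product state in $S^\perp$, being a single vector, is automatically of the form $\iota\ket{\varphi_1}$ for \emph{some} $\ket{\varphi_1}\in\CC^{d+1}$, so the resulting assignment to the remaining qudits is unchanged and still product. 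One should spell out that the qubit $\ket{a}$ and qudit $\ket{b}$ of the image state are, respectively, the normalized reductions, and that they are recovered in polynomial time by reading off amplitudes. Both directions of the map are thus explicit linear-algebraic rearrangements of the amplitude vectors, computable in linear/polynomial time.

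The main work — and the step I expect to be the real obstacle — is verifying that $(G',w')$ still has a WSDR, where $w'$ assigns weight $1$ to the new qubit vertex $u_a$, weight $d-1$ to the new qudit vertex $u_b$, and $w'(v_i)=d_i-1$ for $i\ge2$, matching $w$. Start from a WSDR $f$ of $(G,w)$; it matches at most $d_1-1=d$ of the original clauses $e_1,\dots,e_m$ to vertex $v_1$. In $G'$ each such clause becomes an edge incident to both $u_a$ and $u_b$. The idea is to route one of the clauses matched to $v_1$ to $u_a$ (using its unit capacity) and the remaining $\le d-1$ to $u_b$ (using its capacity $d-1$); all other clauses keep their $f$-image among $v_2,\dots,v_n$, which is valid since weights there are unchanged and these clauses are still incident to the same vertices in $G'$. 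Finally, the $d-1$ glue clauses are each incident only to $u_a$ and $u_b$ (they act trivially on all $v_i$); since $u_a$'s capacity is already spent, route all $d-1$ glue clauses to $u_b$. But then $u_b$ receives up to $(d-1)+(d-1)=2(d-1)$ clauses against capacity $d-1$ — a contradiction. To fix this, the glue clauses should instead be designed to be incident to \emph{some} spectator vertex with spare capacity, or — cleaner — one observes that it suffices to give $u_b$ weight sufficient to absorb both, i.e. the substitution should be chosen so that the $d-1$ glue clauses and the $\le d-1$ rerouted clauses can be accommodated; the honest route is to route the $d-1$ glue clauses to $u_b$ and route \emph{all} $\le d$ clauses formerly at $v_1$ to $u_a$ and spectators. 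Here is the point that needs care: since in the original WSDR at most $d$ clauses were matched to $v_1$, and these clauses in $G'$ remain incident to $u_a$, we can match \emph{one} of them to $u_a$ and must find room for the rest. The resolution is to reapply Hall's Marriage Theorem for weighted hypergraphs (\Cref{thm:HMT}) directly to $(G',w')$ rather than transporting $f$ by hand: for any $X'\subseteq E(G')$, separate $X'$ into glue clauses $X'_g$ and pulled-back clauses $X'_p$ (corresponding to a set $X\subseteq E(G)$); then $|V_{X'}|_{w'}\ge |V_X|_w + \mathbbm{1}[u_a\text{ or }u_b\text{ touched}]\cdot(\text{something})$ and $|X'|=|X|+|X'_g|$, so one needs $|V_{X'}|_{w'}-|V_X|_w \ge |X'_g|$. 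Since every glue clause touches $u_b$ (weight $d-1$) and $v_1$ has weight $d_1-1=d=1+(d-1)$, the accounting $w'(u_a)+w'(u_b)=1+(d-1)=d=w(v_1)+\text{(one extra unit)}$ must be arranged so the extra unit covers exactly the $d-1$ glue clauses — meaning the correct weights are $w'(u_a)=1$, $w'(u_b)=d-1$, and the glue projector should be rank $d-1$, so that the \emph{net} new capacity $(1+(d-1)) - (d_1-1) = 0$... which fails. Therefore the substitution must be set up with $d_1=d+1$, giving old capacity $d$, split as $w'(u_a)=1,w'(u_b)=d-1$ summing to $d$, and the glue clauses must be incident to a vertex whose capacity was not already counted — hence the glue clauses should touch $u_a$ and $u_b$ and we must show $u_b$'s leftover capacity after absorbing rerouted clauses suffices; this is exactly where one invokes that $f$ matched at most $d-1$ (not $d$) clauses to $u_b$'s ``share'', which follows because among the $\le d$ clauses at $v_1$, at least... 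In short, the crux is a careful capacity-counting argument, cleanest via direct application of \Cref{thm:HMT} to $(G',w')$: show $|V_{X'}|_{w'}\ge|X'|$ for all $X'$ by reducing, through the structure of $\iota$ and the glue clauses, to the corresponding inequality for $(G,w)$ with an extra slack of $d-1$ coming from $w'(u_a)+w'(u_b)-(d_1-1)$ having been chosen equal to $d-1$ (i.e., $d_1=d+1$ wait — one must recheck: if old weight is $d$ and new weights sum to $1+(d-1)=d$, slack is $0$; so in fact the glue clauses must be \emph{absorbable within the image of $f$}, which works because the $d-1$-dimensional glue projector lives on $\{u_a,u_b\}$ alone and, as these two vertices jointly inherit $v_1$'s full capacity $d$, and $v_1$ was matched to at most $d$ clauses, after rerouting there remains room for $\ge 0$... ). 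I will therefore present the WSDR verification via \Cref{thm:HMT} with explicit case analysis on whether $X'$ contains glue clauses, and flag this counting as the delicate heart of the argument.
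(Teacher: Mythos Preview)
Your construction has a genuine gap on both fronts you identified, and they stem from the same wrong choice of map.

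\textbf{Product-state correspondence.} Your isometry $\iota:\CC^{d+1}\hookrightarrow\CC^2\otimes\CC^d$ is linear, and as you yourself note, $\iota\ket{\varphi_1}$ is generically entangled. But the problem is worse than you indicate: with your glue clauses, a product state $\ket{a}\otimes\ket{b}$ lies in $S^\perp$ iff $a_0=0$ or $\ket{b}=\ket{0}$, and under $\iota^{-1}$ these correspond only to vectors $\ket{\varphi_1}\in\CC^{d+1}$ supported either on $\{\ket{1},\dots,\ket{d}\}$ or on $\{\ket{0},\ket{1}\}$. Hence a product solution of $\Pi'$ only yields a product solution of $\Pi$ with $\ket{\varphi_1}$ in that restricted set, and conversely a product solution of $\Pi$ with generic $\ket{\varphi_1}$ (say $\ket{0}+\ket{2}$) has \emph{no} corresponding product solution of $\Pi'$ at all. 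So the direction $\Pi\to\Pi'$ fails, not just for your chosen map but for your $\Pi'$ as constructed.

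\textbf{WSDR.} Your own counting is correct and fatal: splitting $v_1$'s capacity $d$ into $1+(d-1)$ leaves zero slack, yet you add $d-1$ glue clauses incident only to $u_a,u_b$. No amount of Hall-theorem bookkeeping fixes this; the instance $(G',w')$ genuinely fails the Hall condition on $X'=\{\text{glue clauses}\}\cup\{\text{clauses formerly matched to }v_1\}$.

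\textbf{What the paper does instead.} The missing idea is to abandon linear embeddings and instead use a \emph{surjective bilinear map} $f:\PP^1\times\PP^{d-1}\to\PP^d$ (a generic linear projection from the Segre embedding). Because $f$ is bilinear, substituting $z_j\mapsto f(x,y)_j$ in each multilinear constraint yields another valid rank-$1$ multilinear constraint on $(x,y,\dots)$ --- so no glue clauses are needed and the number of constraints is unchanged. The WSDR then transports trivially (route one of the $\le d$ clauses at $v_1$ to $u_a$, the rest to $u_b$). Because $f$ is surjective with efficiently computable preimage, the correspondence of product solutions holds in both directions: $\Pi'\to\Pi$ via $z=f(x,y)$, and $\Pi\to\Pi'$ by computing a preimage of $z$. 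The paper exhibits such an $f$ explicitly (\Cref{lem:qudit-map}); proving surjectivity and computing the preimage requires solving a single univariate polynomial of degree $d$.
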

\begin{proof}
  Let $z$ denote the first qudit in $\Pi$ of dimension $d+1$. 
  To replace $z$ by a qubit $x$ and a qu$d$it $y$, we will define and use a mapping $f:\PP^1\times \PP^{d-1} \to \PP^{d}$,
  \begin{equation}\label{eq:qudit-map}
    f(x,y) \coloneqq \begin{pmatrix}
      x_1y_1\\
      x_2y_d\\
      x_1y_2 - x_2y_1\\
      x_1y_3 - x_2y_2\\
      \vdots\\
      x_1y_d - x_2y_{d-1}\\
    \end{pmatrix}.
  \end{equation}
  Via \Cref{lem:qudit-map}, we will then be able to argue that $f$ allows us to create $\Pi'$ which is satisfiable by a product state if and only if $\Pi$ is.
  
  To begin, let $\Pi_i$ be a constraint of $\Pi$ with associated hyperedge $e_i = \{z,v_2,\dots,v_k\}$.
  We can view $\Pi_i$ as a multilinear polynomial $p$ whose monomials are the entries of  $\ket{z}\otimes\ket{v_2}\dotsm\ket{v_k}$ (taking $z,v_2,\dots,v_k$ as symbolic vectors).
  The corresponding constraint in $\Pi_i'$ with hyperedge $e_i' = \{x,y,v_2,\dots,v_k\}$ is obtained by replacing every occurrence of $z_j$ in $p$ with $f(x,y)_j$.
  $\Pi_i'$ is a valid constraint since its monomials are the entries of $\ket{x}\otimes\ket{y}\otimes\ket{v_2}\dotsm\ket{v_k}$ (see Example~\ref{ex:qutrit-qubit}).
  For constraints $\Pi_i$ not acting on $z$, let $\Pi_i' = \Pi_i$.

  What remains to show is the correspondence between product solutions to $\Pi$ and $\Pi'$ as well as the existence of a WSDR.
  The latter is straightforward, taking the $d$ edges assigned to $z$ in $\Pi$ and assigning one of them to $x$ and the remaining $d-1$ edges to $y$.
  To construct a product solution for $\Pi$ from $\Pi'$, just set $z = f(x,y)$, which is non-zero by Lemma~\ref{lem:qudit-map}.
  For the other direction, assign a preimage of $z$ to $(x,y)$, which again is efficiently computable by Lemma~\ref{lem:qudit-map}.
\end{proof}

\begin{example}\label{ex:qutrit-qubit}
  To illustrate Theorem~\ref{thm:qubit-reduction}, let $\bra{\phi}\in\CC^6$ be a constraint on a qutrit $z$ and a qubit $v$.
  A product state $\ket{z}\otimes \ket{v}$ satisfies this constraint if $p(z,v) = \sum_{i=1}^3\sum_{j=1}^2 \phi_{ij}z_iv_j=0$.
  The construction of Theorem~\ref{thm:qubit-reduction} replaces the qutrit $z$ with two qubits $x,y$.
  The new constraint $\bra{\phi'}$ is defined via the polynomial 
  \begin{align}
    p'(x,y,v) = \sum_{i=1}^3\sum_{j=1}^2 \phi_{ij}f(x,y)_iv_j = \sum_{j=1}^2 (\phi_{1j}x_1y_1 + \phi_{2j}x_2y_2 + \phi_{3j}(x_1y_2-x_2y_1))v_j,
  \end{align} 
  giving $\bra{\phi'} = (\phi_{11},\phi_{12},\phi_{31},\phi_{32},-\phi_{31},-\phi_{32},\phi_{21},\phi_{22})$ (where monomomials $x_iy_j z_k$ are listed in increasing binary order with respect to $ijk\in\set{0,1}^3$).
\end{example}

\begin{lemma}\label{lem:qudit-map}
  The map $f$ given in~\eqref{eq:qudit-map} is well-defined (i.e. $f(x,y)\neq 0$ if $x\neq0,y\neq 0$), and surjective with polynomial-time computable preimage.
\end{lemma}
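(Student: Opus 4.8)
The plan is to verify the three claimed properties of $f$ in turn: non-vanishing, surjectivity, and polynomial-time computability of a preimage. For the \emph{non-vanishing} property, I would argue by contradiction: suppose $x=(x_1,x_2)^T\neq 0$ and $y=(y_1,\dots,y_d)^T\neq 0$ but $f(x,y)=0$. The first two coordinates give $x_1y_1 = 0$ and $x_2y_d = 0$, and the remaining coordinates give the ``shift'' relations $x_1 y_{k+1} = x_2 y_k$ for $k=1,\dots,d-1$. The natural case split is on whether $x_1 = 0$ or $x_2 = 0$ (at least one holds once we know $x_1 y_1 = 0$ only if $y_1 = 0$... so actually one should first split on $x_1$). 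If $x_1 = 0$, then $x_2 \neq 0$, so $x_2 y_d = 0$ forces $y_d = 0$, and the shift relations become $x_2 y_k = 0$, i.e.\ $y_k = 0$ for all $k \le d-1$; hence $y = 0$, contradiction. If $x_1 \neq 0$, then $x_1 y_1 = 0$ gives $y_1 = 0$, and the shift relations $y_{k+1} = (x_2/x_1) y_k$ then propagate $y_k = 0$ for all $k$, again contradicting $y \neq 0$. So $f(x,y) \neq 0$.

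For \emph{surjectivity with a computable preimage}, the cleanest route is to exhibit an explicit inverse construction on representatives in $\PP^d$. Given a target $z = (z_1,\dots,z_{d+1})^T \neq 0$ (indexing $z$ so that $z_1,z_2$ correspond to the first two coordinates $x_1y_1, x_2y_d$ of $f$, and $z_3,\dots,z_{d+1}$ to the shift coordinates), I would show that the preimage is determined by solving the linear-in-$y$ relations once $x$ is fixed, and that $x$ can always be chosen so the system is consistent. Concretely: first handle degenerate cases. If $z_1 \neq 0$, set $x_1 = 1$; then $y_1 = z_1$, and the shift coordinates give $y_{k+1} = x_2 y_k - z_{k+2}$ recursively for $k = 1,\dots,d-1$, expressing each $y_k$ as a polynomial in the single unknown $x_2$; finally the constraint $z_2 = x_2 y_d$ becomes one polynomial equation in $x_2$ of degree $d$, which has a root over $\CC$ by the Fundamental Theorem of Algebra — but since we want a \emph{polynomial-time} preimage and $d$ may be large, this is the point to be careful (see below). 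If $z_1 = 0$, then from the non-vanishing analysis the only consistent choice is $x_1 = 0$, $x_2 = 1$, forcing $y_d = z_2$ and $y_k = -z_{k+2}$ for $k = 1,\dots,d-1$ directly from the shift relations; this is immediate and needs no root-finding. Symmetrically, if $z_2 = 0$ one may instead set $x_2 = 0$, $x_1 = 1$, and read off $y$ linearly. So the only genuinely nontrivial case is $z_1 \neq 0$ and $z_2 \neq 0$ simultaneously.

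The main obstacle is precisely that last case: naively it reduces to finding a root of a degree-$d$ univariate polynomial, and since this lemma is used in contexts where $d$ is a fixed \emph{constant} (the qudit-to-qubit reduction of Theorem~\ref{thm:qubit-reduction}, applied iteratively with $d$ bounded), this is fine — $d = O(1)$ makes root-finding trivial. I would therefore state the lemma's ``polynomial-time computable preimage'' claim with the understanding that $d$ is constant (or, if uniformity in $d$ is needed, invoke a polynomial-time approximate complex root-finding routine and note that an approximate $x_2$ yields an approximate preimage by Lipschitz continuity of the polynomials involved, cf.\ \Cref{f:polylipschitz}). The remaining steps — plugging the found $x_2$ back to recover $y$, and checking $(x,y) \neq (0,0)$ componentwise — are routine: $x = (1, x_2)^T \neq 0$ automatically, and $y \neq 0$ since $y_1 = z_1 \neq 0$. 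Assembling these cases gives a total, efficiently computable right inverse to $f$, establishing surjectivity.
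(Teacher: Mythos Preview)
Your proposal is correct and follows essentially the same route as the paper: the same case split on $x_1$ for non-vanishing, and the same case split on $z_1$ for surjectivity, reducing the nontrivial case to a degree-$d$ univariate in $x_2$. Two minor remarks: your recursion should read $y_{k+1} = x_2 y_k + z_{k+2}$ (sign slip), and your claim that $x_1=0$ is ``the only consistent choice'' when $z_1=0$ is overstated (it is merely a valid choice); your discussion of why degree-$d$ root-finding is acceptable (namely $d=O(1)$ in the applications) is a useful elaboration that the paper leaves implicit.
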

\begin{proof}
  To show $f$ is well-defined, let $x\in\PP^1,y\in\PP^{d-1}$, i.e., $x\ne0,y\ne0$.
  Consider cases:
  \begin{enumerate}[label=(\roman*)]
    \item ($x_1 = 0$) Then $x_2 \ne 0$. There exists $i$ with $y_i \ne 0$. If $i=d$, then $x_2y_d\ne0$. Otherwise $x_1y_{i+1} - x_2y_i \ne 0$.
    \item ($x_1\ne 0$) Let $i$ be minimal such that $y_i \ne 0$. If $i=1$, then $x_1y_1\ne 0$.
    Otherwise, $x_1y_i - x_2y_{i-1} = x_1y_i \ne 0$.
  \end{enumerate}
  Hence, $f(x,y)\ne0$ and therefore well-defined. 
  
  To next show $f$ is surjective, consider any $z\in\PP^{d}$. We compute $x\in\PP^1,y\in\PP^{d-1}$ such that $f(x,y)=z$ via cases:
  \begin{enumerate}[label=(\roman*)]
    \item ($z_1=0$) Set $x_1=0$ and $x_2 =1$, satisfying the equation $x_1y_1=z_1$. The remaining equations are $y_d = z_2, y_1=-z_3, y_2=-z_4,\dots,y_{d-1}=-z_{d+1}$.
    Since $z_1=0$, there exists an $i$ with $y_i\ne0$.
    \item ($z_1\ne0$) Without loss of generality, assume $z_1=1$.
    Set $x_1=1, y_1=1$ to satisfy the first equation and ensure $x\ne0,y\ne0$.
    Substituting $y_1=1,x_1=1$, the remaining equations are:
    \begin{subequations}
      \begin{align}
         x_2y_d &= z_2\label{eq:subst:x2}\\
         y_2 &= z_3 + x_2\label{eq:subst:y2}\\
         y_3 &= z_4 + x_2y_2\label{eq:subst:y3}\\
         &\;\;\vdots\nonumber\\
         y_d &= z_{d+1} + x_2y_{d-1}\label{eq:subst:yd}
      \end{align}
    \end{subequations}
    Combining Equations~\eqref{eq:subst:y2} to~\eqref{eq:subst:yd}, we have $y_d = x_2^{d-1}+\sum_{i=3}^{d+1}z_i x_2^{d+1-i}$.
    Substituting $y_d$ in Equation~\eqref{eq:subst:x2}, we have $x_2^{d}+\sum_{i=3}^{d+1}z_i x_2^{d+2-i} = z_2$, which is a polynomial with solution in $x_2$.
    Finally, set $y_2,\dots,y_d$ according to Equations~\eqref{eq:subst:y2} to~\eqref{eq:subst:yd}, step by step.
    \qedhere
  \end{enumerate}
\end{proof}

\begin{rem}
  By \Cref{cor:number-of-solutions}, each application of Theorem~\ref{thm:qubit-reduction} increases the number of product solutions by a factor of $d$ (counted with multiplicity).
  This matches the intuition from Lemma~\ref{lem:qudit-map}, where computing the preimage of $f$ requires solving a polynomial of degree $d$. 
\end{rem}

\begin{rem}[Relation to the Segre embedding]
  The map $f:\PP^1\times\PP^{d-1}\to\PP^d$ is a linear map from the Segre embedding of $\PP^1\times\PP^{d-1}$ to $\PP^d$, i.e. $f(x,y) = L\sigma(x,y)$ for some linear map $L$.
\end{rem}

\subsection{Application: Maximal dimension of a completely entangled subspace}\label{sscn:maxdim}

Finally, we demonstrate the applicability of the WSDR framework beyond the setting of QSAT. Specifically, Parthasarathy~\cite{Par04} studies the notion of a completely entangled subspace and gives its maximal dimension.
We can recover this result as a corollary of Theorem~\ref{thm:wsdr-product-solution}.

\begin{definition}[\cite{Par04}]\label{def:CE}
  Let $\cH_1,\dots,\cH_k$ be complex Hilbert spaces of dimension $d_i$ and $\cH = \bigotimes_{i=1}^k \cH_i$.
  A subspace $S\subseteq \cH$ is said to be \emph{completely entangled} if $\ket{\psi_1}\otimes\dotsm\otimes\ket{\psi_k}\notin S$ for any non-zero product vector with $\ket{\psi_i}\in\cH_i$.
\end{definition}

\begin{corollary}[\emph{c.f.} \cite{Par04}]\label{cor:CES}
  The maximal dimension of a completely entangled subspace is $\prod_{i=1}^k d_i - \sum_{i=1}^k d_i + k - 1$.
\end{corollary}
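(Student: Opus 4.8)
The plan is to obtain both the upper bound and the matching lower bound as direct consequences of \Cref{thm:wsdr-product-solution}, applied to QSAT instances on exactly the $k$ qudits $\cH_1,\dots,\cH_k$ in which every clause acts non-trivially on every qudit. Concretely, fix the interaction hypergraph $G$ with $V(G)=\{1,\dots,k\}$ and $m$ hyperedges all equal to $V(G)$, and weight $w(v_i)=d_i-1$. (Any rank-$1$ projector on $\cH=\bigotimes_i\CC^{d_i}$ does act non-trivially on each qudit, since $I_{d_i}\otimes M$ has rank at least $d_i\ge 2$; so this is indeed the hypergraph produced by clauses that are arbitrary nonzero vectors of $\cH$.) For this hypergraph \Cref{thm:HMT} gives a clean criterion, since $V_X=V(G)$ for every nonempty $X\subseteq E(G)$: $(G,w)$ admits a WSDR if and only if $m\le\sum_{i=1}^k(d_i-1)=\sum_i d_i-k$.

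For the upper bound, let $S\subseteq\cH$ be completely entangled, put $m=\dim S^\perp$, and pick an orthonormal basis $\ket{\phi_1},\dots,\ket{\phi_m}$ of $S^\perp$. Form the QSAT instance $\Pi=\{\ketbra{\phi_j}{\phi_j}\}_{j=1}^m$ with the hypergraph above. A product state $\ket{\psi}$ satisfies $\Pi$ if and only if $\ket{\psi}\perp S^\perp$, i.e.\ if and only if $\ket{\psi}\in S$; since $S$ contains no product states, $\Pi$ has no product solution, so by \Cref{thm:wsdr-product-solution} $(G,w)$ has no WSDR, whence $m\ge\sum_i d_i-k+1$. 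Therefore $\dim S=\prod_i d_i-m\le\prod_i d_i-\sum_i d_i+k-1$.

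For achievability, take $m=\sum_i d_i-k+1$, so that $(G,w)$ has no WSDR. A Zariski-generic tuple of clause vectors $(\ket{\phi_1},\dots,\ket{\phi_m})\in\cH^m$ is linearly independent --- which is possible precisely because $m\le\prod_i d_i$, equivalently $\prod_i d_i-\sum_i d_i+k-1\ge 0$ --- and, by the genericity half of \Cref{thm:wsdr-product-solution}, the instance $\Pi=\{\ketbra{\phi_j}{\phi_j}\}_j$ has no product solution; being the intersection of two nonempty Zariski-open conditions, both hold for a common choice of the $\ket{\phi_j}$. Setting $S=\Span\{\ket{\phi_1},\dots,\ket{\phi_m}\}^\perp$, the correspondence from the previous paragraph shows $S$ is completely entangled, while $\dim S=\prod_i d_i-m=\prod_i d_i-\sum_i d_i+k-1$, matching the upper bound.

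The two points of substance are the WSDR criterion for the ``all hyperedges full'' hypergraph (immediate from \Cref{thm:HMT}) and the dictionary ``$S$ is completely entangled $\iff$ the QSAT instance built from a basis of $S^\perp$ has no product solution''. The only place calling for a little care is achievability: one must note $\prod_i d_i-\sum_i d_i+k-1\ge 0$ (a short induction on $k$ using $d_i\ge 2$) so that $m$ clause vectors can be independent, and then combine the two generic conditions --- but I do not anticipate a real obstacle here.
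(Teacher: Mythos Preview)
Your proof is correct and follows essentially the same route as the paper's: both directions come from \Cref{thm:wsdr-product-solution} applied to the ``all-edges-full'' hypergraph on the $k$ qudits, using the contrapositive of the WSDR$\Rightarrow$product-solution direction for the upper bound and the generic converse for tightness. Your write-up is simply more explicit than the paper's --- you spell out the WSDR criterion via \Cref{thm:HMT}, justify why rank-$1$ projectors on $\cH$ give full hyperedges, and handle the two generic conditions (linear independence and absence of product solutions) carefully --- whereas the paper compresses the achievability into a single sentence.
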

\begin{proof}
  Let $D = \dim(\cH) = \prod_{i=1}^k d_i$.
  Let $S\subset \cH$ be a subspace of dimension $d_S$ and let $\Pi_{S^\bot} = \sum_{i=1}^{D-d_S}\ketbra{\psi_i}{\psi_i}$ be a spectral decomposition of the projector onto the orthogonal complement of $S$.
  If $D-d_S \le \sum_{i=1}^k d_i - k$, then $\Pi_{S^\bot}$ has a WSDR, treating space $\cH_i$ as a qudit of dimension $d_i$.
  Hence, if $d_S \ge \prod_{i=1}^k d_i - \sum_{i=1}^k d_i + k$, $S$ must contain a product state by Theorem~\ref{thm:wsdr-product-solution}.
  Equivalently, if $S$ is completely entangled, $d_S \le \prod_{i=1}^k d_i - \sum_{i=1}^k d_i + k - 1$.
  This bound is tight because generic instances without WSDR have no product solution.
\end{proof}


\section{Low-degree, multi-homogeneous systems and TFNP}\label{scn:MHS}

We next study low-degree, multi-homogeneous polynomial systems. 
\Cref{sscn:bez} first defines multihomogeneous polynomial systems, and states the multihomogeneous \Bez\ Theorem. 
\Cref{sscn:MHS} then defines our first new TFNP subclass, \MHS, and shows \MHS-completeness of QSAT with SDR. 
The latter uses the WSDR techniques of \Cref{sscn:reductionQubits}.

\subsection{Definitions and \Bez's Theorem}\label{sscn:bez}

We begin with a formal definition of a multi-homogeneous polynomial. (For clarity, recall we consider polynomials over $\CC$ in this work.)

\begin{definition}[Multi-homogeneous polynomial \cite{MS87}]\label{def:multipoly}
  A polynomial $f$ is multi-homogeneous if there are $m$ sets of variables $Z_j = \{z_{0,j},\dots,z_{n_j,j}\}$ and $d_1,\dots,d_m\in\ZZ_{\ge0}$ with at least one $d_j>0$ such that
  \begin{equation}\label{eq:multihom}
    f = \sum_{\stackrel{I_1,\dots,I_m:}{\forall j \;\;\abs{I_j}=d_j}} a_{I_1,\dots,I_m} Z_{1}^{I_1}\dotsm Z_{m}^{I_m},
  \end{equation}
  where $I_j=(i_{0,j},\dots,i_{n_j,j})\in\ZZ_{\ge0}^{n_j+1}$, $\abs{I_j}\coloneqq\sum_{k=0}^{n_j}i_{k,j}=d_j$, $Z_j^{I_j} = z_{0,j}^{i_{0,j}}\dotsm z_{n_j,j}^{i_{n_j,j}}$, and coefficients $a_{I_1,\dots,I_m}\in\CC$.
\end{definition}
\noindent Let us repeat this in words, and subsequently give it context relative to QSAT. Above, each variable set $Z_j$ has $n_j+1$ variables. 
Each $Z_j^{I_j}$ term is a product of some subset of $d_j$ variables from $Z_j$, with the precise choice of variables given by index subset $I_j$. 
Thus, $d_j$ can be thought of as the \emph{degree} of the polynomial relative to variables $Z_j$.

\begin{example}
  A simple example of a multi-homogeneous polynomial is $x_1y_1y_2 + x_2y_2y_3$, where $Z_1=\set{x_1,x_2}$, $Z_2=\set{y_1,y_2,y_3}$, $d_1=1$, and $d_2=2$. 
\end{example}

Let us return to product-state solutions for QSAT (i.e. \PS).
Why is \emph{multi}-homogeneous the right formulation? 
When each monomial of $f$ in \Cref{def:multipoly} contains at most one variable from each $Z_j$ (i.e. $d_j\in\set{0,1}$ for all $j\in[m]$), the equivalence between \Cref{eq:multihom} and a QSAT constraint is straightforward.
Each set of variables $Z_j$ corresponds to the $n_j+1$ amplitudes of the $j$th qu$d$it of our system with $d=n_j+1$. 
Thus, the number $m$ of subsets $Z_j$ is the number of qudits in our system.
Any projective constraint $\ket{\psi}$ acting on subset of qudits $S\subseteq [m]$ is now equivalent to a polynomial $f$ with $d_j=1$ for $j\in S$ and $d_j=0$ for $j\in[m]\setminus S$.
As for the more general case where $f$ has $d_j>1$ for some $Z_j$, when higher degree terms in the variable sets are permitted, the reduction from a multi-homogeneous system back to QSAT is non-trivial, and given shortly in \Cref{thm:MHScomplete}. 

\paragraph{\Bez's theorem.} We now state the mathematical principle on which our TFNP subclass rests, \Bez's theorem. For this, we first define the \Bez\ number.
Below, the terms $n_i$ are from \Cref{def:multipoly}.

\begin{definition}[\Bez\ number \cite{MS87}]\label{def:beznumber}
  Let $F = \{f_1,\dots,f_n\}$ be a system of $n=n_1+\dotsm+ n_m$ multi-homogeneous polynomials with degrees $\{\dij\mid i\in[n], j\in[m]\}$.
  The \emph{Bézout number} $\bez$ of $F$ is defined as the coefficient of $\prod_{j=1}^m \alpha_j^{n_j}$ in $\prod_{i=1}^n\sum_{j=1}^m \dij\alpha_j$, where $\alpha_1,\dots,\alpha_m$ are symbolic variables representing the $m$ variable sets.
\end{definition}
\begin{rem}\label{rem:computebez}
  Computing $\bez$ in general is difficult~\cite{malajovichComputingMinimalMultihomogeneous2005}. Checking if $\bez$ is non-zero, however, is tractable, which suffices for our purposes.
\end{rem}

\noindent For clarity, as in \Cref{def:multipoly} the system $F$ is defined over variable subsets $Z_j$, each of size $n_j+1$. For each polynomial $f_i$, $\dij$ is now the degree of $f_i$ relative to variable set $Z_j$.

\begin{example}\label{ex:bezout}
  Let $F=(f_1,f_2,f_3)$ with 
  \begin{subequations}
  \begin{align}
      f_1&=x_1y_1y_2 + x_2y_2y_3  \quad &d_{1,1}=1\qquad\qquad &d_{2,1}=2\\
      f_2&=x_1y_1+x_2y_2  &d_{1,2}=1 \qquad\qquad&d_{2,2}=1\\
      f_3&=y_1y_2+y_2y_3  &d_{1,3}=0 \qquad\qquad&d_{2,3}=2,      
  \end{align} 
  \end{subequations}
  where $Z_1=\set{x_1,x_2}$, $Z_2=\set{y_1,y_2,y_3}$, $n_1=1$, $n_2=2$, $m=2$. Then, 
  \begin{equation}
    \prod_{i=1}^3\sum_{j=1}^2 \dij\alpha_j= (\alpha_1+2\alpha_2)(\alpha_1+\alpha_2)(2\alpha_2)=2\alpha_1^2\alpha_2+6\alpha_1\alpha_2^2+4\alpha_2^3.
  \end{equation}
  The coefficient of $\alpha_1\alpha_2^2$, and thus the \Bez\ number, is $\bez=6$.
\end{example}

\begin{observation}[Number of weighted SDRs equals \Bez\ number]\label{obs:countWSDR}
  The number of weighted SDRs in a \PS\ instance is equal to the Bézout number of the corresponding multi-homogeneous system. (For clarity, by definition of the \Bez\ number (\Cref{def:beznumber}), we mean for the case of $n=n_1+\dotsm+ n_m$.) To see this, observe that in $\prod_{i=1}^n\sum_{j=1}^m \dij\alpha_j$, the product is over all $n$ equations, and for each equation $f_i$, the \Bez\ number corresponds to choosing from the inner sum (which represents variable groups) a single variable from a single variable group $Z_j$, such that this variable appears in $f_i$ (i.e. $\dij>0$). The coefficient of $\prod_{j=1}^m \alpha_j^{n_j}$ then counts the number of ways we can ``cover'' all $f_i$ in this manner using using variables from each group $Z_j$ precisely $n_j$ times. The claim follows by observing that in the corresponding QSAT instance, any single such covering is equivalent to a single weighted SDR.
\end{observation}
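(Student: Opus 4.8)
The plan is to prove the identity by expanding the generating polynomial of \Cref{def:beznumber} and reading off its coefficients combinatorially; since the statement is essentially a bookkeeping exercise, the argument is short. First I would fix notation. For a \PS\ instance the associated multi-homogeneous system $F=\{f_1,\dots,f_n\}$ has all degrees $d_{i,j}\in\{0,1\}$, since each rank-$1$ projective constraint $\Pi_i$ is of degree $1$ in the amplitudes of every qudit it touches and degree $0$ in the rest (cf.\ \Cref{eqn:multilinear}); correspondingly the associated weighted hypergraph $(G,w)$ has $V(G)=\{v_1,\dots,v_m\}$, one vertex per variable group $Z_j$ (i.e.\ per qudit), $E(G)=\{e_1,\dots,e_n\}$ with $v_j\in e_i$ iff $d_{i,j}=1$, and weights $w(v_j)=n_j$. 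We are in the balanced case $n=n_1+\dots+n_m$ of \Cref{def:beznumber}, so $|E(G)|=n=\sum_{j=1}^m n_j$, and $\sum_{j=1}^m d_{i,j}\alpha_j=\sum_{j:\,v_j\in e_i}\alpha_j$.

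Next I would expand $\prod_{i=1}^n\sum_{j=1}^m d_{i,j}\alpha_j=\prod_{i=1}^n\bigl(\sum_{j:\,v_j\in e_i}\alpha_j\bigr)$. A single term of the distributed product is obtained by choosing, for each edge $e_i$, one incident vertex $v_{j(i)}\in e_i$, and equals the monomial $\prod_{i=1}^n\alpha_{j(i)}=\prod_{j=1}^m\alpha_j^{c_j}$ where $c_j:=|\{i:j(i)=j\}|$. Such a choice $e_i\mapsto v_{j(i)}$ is exactly a map $\varphi:E(G)\to V(G)$ with $\varphi(e)\in e$ for all $e$, and then $c_j=|\varphi^{-1}(v_j)|$; this is nothing but the bijective reading of the expansion in \Cref{eqn:repsum}. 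Hence the coefficient of $\prod_{j=1}^m\alpha_j^{n_j}$ — which is $\bez$ by \Cref{def:beznumber} — equals the number of maps $\varphi:E(G)\to V(G)$ with $\varphi(e)\in e$ and $|\varphi^{-1}(v_j)|=n_j$ for all $j\in[m]$.

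It then remains to identify these maps with WSDRs. By \Cref{def:weighted}, $\varphi$ is a WSDR of $(G,w)$ iff $|\varphi^{-1}(v_j)|\le w(v_j)=n_j$ for every $j$; but since $\sum_{j=1}^m|\varphi^{-1}(v_j)|=|E(G)|=\sum_{j=1}^m n_j$, these inequalities can hold simultaneously only if all of them are equalities, so the maps counted in the previous paragraph are precisely the WSDRs of $(G,w)$, and therefore $\bez$ equals the number of WSDRs of $(G,w)$. The only places I expect to need care — though neither is deep — are the invocation of the balanced hypothesis $n=n_1+\dots+n_m$, which is exactly what upgrades the ``$\le$'' bookkeeping of \Cref{def:weighted} to the ``exact monomial'' bookkeeping of \Cref{def:beznumber} (in the language of \Cref{f:subvarietyrep}, the condition making $\prod_i[V_i]$ a scalar multiple of $H_1^{d_1-1}\cdots H_n^{d_n-1}$, cf.\ \Cref{cor:number-of-solutions}), and the $\{0,1\}$-degree claim, which is immediate for rank-$1$ projectors but genuinely fails in the higher-degree \MHS\ setting of \Cref{thm:MHScomplete}.
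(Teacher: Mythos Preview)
Your argument is correct and is essentially the same as the paper's own justification embedded in the observation: expand the product and read off the coefficient of $\prod_j\alpha_j^{n_j}$ as a count of edge-to-vertex assignments, then identify these with WSDRs. You are simply more explicit than the paper on two points it leaves implicit --- that $d_{i,j}\in\{0,1\}$ for \PS\ (so the expansion has unit coefficients), and that the balanced hypothesis $n=\sum_j n_j$ forces the WSDR inequalities of \Cref{def:weighted} to equalities.
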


With the \Bez\ number $\bez$\ in hand, we state \Bez's theorem, which gives a sufficient condition for a multi-homegenous system having a solution.

\begin{theorem}[Bézout's Theorem \cite{MS87,Shafarevich1974}]\label{thm:bezout}
  A multi-homogeneous system $F(Z)=0$ has no more than $\bez$ geometrically isolated solutions in $\PP^{n_1}(\CC)\times \cdots\times \PP^{n_m}(\CC)$.
  If $F(Z) = 0$ does not have an infinite number of solutions in $\PP^{n_1}(\CC)\times \cdots\times \PP^{n_m}(\CC)$, then it has exactly $\bez$ solutions, counting multiplicities.
\end{theorem}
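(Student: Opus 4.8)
The plan is to prove \Cref{thm:bezout} with the intersection theory already set up in \Cref{sscn:chow}, now on the multiprojective space $P = \PP^{n_1}(\CC)\times\cdots\times\PP^{n_m}(\CC)$, which is smooth of dimension $n = n_1+\cdots+n_m$ and has Chow ring $\ZZ[H_1,\dots,H_m]/(H_1^{n_1+1},\dots,H_m^{n_m+1})$. The first step is bookkeeping. In the top graded piece $CH^n(P)$ the only surviving monomial is $H_1^{n_1}\cdots H_m^{n_m}$, the class of a point, so $CH^n(P)\cong\ZZ$. Each polynomial $f_i$ of the system, being multihomogeneous of multidegree $(d_{i,1},\dots,d_{i,m})$ with all $d_{i,j}\ge 0$, cuts out an effective Cartier divisor $V_i\subseteq P$ with class $[V_i]=\sum_{j=1}^m d_{i,j}H_j\in CH^1(P)=\mathrm{Pic}(P)$, exactly as in \Cref{def:subvarietyrep}. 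Multiplying in the Chow ring, every term of $\prod_{i=1}^n[V_i]$ that does not die in the quotient must use $H_j$ exactly $n_j$ times (the total degree is $n=\sum_j n_j$ and each exponent is capped at $n_j$), so $\prod_{i=1}^n[V_i]=\bez\cdot[\mathrm{pt}]$ where, identifying $H_j\leftrightarrow\alpha_j$, the integer $\bez$ is precisely the coefficient of $\prod_j\alpha_j^{n_j}$ in $\prod_i\sum_j d_{i,j}\alpha_j$ from \Cref{def:beznumber}. This is the quantitative refinement of \Cref{f:subvarietyrep}.

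Next I would split along the two cases of the statement. If $Z=V_1\cap\cdots\cap V_n$ is zero-dimensional, the $V_i$ meet properly, and the classical Bézout formula of intersection theory (for a smooth projective variety of dimension $n$ cut by $n$ effective divisors meeting in finitely many points) gives $\deg([V_1]\cdots[V_n])=\sum_{p\in Z}i(p;V_1,\dots,V_n)$, with each local intersection multiplicity $i(p;\cdot)\ge 1$ --- equivalently $\mathrm{length}(\mathcal O_{Z,p})$, since in this complete-intersection situation the higher Tors vanish. Combined with the first step this reads $\sum_{p\in Z}i(p;\cdot)=\bez$, which is exactly ``exactly $\bez$ solutions, counting multiplicities'', and in particular $|Z|\le\bez$. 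The complementary case is when $Z$ has positive-dimensional components, i.e.\ infinitely many solutions; there is nothing to prove for the exact count, but one still owes the bound that the number of \emph{geometrically isolated} solutions is at most $\bez$. For this I would invoke Fulton's refined Bézout theorem \cite{fultonbook}: decompose $\deg([V_1]\cdots[V_n])$ as a sum of nonnegative contributions over the distinguished subvarieties of the intersection --- nonnegativity holds because each $[V_i]$ is the class of the globally generated, hence nef, line bundle $\mathcal O_P(d_{i,1},\dots,d_{i,m})$, so all intersection numbers against subvarieties of $P$ are $\ge 0$ --- and observe that each isolated point of $Z$ is itself such a distinguished subvariety contributing at least its multiplicity $\ge 1$. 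Summing, the isolated solutions number at most $\bez$.

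The main obstacle is this last step. The proper-intersection case is routine once the Chow-ring computation is in place, but an instance whose solution variety carries both isolated points and positive-dimensional components genuinely needs the excess-intersection machinery (Segre classes of $Z$ in $P$, distinguished varieties, and the positivity coming from effectivity/nefness of the $[V_i]$). A deformation alternative --- perturb $F$ to a generic system $F_t$ of the same multidegrees, show by Bertini (or by induction on $m$) that $F_t$ has $\bez$ distinct solutions, then degenerate $t\to 0$ and argue by properness and upper semicontinuity of fibre dimension that each isolated point of $Z$ absorbs at least one branch --- ultimately confronts the same non-equidimensional difficulty, so I would keep the intersection-theoretic argument as the primary proof and mention the deformation route only as a remark.
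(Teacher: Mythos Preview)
The paper does not prove \Cref{thm:bezout}; it is stated with citations to \cite{MS87,Shafarevich1974} and immediately used. There is therefore no ``paper's own proof'' to compare against.

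Your sketch is a sound outline of the standard intersection-theoretic proof. The Chow-ring bookkeeping is correct and matches exactly what the paper sets up in \Cref{sscn:chow}; the proper-intersection case is the textbook refined B\'ezout; and your handling of the mixed-dimensional case via Fulton's distinguished varieties plus nefness of the $\mathcal O_P(d_{i,1},\dots,d_{i,m})$ is the right move. One small point worth tightening: you should note that each $f_i$ is assumed nonzero (guaranteed by \Cref{def:multipoly}, which requires some $d_j>0$), so that $V_i$ really is an effective Cartier divisor and not all of $P$. Otherwise the argument is fine, though of course it leans on the full machinery of \cite{fultonbook} rather than being self-contained.
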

\noindent Applied to \Cref{ex:bezout}, this tells us that either the number of solutions to $F=(f_1,f_2,f_3)$ is infinite, or there are exactly $\bez=6$ solutions. Thus, if the \Bez\ number is positive, there is a solution.

\subsection{The class MHS and completeness results}\label{sscn:MHS}

Since a positive \Bez\ number implies the existence of a solution, and finding an approximate solution is clearly in TFNP, we now define a new subclass of TFNP to capture this, MHS.

\begin{definition}[(Low-Degree) Multi-homogeneous Systems (MHS)]\label{def:MHS}
  Define $\MHS_{s,d}$ as the set of TFNP relations $R(x,y)$ poly-time reducible (as defined in~\cite{Pap94}) to finding an $\epsilon$-approximate solution to a system $F = \{f_1,\dots,f_n\}$ of $n$ multi-homogeneous equations, where
  \begin{enumerate} 
    \item (a solution exists) $\bez>0$, 
    \item (at most $s$ variables per variable group $Z_j$) for all $j\in[m]$, $n_j\le s$, 
    \item (each equation $f_i$ is of total degree at most $d$) for all $i\in [n]$, $\sum_{j=1}^{m}\dij\le d$, and
    \item $\epsilon \in \Omega(2^{-\poly(n)})$.
  \end{enumerate}
  For clarity, $\epsilon$ and $n$ are inputs and thus may depend on $\abs{x}$, whereas $s$ and $d$ are parameters and considered constants independent of $\abs{x}$.
  More formally, there exist $\poly(\abs{x})$-time computable functions $g$ and $h$, such that $g(x)$ outputs $\epsilon$ and a description of a multi-homogeneous system $F$, and $R(x,h(x,Y))$ holds, where $Y$ is an approximate solution to $F(Y)=0$ with $\sum_{k=1}^n\abs{f_k(Y)}\le \epsilon$, assuming each equation $f_i$ and variable group $Z_j$ is normalized in the Euclidean norm\footnote{This is to prevent trivial solutions such as setting all variables to approximately $0$. Formally, we mean the coefficient vector of each $f_i$ is normalized with respect to the Euclidean norm, and likewise for each variable group $Z_j$, the corresponding assignment vector. For example, to normalize $f=x_1y_1y_2+x_2y_2y_3$, the right hand side is multiplied by $(\enorm{f}\enorm{x}\enorm{y})^{-1}$, for $f=(1,1)$, $x=(x_1,x_2)$, and $y=(y_1,y_2)$.}. Finally, define 
  \begin{equation}\label{eqn:MHScap}
      \MHS := \bigcup_{s,d\in \Theta(1)} \MHS_{s,d}. 
  \end{equation}
\end{definition}
\noindent In words, \Cref{eqn:MHScap} says \MHS\ requires constant bounds on the variable set sizes $s$ and total degree $d$ per equation (i.e. the number of variables in each monomial), and allows up to inverse exponential precision additive error $\epsilon$. 

As remarked in \Cref{scn:intro}, the following observation follows straightforwardly since poly-time Turing machines can efficiently perform basic arithmetic with polynomial bits of precision, and since the degrees and set sizes in \MHS\ are constant.
\begin{observation}\label{obs:MHSinTFNP}
  $\MHS\subseteq\TFNP$.
\end{observation}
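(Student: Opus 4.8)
The plan is to verify the two defining requirements of $\TFNP$ for an arbitrary relation $R\in\MHSo{\Omega(1/\exp)}$: totality (a witness always exists) and poly-time verifiability of witnesses. By \Cref{def:MHS}, $R\in\MHSo{\Omega(1/\exp)}$ means there are $\poly(M)$-time computable functions $g,h$ such that $g(x)$ outputs a multi-homogeneous system $F=\{f_1,\dots,f_n\}$ with $\bez>0$, variable-group sizes $n_j\le s\in\Theta(1)$, total degree per equation $\le d\in\Theta(1)$, and $R(x,h(x,Y))$ holds whenever $Y$ satisfies $\abs{f_k(Y)}\le\epsilon$ for all $k$, with $\epsilon\in\Omega(2^{-\poly(M)})$. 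So it suffices to argue that (i) such a $Y$ always exists, and (ii) given $x$ and a candidate $Z$, one can check in time $\poly(M)$ whether $Z$ is a valid witness for $R$.

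First I would handle totality. Since $g$ runs in time $\poly(M)$, the system $F$ has description length $\poly(M)$; in particular $n,m\le\poly(M)$, and the number of monomials in each $f_i$ is at most $\binom{s+d}{d}\in O(1)$ choices per group times $\poly(M)$ groups, hence $\poly(M)$ total. By hypothesis $\bez>0$, so by \Bez's Theorem (\Cref{thm:bezout}) the system $F(Z)=0$ has a solution $Y^\star$ in the relevant multiprojective space (either infinitely many, or exactly $\bez\ge1$ counted with multiplicity). In particular there is an \emph{exact} solution, so \emph{a fortiori} there is a $Y$ with $\abs{f_k(Y)}\le\epsilon$ for all $k$ under the stated normalization, and $h(x,Y)$ is then a valid $R$-witness. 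Thus $R$ is total.

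Next, verifiability. A valid $\TFNP$ witness here is a string $w$ together with a normalized approximate solution $Y$ (encoded with $\poly(M)$ bits of precision — enough to represent $\epsilon\in\Omega(2^{-\poly(M)})$) such that $w=h(x,Y)$ and $\abs{f_k(Y)}\le\epsilon$ for all $k$. Given $x$ and such a candidate: recompute $F=g(x)$ in time $\poly(M)$; for each of the $n\le\poly(M)$ equations evaluate $f_k(Y)$ — each $f_k$ has $\poly(M)$ monomials, each a product of at most $d\in O(1)$ variables, so evaluation is $\poly(M)$ arithmetic operations on $\poly(M)$-bit numbers, hence $\poly(M)$ time — and check $\abs{f_k(Y)}\le\epsilon$; recompute $h(x,Y)$ in time $\poly(M)$ and check it equals $w$. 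The key point making the bit-complexity polynomial is precisely that the degrees and group sizes are constant, so no monomial is a product of superconstantly many amplitudes and no intermediate quantity blows up beyond $\poly(M)$ bits. Hence verification is in $\poly(M)$ time, and $R\in\TFNP$.

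I do not expect a genuine obstacle here — this is essentially a bookkeeping argument, and the excerpt itself flags it as following ``straightforwardly.'' The only point requiring mild care is the normalization clause in \Cref{def:MHS}: one must confirm that rescaling each variable group and each equation to Euclidean norm $1$ (a) does not change the solution set of $F(Z)=0$ — true, since each $f_k$ is multi-homogeneous and hence scaling a group's variables only rescales $f_k$ by a nonzero factor — and (b) keeps all numbers representable with $\poly(M)$ bits, which holds since the normalizing factors are computable to $\poly(M)$ bits from the $\poly(M)$-size coefficient data. With that checked, totality plus poly-time verification give the claimed containment.
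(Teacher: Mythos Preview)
Your proposal is correct and follows exactly the approach the paper indicates: totality via \Bez's Theorem (\Cref{thm:bezout}) from $\bez>0$, and poly-time verifiability because constant degree and constant group size mean each $f_k$ can be evaluated on a $\poly(M)$-bit candidate $Y$ using polynomially many arithmetic operations on $\poly(M)$-bit numbers. The paper does not spell out a proof beyond the one-line remark preceding the observation, so your write-up is simply a faithful elaboration of that remark (your discussion of the normalization clause is a sensible extra check, and your monomial-counting could be shortcut by noting that $g$ outputs $F$ in $\poly(M)$ time, hence $F$ has $\poly(M)$ monomials in total).
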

We now show that \PS\ captures the complexity of MHS.

\begin{figure}[t]
  \begin{center}
    \includegraphics[width=1\textwidth]{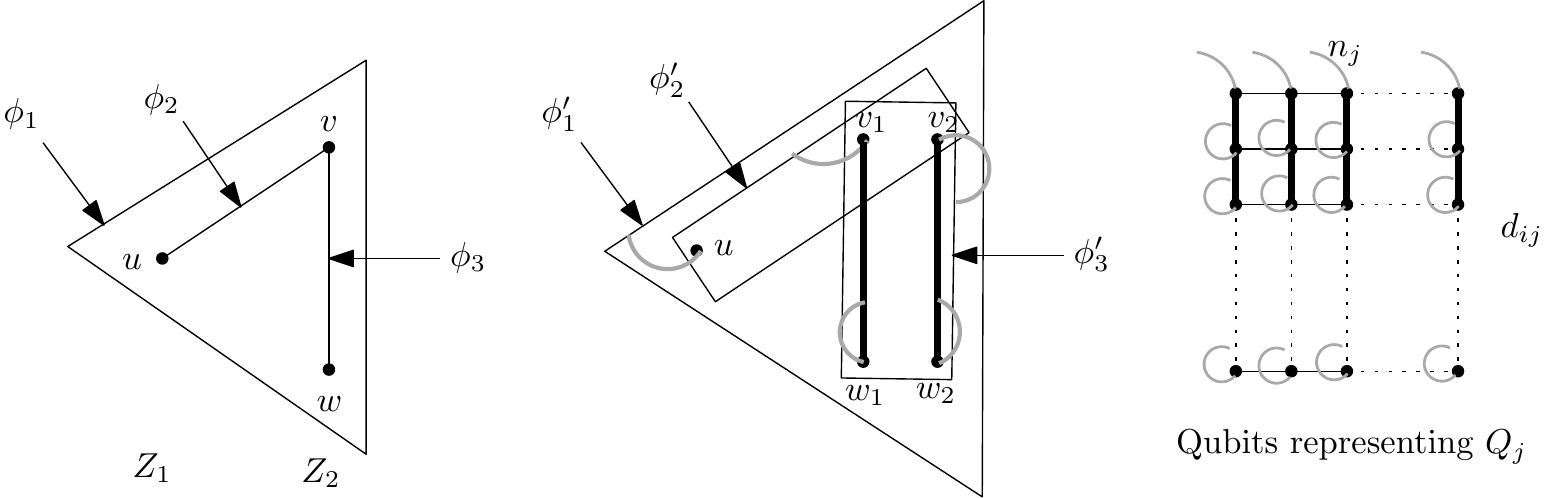}
  \end{center}
\caption{(Left) The reduction of \Cref{thm:MHScomplete} before the reduction to qubits and without equality constraints, as illustrated on \Cref{ex:bezout}. The latter has equations $f_1=x_1y_1y_2 + x_2y_2y_3$, $f_2=x_1y_1+x_2y_2$, and $f_3=y_1y_2+y_2y_3$ with variable sets $Z_1=\set{x_1,x_2}$ and $Z_2=\set{y_1,y_2,y_3}$, $n_1=1$, $n_2=2$, $m=2$, and $d=3$. Variable sets $Z_1$ and $Z_2$ are represented by vertex sets $\set{u}$ and $\set{v,w}$, respectively. (For simplicity, the reduction actually creates $d=3$ vertices for each $Z_i$, in order to be able to accommodate monomials of degree $3$ in each $Z_i$. However, the system $f_1,f_2,f_3$ is at most linear in $Z_1$ and quadratic in $Z_2$, so $3$ vertices per $Z_i$ is overkill; we thus depict only the vertices needed to encode $f_1,f_2,f_3$.) Vertices $u,v,w$ correspond to $\ket{\psi_{1,1}}\in\CC^2$ and $\ket{\psi_{1,2}},\ket{\psi_{2,2}}\in\CC^3$, respectively. The joint product state assignment thus has form $\ket{\psi_{1,1}}\ket{\psi_{1,2}}\ket{\psi_{2,2}}=\sum_{i=0}^1\sum_{j,k=0}^2 \alpha_i\alpha_j\alpha_k\ket{i}\ket{j}\ket{k}\in\CC^2\otimes(\CC^3)^{\otimes 2}$. Each constraint $f_i$ is encoded into a rank-$1$ projector onto $\ket{\phi_i}$. Specifically, $\ket{\phi_1}=\ket{001}+\ket{112}$ (acting on all three systems), $\ket{\phi_2}=\ket{00}+\ket{11}$ (acting on the first two systems), and $\ket{\phi_3}=\ket{01}+\ket{12}$ (acting on the last two systems). (Middle) The figure on the left after the reduction to qubits is applied, followed by addition of equality constraints via $2$-local projectors onto the antisymmetric subspace. Here, $v,w\in\CC^3$ have been mapped to $v_1,v_2\in\CC^2$ and $w_1,w_2\in\CC^2$, respectively. Edge $\set{u,v}$ is now a hyperedge $\set{u,v_1,v_2}$. Thick block edges represent equality constraints. Thinner gray edges represent the SDR, i.e. which qubit is matched to which hyperedge. (Right) A ``close-up'' of all qubits representing $Q_j$ when the full reduction is applied to a general multihomogeneous system. Thick black edges represent equality constraints. Thinner gray edges represent the SDR. The first row, labelled $q_{i,1}$ through $q_{i,n_j}$ in the proof, are matched with the $n_j$ hyperedges incident on $Q_j$ corresponding to the original equations $f_i$ (hyperedges not depicted). Vertices in rows $i$ with $i>1$ are matched with their incident edge to row $i-1$.
}
\label{fig:reduction}
\end{figure}
\begin{theorem}\label{thm:MHScomplete}
  Let $M$ denote input size, and consider any $\epsilon\in\Omega(2^{-\poly(M)})$.
  \begin{enumerate}
    \item (Containment in \MHS) For any local dimension $d\in O(1)$ and locality $k\in O(1)$, $\epsilon$-approximate \PS\ with WSDR for $k$-local constraints on qudits of dimension $d$ is in $\MHS_{d-1,k}(\epsilon)$.
    \item (\MHS-hardness) $\MHS_{s,d}(\epsilon)$ is poly-time reducible to $\Theta(\epsilon)$-approximate \PS\ on qubits (i.e. local dimension $2$) with an SDR and locality $k\geq (s+1)^d$.
\end{enumerate} 
\end{theorem}
\begin{rem}
    Recall from \Cref{scn:intro} that our result does not specify a single $k$ for which $\MHS$-hardness is obtained \emph{for all} $s,d\in O(1)$. Regarding this, a blowup in $k$ is perhaps expected, since in the $k=2$ case (i.e. $2$-QSAT on qubits), producing a satisfying assignment is well-known to be efficiently solvable, even without an SDR (assuming a satisfying assignment exists)~\cite{bravyiEfficientAlgorithmQuantum2006}. It is, however, plausible that \Cref{thm:MHScomplete} can be extended in the $k=2$ case on qudits for some local dimension $d>2$, since $2$-QSAT on qudits remains $\QMAo$-complete~\cite{eldarQuantumSATQutritCinquit2008,nagajLocalHamiltoniansQuantum2008,rudolphQuantum2SATLow2024}. As for the bound $k\geq (s+1)^d$ in \Cref{thm:MHScomplete}, in the simplest non-trivial case of quadratic equations on variable sets $Z_i$ of size $2$ each (i.e. $s=1$), this bound yields $k=4$.
\end{rem}
\begin{proof}[Proof of \Cref{thm:MHScomplete}]
  For containment in $\MHS$, as argued above, any \PS\ system with SDR can be represented as a system of multi-homogeneous equations.
  For simplicity, we consider the case of qubits; the qudit case is analogous.
  Without loss of generality, we may assume there are $m$ qubits and $n=m$ clauses, since if $n<m$ an SDR cannot exist, and if $n>m$ we can add trivially satisfied constraints to the system.
  An equation $f_i$ corresponding to a $k$-local constraint is multilinear in $k$ variable groups, so we get $\sum_{j=1}^m \dij=k$ for all equations $f_i\in[n]$.
  Since the \PS\ system only contains qubits, we have $n_j=1$ for all $j\in[m]$ and thus $s=1$.
  By \Cref{obs:countWSDR}, the Bézout number equals the number of SDRs, which is at least one.
  Finally, by construction and the definition of $\MHSe$, cumulatively satisfying all $f_i$ within total additive $\epsilon$ precision immediately yields a PRODSAT solution with $\epsilon$ precision.

  For \MHS-hardness, consider a multi-homogeneous system $F=\{f_1,\dots,f_n\}$ with variable sets $Z_1,\dots,Z_m$, $\sum_{j=1}^m \dij\le d$ for all equations $i\in[n]$, $n_j\le s$ for all variable sets $j\in[m]$, and $\bez>0$.
  First, we embed $F$ into a qudit system.
  Each variable group $Z_j$ has, by definition, $n_j+1$ variables, and so each assignment to these variables can be represented by an $(n_j+1)$-dimensional state $\ket{\psi_j}$. 
  However, $F$ need not be multi-linear, meaning monomials in equation $f_i$ each contain exactly $\dij$ variables (counting multiplicity) from $Z_j$. 
  To simulate this non-linearity, we instead create $\cj:=\max_{i\in[n]}\dij$ states in our system, $\ket{\psi_{1,j}},\dots,\ket{\psi_{\cj,j}}$, each again of dimension $n_j+1$.
  Let $Q_j$ denote the set of qudits created by this mapping for $Z_j$, and consider any $f_i$ acting on some set of variable sets $A_i\subseteq\set{Z_1,\ldots, Z_m}$.
  Since $f_i$ has degree $\dij$ in variable set $Z_j$, we will construct our corresponding clause $\ket{\phi_i}$ to act without loss of generality on the first $\dij$ qudits in $Q_j$. (Assume the qudits in $Q_j$ have an arbitrary, fixed order.)
  Under this mapping, let $B_i\subseteq Q_1\cup\cdots\cup Q_m$ denote the corresponding set of qudits to be acted on by $\ket{\phi_i}$.
  To now design $\ket{\phi_i}$, ideally for any $j\in[m]$, we would like all qudits in $Q_j$ to have identical local assignments, i.e. $\ket{\psi_{1,j}}=\dotsm=\ket{\psi_{\dij,j}}$.
  In such a case, we can represent the multi-homogeneous polynomial $f_i$ by a projective rank-$1$ constraint $\ket{\phi_i}$ acting on $B_i$, since the amplitudes (with respect to the computational basis) of $\bigotimes_{j=1}^{m}\bigotimes_{i=1}^{\dij}\ket{\psi_{i,j}}$ are in one-to-one correspondence with all possible monomials of $f_i$, as given by~\Cref{eq:multihom}.
  \Cref{fig:reduction} illustrates the construction thus far. 
  
  \emph{Enforcing equality.} To indeed enforce equality among all qudits in $Q_j$, since we are considering product state assignments, it suffices to place $2$-local projectors onto the antisymmetric subspace for each consecutive pair of qudits in $Q_j$.
  Unfortunately, this would add too many constraints when our qu\emph{d}its have local dimension $d>2$, so that a WSDR cannot exist.
  To see this, assume the worst case scenario in which $\cj=\dij$ for all $i\in[n]$, i.e. each variable group $Z_j$ has the same degree in all equations.
  Now, by \Cref{obs:countWSDR}, each variable set $Z_j$ must ``cover'' $n_j$ equations $f_i$, and so in principle each $Q_j$ must also cover these same $n_j$ equations. 
  Recalling we have $n=\sum_{j=1}^m n_j$ equations, observe that a WSDR on our qudits can cover at most 
  \begin{align}
  \label{eqn:WSDRmax}
    \sum_{j=1}^m \cj n_j
  \end{align}
  clauses in our construction. (Each $Q_j$ has $\cj$ qudits, each of dimension $n_j+1$, meaning each qudit in $Q_j$ affords a WSDR $n_j$ degrees of freedom.) 
  Since $Q_j$ must cover $n_j$ of the equations $f_i$, in order for a WSDR to exist, it is necessary for our construction to implement \emph{all} equality constraints for $Z_j$ using at most $n_j(\cj-1)$ rank-$1$ projectors.
  At least $\cj-1$ $2$-local constraints are necessary to ensure equality among $\cj$ qudits, implying each equality constraint must have rank at most $n_j$.
  Unfortunately for $d>2$, the antisymmetric subspace on two qudits of dimension $n_j+1$ has dimension $(n_j+1)^2-\binom{n_j+2}{2}>n_j$ for $n_j>1$~\cite{watrous_2018}.
  In fact, \emph{no} projector of rank $n_j$ can enforce equality between qudits of dimension $n_j+1$ (\Cref{obs:low-rank-swap-test}).
  
  To overcome this obstacle, we instead apply the reduction to qubits from Theorem~\ref{thm:qubit-reduction}, and then use the projectors onto the antisymmetric subspace to force the equality among the resulting qubits (\Cref{fig:reduction}, middle).
  Specifically, consider any $Q_j$ consisting of $\dij$ qudits of dimension $n_j+1$.
  Label these qudits $q_1,\ldots, q_{\dij}$.
  \Cref{thm:qubit-reduction} replaces each $q_i$ with $n_j$ qubits which we label here as $q_{i,1},\ldots,q_{i,n_j}$, such that any hyperedge acting on $q_i$ now acts instead on $q_{i,1},\ldots,q_{i,n_j}$.
  To simulate equality between the qudits $q_i$, by the construction of \Cref{thm:qubit-reduction}, it now suffices to place projectors onto the singlet state $\ket{01}-\ket{10}$ between $q_{i,k}$ and $q_{i+1,k}$ for all $i\in\set{1,\ldots, c_j-1}$ and $k\in\set{n_j}$ (thick vertical edges in \Cref{fig:reduction}, middle).
  This yields $\sum_{j=1}^m (\dij-1) n_j$ equality constraints for $Q_j$.

  \emph{The SDR.} It remains to show that the resulting QSAT instance on qubits has an SDR. 
  The argument is similar to the discussion surrounding \Cref{eqn:WSDRmax}, i.e. we have $\sum_{j=1}^m \dij n_j$ degrees of freedom which which to cover all clauses, where each degree of freedom corresponds to a unique qubit in our system.
  Note \Cref{thm:qubit-reduction} does not alter the number of hyperedges; thus, our system has precisely $n=\sum_{i=1}^m n_i$ clauses corresponding to $\set{f_i}_{i=1}^n$ to cover. 
  Now, since $\bez>0$ for $\set{f_i}_{i=1}^n$, and since we assumed each clause $\ket{\phi_i}$ acts without loss of generality on the first $\dij$ qudits in $Q_j$, by \Cref{obs:countWSDR} we may use the set of $n_j$ qubits in $Z_j$ which replaced the first qudit, $q_1$, in our use of \Cref{thm:qubit-reduction} to cover all clauses acting on $Q_j$. 
  The remaining qubits $q_{i,k}$ for $i>1$ can now be straightforwardly used to cover all $\sum_{j=1}^m (\dij-1) n_j$ equality constraints (\Cref{fig:reduction}, right).

  \emph{Precision.} That an $\epsilon$-approximate solution for the \PS\ instance suffices to produce an $\Theta(\epsilon)$-approximate solution for \MHS\ follows by the Lipschitz continuity of polynomials on a compact set and the fact that degrees and group sizes are bounded by $O(1)$.
\end{proof}

\begin{rem}
  $\MHS$-hardness in \Cref{thm:MHScomplete} is stated in terms of qubits; however, the statement holds for any constant local dimension $d$. (The case of $d=2$ simply yields the strongest hardness result.) Specifically, hardness can be shown by embedding each qubit output by our reduction into a qu\emph{d}it and adding projector onto $\Span(\ket{2},\ldots, \ket{d-1})$ onto each qudit.
\end{rem}


Finally, in the proof of \Cref{thm:MHScomplete}, we claimed no low rank projector could test for equality --- this follows by the definition of the antisymmetric subspace, but we include an explicit proof below for completeness.
\begin{observation}\label{obs:low-rank-swap-test}
  For $d>2$, there exists no projector $\Pi\in\CC^{d^2\times d^2}$ of rank $\le d-1$ such that for all $\ket\psi,\ket\phi\in\CC^d$, $\Pi\ket{\psi}\ket{\phi} = 0$ iff $\ket{\psi}\propto\ket{\phi}$.
\end{observation}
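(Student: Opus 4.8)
The plan is to rule out such a $\Pi$ using the geometry of decomposable $2$-vectors. Suppose $\Pi$ is an (orthogonal) projector onto a subspace $W\subseteq\CC^d\otimes\CC^d$ with $\rank(\Pi)=\dim W\le d-1$ and with the stated equality-testing property. Applying the hypothesis with $\ket{\phi}=\ket{\psi}$ shows $\Pi\ket{\psi}\ket{\psi}=0$ for every $\ket{\psi}\in\CC^d$; since the vectors $\ket{\psi}\otimes\ket{\psi}$ span the symmetric subspace $\operatorname{Sym}^2(\CC^d)$ over $\CC$, this forces $\operatorname{Sym}^2(\CC^d)\subseteq\ker\Pi=W^\perp$, equivalently $W\subseteq\wedge^2(\CC^d)$, the antisymmetric subspace, which has dimension $\binom{d}{2}$.

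Next I would set $U:=W^\perp\cap\wedge^2(\CC^d)$ and run a dimension count: $\dim U\ge\dim W^\perp+\dim\wedge^2(\CC^d)-d^2\ge(d^2-d+1)+\binom{d}{2}-d^2=\binom{d}{2}-d+1=\frac{(d-1)(d-2)}{2}$. The crux of the argument is then to show that for $d\ge3$ this dimension is large enough to guarantee a nonzero \emph{decomposable} element $\omega=\ket{\psi}\wedge\ket{\phi}\in U$, with $\ket{\psi},\ket{\phi}$ linearly independent (hence $\ket{\psi}\not\propto\ket{\phi}$).

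To produce this decomposable element I would pass to projective space. The decomposable classes form the Plücker-embedded Grassmannian $Gr(2,d)\subseteq\PP(\wedge^2(\CC^d))=\PP^{\binom{d}{2}-1}$, an irreducible projective variety of dimension $2(d-2)$. By the standard fact that two closed subvarieties $X,Y\subseteq\PP^N$ with $\dim X+\dim Y\ge N$ have nonempty intersection, it suffices that $(\dim U-1)+2(d-2)\ge\binom{d}{2}-1$, i.e.\ $\dim U\ge\binom{d}{2}-2d+4$; and $\frac{(d-1)(d-2)}{2}\ge\binom{d}{2}-2d+4$ is exactly the inequality $d\ge3$. (For $d=3$ the step is vacuous, since every element of $\wedge^2(\CC^3)$ is decomposable.) Given such $\omega=\ket{\psi}\wedge\ket{\phi}\in U$, decompose $\ket{\psi}\otimes\ket{\phi}$ into its symmetric part, which lies in $\operatorname{Sym}^2(\CC^d)\subseteq W^\perp$, plus its antisymmetric part, which is a scalar multiple of $\omega$ and hence lies in $U\subseteq W^\perp$. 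Therefore $\ket{\psi}\otimes\ket{\phi}\in W^\perp=\ker\Pi$, so $\Pi\ket{\psi}\ket{\phi}=0$ while $\ket{\psi}\not\propto\ket{\phi}$, contradicting the defining property of $\Pi$.

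The two dimension counts are routine linear algebra over $\CC$; the one ingredient I would be careful about is the projective intersection theorem used to extract a nonzero decomposable element of $U$ — note that the naive affine analogue would not suffice, as it only guarantees the trivial common point $0$. I expect this to be the only non-elementary step. As a side remark, the same argument goes through verbatim for an arbitrary idempotent $\Pi$ (not necessarily self-adjoint), working with $\ker\Pi$ in place of $W^\perp$ throughout, since the inequality $\dim(\ker\Pi\cap\wedge^2(\CC^d))\ge\binom{d}{2}-d+1$ is all that was used.
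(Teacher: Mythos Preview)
Your proof is correct and takes a genuinely different route from the paper's. The paper reduces to its own machinery: after disposing of rank $\le d-2$ by an elementary count (for fixed $\ket\psi$, the $\le d-2$ conditions $\braket{v_i}{\psi,\phi}=0$ leave a $\ge 2$-dimensional space of admissible $\ket\phi$), it writes $\Pi=\sum_{i=1}^{d-1}\ketbra{v_i}{v_i}$, defines maps $L_i\colon\ket\psi\mapsto\bra{v_i}(\ket\psi\otimes I)$, argues that $\{L_i\ket\psi\}_{i=1}^{d-1}$ must be linearly independent for every nonzero $\ket\psi$, and then obtains a contradiction by applying the multi-homogeneous B\'ezout theorem (\Cref{thm:bezout}) to the system $\sum_i x_i L_i\ket\psi=0$ in the two variable groups $x\in\PP^{d-2}$ and $\ket\psi\in\PP^{d-1}$. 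Your argument instead first deduces $W\subseteq\wedge^2(\CC^d)$ from the symmetric span, and then uses the projective dimension inequality against the Pl\"ucker-embedded Grassmannian $Gr(2,d)\subseteq\PP(\wedge^2(\CC^d))$ to locate a nonzero decomposable bivector in $W^\perp\cap\wedge^2(\CC^d)$. Both proofs ultimately invoke a nontrivial intersection-theoretic fact, but yours is more intrinsic to the geometry of product states and handles all ranks $\le d-1$ uniformly, while the paper's proof is pleasantly self-referential: it uses precisely the B\'ezout principle underpinning the class $\MHS$ defined in \Cref{scn:MHS}.
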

\begin{proof}
  Assume there exists such a projector $\Pi$.
  If $\rank(\Pi)\le d-2$, we can easily find orthogonal $\ket{\psi},\ket{\phi}$ such that $\Pi\ket{\psi}\ket{\phi}=0$.
  Thus we must have $\rank(\Pi)= d-1$ and $\Pi$ has the spectral decomposition $\Pi = \sum_{i=1}^{d-1} \ketbra{v_i}{v_i}$.

  The constraint $\braket{v_i}{\psi,\phi}=0$ is then equivalent to $(L_i\ket\psi)\ket{\phi} = 0$ for $L_i\ket\psi := \bra{v_i}(\ket{\psi}\otimes I)$.
  Let $V = \Span\{(L_i\ket\psi)^\dagger\mid i\in[d-1]\}$.
  By construction, $V^\perp$ is the set of all vectors $\ket{\phi}$ such that $\braket{v_i}{\psi,\phi}=0$ for all $i=1,\dots,d-1$, i.e., $\Pi\ket{\psi}\ket{\phi}=0$.
  By assumption, this only holds for $\ket{\phi} \propto \ket{\psi}$.
  Thus, $V^\perp = \Span\{\ket{\psi}\}$ and $\dim(V) = d-\dim(V^\perp)=d-1$.
  Therefore, the $\{L_i\ket\psi \mid i\in[d-1]\}$ are linearly independent for any $\ket{\psi}$.

  The multi-homogeneous system $\sum_{i=1}^{d-1} x_i L_i\ket{\psi} = 0$ ($d$ equations) with variable sets $x\in \PP^{d-2}$ and $\ket\psi\in\PP^{d-1}$ then has a solution by \Cref{thm:bezout}.
  Therefore, $\sum_{i=1}^{d-1}x_i(L_i\ket{\psi})=0$ with $\ket{\psi}\ne 0$ and $x\ne 0$, which contradicts the linear independence of the $L_i\ket{\psi}$.
\end{proof}

\subsection{A brief aside: Solving a special case of \PS\ on qudits with multi-homogeneous systems}\label{sec:qudit-star}

We have seen that any $O(1)$-approximate \PS\ instance reduces to an MHS instance (\Cref{thm:MHScomplete}), which raises the question: Can one leverage techniques from solving multi-homogeneous systems to solve \PS\ instances?
Here, we briefly mention one such application, though it is not intended to be a focus of this work. 
Namely, Safey El Din and Schost~\cite{SS18} give an exact algorithm for computing all non-singular solutions (i.e. where the Jacobian matrix of the polynomial system has full rank) of \emph{dehomogenized} rational multi-homogeneous systems with a finite number of solutions.
We will not state their result, but note that in applying \cite{SS18} to \PS\, the computational complexity is polynomial in the number of WSDRs after removing one edge, which can generally be exponentially greater than just the number of WSDRs.
On some hypergraphs, however, this number is bounded, and thus \cite{SS18} provides a poly-time algorithm for \PS. 
For example, a star of $n+1$ qu$d$its, such that there are $d$ edges to $d-1$ qudits and $d-1$ edges to the others, only has a polynomial number of WSDRs for a fixed $d$, even after removing one edge (\Cref{fig:qutrit-star}).

\begin{figure}[t]
\begin{center}
\begin{tikzpicture}
  \usetikzlibrary {shapes.geometric}
  \tikzset{xx/.style={draw, circle, inner sep=2mm,very thick},dbl/.style={double distance=1mm,thick},dbl2/.style={double distance=1.6mm,thick},edg/.style={thick}}
  \node[xx](c) at (0,0) {};
  \node[xx](v1) at (60:1.5) {};
  \node[xx](v2) at (120:1.5) {};
  \node[xx](v3) at (180:1.5) {};
  \node[xx](v4) at (240:1.5) {};
  \node[xx](v5) at (300:1.5) {};
  \node[xx](v6) at (0:1.5) {};
  \draw[dbl] (c) -- (v1);
  \draw[dbl] (c) -- (v2);
  \draw[dbl] (c) -- (v3);
  \draw[dbl] (c) -- (v5);
  \draw[dbl2] (c) -- (v4);
  \draw[edg] (c) -- (v4);
  \draw[dbl2] (c) -- (v6);
  \draw[edg] (c) -- (v6);
\end{tikzpicture}
\end{center}
\caption{A \PS\ instance with a star-like topology. The circles represent qutrits. All edges have size $2$ and there are $9$ WSDRs (assign one from each set of triple edges to the center).}
\label{fig:qutrit-star}
\end{figure}

\section{High-degree, sparse univariate polynomials and TFNP}\label{scn:sparsepoly}

\Cref{scn:MHS} focused on low-degree multi-homogeneous systems and their relationship to TFNP. 
In this section, we study roots of a single high-degree univariate sparse polynomial.
\Cref{sscn:sparsedefs} first defines a new subclass of TFNP based on the Fundamental Theorem of Algebra, denoted \SFTA.
\Cref{sscn:SFTAinTFNP} shows that $\SFTA\subseteq\TFNP$.
\Cref{sscn:SFTAinMHS} shows how to reduce computing a root of a sparse univariate polynomial to QSAT with SDR. We can currently prove this reduction works in the exact case. We conjecture it also works in the approximate case, which would imply $\SFTA\subseteq \MHS$.
Finally, \Cref{sscn:MHSinSFTA} studies the converse question --- could $\MHS\subseteq \SFTA$?

\subsection{Definitions, the Fundamental Theorem of Algebra, and SFTA}\label{sscn:sparsedefs}

Sparse polynomials are well studied in the polynomial systems literature (e.g.~\cite{jindalEfficientlyComputingReal2017}). 
For our purposes, we use the following definition.

\begin{definition}[Sparse polynomial]\label{def:sparsepoly}
An \emph{$s$-sparse polynomial} $p(x)\in\CC[x]$ of degree $d$ has only $s\in O(\polylog(d))$ non-zero coefficients $a_i\in\CC$. The specification of $p$ is a list of $\lceil \log d \rceil$-bit approximations\footnote{One could also consider, e.g., exact representations via field extensions. For simplicity, we use approximate representations, which suffices as our goal is to find approximate roots.} $\widetilde{a_i}$ of each non-zero $a_i$, along with the corresponding indices $i\in\set{0,\ldots, d}$. 
\end{definition}

\noindent Thus, the degree is, by definition, exponentially larger than the input size.
In this paper, we only consider \emph{univariate} sparse polynomials.

Next, we recall the Fundamental Theorem of Algebra:
\begin{theorem}[Fundamental Theorem of Algebra]\label{thm:FTA}
  Every non-constant univariate polynomial $p\in\CC[x]$ has at least one complex root.
\end{theorem}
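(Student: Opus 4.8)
The plan is to give the standard complex-analytic proof via Liouville's theorem, and then note an elementary alternative. Write $p(x)=a_dx^d+\dots+a_1x+a_0$ with $a_d\neq0$ and $d\geq1$. \textbf{Step 1 (growth at infinity).} I would first show $\abs{p(z)}\to\infty$ as $\abs z\to\infty$. By the triangle inequality, for large $\abs z$,
\[
  \abs{p(z)}\;\geq\;\abs{a_d}\abs z^d-\sum_{k=0}^{d-1}\abs{a_k}\abs z^k\;\geq\;\abs{a_d}\abs z^d-d\Big(\max_{k<d}\abs{a_k}\Big)\abs z^{d-1},
\]
which tends to $\infty$ since $d\geq1$. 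In particular there is $R>0$ with $\abs{p(z)}\geq1$ whenever $\abs z\geq R$.

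\textbf{Step 2 (contradiction via Liouville).} Suppose toward a contradiction that $p$ has no complex root. Then $g(z):=1/p(z)$ is entire. By Step 1, $\abs{g(z)}\leq1$ for $\abs z\geq R$, while on the compact disk $\{\abs z\leq R\}$ the continuous function $\abs g$ is bounded; hence $g$ is bounded on all of $\CC$. By Liouville's theorem a bounded entire function is constant, so $p$ is constant, contradicting $d\geq1$. Thus $p$ has at least one complex root.

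\textbf{Main obstacle, and an elementary alternative.} The only non-elementary ingredient is Liouville's theorem (equivalently, the Cauchy integral estimates), which I would cite rather than reprove; granting it, the argument is immediate. If a self-contained proof is preferred, I would instead run the d'Alembert--Argand minimum-modulus argument: since $\abs{p(z)}>\abs{p(0)}$ for all large $\abs z$ (Step 1), the continuous map $\abs p$ attains a global minimum on $\CC$ at some $z_0$. Assuming $p(z_0)\neq0$, translate so that $z_0=0$ and write $p(z)=c_0+c_kz^k+q(z)$, where $c_0=p(0)\neq0$, $c_k$ is the first nonzero coefficient after $c_0$ (so $k\geq1$), and $q$ collects the terms of degree $>k$. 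Choose $\omega\in\CC$ with $\omega^k=-c_0/c_k$ (every complex number has a $k$-th root), and put $z=t\omega$ with $t>0$ small; then $c_0+c_kz^k=c_0(1-t^k)$ while $\abs{q(t\omega)}\leq Ct^{k+1}$ for a suitable constant $C$, so $\abs{p(t\omega)}\leq\abs{c_0}(1-t^k)+Ct^{k+1}<\abs{c_0}=\abs{p(0)}$ once $t$ is small enough — contradicting the minimality of $z_0$. Hence $p(z_0)=0$. The only point requiring care there is this last perturbation estimate: choosing the right direction $\omega$ and bounding the tail $q(t\omega)$.
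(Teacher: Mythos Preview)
Your proof is correct: both the Liouville argument and the d'Alembert--Argand minimum-modulus alternative are standard, complete proofs of the Fundamental Theorem of Algebra, and your execution of each is sound. However, note that the paper does not actually prove this statement at all --- it merely \emph{recalls} it as a classical fact (introduced with ``Next, we recall the Fundamental Theorem of Algebra'') to motivate the definition of the class $\SFTA$. So there is no paper proof to compare against; your write-up simply supplies a proof where the paper chose not to.
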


\noindent We can now define our second complexity class, $\SFTA$. For this, recall that a \emph{monic} polynomial has the coefficient of its highest degree non-zero term set to $1$.

\begin{definition}[Sparse Fundamental Theorem of Algebra (SFTA)]\label{def:SFTA}
Define $\SFTA$ as the set of TFNP relations $R(a,b)$ poly-time reducible (as defined in~\cite{Pap94}) to finding an $\epsilon$-approximate root $r\in \CC$ of a sparse monic univariate polynomial $p\in \CC[x]$ of degree $d$, where $\abs{r}\in[0,1+2\log(d)/d]$, and $\epsilon$ and $d$ may be functions in the input size.
That is, there exist poly-time computable functions $g$ and $h$, such that $g(a)$ outputs a sparse polynomial $p$, and $R(a,h(a,r))$ holds, where $r$ satisfying $\abs{r}\in[0,1+2\log(d)/d]$ is an approximate root of $p$ with $\abs{p(r)}\le \epsilon$. 
\end{definition}
\noindent Note the two restrictions to (1) roots in $[0,1+2\log(d)/d]$ and (2) $p$ being monic. We use both to obtain containment in \TFNP\ in \Cref{sscn:SFTAinTFNP}. 
For clarity, $2\log(d)$ can be replaced with any asymptotically larger term scaling as $\polylog(d)$, and containment in TFNP would still hold (\Cref{thm:inTFNP}).

\subsection{\SFTA\ is in \TFNP} \label{sscn:SFTAinTFNP}
Ideally, we would like $\SFTA\subseteq\TFNP$. And here we run into our first obstacle. Given a sparse polynomial $p$, it is not difficult to see that via square-and-multiply, the number of \emph{field operations} over $\CC$ to compute $p(x)$ is $\poly(n)$, for $n$ the size of input $a$ in \Cref{def:SFTA}. However, \TFNP\ is a class concerning \emph{bit complexity}, not field operation complexity. Unfortunately, it is immediate that if, say, $x=2$, then $p(x)=x^{2^n}$ for $x=2$ requires $2^n$ bits to represent, which is exponential in the input size. 
Moreover, even if the $p(x)$ itself has an encoding of size $\poly(n)$, the intermediate terms in its calculation (e.g. each monomial's value on $x$) may require exponentially large encodings. This phenomenon is sometimes referred to as \emph{intermediate expression swell}, and occurs for example in Euclid's GCD algorithm~\cite{gathenModernComputerAlgebra2003}. 

To circumvent this in our setting, we require two tricks. 
First, in \Cref{def:SFTA} we restrict attention to complex numbers $x$ satisfying $\abs{x}\in [0,1+\polylog(d)/d]$. Since $(1+
\polylog(d)/d)^d\in O(\polylog(d))$, this avoids the exponential blowup seen in the example above. More formally, one can show that $p(x)$ can be evaluated on this interval to within additive error $2^{-L}$ in time polynomial in $L$ and $n$. The following is essentially identical to Lemma 1 of \cite{jindalEfficientlyComputingReal2017}, except for a constant factor overhead since we are dealing with complex numbers, whereas \cite{jindalEfficientlyComputingReal2017} considers real numbers. 
This overhead disappears into the Big-Oh notation.

\begin{lemma}[Adaptation of Lemma 1 of \cite{jindalEfficientlyComputingReal2017}]\label{l:evalpoly}
  Let $p\in\CC[x]$ be an $s$-sparse polynomial, $x\in \CC$, and $L\geq 0$ an integer. Then, $f(x)$ can be computed to within additive error $2^{-L}$ with bit complexity
  \begin{align}
    \tilde{O}((s+\log d)(L+d\log[\max(1,\abs{x})]+\log d + s)),
  \end{align}  
  where $\tilde{O}$ omits logarithmic factors. 
\end{lemma}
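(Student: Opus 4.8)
The plan is to mimic the proof of Lemma 1 in \cite{jindalEfficientlyComputingReal2017}, tracking the extra (constant) overhead incurred by working over $\CC$ rather than $\RR$. The core idea is a standard square-and-multiply (repeated squaring) evaluation of each monomial $a_k x^k$, carried out in fixed-point arithmetic with enough guard bits so that the accumulated rounding error stays below $2^{-L}$.

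First I would fix the precision budget. We have at most $s$ nonzero terms, so it suffices to evaluate each monomial $a_k x^k$ to within additive error $2^{-L}/s$ and then sum; the summation itself contributes at most $s$ further roundings, which is absorbed by doubling the internal precision. For a single monomial, write $k$ in binary with $\lceil\log d\rceil$ bits and compute $x, x^2, x^4,\dots$ by iterated squaring, multiplying the selected powers together. There are $O(\log d)$ multiplications in this chain. The magnitude of every intermediate quantity is bounded by $\max(1,\abs{x})^d$, so each intermediate number needs an integer part of at most $O(d\log[\max(1,\abs{x})])$ bits; indeed since $\abs{x}\le 1+2\log(d)/d$ in our setting this is only $O(\log d)$ bits, but the lemma is stated for general $\abs{x}$. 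We carry a fractional precision of $P := L + O(\log d) + O(\log s)$ bits throughout. Each of the $O(\log d)$ multiplications in the square-and-multiply chain incurs a relative error of $2^{-P}$ and a truncation error of $2^{-P}$; after $O(\log d)$ such steps the total error per monomial is $2^{-P}\cdot\poly(d)\cdot\max(1,\abs{x})^d$, and choosing $P$ as above (folding the $\poly(d)$ and $d\log[\max(1,\abs x)]$ terms into the bit-length) makes this at most $2^{-L}/s$, as required. Over $\CC$ a single multiplication of two numbers of bit-length $B$ costs $O(1)$ real multiplications of bit-length $B$ (the Karatsuba-style $(a+bi)(c+di)$ identity, or simply four multiplications and two additions), so the per-operation cost differs from the real case only by a constant factor, which disappears into the $\tilde O(\cdot)$.

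Next I would total the cost. Each real multiplication of two $B$-bit numbers, using fast integer multiplication, costs $\tilde O(B)$ bit operations. Here $B = P + (\text{integer part}) = L + d\log[\max(1,\abs{x})] + O(\log d + s)$. There are $O(\log d)$ multiplications per monomial and $s$ monomials, for $O(s\log d) = O(s + \log d)$ (since $s\in\polylog(d)$, but more simply we just write the factor as $s\log d$, and the claimed bound has $(s+\log d)$ as the outer factor; I would double-check which is intended and note $s\log d \le (s+\log d)^2$, or more carefully that the chain length is $\log d$ and there are $s$ chains, and also $s$ additions of $\lceil\log d\rceil$-bit-index lookups, giving the stated $(s+\log d)$ outer factor after accounting for the index arithmetic and reading the input of size $\Theta(s\log d)$). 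Multiplying the number of operations by the per-operation cost $\tilde O(B)$ gives the claimed bound $\tilde O((s+\log d)(L + d\log[\max(1,\abs{x})] + \log d + s))$.

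The main obstacle — really the only subtle point — is the error analysis of the square-and-multiply chain: one must verify that rounding each of the $O(\log d)$ intermediate products to $B$ bits keeps the accumulated error geometrically controlled rather than letting it blow up multiplicatively with the (exponentially large) magnitudes. The key observation that makes this work is that we bound the \emph{relative} error after each squaring/multiplication by a fixed $2^{-P}$, so after $t$ steps the relative error is at most $(1+2^{-P})^t - 1 \le t\cdot 2^{-P+1}$ for $t \le 2^{P-1}$; multiplying by the worst-case magnitude $\max(1,\abs{x})^d$ and by $\poly(d)$ from the chain then recovers an \emph{additive} error bound, and this is exactly why the $d\log[\max(1,\abs{x})]$ term (and not an exponential term) appears inside the precision parameter. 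Everything else is bookkeeping, and since the only difference from \cite{jindalEfficientlyComputingReal2017} is the $O(1)$ factor for complex versus real multiplication, I would simply cite their Lemma~1 for the real case and remark that the complex case follows verbatim with this constant-factor overhead absorbed into $\tilde O(\cdot)$.
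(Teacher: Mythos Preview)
Your proposal is correct and follows essentially the same approach as the paper: evaluate each monomial via square-and-multiply, truncate intermediate values to a precision $K\in\Omega(L+\log s + d\log[\max(1,\abs{x})]+\log d)$, and observe that the complex case only contributes a constant-factor overhead absorbed into the $\tilde O(\cdot)$. The paper in fact omits all details beyond this sketch, so your write-up is more explicit than the original; the only loose end is your hesitation over whether the outer factor should be $s\log d$ or $s+\log d$, which is bookkeeping that depends on how one accounts for shared squarings and input-reading and does not affect correctness.
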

\noindent The following corollary is immediate.
\begin{corollary}
  For $s$-sparse polynomial $p$ with encoding size $n$, $p(x)$ can be computed within additive error $2^{-L}$ for any $x\in[0,1+\polylog(d)/d]$ with bit complexity
  \begin{align}
    \tilde{O}((s+\log d)(L+s+\log d))\in \poly(n).
  \end{align}
\end{corollary}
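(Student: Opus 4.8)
The plan is to obtain this directly from \Cref{l:evalpoly} by specializing the estimate to the restricted range of $x$. First I would note that for $x\in[0,1+\polylog(d)/d]$ we have $\max(1,\abs{x})\le 1+\polylog(d)/d$, so that, using $\ln(1+t)\le t$,
\[
  d\log\big[\max(1,\abs{x})\big]\;\le\; d\log\!\Big(1+\tfrac{\polylog(d)}{d}\Big)\;=\;O(\polylog(d)).
\]
Plugging this into the bit-complexity bound of \Cref{l:evalpoly} gives
\[
  \tilde O\big((s+\log d)(L+O(\polylog(d))+\log d + s)\big)\;=\;\tilde O\big((s+\log d)(L+s+\log d)\big),
\]
where the last step absorbs the $O(\polylog(d))$ summand into $\tilde O(\cdot)$, which by convention hides polylogarithmic (in $d$) factors. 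This is exactly the claimed bit complexity.

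It then remains to argue this quantity is $\poly(n)$. For an $s$-sparse polynomial of degree $d$ with encoding size $n$, the specification (\Cref{def:sparsepoly}) lists, for each nonzero coefficient, its index $i\in\{0,\dots,d\}$, which already takes $\lceil\log d\rceil$ bits; hence $\log d\le n$. Moreover $s\in\polylog(d)\subseteq\poly(n)$. Consequently, in the regime relevant to us --- where the target precision parameter satisfies $L\in\poly(n)$ --- the product $(s+\log d)(L+s+\log d)$, and thus its $\tilde O$, lies in $\poly(n)$.

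I do not expect a genuine obstacle here: the corollary is a bookkeeping specialization of \Cref{l:evalpoly}. The only points requiring a little care are (i) checking that the $\polylog(d)$ term coming from the root bound is genuinely absorbed by the polylogarithmic slack already present in the $\tilde O$ notation, and (ii) being explicit that the ``$\in\poly(n)$'' conclusion is for $L\in\poly(n)$, which is precisely the setting needed for the subsequent proof that $\SFTA\subseteq\TFNP$.
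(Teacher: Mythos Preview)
Your proposal is correct and matches the paper's approach: the paper simply declares the corollary ``immediate'' from \Cref{l:evalpoly}, and what you have written is exactly the routine specialization that makes this explicit. Your two caveats---(i) that the $O(\polylog(d))$ summand is absorbed by the polylogarithmic slack of $\tilde O$, and (ii) that the $\poly(n)$ conclusion implicitly presumes $L\in\poly(n)$---are well taken and in fact sharpen the paper's presentation.
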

\noindent The proof of \Cref{l:evalpoly} follows identically to \cite{jindalEfficientlyComputingReal2017}: By choosing 
\begin{align}
  K \in\Omega(L+\log s + d\log d \cdot \log[\max(1,\abs{x})]),
\end{align} 
one can approximately evaluate $p(x)$ (using square-and-multiply to compute powers) by truncating intermediate expressions to their $K$ most significant bits, while keeping the accrued additive error under control. The only difference here is that we need to independently track the error accumulated on both real and imaginary components of each complex number, which adds a constant factor overhead in the bit complexity. The details are omitted.

The second trick we need for containment in TFNP is to argue that we have not ``broken'' the Fundamental Theorem of Algebra in restricting to range $\abs{x}\in[0,1+\polylog(d)/d]$ --- namely, we must show that there always \emph{exists} a root in this range.
This is where the monic property of our polynomial will play a role, coupled with an application of Landau's inequality~\cite{landauQuelquesTheoremesPetrovitch1905}.
\begin{lemma}\label{l:rootbound}
  Let $p=\sum_{i=0}^da_ix^i$ be an $s$-sparse polynomial as per \Cref{def:sparsepoly}, which is additionally monic. 
  Then, there exists an $x\in\CC$ with\footnote{We thank an anonymous reviewer for catching a minor bug in a previous statement of this lemma, which claimed $\abs{x}>0$ (which is incorrect) in addition to the upper bound currently stated (which is correct). Indeed, the polynomial $x^d$ is sparse, but has only $0$ as roots, thus violating $\abs{x}>0$. As this lower bound is not necessary for our purposes in this work, we have omitted it.} 
  \begin{align}
    \abs{x}\leq 1+\left(\frac{\ln(\sqrt{s}d)}{d}\right).
  \end{align}
  such that $p(x)=0$.
\end{lemma}
\begin{proof}
  Assume without loss of generality $d$ is a power of $2$, by which $\set{\abs{a_i}}\leq d$ for all $i\in[0,\ldots, d]$. 
  The \emph{Mahler measure} of $p$ is defined $M(p)=\abs{a_d}\Pi_{j=1}^d \max(1,\abs{z_j})$, where $\set{z_j}_{j=1}^d$ is the set of roots of $p$, and in our setting the leading coefficient $a_d=1$ by assumption. An upper bound on $M(p)$ can be derived as follows. Landau's inequality~\cite{landauQuelquesTheoremesPetrovitch1905} says 
  \begin{align}
    M(p)\leq \sqrt{\sum_{j=0}^d \abs{a_j}^2}.
  \end{align} 
  Combining this with the fact that, by \Cref{def:sparsepoly}, each coefficient $a_i$ of $p$ satisfies $\set{\abs{a_i}}\leq d$,  we have
  \begin{align}\label{eqn:Mahler}
    M(p) \leq \sqrt{s}d.
  \end{align}
We now obtain a contradiction by lower bounding $M(p)$.  Assume, for sake of contradiction, that all roots $\abs{z_j}$ of $p$ satisfy $\abs{z_j} > (1+c/d)^{c}$ for natural number $c\gg 1$ to be chosen shortly. Then,
\begin{align}
  M(p) = \Pi_{j=1}^d \max(1,\abs{z_j}) > \left(1+\frac{c}{d}\right)^{cd}\geq\left(1+\frac{c}{d}\right)^{d+\frac{c}{2}}\geq e^{c},
\end{align}
where the third statement holds for $c\in\Theta(\polylog(d))$, and the last inequality follows since for all positive reals $n$ and $t$, $(1+t/n)^{n+t/2}\geq e^t$.
Setting $c=\ln(\sqrt{s}d)$ completes the proof of the upper bound.
  

\end{proof}

Combining \Cref{l:evalpoly} and \Cref{l:rootbound} immediately yields the desired claim.
\begin{theorem}\label{thm:inTFNP}
  $\SFTA\subseteq\TFNP$.
\end{theorem}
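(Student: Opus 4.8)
The plan is to show that the search problem underlying \SFTA\ --- given a sparse monic univariate $p\in\CC[x]$ of degree $d$ (in the encoding of \Cref{def:sparsepoly}) together with precision $\epsilon\in\Omega(1/\poly(d))$, output some $r\in\CC$ with $\abs{r}\in(0,1+2\log(d)/d)$ and $\abs{p(r)}\le\epsilon$ --- lies in \TFNP; then $\SFTA\subseteq\TFNP$ follows because, by \Cref{def:SFTA}, every relation in \SFTA\ reduces to this problem in the sense of \cite{Pap94}, and \TFNP\ is closed under such reductions. Concretely, I would verify three things for the root-finding problem: (i) candidate solutions have polynomial size; (ii) a solution always exists (totality); and (iii) a candidate solution can be checked in polynomial time. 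The one quantitative observation I would keep in mind throughout is that, although $d$ is exponential in the input size $M$, one has $\log d\le M$ (the encoding lists indices in $\{0,\dots,d\}$), the nonzero coefficients have magnitude at most $d$ (they are $\lceil\log d\rceil$-bit numbers, as used in \Cref{l:rootbound}), and $(1+2\log(d)/d)^{d}\le e^{2\log d}=\poly(d)=2^{\poly(M)}$; hence the \emph{logarithms} of all these quantities are $\poly(M)$.

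For totality, I would invoke \Cref{l:rootbound}: $p$ has an exact root $x^\ast$ with $\frac{1}{1+d^2}\le\abs{x^\ast}\le 1+\ln(\sqrt{s}\,d)/d$. Since $s\in\polylog(d)$, for all sufficiently large $d$ we have $\ln(\sqrt{s}\,d)=\ln d+\tfrac12\ln s<2\log d$, so $x^\ast$ lies in the \emph{open} interval $(0,1+2\log(d)/d)$ required by \Cref{def:SFTA} (the finitely many small values of $d$ can be handled as a hard-coded base case, or absorbed into the defining constant). I would then round $x^\ast$ to a rational point $r$ inside this interval, keeping $\Theta(\log(K/\epsilon))$ bits of precision, where $K$ is the Lipschitz constant of $p$ on the disc of radius $1+2\log(d)/d$; by \Cref{f:polylipschitz}, $K\le s\,c\,r^{d-1}d$ with $c\le d$ and $r^{d-1}\le(1+2\log(d)/d)^{d}\le\poly(d)$, so $K=\poly(d)$ and $\log(K/\epsilon)=\poly(M)$. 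Lipschitz continuity then gives $\abs{p(r)}=\abs{p(r)-p(x^\ast)}\le K\abs{r-x^\ast}\le\epsilon/2$, so a polynomial-size valid solution exists; this simultaneously establishes (i).

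For verifiability (iii), given $(p,\epsilon)$ and a candidate $r$, checking $r\neq 0$ and $\abs{r}<1+2\log(d)/d$ is immediate. To check $\abs{p(r)}\le\epsilon$, I would use \Cref{l:evalpoly} to compute $p(r)$ to additive error $\epsilon/4$: since $\abs{r}\le 1+2\log(d)/d$ makes $d\log[\max(1,\abs{r})]=\polylog(d)=\poly(M)$, and $s$, $\log d$, $\log(1/\epsilon)$ are all $\poly(M)$, the lemma's bit-complexity bound is $\poly(M)$. The verifier accepts iff the computed value has magnitude at most $3\epsilon/4$; this accepts every $r$ with $\abs{p(r)}\le\epsilon/2$ (in particular the $r$ constructed above) and rejects every $r$ with $\abs{p(r)}>\epsilon$ --- the standard slack needed when the verified quantity is only approximable, and precisely why \Cref{def:SFTA} excludes $\epsilon=0$. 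Having established (i)--(iii), the root-finding problem is in \TFNP, and hence so is every relation in \SFTA.

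I expect the main obstacle to be the bookkeeping in the middle paragraphs rather than any genuinely new idea: one has to argue uniformly that ``$d$ is exponential in $M$, but $\log d$, $\log(1/\epsilon)$, $\log(\text{coefficient magnitudes})$, and $d\cdot\log(1+\polylog(d)/d)$ are all $\poly(M)$,'' so that \Cref{l:evalpoly} runs in polynomial time, the precision needed for $r$ is polynomial, and the Lipschitz constant --- though itself exponential in $M$ --- has a polynomial-size logarithm. Reconciling the bound $\abs{x^\ast}\le 1+\ln(\sqrt{s}\,d)/d$ from \Cref{l:rootbound} with the interval $(0,1+2\log(d)/d)$ baked into \Cref{def:SFTA} additionally needs the sparsity assumption $s\in\polylog(d)$ together with a base case for small $d$, but this is routine.
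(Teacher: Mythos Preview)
Your proposal is correct and follows essentially the same approach as the paper: the paper's proof is the single sentence ``Combining \Cref{l:evalpoly} and \Cref{l:rootbound} immediately yields the desired claim,'' and you have simply spelled out how those two lemmas combine (using \Cref{f:polylipschitz} for the rounding step and handling the verification slack explicitly). There is no substantive difference in strategy.
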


\subsection{Embedding univariate polynomials into QSAT with SDR: NP-hardness and towards $\SFTA\subseteq\MHS$}\label{sscn:SFTAinMHS}

\Cref{thm:inTFNP} showed $\SFTA\subseteq \TFNP$. Does the stronger containment $\SFTA\subseteq\MHS$ also hold? The main contribution of this section is to give a poly-time many-one reduction from \SFTA\ to exact \MHS\, i.e. to \MHS\ with $\epsilon=0$:

\begin{restatable}{theorem}{thmembedsparse}\label{thm:embedsparse}
  Let $P$ be an $s$-sparse polynomial of degree $d$.
  There exists an efficiently computable set $\Pi=\{\Pi_i\}_{i\in[m]}$ of $m = O(s\log(d))$ $3$-local and one $2$-local rank-$1$ constraints on $N=O(s\log d)$ qubits with an SDR, such that $P(x/y) = 0$ iff $\Pi (\ket{v_1}\otimes\dots \otimes \ket{v_N})=0$ with unit vector $\ket{v_1}=(x,y)^T\in\CC^2$. 
\end{restatable}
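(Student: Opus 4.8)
The plan is to realize the roots of $P$ as the admissible values of a distinguished qubit $\ket{v_1}$ in a QSAT instance built entirely from \emph{transfer-function gadgets} (\Cref{lem:transfer-function}): small rank-$1$ projectors, each of which --- given the assignment of one or two ``input'' qubits --- forces a fresh ``output'' qubit to carry a prescribed linear ($2$-local gadget) or bilinear ($3$-local gadget) function of the inputs, and which genuinely pins that output whenever the relevant form is nonzero. First I would homogenize: writing $P(x)=\sum_{k\in S}a_kx^k$ with $\abs S=s$, set $q(x,y):=\sum_{k\in S}a_kx^ky^{d-k}$, so that $q(x,y)=y^dP(x/y)$ for $y\neq0$ and the projective point $[x:y]$ records the ratio $x/y$.

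Next, using repeated squaring via $3$-local multiplication gadgets, I would create qubits $\xi_0=\ket{v_1},\xi_1,\dots,\xi_{\lceil\log d\rceil}$ with $\xi_j$ forced proportional to $(x^{2^j},y^{2^j})^T$ (the squaring steps need a constant number of low-arity duplication gadgets at the base). For each of the $s$ exponents $k\in S$, multiplying together the $\xi_j$'s dictated by the binary expansions of $k$ and $d-k$ yields a qubit $\mu_k$ forced proportional to $(x^ky^{d-k}:y^d)^T$, the degree-$d$ representative of the monomial $x^ky^{d-k}$; this costs $O(\log d)$ gadgets per exponent, hence $O(s\log d)$ total. A chain of $s$ further $3$-local ``combination'' gadgets then folds the $a_k\mu_k$ into a target qubit $t$ forced proportional to $(q(x,y):y^d)^T$ --- one must track the common ``denominator'' $y^{jd}$ that the representatives accumulate along the chain, but this is routine --- and one final rank-$1$ projector onto the first computational-basis vector of $t$ (a $1$-local, hence $2$-local, constraint) forces its first amplitude to vanish, i.e.\ $q(x,y)=0$. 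Unwinding the homogenization gives $P(x/y)=0$ in the forward direction; conversely, from a root $r$ one sets $\ket{v_1}\propto(r,1)^T$ and propagates the (well-defined) gadget values downward to obtain a satisfying product state. The instance depends only on $P$, is poly-time constructible, and has $N,m\in O(s\log d)$, with every constraint $3$-local except a single lower-arity one.

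For the SDR: every gadget constraint introduces exactly one fresh qubit, so matching each such constraint to the qubit it introduces covers all gadgets, leaving only the final constraint and the input $\ket{v_1}$ unmatched; following the chain of input--output dependencies backward from $t$ down to the first squaring gadget reaches $\ket{v_1}$, yielding an augmenting path that completes a perfect matching (equivalently, one verifies Hall's condition, \Cref{thm:HMT}, directly). Note this counting tolerates essentially only one constraint that introduces no fresh qubit, which pins down why only a single non-$3$-local constraint is affordable.

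The step I expect to be the real obstacle is the exactness of the ``iff'' at degenerate loci --- above all $y=0$, where the monomial representatives $(x^ky^{d-k}:y^d)$ literally become the zero vector, so the corresponding gadget constraints collapse to $0=0$, the $\mu_k$ go unpinned, and an adversarial assignment of the downstream qubits could in principle satisfy $\Pi$ without $P$ vanishing at $\infty$ (and symmetrically $x=0$, and intermediate vanishings). Ruling these out --- by exploiting that $\deg P=d$ forces $q(1,0)=a_d\neq0$, by choosing the bilinear gadget forms so that a vanishing form yields a contradiction rather than a free qubit, or by adding a bounded number of auxiliary constraints --- while simultaneously preserving the $O(s\log d)$ size bound and (crucially, by the SDR counting above) the budget of essentially one non-$3$-local constraint, is the delicate part of the argument; everything else is bookkeeping within the transfer-function formalism.
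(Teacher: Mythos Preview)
Your high-level plan coincides with the paper's: homogenize to $q(x,y)$, build power qubits $(x^i,y^i)$ by square-and-multiply, assemble a target qubit $(q(x,y),y^d)$, and finish with a single low-arity projector forcing the first amplitude to vanish. Your SDR argument is also essentially the paper's (an almost extending edge order, \Cref{cor:SDR}).

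The genuine gap is exactly the one you flag, and the paper resolves it by \emph{not} building monomial representatives $\mu_k=(x^ky^{d-k}:y^d)$ at all. Your $\mu_k$ vanish at $y=0$ for every $k<d$, and there is no cheap way to protect the chain of combination gadgets against this while respecting the single-non-extending-edge budget. The paper instead uses a Horner-type recursion: at each step write
\[
q_t(x,y)=x^{j_t}\,r_t(x,y)+c_{t,0}\,y^{d_t},
\]
where $j_t>0$ is the smallest positive exponent with $c_{t,j_t}\neq 0$, recurse on $r_t$ (of degree $d_t-j_t$), and then combine $(r_t,y^{d_t-j_t})$ with the power qubit $(x^{j_t},y^{j_t})$ via a single $3$-local gadget to obtain $(q_t,y^{d_t})$. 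The point is that every intermediate qubit $(r_t,y^{d_t-j_t})$ has first component with top $x$-term $c_d\,x^{d_t-j_t}$ (the leading coefficient never disappears), so it equals $(0,0)$ only when $x=y=0$; likewise $(q_t,y^{d_t})$ and the power qubits never vanish projectively. Thus the transfer functions are \emph{always} pinning, and the ``iff'' is exact without any extra degeneracy-handling constraints.

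This decomposition also gives the sparsity bound for free: each recursive step strips off exactly one nonzero coefficient, so there are $s$ steps, and the only powers $\ket{v_i}=(x^i,y^i)$ needed are the $j_t$'s (each buildable in $O(\log d)$ square-and-multiply gadgets), for $O(s\log d)$ total. Your monomial-representative route, even if the $y=0$ issue were patched, would need a separate argument for why the combination chain stays non-degenerate; the paper's factoring avoids the issue structurally.
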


From this, we immediately obtain the following.

\begin{corollary}\label{cor:MHSsparse}
  Given monic $s$-sparse polynomial $p(x)\in\CC[x]$ of degree $d$, the problem of computing a root $x$ such that $p(x)=0$ is in $\MHS_{s',d'}(\epsilon)$, with number of equations $n=O(s\log d)$, at most $s'=2$ variables per group, total degree at most $d'=3$ per equation, and precision $\epsilon=0$. 
\end{corollary}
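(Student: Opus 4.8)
The plan is to view the corollary as \Cref{thm:embedsparse} rephrased in the language of \Cref{def:MHS}, and to carry it out in two steps. (i) I would first construct, from a monic $s$-sparse $p$ of degree $d$, a QSAT-with-SDR instance $\Pi$ on $N=O(s\log d)$ qubits whose satisfying product states $\ket{v_1}\otimes\cdots\otimes\ket{v_N}$ are exactly those with $p(x/y)=0$, where $\ket{v_1}=(x,y)^T$ --- this is \Cref{thm:embedsparse}. (ii) I would then read $\Pi$ off as a multi-homogeneous system over $\CC$ (one variable group per qubit) and verify the parameters $n,s',d',\epsilon$ demanded by \Cref{def:MHS}. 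Step (i) carries all the content, so I describe it first.

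\textbf{Step (i).} I would first homogenize: put $q(x,y):=y^d\,p(x/y)=\sum_{i} a_ix^iy^{d-i}$, a bivariate degree-$d$ form with $s$ nonzero terms; since $q(x,y)=y^d\,p(x/y)$, forcing $q(x,y)=0$ with $y\neq0$ forces $p(x/y)=0$, and the case $y=0$ will be excluded automatically. The primitives are transfer-function gadgets (\Cref{lem:transfer-function}): a $2$-local rank-$1$ clause on $(a,b)$ can force $\ket{b}$ proportional to any prescribed linear image of $\ket{a}$'s two amplitudes; the singlet clause $\ket{01}-\ket{10}$ forces $\ket{a}\propto\ket{b}$ (a ``copy''); and a $3$-local rank-$1$ clause on $(a,b,c)$ can force $\ket{c}$ proportional to any prescribed bilinear image of the amplitudes of $\ket{a},\ket{b}$, yielding a ``multiply'' gadget $(\ket{a}_0\ket{b}_0,\ket{a}_1\ket{b}_1)^T$ and an ``add'' gadget $(\ket{a}_0\ket{b}_1+\ket{a}_1\ket{b}_0,\ket{a}_1\ket{b}_1)^T$ (the latter turning $(m_1,c)^T,(m_2,c)^T$ into $(m_1+m_2,c)^T$ up to the common factor $c$). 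With these I would install three sets of clauses: (1) \emph{power qubits}: from $\ket{v_1}=(x,y)^T$, use $\lceil\log d\rceil$ copy-then-multiply steps to force qubits $\propto(x^{2^j},y^{2^j})^T$; (2) \emph{assemble $q$}: for each of the $s$ nonzero exponents $i$, chain $O(\log d)$ multiply gadgets over the power qubits (binary expansions of $i$ and $d-i$) to force a qubit $\propto(a_ix^iy^{d-i},y^d)^T$, then chain $s-1$ add gadgets to force a target qubit $t\propto(q(x,y),y^d)^T$; (3) \emph{vanish}: one further (formally $2$-local) rank-$1$ clause forcing the first amplitude of $t$ to vanish, i.e.\ $q(x,y)=0$. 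This uses $N,m\in O(s\log d)$ qubits and clauses. I would exhibit the SDR by noting that the gadget dependency structure is a DAG in which each clause is introduced together with a private ``output'' qubit, so matching each clause to its output qubit works. For correctness: if $q(x,y)=0$ then the forced values of all gadget qubits give a satisfying product state; conversely, in any satisfying product state every gadget qubit equals its forced value up to scale --- and I would check these forced values are nonzero whenever $\ket{v_1}$ is a unit vector (e.g.\ $x^{2^j}$ and $y^{2^j}$ never vanish together) --- so clause (3) forces $q(x,y)=0$, hence $p(x/y)=0$; moreover $y=0$ is impossible, since then $q(x,0)=x^d\neq0$ (here $p$ is monic) would violate clause (3).

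\textbf{Step (ii).} Expanding a product assignment in the computational basis (as in \Cref{eqn:multilinear}), each qubit becomes a variable group $Z_j=\{x_{j,0},x_{j,1}\}$, so $n_j=1$ and condition~2 of \Cref{def:MHS} holds with $s'=2$; each $k$-local rank-$1$ clause ($k\le3$) becomes one multilinear equation of total degree $k\le3$, so condition~3 holds with $d'=3$. Padding with trivially satisfied clauses so that the number of equations equals $N$ gives $n=O(s\log d)$, and because $\Pi$ has an SDR, \Cref{obs:countWSDR} gives $\bez>0$ (condition~1), so \Cref{thm:bezout} guarantees a solution. The reduction is then: $g$ outputs this system (efficient, since $\Pi$ is efficiently constructible and all degrees and group sizes are $O(1)$), and $h$, given any \emph{exact} solution $Y$, extracts $\ket{v_1}=(x,y)^T$ and returns $x/y$, which by Step~(i) is a root of $p$, with $y\neq0$ guaranteed. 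Hence computing a root of a monic $s$-sparse $p$ of degree $d$ lies in $\MHS_{0,n,2,3}$ with $n=O(s\log d)$. (This is the zero-error variant: $\MHS_{0,\cdot}$ is not literally inside $\MHS$, whose definition requires $\epsilon\in\Omega(1/\exp)$ --- matching the ``zero-error MHS'' phrasing of \Cref{thm:embedsparseInformal}.)

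\textbf{Main obstacle.} Step (ii) is routine bookkeeping; the real work is Step (i). The two delicate points there are (a) showing that every intermediate ``output'' qubit is forced to a \emph{nonzero} value for all admissible $\ket{v_1}$, which will need a small case analysis around $x=0$ and $y=0$ (in particular one must order the $s$ partial sums so that none vanishes at $y=0$); and (b) the construction is genuinely \emph{exact} --- it has depth $\Theta(\log d)$ with $d$ exponentially large, so a perturbation of the product assignment can be amplified multiplicatively through the multiply/add layers, and a naive error analysis does not keep this under inverse-polynomial control. Point (b) is precisely why one gets $\SFTA$ inside zero-error MHS rather than $\SFTA\subseteq\MHS$; the open conjecture is exactly that a more careful analysis of this same construction upgrades it to inverse-polynomial error.
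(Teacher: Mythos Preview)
Your Step~(ii) is correct and matches the paper, which treats the corollary as immediate from \Cref{thm:embedsparse} (together with the containment argument of \Cref{thm:MHScomplete}); your bookkeeping for $n$, $s'$, $d'$, and $\bez>0$ is exactly what is needed.

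Your Step~(i), however, reproves \Cref{thm:embedsparse} with a different construction than the paper's, and this alternative has a gap. You build each monomial qubit $(a_i x^i y^{d-i},\, y^d)^T$ separately and then chain add-gadgets; the paper instead uses a Horner-like recursion (\Cref{eq:decomposition}), so that every intermediate qubit has the form $(r_t(x,y),\,y^{d_t-j_t})^T$ with $r_t$ retaining the leading monomial of $q_t$. The difference bites at $y=0$: in your construction each monomial qubit with $i<d$ becomes $(0,0)$, so its transfer function vanishes and that qubit --- and hence the subsequent add-chain and the target $t$ --- can be set arbitrarily. In particular one can set $t=(0,1)$, satisfying your clause~(3) and producing a spurious satisfying product state with $\ket{v_1}=(x,0)^T$. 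Your argument that ``$y=0$ is impossible since $q(x,0)=x^d\neq 0$ would violate clause~(3)'' presupposes that $t$ is forced to equal $(q(x,y),y^d)$, which is precisely what fails once an intermediate transfer function vanishes; and your proposed fix of ordering the partial sums cannot help, because the breakdown occurs already at the individual monomial qubits, before any summing. The paper's recursion avoids this since at $y=0$ each intermediate qubit is $(c_{t,d_t}x^{d_t-j_t},0)\neq 0$. (A smaller point: your SDR argument ``match each clause to its output qubit'' does not cover the final vanish clause, which introduces no fresh qubit; the paper handles this via the almost-extending route and \Cref{cor:SDR}.)
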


\noindent Recall, however, that in \Cref{def:MHS} we defined MHS with an allowed error tolerance $\epsilon$ at least inverse exponential in the input size, whereas \Cref{thm:embedsparse} requires $\epsilon=0$. 
We believe the construction of \Cref{thm:embedsparse} also yields an analogous result for the approximate case of inverse exponential $\epsilon$, but have not yet been able to prove it.
The main challenge appears to be controlling the error in the reduction, i.e. one would like to say that if one can find an $\epsilon$-approximate solution to PRODSAT with SDR, then one can resolve the roots of $P$ within some controlled precision $f(\epsilon)$. 
This is tricky, as the degree of $P$ is exponential, which may amplify errors.
We thus conjecture the following.

\begin{restatable}{conjecture}{conjSFTAinMHS}\label{conj:SFTAinMHS}
  $\SFTA\subseteq\MHS$.
\end{restatable}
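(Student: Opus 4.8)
}

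The plan is to make the reduction of \Cref{thm:embedsparse} (equivalently \Cref{cor:MHSsparse}) robust, i.e.\ to show that an $\epsilon$-approximate solution of the QSAT-with-SDR instance $\Pi$ built from the sparse monic polynomial $p$ still lets one recover, in polynomial time, a root $r$ of $p$ with $\abs{p(r)}$ as small as \Cref{def:SFTA} demands --- provided one solves the associated $\MHS$ instance to a precision $\epsilon$ that is inverse-polynomially smaller than the target, which is permitted since that instance has size $\poly(n)$ and $\MHS$ allows $\epsilon$ as small as $2^{-\poly(n)}$.

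First I would prove a \emph{stability lemma} for each of the three families of gadgets in \Cref{thm:embedsparse}. Recall each gadget is either a transfer-function constraint (\Cref{lem:transfer-function}) implementing a fixed linear or quadratic map on qubit amplitudes, or a rank-$1$ singlet projector enforcing equality of two copies. The lemma should read: if the input qubit(s) of a gadget are $\delta$-close (in the Fubini--Study metric) to their intended values $(x^i,y^i)^T$ and the gadget's constraint has energy at most $\epsilon$, then the output qubit is forced within $\kappa\,(\delta+\sqrt\epsilon)$ of its intended value, where $\kappa$ is a condition number governed by the norm of the (pseudo)inverse of the gadget's map on the relevant region. The quantitative heart of this step is that every amplitude appearing in the construction has the form $x^i$ or $y^i$ with $\abs{x},\abs{y}\le 1$ and $i\le d$, so all magnitudes and all separations between them (e.g.\ in the square-and-multiply step, and in the term-accumulation step, where one must guard against catastrophic cancellation) are bounded by $\poly(d)$; hence $\kappa\le \poly(d)$.

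Next, compose along the circuit. The decisive structural point is that the construction has \emph{logarithmic} depth: $O(\log d)$ levels of square-and-multiply to form the power qubits, $O(s)=O(\polylog d)$ levels to accumulate the homogenization $q(x,y)$, and one final projector. Iterating the stability lemma with $\delta=0$ at the free qubit $\ket{v_1}$, an $\epsilon$-solution of $\Pi$ forces the target qubit within $O(T\,\kappa^{T}\sqrt\epsilon)$ of $\propto (q(x,y),y^{d})^T$, where $T=O(\log d+s)=O(\polylog d)$; since $\kappa\le\poly(d)$ this gives $\kappa^{T}=2^{\polylog(d)}=2^{\poly(n)}$ (using that $d$ is exponential in the input size $n$), a merely single-exponential blow-up. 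Combining this with the analysis of the last constraint --- the projector onto the first coordinate of $(q,y^{d})^T$ being $\epsilon$-satisfied forces $\abs{q(x,y)}\le O(\sqrt\epsilon)\,\abs{y}^{d}$, whence $\abs{p(x/y)}=\abs{q(x,y)}/\abs{y}^{d}\le O(\sqrt\epsilon)$, and $y\ne0$ is itself forced since $y=0$ would make the target qubit exactly $(1,0)^T$ and violate the constraint --- one obtains $\abs{p(r)}\le 2^{\poly(n)}\sqrt\epsilon$ for $r:=x/y$. Choosing $\epsilon=2^{-q(n)}$ for a sufficiently large polynomial $q$ makes $\abs{p(r)}\le 1/\poly(d)$, the precision required by $\SFTA$; since $r$ has $\poly(n)$-bit amplitudes and $h$ in \Cref{def:SFTA} is poly-time, this would complete the reduction.

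The main obstacle --- and, I believe, the reason the containment is only conjectured --- is the range restriction $\abs{r}\in(0,1+2\log d/d)$ in \Cref{def:SFTA}. The construction of \Cref{thm:embedsparse} places no constraint on $\ket{v_1}$ beyond unit norm, so an $\MHS$ oracle may return a solution encoding any root of $p$, and Cauchy's bound only limits these to $\abs{r}\le d+1$; for such an $r$, $\abs{p(r)}$ need not be representable in $\poly(n)$ bits, so the recovered $r$ is simply not a valid $\SFTA$ witness. Resolving this requires augmenting the construction so that \emph{every} $\epsilon$-approximate solution encodes a root lying in the annulus of \Cref{l:rootbound}; the obvious attempts fail (forcing $\abs{x}\approx\abs{y}$ is not expressible by rank-$1$ projectors, a substitution $x\mapsto\phi(x)$ mapping a standard region onto that annulus destroys sparsity, and $x\mapsto x^{k}$ with $k$ polynomial shrinks the target range at essentially the same rate as it shrinks the magnitudes of the out-of-range roots). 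A more promising route may be to exploit the square-and-multiply gadgets directly: if $\abs{r}$ is bounded away from $1$, the high-power qubits $(x^{2^{j}},y^{2^{j}})^T$ collapse after normalization to $(1,0)^T$ or $(0,1)^T$ at an exponential rate, which one might parlay into a lower bound on the energy of some intermediate constraint unless $\abs{r}=1\pm O(\log d/d)$. Making such an argument quantitative and compatible with the stability analysis above --- without incurring a further super-polynomial loss --- is, I expect, the crux of proving the conjecture.
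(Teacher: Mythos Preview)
This statement is explicitly a \emph{conjecture} in the paper; the authors do not prove it. The surrounding text says only that they ``believe a more careful analysis of our construction behind \Cref{thm:embedsparse} should yield the desired containment in $\MHS$,'' and then state \Cref{conj:SFTAinMHS}. So there is no proof in the paper to compare your proposal against.

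Your write-up is accordingly best read as what it is: a proof \emph{plan}, not a proof, and you are up-front about this. The overall strategy --- a gadget-level stability lemma, composed along the $O(s+\log d)$-depth circuit of \Cref{thm:embedsparse}, with the precision slack absorbed by the $2^{-\poly(n)}$ allowance in \Cref{def:MHS} --- matches the authors' own stated intuition that a ``more careful analysis'' of the same construction should work. Your arithmetic on the error blow-up ($\kappa^{T}=2^{\poly(n)}$ for $\kappa\le\poly(d)$ and $T\in\polylog(d)$) is internally consistent.

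Two remarks on where the plan is thinner than it needs to be. First, the assertion $\kappa\le\poly(d)$ is doing a lot of work and is not obviously true for the accumulation gadgets: the intermediate qubits $\ket{w_{d_t}}\propto(r_t(x,y),y^{d_t})^T$ from \eqref{eq:decomposition} can have their first coordinate arbitrarily small (even exactly zero) at the target root, and the next transfer function's conditioning degrades as that coordinate vanishes. You flag ``catastrophic cancellation'' but then assert the bound anyway; a proof would need either to rule this out for at least one root in the target annulus, or to rewrite the Horner-style accumulation so that no intermediate partial sum is forced near zero. Second, your diagnosis of the range restriction $\abs{r}\in(0,1+2\log d/d)$ as the principal obstruction is reasonable, and your observation that out-of-range roots drive the high-power qubits to basis states is a nice lead --- but note that this same phenomenon also \emph{destroys} the conditioning you need for the stability lemma at those qubits, so the two difficulties are entangled rather than independent. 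Neither of these is addressed in the paper, which simply leaves the containment open.
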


\noindent In the meantime, \Cref{thm:embedsparse} will allow us to obtain \emph{NP-hardness} results for variants of QSAT with SDR, as given in \Cref{ssscn:NPhardness}.

\paragraph{Organization.} \Cref{ssscn:blocks} first develops tools for embedding univariate polynomials into QSAT instances. \Cref{ssscn:embeddingsparse} shows the analogue of \Cref{thm:embedsparse} for \emph{non-sparse} polynomials, i.e. for polynomial degree $d$ (\Cref{thm:embed-polynomial}). This will be useful for our NP-hardness results in \Cref{ssscn:NPhardness}. \Cref{ssscn:embeddingsparse} then gives the proof of \Cref{thm:embedsparse}, which proceeds similarly to \Cref{thm:embed-polynomial}.

\subsubsection{Building blocks}\label{ssscn:blocks}

We now give the basic building blocks, using $3$-local and $2$-local constraints, to design \PS\ instances whose solutions correspond to the roots of a univariate polynomial.
For this, we use the concept\footnote{Transfer functions are a formal generalization of the transfer matrix formalism, which has appeared in previous works, e.g.~\cite{bravyiEfficientAlgorithmQuantum2006,laumannc.r.PhaseTransitionsRandom2010}} of \emph{transfer functions} on qubits from \cite{AdBGS21}, for which we give a slightly simplified construction. Intuitively, a transfer function gives a necessary and sufficient condition for a rank-$1$ $k$-local clause $\ket{\phi}$ to be satisfied, given a partial assignment $\ket{\varphi_1}\cdots\ket{\varphi_{k-1}}$ to its first $k-1$ qubits.

\begin{restatable}{lemma}{lemTransferFunction}(Transfer function, $g$)\label{lem:transfer-function}
  Let $\ket{\phi}$ be a $k$-local constraint on qubits.
  There exists a polynomial $g:(\CC^2)^{k-1}\to \CC^2$ such that, for any partial assignment $v_1,\ldots,v_{k-1}$, the clause $\ket{\phi}$ is satisfied (i.e. $\braket{\phi}{v_1,\dots,v_k}=0$) iff\footnote{$\propto$ means up to scaling up to non-zero constant.} $\ket{v_k}\propto g(v_1,\dots,v_{k-1})$ or $g(v_1,\dots,v_{k-1})=0$.
  Moreover, $g$ is linear in the coefficients of each $v_i$.
\end{restatable}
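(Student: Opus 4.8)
The plan is to write out $\braket{\phi}{v_1,\dots,v_k}$ explicitly in the computational basis and isolate the dependence on the last qubit $v_k = (v_{k,0}, v_{k,1})^T$. Since $\ket{\phi}\in(\CC^2)^{\otimes k}$ has amplitudes $\phi_{j_1\cdots j_k}$ with each $j_\ell\in\{0,1\}$, we have
\begin{equation}
\braket{\phi}{v_1,\dots,v_k} = \sum_{j_1,\dots,j_k} \phi^*_{j_1\cdots j_k}\, v_{1,j_1}\cdots v_{k,j_k} = v_{k,0}\,A(v_1,\dots,v_{k-1}) + v_{k,1}\,B(v_1,\dots,v_{k-1}),
\end{equation}
where $A = \sum_{j_1,\dots,j_{k-1}} \phi^*_{j_1\cdots j_{k-1}0}\, v_{1,j_1}\cdots v_{k-1,j_{k-1}}$ and $B$ is defined analogously with the last index set to $1$. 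Both $A$ and $B$ are multilinear (in particular, polynomial, and linear in each $v_i$'s coefficients) functions of $v_1,\dots,v_{k-1}$.

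The key observation is that, viewing $(A,B)$ as a vector in $\CC^2$, the equation $\braket{\phi}{v_1,\dots,v_k}=0$ reads $\langle (A^*, B^*)^T \mid v_k\rangle = 0$ — wait, more precisely it is the bilinear pairing $v_{k,0}A + v_{k,1}B = 0$, i.e.\ $v_k$ must be orthogonal (under the bilinear, not sesquilinear, form) to $(A,B)^T$. Thus I define the transfer function by
\begin{equation}
g(v_1,\dots,v_{k-1}) := \begin{pmatrix} -B(v_1,\dots,v_{k-1}) \\ A(v_1,\dots,v_{k-1}) \end{pmatrix}.
\end{equation}
Then $v_{k,0}A + v_{k,1}B = 0$ holds if and only if $\ket{v_k}$ is a scalar multiple of $g(v_1,\dots,v_{k-1})$, \emph{provided} $g\ne 0$; and if $g(v_1,\dots,v_{k-1})=0$ (i.e.\ $A=B=0$), then the constraint is satisfied for every $\ket{v_k}$, which matches the ``or $g(\cdots)=0$'' escape clause in the statement. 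Linearity of $g$ in the coefficients of each $v_i$ is inherited directly from the multilinearity of $A$ and $B$, and $g$ is manifestly a polynomial.

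I do not anticipate a genuine obstacle here; this is essentially bookkeeping. The one point requiring a little care is the distinction between the bilinear pairing $\sum_\ell v_{k,\ell} C_\ell$ appearing in $\braket{\phi}{v_1,\dots,v_k}$ and the Hermitian inner product — one must make sure $g$ is built from $A,B$ (not their conjugates) so that proportionality over $\CC$ is the exact solvability condition, and that the degenerate case $A=B=0$ is handled explicitly rather than swept under the rug. A secondary minor check is that $g$ indeed lands in $\CC^2$ and is a \emph{polynomial} (not merely rational) map of the input coefficients, which is immediate from the closed form above. Since $\ket{v_1},\dots,\ket{v_{k-1}}$ are fixed partial assignments and the whole expression is linear in the two remaining coordinates $v_{k,0},v_{k,1}$, the characterization is exhaustive.
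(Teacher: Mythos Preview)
Your proposal is correct and takes essentially the same approach as the paper. The paper writes $x=(v_1\otimes\cdots\otimes v_{k-1}\otimes I)^\dagger\ket{\phi}$ so that the constraint reads $x^\dagger v_k=0$, then sets $g=ZX\,\overline{x}$; unwinding this gives exactly your $(A,B)^T=\overline{x}$ and $g=(-B,A)^T$ up to an irrelevant global sign, with the degenerate case $g=0$ handled identically.
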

\begin{proof}
  If $g(v_1,\dots,v_{k-1})=0$, we are trivially done, since the partial assignment already satisfies $\ket{\phi}$.
  For the remaining case, let $v' \coloneqq v_1\otimes\dotsm\otimes v_{k-1}$ and $x\coloneqq(v'\otimes I)^\dagger \phi$.\footnote{We do not use Dirac notation here as we make use of complex conjugates ($\overline{a}$) and transpositions ($a^T$) on their own.}
  Note that $g$ has the desired property if $g(v_1,\dots,v_{k-1}) = y$ is orthogonal to $x$, i.e. if $x^\dagger y=0$.
  To compute $y$, first compute $\overline x = (v'\otimes I)^T\overline \phi$.
  Then $y \coloneqq ZX\overline{x}$.
  For $x = (x_1,x_2)^T$, we have $y^\dagger = (x_2, -x_1)$ and thus $y^\dagger x=0$.
  Since we are on qubits, $y$ is the \emph{unique} choice of satisfying assignment for $v_k$, given $v_1,\ldots,v_{k-1}$.
  $\overline x$ is clearly linear in the coordinates of each $v_1,\dots v_{k-1}$.
  We also define $f(v_1,\dots,v_{k-1}) \coloneqq \overline{x}$.
\end{proof}

\noindent \emph{Simulating linear operations via $2$-local constraints.} Consider first the transfer function for a $2$-local constraint $H=\phi\phi^\dagger$.
By \Cref{lem:transfer-function}, $g(v_1) = ZX\overline x$ with 
\begin{equation}\label{eq:2local}
  \overline x = (v_1\otimes I)^T\phi = \left(\bmat{x_1\\y_1}\otimes I\right)^T\left(\bmat{a_{1}\\a_{2}}\otimes \bmat{1\\0} + \bmat{b_{1}\\b_{2}}\otimes \bmat{0\\1} \right) 
  = \bmat{a_1x_1 + a_2y_1\\b_1x_1+b_2y_1}.
\end{equation}
In words, the assignment on the second qubit must be \emph{orthogonal} to the the right hand side, $[a_1x_1+a_2y_1,b_1x_1+b_2y_1]^T$, in order to satisfy constraint $\ket{\phi}$. 
Note for the second equality that $[a_1,a_2]^T$ and $[b_1,b_2]^T$ are not necessarily orthogonal. In words, we can choose $H$ such that $g$ encodes an arbitrarily chosen linear combination of $x_1$ and $y_1$ in both coordinates.
\begin{example}\label{ex:2local}
  Suppose one wishes to enforce equality (up to rescaling) on product states on qubits $1$ and $2$, and suppose qubit $1$'s state is $(x_1, y_1)^T$. Setting $a_1=0$, $a_2=-1$, $b_1=1$, and $b_2=0$, the right hand side of \Cref{eq:2local} equals $(-y_1, x_1)^T$. The unique assignment to qubit $2$ orthogonal to this is $(x,y)$, thus enforcing qubit $2$ to equal qubit $1$.
\end{example}
\noindent \emph{Simulating quadratic operations via $3$-local constraints.} Similarly, we can choose $3$-local $H$ such that 
\begin{equation}\label{eq:3local}
\begin{aligned}
  \overline x &= \left(v_1 \otimes v_2 \otimes I\right)^T \phi 
  = \left(\bmat{x_1\\y_1}\otimes \bmat{x_2\\y_2} \otimes I\right)^T \phi\\
  &= \left(\bmat{x_1x_2\\x_1y_2\\y_1x_1\\y_1y_2}\right)^T \left(\bmat{a_{1}\\a_2\\a_3\\a_4}\otimes \bmat{1\\0} + \bmat{b_{1}\\b_{2}\\b_3\\b_4}\otimes \bmat{0\\1} \right)\\
  &= \bmat{a_1x_1x_2+a_2x_1y_2+a_3y_1x_1+a_4y_1y_2\\b_1x_1x_2+b_2x_1y_2+b_3y_1x_1+b_4y_1y_2}
  =\sum_{i,j\in[2]}\bmat{a_{ij}v_{1,i}v_{2,j}\\b_{ij}v_{1,i}v_{2,j}}
\end{aligned}
\end{equation}
and can therefore encode arbitrary linear combinations of the products $x_1x_2,x_1y_2,x_2y_1,x_2y_2$.

\begin{example}\label{ex:3local}
  Suppose given assignment $(x,y)^T$ to qubits $1$ and $2$, we wish to enforce qubit $3$'s assignment to encode the state (proportional to) $(x^2,y^2)^T$. Setting $a_1=a_2=a_3=0$, $a_4=-1$, $b_1=1$, and $b_2=b_3=b_4=0$, the right hand side of \Cref{eq:2local} equals $(-y^2, x^2)^T$. The unique assignment to qubit $3$ orthogonal to this is $(x^2,y^2)$, as desired.
\end{example}

\subsubsection{Embedding sparse polynomials into \PS}\label{ssscn:embeddingsparse}

With our building blocks in hand, we first show how to embed non-sparse polynomials into QSAT instances, i.e. where the degree $d$ is polynomial in the input size. Once we have this, a similar proof will yield \Cref{thm:embedsparse}.
\begin{theorem}\label{thm:embed-polynomial}
  Let $p$ be a polynomial of degree $d$ with $p(0)\ne0$.
  There exists an efficiently computable set $\Pi=\{\Pi_i\}_{i\in[m]}$ of $m = O(d)$ $3$-local and one $2$-local rank-$1$ constraints on $N=O(d)$ qubits with an SDR, such that $p(x/y) = 0$ iff $\Pi (\ket{v_1}\otimes\dotsm\otimes \ket{v_N})=0$ with unit vector $\ket{v_1}=(x,y)^T\in\CC^2$. 
\end{theorem}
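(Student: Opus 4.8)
The plan is to first \emph{homogenize} $p$ and then \emph{compute} the homogenized polynomial inside a short chain of qubits using the transfer-function gadgets of \Cref{ssscn:blocks}, closing the chain with a constraint that forces the output to vanish. Set $q(x,y) := y^d p(x/y) = \sum_{i=0}^{d} a_i x^i y^{d-i}$, which is a genuine degree-$d$ homogeneous bivariate polynomial since $p$ has degree $d$ (so $a_d\neq 0$). Because $p(0)=a_0\neq 0$, for any nonzero $(x,y)\in\CC^2$ we have $q(x,y)=0$ if and only if $x\neq 0$, $y\neq 0$, and $p(x/y)=0$: indeed $q(x,0)=a_d x^d$ and $q(0,y)=a_0 y^d$, neither of which vanishes for a nonzero argument, and otherwise $q(x,y)=y^d p(x/y)$. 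Hence it suffices to build a QSAT instance with an SDR, in which the first qubit $\ket{v_1}=(x,y)^T$ is unconstrained, whose product solutions are exactly those with $q(x,y)=0$, and then read off the root $x/y$.

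For the gadget, recall from \Cref{lem:transfer-function} together with \Cref{eq:2local} and \Cref{eq:3local} that a $2$-local rank-$1$ clause on $\{a,b\}$ can force $\ket{b}$ proportional to any prescribed linear image of $\ket{a}$, while a $3$-local rank-$1$ clause on $\{a,b,c\}$ can force $\ket{c}$ proportional to any prescribed vector whose two entries are arbitrary bilinear forms in the entries of $\ket{a}$ and $\ket{b}$. I introduce accumulator qubits $A_1,\dots,A_{d}$ and, via Horner's scheme on $q$, the target values $A_k \propto (t_k,\ x^k)^T$, where $t_k:=\sum_{i=0}^{k} a_i x^i y^{k-i}$ satisfies $t_1 = a_0 y + a_1 x$, the recursion $t_k = y\,t_{k-1} + a_k x^k$, and $t_d = q(x,y)$. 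The clauses are: (i) a $2$-local clause on $\{v_1,A_1\}$ forcing $A_1\propto(a_0 y + a_1 x,\ x)^T$, which is linear in $\ket{v_1}$; (ii) for $2\le k\le d$, a $3$-local clause on $\{v_1,A_{k-1},A_k\}$ forcing $A_k\propto\bigl(y\,(A_{k-1})_1 + a_k\,x\,(A_{k-1})_2,\ x\,(A_{k-1})_2\bigr)^T$, which equals $(t_k,x^k)^T$ once $A_{k-1}\propto(t_{k-1},x^{k-1})^T$; and (iii) the rank-$1$ projector $\ketbra{0}{0}$ on $A_d$, forcing its first entry — that is, $q(x,y)$ — to be $0$. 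This gives $N=O(d)$ qubits and $m=O(d)$ clauses, all $3$-local except for the single $2$-local clause (i) and the pin (iii). The interaction hypergraph is a ``chain'': the clause producing $A_k$ touches $\{v_1,A_{k-1},A_k\}$, so sending this clause to $A_{k-1}$, the clause producing $A_1$ to $v_1$, and the pin (iii) to $A_d$ is a perfect matching, hence an SDR exists.

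It remains to verify $q(x,y)=0 \iff \Pi(\ket{v_1}\otimes\cdots\otimes\ket{v_N})=0$. For ``$\Leftarrow$'', let a product state satisfy every clause. By \Cref{lem:transfer-function}, for each $k$ either $A_k$ is forced (up to scaling) to the value of the transfer function at the current assignment or that transfer function is $0$; by bilinearity, once $A_{k-1}$ is a \emph{nonzero} multiple of $(t_{k-1},x^{k-1})^T$ the transfer function producing $A_k$ is a nonzero multiple of $(t_k,x^k)^T$, which vanishes only when $x=y=0$ (since $x^k=0\Rightarrow x=0\Rightarrow t_k=a_0 y^k$), impossible for a unit vector $\ket{v_1}$. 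Inductively — the base step $A_1$ is handled the same way — all $A_k$ are genuinely forced, so (iii) is equivalent to $q(x,y)=0$, whence $p(x/y)=0$ by the first paragraph. For ``$\Rightarrow$'', a root $r$ of $p$ satisfies $r\neq 0$; setting $\ket{v_1}=(r,1)^T$ and $A_k=(t_k(r,1),\ r^k)^T$ — all nonzero — satisfies clauses (i) and (ii) by construction, and clause (iii) since $t_d(r,1)=p(r)=0$.

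The main obstacle is exactly this ``no-collapse'' analysis together with the boundary bookkeeping: one must argue that along \emph{every} product assignment consistent with the upstream clauses (not merely the intended one) each transfer function is nonzero, so that values genuinely propagate down the chain and no spurious solution is introduced — this is where both $a_d\neq 0$ and $p(0)\neq 0$ are essential — and then confirm the SDR (immediate from the chain structure). No precision estimate is needed here, since the statement concerns exact satisfaction; a Lipschitz/\emph{approximate} version would require propagating errors through the chain, which is what \Cref{thm:embedsparse} sidesteps by staying exact. Finally, for the sparse case \Cref{thm:embedsparse}, where $d$ is exponential so the length-$d$ Horner chain is infeasible, the same blueprint applies with the linear chain replaced by: square-and-multiply gadgets producing qubits holding $(x^{2^j},\star)^T$ and $(y^{2^j},\star)^T$ ($O(\log d)$ of each), $3$-local gadgets assembling the $O(1)$-many nonzero monomials $x^{i_k}y^{d-i_k}$ of $q$ out of them, a short accumulator of length $O(s)$ summing those monomials, and the same final pin — yielding $O(s\log d)$ qubits and clauses.
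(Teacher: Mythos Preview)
Your argument is correct and takes a genuinely different route from the paper. The paper first builds a bank of ``power qubits'' $\ket{v_i}\propto(x^i,y^i)^T$ via square-and-multiply, and then assembles $q(x,y)$ by a recursive decomposition that at each step peels off the lowest nonzero monomial (factoring out $x^{j_t}$), storing partial results in qubits $\ket{w_{d_t}}\propto(q_t(x,y),y^{d_t})^T$; the final pin is a $1$-local constraint on $w_d$. You instead run Horner directly on $q$, with accumulators $A_k\propto(t_k,x^k)^T$ and no separate power qubits. Your construction is more elementary and shorter in the dense case, and your no-collapse analysis (using both $a_0\neq 0$ and $a_d\neq 0$) is clean. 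The paper's construction is tailored so that the same gadget already proves the sparse theorem with essentially no change: the recursive decomposition has depth $s$ rather than $d$, and the power qubits are already built by square-and-multiply. Your Horner chain has length $d$ and so does not transfer directly; the sparse variant you sketch in the last paragraph (build powers, assemble the $s$ monomials, accumulate) works but is a different construction from your dense one, whereas the paper's is literally the same in both regimes. As a minor bookkeeping note, both your proof and the paper's actually use a $1$-local final pin; the paper handles the mismatch with the theorem statement by padding to $3$-uniform with ancillas, and your construction can do the same.
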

\begin{proof}

Write $p(x) =\sum_{i=0}^dc_ix^i$ with $c_d \ne 0$ and $c_0\ne0$.
First, we homogenize $p$ by adding a variable $y$ such that $q(x,y) := \sum_{i=0}^dc_ix^iy^{d-i}$.
We now construct three sets of qubits and corresponding constraints.\\

\vspace{-1mm}
\noindent
\emph{First set.} The first set sets up the basic powers of $x$ and $y$ we need to simulate $q$.
Specifically, the first qubit $v_0=(x,y)^T$ represents variables $x,y$ in $q$.
With a $2$-local constraint, we create $\ket{v_1}\propto \ket{v_0}$ (see \Cref{ex:2local}).
Then, we can use $3$-local constraints and square-and-multiply to construct terms $\ket{v_i}:=(x^i, y^i)^T$ for any required $2\leq i\leq d-2$ (see \Cref{ex:3local}).
Observe that each time we add such a rank-$1$ constraint, we also add a new qubit to store the ``answer'' to the arithmetic operation the constraint is simulating.\\

\begin{figure}[t]
  \begin{center}
    \includegraphics[width=0.4\textwidth]{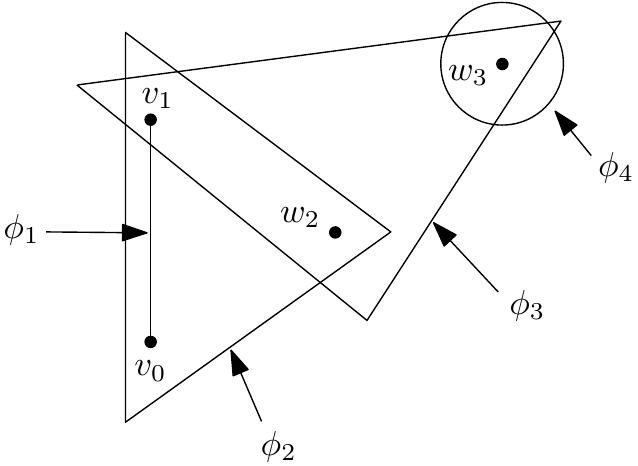}
  \end{center}
\caption{Construction of \Cref{thm:embed-polynomial} illustrated on input $p(x)=x^3-4x+5$, i.e. $d=3$. Then, $q(x,y)=x^3-4xy^2+5y^3$. Constraint $\phi_1$ is the equality constraint enforcing $\ket{v_1}\propto\ket{v_0}$. So, $\ket{v_0}=\ket{v_1}=[x,y]^T$. Next, we wish to enforce $\ket{w_2}=[x^2-4y^2,y^2]$. This is achieved via constraint $\phi_2$. Next, $\phi_3$ enforces $\ket{w_3}=[q(x,y),y^3$]. Finally, $\phi_4$ enforces \Cref{eq:final-constraint}. Observe this QSAT instance has an SDR: $(v_0,\phi_1),(v_1,\phi_2),(w_2,\phi_3),(w_3,\phi_4)$.}

\label{fig:univariatereduction}
\end{figure}

\vspace{-1mm}
\noindent\emph{Second set.} We next embed $q$ by recursively constructing a qubit with state $\ket{w_d} = (q(x,y),y^d)^T$.
The base case is $\deg(q)\leq 2$, i.e., $q(x,y) = c_2x^2 + c_1xy + c_0y^2$. 
Then, $\ket{w_2}=(c_2x^2 + c_1xy + c_0y^2, y^2)$ is constructed with a $3$-local constraint on $v_0$ and $v_1$ with $a_1=a_2=a_3=0$, $a_4=-1$, $b_1=c_2$, $b_2=c_1$,$b_3=0$, $b_4=c_0$.
For $\deg(q)>2$, we embed $q$ recursively, assuming that we can embed polynomials of degree $<d$.

For each step $t\geq 1$ of the recursion, let $j_t>0$ be minimal such that $c_{j_t} \ne 0$.
We construct polynomial $q_t(x,y)$ with degree $d_t$ defined as
\begin{equation}\label{eq:decomposition}
  q_t(x,y) := \sum_{i=0}^{d_t} c_{t,i}x^iy^{d_t-i} = x^{j_t}\cdot\underbrace{\sum_{i=j}^{d_t}c_{t,i}x^{i-j}y^{{d_t}-i}}_{r_t(x,y)} + c_{t,0}y^j\cdot y^{{d_t}-j}.
\end{equation}
Note that $t=1$ encodes our starting polynomial, i.e. $q_1(x,y):=q(x,y)$ of degree $d_1=d$. 
In timestep $t$, we recursively construct $\ket{w_{d_t-{j_t}}}:=(r_t(x,y), y^{d_t-j_t})^T$. 
(Note that $\ket{w_{d_t-j_t}}=0$ iff $x=y=0$.)
Given $\ket{w_{d_t-{j_t}}}$, we then construct $\ket{w_{d_t}}$ by adding a $3$-local constraint on $v_{j_t}$, $w_{d_t-j_t}$, and new qubit $w_{d_t}$ with $a_1=a_2=a_3=0$, $a_4=-1$, $b_1=1$, $b_2=b_3=0$, $b_4=c_{t,0}$ (as per \Cref{eq:3local}).

\medskip\noindent\emph{Third set.}
Thus far, our constraints force the ground space of our QSAT instance to encode $q(x,y)$. 
We need a final check to enforce this to correspond to a root for the original polynomial $p$.
For this, we add a $1$-local constraint $\ket{0}$ onto $w_d$, enforcing the equality
\begin{equation}\label{eq:final-constraint}
  \bmat{q(x,y)\\y^d} = \alpha \bmat{0\\1},
\end{equation}
where $\alpha$ is some non-zero constant of proportionality, which is \emph{a priori} unknown. 
The full construction is illustrated in \Cref{fig:univariatereduction}.

\medskip\noindent\emph{Correctness.} First, if there exists $x$ such that $p(x)=0$, $\ket{v_1}=(x,1)^T$ satisfies \Cref{eq:final-constraint}.
(All other constraints are immediately satisfied since they enforce the logic of the building blocks in \Cref{ssscn:blocks}.)
Conversely, consider some satisfying assignment to the set of QSAT constraints constructed.
It must necessarily also satisfy \eqref{eq:final-constraint} on qubit $w_d$ for some $\alpha \ne 0$.
Observe that $y\ne 0$, as otherwise $x=0$ as well (since $c_d\ne0$), which is not permitted for homogeneous coordinates.
Finally, since \eqref{eq:final-constraint} implies $q(x,y)=0$, we must have by homogeneity
\begin{equation}
  p\left(\frac xy\right) = q\left(\frac xy, \frac yy\right) = \frac{q(x,y)}{y^d}=0.
\end{equation}

\medskip
\noindent\emph{SDR.} To see that the constructed QSAT instance has an SDR, note first that we can trivially make it $3$-uniform by adding two ancilla qubits. Then, since all but the last recursive step of our construction simultaneously adds a new hyperedge and a new qubit, the system has an \emph{almost extending edge order} (defined later in \Cref{def:edgeorder}). The claim now follows from \Cref{cor:SDR}.
\end{proof}

\begin{rem}\label{rem:reduction}
   \Cref{thm:embed-polynomial} is not yet for sparse polynomials, but it will nevertheless be instructive to recall that in the definition of \SFTA, we focused on roots of polynomial $p\in\CC[x]$ in range $[0,1+2\log(d)/d]$, for $d$ the degree. 
   Given any root $x^*\in [0,1+2\log(d)/d]$ of $p$, the constructed QSAT instance of \Cref{thm:embed-polynomial} has a solution with 
   \begin{align}\label{eqn:prop}
      \ket{v_1}\propto(x^*,1)^T.
   \end{align} 
   The bounds $x^*\in [0,1+2\log(d)/d]$ now ensure $\enorm{(x^*,1)^T}$ is constant, so that the proportionality factor in \Cref{eqn:prop} is constant.
\end{rem}

We now proceed to showing the sparse version of \Cref{thm:embed-polynomial}, but first remark that \Cref{thm:embed-polynomial} suffices already to show NP-hardness results of slight variants of QSAT with SDR in \Cref{ssscn:NPhardness}.\\

\vspace{-1mm}
\noindent\emph{The sparse case.} The proof of the sparse case now proceeds analogously to the non-sparse case.

\thmembedsparse*
\begin{proof}
  The key observation is that in recursive step $t$ of \Cref{eq:decomposition}, we factor $x_{j_t}$ for $j_t>0$ the minimal value satisfying $c_{j_t}\neq 0$. 
  This implies the number of recursive calls scales with sparsity $s$, not degree $d$.  
  Thus, a construction analogous to \Cref{thm:embed-polynomial} can be used, except in the first set of constraints, we will need to construct $O(s\log d)$ terms $\ket{v_i}=(x^i,y^i)$, where $i$ can now be exponential in the input size. 
  This is easily handled by using square-and-multiply on qubits encoding the various $\ket{v_i}$ to obtain high powers $i$ using $\log(i)$ steps (similar to \Cref{ex:3local}).
\end{proof}

\subsubsection{Detour: NP-hardness results for slight variants of QSAT with SDR}\label{ssscn:NPhardness}

With \Cref{thm:embed-polynomial} (non-sparse case) in hand, we first immediately obtain an $\NP$-hardness result for a variant of QSAT with SDR.
This complements Goerdt's result that deciding whether there exists a \emph{real} product state solution is $\NP$-hard~\cite{goerdtMatchedInstancesQuantum2019}.

\thmEqual*
\begin{proof}
  This theorem follows from the $\NP$-hardness of deciding whether a sparse polynomial has a root of modulus $1$~\cite{Pla84}.
\end{proof}

\noindent Goerdt also shows that deciding whether a QSAT instance with an SDR and \emph{just one} additional constraint is $\NP$-hard. 
We can also recover this result here via our construction.

\thmAdded*
\begin{proof}
  Plaisted proves that it is $\NP$-hard to decide whether two sparse polynomials have a common root~\cite{Pla84}.
  We can embed this problem into \PS\ by adding a second adding a second polynomial in the above construction, which requires only a single unmatched edge.
\end{proof}

This stands in stark contrast to the classical setting, where deciding whether a CNF-SAT formula with an SDR and $O(1)$ additional clauses is still in P.
The following theorem generalizes a result due to Berman, Karpinski, and Scott~\cite{BKS07}, who prove that satisfiability of $(3,4^{(k)})$-SAT (i.e. a $3$-SAT instance in which $k$ variables occur $4$ times and the remaining variables $3$ times) is efficiently solvable.

\newcommand{\scrC}{\mathscr{C}}
\begin{theorem}\label{thm:BKS}
  Let $\scrC$ be the set of clauses of a SAT instance in CNF on $n$ variables $V$ such that there exists a subset $\scrC'\subseteq \scrC$ with an SDR, i.e. a perfect matching between $\scrC'$ and $V$.
  Satisfiability of $\scrC$ can be determined in time $(2n)^k\poly(n)$ for $k:=|\scrC|-n$.
\end{theorem}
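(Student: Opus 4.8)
The plan is to combine the folklore easiness of SAT with an SDR with a brute-force search over the obstruction caused by the $k$ ``extra'' clauses. Write $\scrC'' := \scrC\setminus\scrC'$, so $|\scrC''|=|\scrC|-n=k$, and let $M$ be the perfect matching witnessing that $\scrC'$ has an SDR. Recall the folklore argument: setting, in each clause of $\scrC'$, its $M$-representative literal to true produces a satisfying assignment of $\scrC'$ (the representatives are distinct variables, so no conflict arises). So the base case $k=0$ (where $\scrC=\scrC'$) is immediate, and everything hinges on coping with $\scrC''$.

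First I would set up the branching. Any satisfying assignment of $\scrC$ makes at least one literal true in each clause $C\in\scrC''$; since $C$ has at most $2n$ literals, there are at most $(2n)^k$ ways to designate, for every $C\in\scrC''$, a literal $\ell_C\in C$ that is to be true. Enumerate these. For a fixed designation, discard it if two designated literals conflict on a variable; otherwise it induces a partial assignment $\beta$ on a set $A$ of at most $k$ variables which already satisfies $\scrC''$. Substitute $\beta$ into $\scrC'$ (delete clauses satisfied by $\beta$; delete from the rest the literals on $A$, all falsified) to obtain a residual CNF $\scrE$ on $V\setminus A$. The correctness of the search is then the equivalence: $\scrC$ is satisfiable if and only if $\scrE$ is satisfiable for at least one designation --- the ``only if'' direction reads a consistent designation off a satisfying assignment of $\scrC$, the ``if'' direction extends a satisfying assignment of $\scrE$ by $\beta$.

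The heart of the proof is then to decide satisfiability of a single residual $\scrE$ in time $\poly(n)$, which would give the total bound $(2n)^k\poly(n)$. Here one exploits that $\scrE$ inherits a near-perfect matching: every clause of $\scrE$ is a reduction of some $C\in\scrC'$, and if $M(C)\notin A$ then the literal on $M(C)$ survives in the reduced clause, so the reductions of all such clauses again form an SDR on $V\setminus A$. The only ``orphan'' clauses of $\scrE$ --- those lacking an evident representative --- are the reductions of the at most $|A|\le k$ clauses $C$ with $M(C)\in A$. If any clause of $\scrE$ has become empty the branch is rejected; otherwise I would repeatedly apply unit propagation and pure-literal elimination (both polynomial-time and satisfiability-preserving), rejecting the branch if a contradiction is derived. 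The key lemma to establish is that a branch surviving these cleanups is always satisfiable, because the bounded ``matching defect'' created by fixing $A$ is absorbed once the orphan clauses have been dealt with: each orphan clause is either already satisfied, or has shrunk to a unit clause and is consumed by unit propagation, or exposes a variable freed up by a deleted clause which can be used to re-route the matching along an augmenting path, so that the surviving formula still carries an SDR on its remaining variables and the folklore argument applies.

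The main obstacle I expect is exactly this last lemma --- controlling the interaction between the substitution of $\beta$, the simplification steps, and the matching, i.e.\ showing the $\le k$ orphan clauses never cascade into an unbounded number of new orphans. The clean way to handle it is to carry a maximum matching of the current clause/variable bipartite graph as an invariant through the simplification: whenever a matched clause loses its representative, either it has already disappeared, or an augmenting path rematches it, or no augmenting path exists --- in which case Hall's condition fails on a set of clauses, and one argues unit resolution has already exposed unsatisfiability. Keeping the number of unmatched clauses at most $k$ throughout then yields the stated running time. This step is where the argument generalizes Berman--Karpinski--Scott~\cite{BKS07}, whose $(3,4^{(k)})$-SAT result is the special case in which the $k$ excess clause-occurrences arise from $k$ heavy variables.
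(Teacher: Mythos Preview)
Your branching step — enumerate a witness literal for each of the $k$ clauses in $\scrC''$ and substitute — is fine and is essentially what the paper does too. The gap is in how you dispose of the residual $\scrE$. Your ``key lemma'' (a branch that survives unit propagation and pure-literal elimination is satisfiable, because orphan clauses are absorbed via augmenting paths, and when no augmenting path exists ``unit resolution has already exposed unsatisfiability'') is false. Take $n=4$, $V=\{x,y,a_1,a_2\}$, $\scrC'=\{x\lor y,\ \neg x\lor y,\ x\lor\neg y\lor a_1,\ \neg x\lor\neg y\lor a_2\}$ with the obvious SDR $M$, and $\scrC''=\{\neg a_1,\ \neg a_2\}$, so $k=2$. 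The only branch sets $a_1=a_2=0$, and the residual is $\scrE=\{x\lor y,\ \neg x\lor y,\ x\lor\neg y,\ \neg x\lor\neg y\}$ on $\{x,y\}$: unsatisfiable, no empty clause, no unit clause, no pure literal, and Hall's condition indeed fails on three of the clauses — but unit resolution derives nothing. Your proposed invariant does not certify unsatisfiability here, so the algorithm as described would wrongly accept.

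What the paper supplies instead of your residual analysis is a short structural lemma: if $\scrC$ is satisfiable then there is a satisfying assignment $\phi$ admitting a \emph{surjective} witness function $w:\scrC\to V$ (every variable is the true literal of some clause). The proof is an iterative fix-up using the given SDR $M$: whenever some $x\notin\image(w)$, flip $\phi(x)$ so that the literal of $x$ in $M^{-1}(x)$ becomes true and set $w(M^{-1}(x))=x$; this cannot falsify any clause because $x$ was not a witness for anything, and it strictly increases the number of $C\in\scrC'$ with $w(C)=M(C)$, so it terminates. This surjectivity is exactly the missing hypothesis that lets the BKS07 enumeration go through for the $(2n)^k\poly(n)$ bound; the paper then defers to \cite{BKS07} for the remainder. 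Your sketch never produces this structural control over $\phi$, which is why the residual step collapses.
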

\begin{proof}
  This proof follows the same outline as ~\cite[Theorem 1]{BKS07}, but we need to give a different argument for the existence of a surjective witness function.
  Consider a satisfying assignment $\phi$ to $\scrC$ and define a \emph{witness function} $w:\scrC\to V$ such that for each $C\in\scrC$, the variable $x=w(C)$ occurs in $C$ and its literal evaluates to true under $\phi$, i.e., if $\phi(x)=1$, then $C$ contains the literal $x$, and otherwise $\neg x$.
  We argue that if $\scrC$ is satisfiable, then there exists a satisfying assignment with a surjective witness function.
  Let $\phi$ be a satisfying assignment with witness function $w$.
  If $w$ is surjective, we are done.
  Otherwise, there exists a variable $x\notin \image(w)$.
  Let $C\in\scrC'$ be the clause assigned to $x$ in the SDR.
  Create $\phi',w'$ by only changing $\phi(x)$ and $w(C)$ such that the literal of $x$ in $C$ evaluates to true and $w(C)=x$.
  $\phi'$ is still a satisfying assignment as $x\notin\image(w)$ and $\phi'(y)=\phi(y)$ for all $y\ne x$.
  Repeat until $\image(w)=V$, which takes at most $n$ iterations since each iteration increases the number of clauses $C\in\scrC'$ such that $w(C)$ is matched with $C$ in the SDR. The remainder of the proof is the same as \cite{BKS07}.
\end{proof}

\subsection{Is \MHS\ in \SFTA?}\label{sscn:MHSinSFTA}

We now ask the question --- could $\MHS\subseteq\SFTA$? 
In words, can the solutions of a low-degree multi-homogeneous system be mapped to the roots of a high-degree univariate polynomial? 
We conjecture $\MHS\not\subseteq \SFTA$, according to which no such efficient reduction should be possible. However, one can still show a non-trivial result in this direction --- we show that in the generic setting (\Cref{def:generic}), a low-degree multi-homogeneous system can be reduced to a single high-degree univariate polynomial $p$, where $p$ requires polynomial \emph{space} to compute.
Under the hood, this utilizes a clever lemma of Canny, which we first state.

\begin{lemma}[Canny's Lemma (Lemma 2.2 of \cite{cannyAlgebraicGeometricComputations1988})]\label{l:canny}
  Let $p_1$ through $p_n$ be homogeneous polynomials in variables $x_0,\ldots x_n$, with $D\leq d_1\cdots d_n$ isolated solution rays $(\alpha_{0,j},\ldots,\alpha_{n,j})$, $j=1,\ldots, D$. Let $N\leq D$ be the number of solution rays not at infinity, for example, with $\alpha_{0,j}\neq 0$. Then there is a univariate polynomial $q(x)$ of degree $N$, and rational functions $r_1(x),\ldots,r_n(x)$, such that every solution ray not at infinity is of form $(1,r_1(\theta), \ldots, r_n(\theta))$ for some root $\theta$ of $q(x)$. The polynomials $q(x)$ and $r_k(x)$ can be computed in polynomial space.
\end{lemma}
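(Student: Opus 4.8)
\emph{Approach.} The plan is to produce a \emph{rational univariate representation} (``geometric resolution'') of the finite part of the solution set, computed through multivariate resultants rather than Gr\"obner bases --- the latter route, while conceptually cleaner (work in the quotient algebra $\CC[x_1,\dots,x_n]/I$ and take the characteristic polynomial of multiplication by a generic linear form), does not respect the desired space bound, since Gr\"obner basis computation is $\mathrm{EXPSPACE}$ in the worst case. So instead I would start by dehomogenizing: setting $x_0=1$ sends the $N$ solution rays not at infinity to the finite set $S=\{(\alpha_{1,j}/\alpha_{0,j},\dots,\alpha_{n,j}/\alpha_{0,j})\}_{j=1}^N\subseteq\CC^n$ of common zeros of the $\widehat p_i(y):=p_i(1,y_1,\dots,y_n)$. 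Next, pick a separating linear form: choose $c=(c_1,\dots,c_n)$ so that $u(y):=\sum_k c_ky_k$ takes $N$ distinct values $\theta_1,\dots,\theta_N$ on $S$. The set of bad $c$ is contained in the union of the finitely many hyperplanes $\{c:u(s)=u(s')\}$ over distinct $s,s'\in S$, so a generic $c$ --- indeed one with small integer coordinates --- works. Define $q(x):=\prod_{j=1}^N(x-\theta_j)$, a squarefree polynomial of degree $N$; and, since $u$ separates $S$, for each coordinate $k$ let $g_k$ be the unique polynomial of degree $<N$ with $g_k(\theta_j)=(\alpha_{k,j}/\alpha_{0,j})\,q'(\theta_j)$, so that $r_k:=g_k/q'$ satisfies $r_k(\theta_j)=\alpha_{k,j}/\alpha_{0,j}$. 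This already gives the claimed normal form $(1,r_1(\theta),\dots,r_n(\theta))$ once $\theta$ is allowed to range over the roots of $q$.

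\emph{Computing $q$ and the $r_k$ without knowing $S$.} Introduce auxiliary variables $u_0,\dots,u_n$ and form the $u$-resultant $R(u_0,\dots,u_n):=\mathrm{Res}(p_1,\dots,p_n,\;u_0x_0+\cdots+u_nx_n)$. Up to an extraneous factor supported on solutions at infinity, $R$ factors as $\prod_j\bigl(u_0\alpha_{0,j}+\cdots+u_n\alpha_{n,j}\bigr)$; substituting $u_k\mapsto c_k$ for $k\ge1$ and $u_0\mapsto -x$ and dividing out the extraneous factor yields a scalar multiple of $q(x)$, while differentiating $R$ in $u_k$ before the substitution and then substituting recovers (a normalization of) the numerator $g_k$ via the classical derivative-of-resultant identity. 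To make the ``extraneous factor'' step rigorous in the presence of solutions at infinity, I would use Canny's \emph{generalized characteristic polynomial}: replace $p_i$ by $p_i+\varepsilon\,x_i^{d_i}$, regard the resultant as a polynomial in $\varepsilon$, and take its trailing coefficient; the perturbed system is generic, so no solution escapes to infinity, and the trailing coefficient isolates exactly the finite-solution factor.

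\emph{The space bound, and the main obstacle.} Each resultant here is the determinant of a Macaulay-type matrix whose rows and columns are indexed by monomials of bounded degree: its dimension $M$ is exponential in $n$, but every entry is either $0$ or a coefficient of one of the $p_i$ (or $\pm c_k$), hence computable from its $(\log M)$-bit index in polynomial space. The determinant of an $M\times M$ matrix with $b$-bit entries lies in $\mathrm{DET}\subseteq\mathrm{NC}^2\subseteq\mathrm{DSPACE}(\mathrm{polylog}(M)\cdot\mathrm{poly}(b))$, which is polynomial space when $M=2^{\mathrm{poly}(n)}$ and $b=\mathrm{poly}(n)$; the symbolic dependence on $\varepsilon$ and the subsequent operations --- extracting the trailing $\varepsilon$-coefficient, specializing $u$, taking the gcd with $q'$ to strip the extraneous factor, and dividing $g_k$ by $q'$ --- are polynomial-time in the (exponentially long) coefficient lists, hence again polynomial space. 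I expect the main obstacle to be precisely this interface between algebra and complexity: proving that the generalized characteristic polynomial genuinely isolates the degree-$N$ factor $q$ in \emph{all} degenerate configurations (multiple or positive-dimensional components at infinity), and simultaneously verifying that \emph{every} intermediate object stays implicitly computable --- never materializing the exponential-size Macaulay matrix, never invoking a Gr\"obner basis, and controlling bit-size blow-up through the $\varepsilon$-symbolic determinant.
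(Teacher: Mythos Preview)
The paper does not prove this lemma at all: it is stated as a black-box citation of Lemma~2.2 from Canny's paper~\cite{cannyAlgebraicGeometricComputations1988} and then invoked inside the proof of \Cref{thm:pspace}. So there is no ``paper's own proof'' to compare against.

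That said, your sketch is faithful to Canny's original argument. The $u$-resultant construction, the perturbation $p_i\mapsto p_i+\varepsilon x_i^{d_i}$ via the generalized characteristic polynomial to handle components at infinity, extracting the trailing $\varepsilon$-coefficient, and the polynomial-space determinant computation of the Macaulay matrix are exactly the ingredients Canny uses. Your identification of the main obstacle --- that the generalized characteristic polynomial genuinely isolates the finite solutions in degenerate configurations, and that every intermediate step stays within polynomial space without materializing exponential objects --- is also where Canny's paper does the real work. If you want to fill this in completely you should consult Canny's original paper directly, since the degeneracy analysis (especially when the unperturbed resultant vanishes identically and one must argue about which power of $\varepsilon$ carries the correct factor) is somewhat delicate and not something one can reconstruct from first principles in a paragraph.
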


We will also require two further tools from algebraic geometry (see, e.g., \cite{CLO05}).

\begin{definition}[Newton polytope (page 310 of \cite{CLO05})]
  Let $f\in\CC[x_1,\ldots, x_n]$ be such that 
  $
    f=\sum_{\alpha\in\ZZ^n_{\geq 0}}c_\alpha x^\alpha.
  $
  The Newton polytope of $f$ is $\operatorname{Conv}(\set{\alpha\in\ZZ^n_{\geq 0}\mid c_\alpha\neq 0}$).
\end{definition}

\begin{theorem}[Berstein-Khovanksii-Kushnirenko (BKK; theorem 5.4 of \cite{CLO05})]\label{thm:bkk}
  Given Laurent polynomials $f_1,\ldots f_n$ over $\CC$ with finitely many common zeroes in $(\CC^\ast)^n$, let $P_i$  be the Newton polytope of $f_i$ in $\RR^n$. Then the number of common zeroes of the $f_i$ in $(\CC^\ast)^n$ is bounded above by the $n$-dimensional mixed volume of $(P_1,\ldots, P_n)$ (Definition 4.11 of \cite{CLO05}). Moreover, for generic choices of the coeﬃcients in the
$f_i$, the number of common solutions is exactly the $n$-dimensional mixed volume of $(P_1,\ldots, P_n)$.
\end{theorem}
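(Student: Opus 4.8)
This is the classical Bernstein--Khovanskii--Kushnirenko theorem, so ``proving'' it here means reconstructing the standard argument; I would take the toric/intersection-theoretic route, which dovetails with the Chow-ring machinery already set up in \Cref{sscn:chow}. First I would build the geometric stage: let $P = P_1 + \dots + P_n \subseteq \RR^n$ be the Minkowski sum of the Newton polytopes, choose a smooth projective fan $\Sigma$ refining the normal fan of $P$, and let $X = X_\Sigma$ be the associated smooth projective toric variety, containing the dense torus $T = (\CC^{\ast})^n$. After translating each $P_i$ by a lattice point (which multiplies $f_i$ by a monomial unit on $T$ and hence does not change its zero set there), $P_i$ is the polytope of a globally generated line bundle $L_i$ on $X$, and $f_i$, written in the monomial basis indexed by the lattice points of $P_i$, becomes a global section $s_i \in H^0(X,L_i)$ whose zero divisor $D_i$ satisfies $D_i \cap T = \{f_i = 0\} \cap T$.

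The generic equality is then handled by Bertini plus a toric intersection-number computation. For a Zariski-generic choice of the coefficients of the $f_i$, Bertini's theorem on the smooth rational variety $X$ makes $D_1 \cap \dots \cap D_n$ a reduced zero-dimensional scheme, and -- the crux of the whole argument -- none of its points lies on the toric boundary $X \setminus T$: a torus-invariant stratum $O_\sigma$ of dimension $k$ meets $D_i$ in the zero locus of the facial subsystem $f_i^{\sigma}$, and for generic coefficients a system of $n$ equations restricted to a $k$-dimensional orbit closure with $k < n$ is inconsistent (a dimension count, using full-dimensionality of $P$). Hence the number of common zeros in $T$ equals the top intersection number $D_1 \cdots D_n = \int_X c_1(L_1)\cdots c_1(L_n)$, and by the combinatorial formula for the intersection of nef toric divisors (Bernstein--Kushnirenko; see \cite{fultonbook,3264book}) this equals $n!\,\mathrm{MV}(P_1,\dots,P_n)$, which is exactly the $n$-dimensional mixed volume in the normalization of \cite{CLO05}. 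This settles the ``Moreover'' clause.

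For the upper bound with arbitrary coefficients (subject to the hypothesis of finitely many solutions in $T$), I would embed the given system into a one-parameter family $f_{i,\lambda}$ that is generic for $\lambda \neq 0$ and specializes to the given system at $\lambda = 0$. Over a punctured disk the number of torus solutions is the constant value $n!\,\mathrm{MV}(P_1,\dots,P_n)$; by upper semicontinuity of the length of the relative scheme of solutions in the flat family, together with the assumption that the $\lambda = 0$ fiber has only finitely many torus solutions, no solutions are \emph{created} in the limit, so $\#\big(\{f_1 = \dots = f_n = 0\} \cap T\big) \le n!\,\mathrm{MV}(P_1,\dots,P_n)$. An equivalent route, closer to Bernstein's original proof and to the polyhedral-homotopy viewpoint of Huber--Sturmfels, is to perturb each coefficient of $f_i$ by a factor $\lambda^{\omega_i(\cdot)}$ for a generic height function $\omega$, expand the solutions as Puiseux series in $\lambda$, and show each branch is governed by a cell of the induced mixed subdivision of $P$, with the cell of mixed type $(P_1,\dots,P_n)$ contributing exactly the normalized volume of that cell; summing the contributions recovers the mixed volume.

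The main obstacle is the toric-boundary analysis in the generic case: showing that for generic coefficients the intersection $\bigcap_i D_i$ avoids every torus-invariant stratum, i.e. that all relevant facial subsystems are generically inconsistent. This is precisely where non-degeneracy enters, and it also forces care when some $P_i$ is not full-dimensional, where the mixed volume degenerates to zero and one must argue directly that there are then no torus solutions for generic coefficients. The semicontinuity step in the upper bound is routine but likewise reduces to the statement that ``escape to the boundary'' (equivalently, solutions running off to infinity in $X$) is the only mechanism by which solutions can be lost in the limit -- again the same facial non-degeneracy phenomenon.
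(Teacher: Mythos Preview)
The paper does not prove \Cref{thm:bkk} at all: it is stated as a citation of Theorem~5.4 of \cite{CLO05} and used as a black box in the proofs of \Cref{prop:help} and \Cref{thm:pspace}. So there is no ``paper's own proof'' to compare against. Your sketch is a recognizable outline of the standard toric/intersection-theoretic proof of BKK, and it is broadly correct as a sketch, but for the purposes of this paper none of it is needed---you should simply cite the result and move on.
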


We are now ready to prove the results of this section. 
\begin{proposition}\label{prop:help}
Let $(G,w)$ be a weighted hypergraph with a WSDR and such that $|V(G)|_w=|E(G)|$. Let $H$ be a generic instance of QSAT with underlying weighted hypergraph $(G,w)$. Then every product ground state of $H$ is of the form
\begin{equation}\label{eq:40}
  \ket{\psi_t}=(\ket{0}+t_{1,1}\ket{1}+\cdots t_{1,w(1)}\ket{w(1)})\otimes \cdots\otimes (\ket{0}+t_{|V(G)|,1}\ket{1}+\cdots t_{|V(G)|,w(|V(G)|)} \ket{w(|V(G)|)})
\end{equation}
with $t_{i,j}\neq 0$ for all $i=1,\ldots,|V(G)|$, and $j=1,\ldots,w(i)$.
\end{proposition}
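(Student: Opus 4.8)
The plan is to show two things: first, that a generic instance with $|V(G)|_w = |E(G)|$ and a WSDR has *finitely many* product ground states, all of which are *isolated*; and second, that generically none of the homogeneous coordinates $t_{i,j}$ (including the ``$0$-th'' coordinate, which we have normalized to $1$) can vanish at such a solution. For the first part, I would invoke \Cref{thm:wsdr-product-solution} together with \Cref{cor:number-of-solutions}: since $|V(G)|_w = |E(G)|$, the observation following \Cref{cor:number-of-solutions} gives that $\prod_i[V_i] = N H_1^{d_1-1}\cdots H_n^{d_n-1}$ for the B\'ezout number $N\ge 1$, so by \Cref{f:subvarietyrep} the generic intersection $V_1\cap\cdots\cap V_m$ consists of exactly $N$ points in $\mathcal X_{d_1,\ldots,d_n}$ — in particular it is finite and every product ground state is a geometrically isolated point of $\mathcal X_{d_1,\ldots,d_n}$.

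For the second part, the key is a genericity argument over the coefficients of the clauses. Fix a vertex $i$ and a coordinate index $j\in\{1,\ldots,w(i)\}$ (or $j=0$); I claim that the set of instances having a product ground state whose $i$-th qudit factor $\ket{\varphi_i}$ has vanishing $j$-th amplitude is contained in a proper subvariety of the space of coefficients. To see this, observe that requiring $\langle\varphi_i|j\rangle = 0$ is the same as confining $\ket{\varphi_i}$ to the hyperplane $\mathbb P^{w(i)-1}(\CC)\subseteq\mathbb P^{w(i)}(\CC)$, i.e.\ replacing the $i$-th qudit by one of dimension $d_i-1$. Under this substitution the hypersurfaces $V_\ell$ restrict to hypersurfaces on the smaller multiprojective space $\mathcal X' := \mathbb P^{d_1-1}\times\cdots\times\mathbb P^{d_i-2}\times\cdots\times\mathbb P^{d_n-1}$, and the corresponding weighted hypergraph is $(G,w')$ with $w'(i) = w(i)-1$. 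But now $|V(G)|_{w'} = |E(G)| - 1 < |E(G)|$, so by the remark after \Cref{thm:HMT} the necessary condition for a WSDR fails; hence by \Cref{thm:wsdr-product-solution} (and \Cref{f:subvarietyrep}) the generic instance restricted to $\mathcal X'$ has $[V_1']\cdots[V_m']=0$, so $V_1'\cap\cdots\cap V_m' = \emptyset$ for almost all coefficient choices. Since there are only finitely many pairs $(i,j)$, intersecting the (finitely many) nonzero ``bad-locus'' polynomials with the genericity polynomial already needed to make the intersection finite gives a single nonzero polynomial $g$ in the coefficients such that whenever $g\ne 0$, every product ground state has all $t_{i,j}\ne 0$; after rescaling each factor so its $0$-th amplitude equals $1$ (legitimate precisely because $t_{i,0}\ne 0$), we obtain the claimed form~\eqref{eq:40}.

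The main obstacle I anticipate is making the restriction-to-a-hyperplane argument fully rigorous at the level of genericity, i.e.\ ensuring that ``the generic instance whose $i,j$-coordinate can vanish at a common solution'' is genuinely cut out by a nonzero polynomial in the original coefficients rather than merely being ``small.'' The clean way to handle this is to note that the incidence variety $\{(\text{coefficients},\, \text{point of }\mathcal X'): \text{point is a common zero}\}$ projects to the coefficient space, and since the generic fibre over $\mathcal X'$ is empty (by the B\'ezout/Chow-ring count above), the image is a proper closed subvariety; taking any defining polynomial of it (or of its closure) supplies the needed factor of $g$. A secondary, more minor point is to confirm that the finitely-many-solutions hypothesis of \Cref{cor:number-of-solutions} is genuinely implied here rather than assumed — but this is exactly what \Cref{f:subvarietyrep} delivers once \Cref{eqn:equality} holds, which in turn follows from $|V(G)|_w = |E(G)|$ as in the Observation after \Cref{cor:number-of-solutions}.
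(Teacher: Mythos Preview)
Your argument is correct but follows a genuinely different route from the paper. The paper's proof invokes the Bernstein--Khovanskii--Kushnirenko theorem (\Cref{thm:bkk}): it computes the Newton polytope of each dehomogenized constraint $p_e(t)$, shows that the mixed volume of these polytopes equals the number $N(G,w)$ of WSDRs, and hence (by BKK) that a generic instance has exactly $N(G,w)$ solutions in the algebraic torus $(\CC^*)^{|V(G)|_w}$, i.e.\ with all $t_{i,j}\neq 0$. Since $N(G,w)$ is also the B\'ezout number of the multi-homogeneous system, these torus solutions already saturate the total count, so there are no further solutions on any coordinate hyperplane.

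Your approach instead dispenses with BKK entirely and works hyperplane-by-hyperplane: restricting to $t_{i,j}=0$ drops the weight at vertex $i$ by one, which forces $|V(G)|_{w'}<|E(G)|$ and hence (via \Cref{thm:HMT} and the Chow-ring computation in the proof of \Cref{thm:wsdr-product-solution}) forces the restricted intersection to be generically empty. Because the restricted coefficients are literally a subset of the original ones, the genericity polynomial you obtain is automatically nonzero in the full coefficient space, and the incidence-variety machinery you worry about is not really needed. Your argument is more elementary in that it stays entirely within the Chow-ring framework already set up in \Cref{sscn:chow}, at the cost of iterating over all $(i,j)$; the paper's BKK argument handles all coordinate hyperplanes in one stroke via the torus count, but imports an external tool. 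Both yield the same genericity statement.
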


\begin{proof}
Let $H_e$ be the clause corresponding to $e\in E(G)$ and consider the multivariate polynomial $p_e(t)$ in the variables $t_{i,j}$ such that $p_e(t)=0$ if and only if $H_e\ket{\psi_t}=0$. The Newton polytope $Q_e$ of $p_e$ is the product of simplices of dimension $w(i)$, one for each vertex $i\in e$. Hence for $\lambda_e>0$, $e\in E(G)$,
\begin{align}
V\left( \sum_{e\in E(G)} \lambda_e Q_e\right) = \prod_{i\in V(G)} \frac{\left(\sum_{v\in e} \lambda_e\right)^{w(i)}}{w(i)!} = N(G,w) \prod_{e\in E(G)}\lambda_e+\textrm{lower order terms}
\end{align}
where $N(G,w)$ is the number of WSDRs of $(G,w)$. On the other hand, by definition, $N(G,w)$ is the mixed volume of the polytopes $Q_e$, $e\in E(G)$. Therefore, by the BKK theorem (\Cref{thm:bkk}), there are $N(G,w)$ product solutions of the form \eqref{eq:40} with all $t_{i,j}\neq 0$. But since $N(G,w)$ is also equal to the B\'ezout number of the multi-homogeneous system associated with $H$, we conclude that, generically, this accounts for {\it all} product solutions of $H$.
\end{proof}

We can now show that, generically, QSAT with SDR can be reduced in polynomial space to solving for the roots of a single high degree univariate polynomial.

\begin{theorem}\label{thm:pspace}
Let $(G,w)$ be a weighted hypergraph with a WSDR and such that $|V(G)|_w=|E(G)|$. Let H be a generic instance of QSAT with underlying weighted hypergraph $(G,w)$. Then there is a univariate polynomial $q(x)$ of degree at most 
\begin{align}
D=\prod_{e\in E(G)}|e|
\end{align}
and rational functions $r_{i,j}(x)$ for every $i=1,\ldots,|V(G)|$ and $j=1,\ldots,w(i)$ such that if $x$ is a root of $q$ and
\begin{align}
r(x)=\prod_{i=1}^{|V(G)|}\prod_{j=1}^{w(i)} r_{i,j}(x)\neq 0
\end{align}
then
\begin{align}
(\ket{0}+r_{1,1}(x)\ket{1}+\cdots +r_{1,w(1)}\ket{w(1)})\otimes \cdots \otimes (\ket{0}+r_{|V(G)|,1}(x)\ket{1}+\cdots+r_{|V(G)|,w(|V(G)|)}\ket{w(|V(G)|)})
\end{align}
is a product solution of $H$. Conversely, every product solution is of this form for some root $x$ of $q$ such that $r(x)\neq 0$. Moreover, $q(x)$ and all the rational functions $r_{i,j}(x)$ can be calculated in polynomial space.
\end{theorem}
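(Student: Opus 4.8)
The plan is to combine Canny's Lemma (\Cref{l:canny}) with the genericity structure established in \Cref{prop:help}. First I would set up the polynomial system: to the generic instance $H$ with hypergraph $(G,w)$ satisfying $|V(G)|_w = |E(G)|$, associate the multi-homogeneous system $\{p_e(t)\}_{e\in E(G)}$ in variables $t_{i,j}$ ($i=1,\dots,|V(G)|$, $j=1,\dots,w(i)$), exactly as in \Cref{prop:help}, where $p_e(t)=0$ iff $H_e\ket{\psi_t}=0$. To feed this to Canny's Lemma, I would homogenize: introduce a homogenizing variable $x_0$ and regard the system as $n := \sum_i w(i) = |E(G)|$ homogeneous polynomials in the $n+1$ variables $x_0, (t_{i,j})$. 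The number of variables here equals the number of equations plus one, so Canny's setup applies. The degree bound $D = \prod_{e\in E(G)} |e|$ arises because each $p_e$ is multilinear of degree $|e|$ (degree $1$ in each block of variables corresponding to a vertex of $e$), so $d_1\cdots d_n$ in Canny's bound is at most $\prod_{e} |e|$; hence the univariate polynomial $q(x)$ produced by \Cref{l:canny} has degree $N \le D$, and $q$ together with rational functions $r_k(x)$ (reindexed as $r_{i,j}(x)$) are poly-space computable.

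Next I would invoke Canny's conclusion: every isolated solution ray not at infinity (i.e.\ with $x_0\neq 0$, which after scaling we normalize to $x_0 = 1$) has the form $(1, r_{1,1}(\theta),\dots, r_{|V(G)|,w(|V(G)|)}(\theta))$ for some root $\theta$ of $q$. Translating back through the correspondence $t \mapsto \ket{\psi_t}$ of \Cref{eq:40}, such a solution ray with $r(\theta) := \prod_{i,j} r_{i,j}(\theta) \ne 0$ yields precisely the claimed product state, and it is a product ground state of $H$ because $p_e(\theta$-values$)=0$ for all $e$. Conversely, by \Cref{prop:help}, since $H$ is generic, \emph{every} product ground state of $H$ is of the form $\ket{\psi_t}$ with all $t_{i,j}\neq 0$ — in particular it corresponds to a solution ray with $x_0 \neq 0$, and the number of such solutions is exactly the B\'ezout number $N(G,w)$ (equivalently the mixed volume, by BKK). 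So every product solution is captured by some root $\theta$ of $q$ with $r(\theta)\neq 0$, which gives the converse direction. The poly-space claim is inherited verbatim from the last sentence of \Cref{l:canny}.

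The main obstacle I anticipate is the bookkeeping at infinity and the exact matching between ``solution rays not at infinity'' in Canny's sense and ``product ground states'' in the QSAT sense. Canny's Lemma counts isolated solution rays; I need to rule out positive-dimensional components and verify that the generic hypothesis (via \Cref{prop:help} and BKK) forces all relevant solutions to be isolated and to have $x_0\neq0$ — otherwise the rational functions $r_{i,j}$ could have poles exactly at the roots we care about, and the caveat $r(x)\neq 0$ would be vacuous or lossy. The other delicate point is ensuring the degree accounting $N \le D = \prod_{e}|e|$ is correct once one passes from the multi-homogeneous (multilinear) structure to the single-grading homogeneous form required by Canny; here one uses that the product $d_1\cdots d_n$ over the $n$ homogenized equations is bounded by $\prod_e |e|$ since each equation $p_e$ has total degree $|e|$. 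I would handle the infinity issue by appealing directly to \Cref{prop:help}: genericity of $H$ guarantees all product solutions live in the affine chart $\{x_0\neq 0\}$ and are isolated (there are exactly $N(G,w)<\infty$ of them), so Canny's $N$ solution rays not at infinity are in bijection with them, and no pole cancellation concerns remain.
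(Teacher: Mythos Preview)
Your proposal is correct and follows essentially the same route as the paper's proof: homogenize the affine system $\{p_e(t)\}$ by a single extra variable, apply Canny's Lemma (\Cref{l:canny}) to obtain $q$ and the $r_{i,j}$ with the degree bound $D=\prod_e |e|$, and use \Cref{prop:help}/BKK to identify the genuine product solutions as exactly those roots with all $r_{i,j}\neq 0$. The paper's argument is slightly terser on the isolation and ``at infinity'' bookkeeping you flag, but the substance is identical.
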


\begin{proof} Consider product solutions of $H$ of the form 
\begin{equation}
\ket{\psi_t}=(\ket{0}+t_{1,1}\ket{1}+\cdots t_{1,w(1)}\ket{w(1)})\otimes \cdots\otimes (\ket{0}+t_{|V(G)|,1}\ket{1}+\cdots t_{|V(G),w(|V(G)|)} \ket{w(|V(G)|)}).
\end{equation}
 Let $P_e$ be the homogenization of $p_e$ obtained by adding the single variable $t_0$ so that $P_e=0$ defines a hypersurface $X_e$ of degree $|e|$ in $\mathbb P^{|V(G)|_w}$. By Canny's Lemma (\Cref{l:canny}), there is a polynomial $q(x)$ of degree $N\le D$ and rational functions $r_{i,j}(x)$ for every $i=1,\ldots,|V(G)|$ and $j=1,\ldots,w(i)$ such that every point in $(\bigcap_{e\in E(G)} X_e)\setminus \{t_0=0\}$ has coordinates $t_0=1$ and $t_{i,j}=r_{i,j}(x)$ whenever $x$ is a root of $q(x)$. Then $r_{i,j} (x)=0$ for some $i$ and $j$ if and only if the corresponding element of $\bigcap_{e\in E(G)} X_e$ belongs to one of the coordinate planes and thus represent a ``spurious'' solution in the sense that the corresponding product state $\eqref{eq:40}$ is not a solution of $H$ (since by the BKK Theorem (\Cref{thm:bkk}) all product solutions satisfy the additional condition $t_{i,j}\neq 0$ for all $i$ and $j$). The last statement of the claim follows directly from Canny's Lemma. 
\end{proof}

\begin{rem}
When $w=1$ (so that all qu-$d$-its are qubits), we can be more precise about the degree $N$ of $q(x)$. By Canny's Lemma, $D-N$ is the number of points in the intersection of $\bigcap_{e\in E(G)} X_e$ with the hyperplane at infinity. On the other hand, setting $t_0=0$ drastically reduces the polynomial $P_e$ to $\prod_{i\in e} t_i$. Let $f$ be the Boolean function in CNF form with all positive literals and underlying hypergraph $G$. If $n$ denotes the number of satisfying assignments of $f$, then $N=D-n+1$.  
\end{rem}



\section{Efficiently solvable special cases of QSAT with WSDR}\label{scn:algs}

We next give parameterized classical algorithms for QSAT with SDR, which allow for efficient solutions in special cases. 

\vspace{-1mm}
\paragraph{Brief overview of techniques.} We first briefly sketch the ideas for two of our three algorithms for special cases of QSAT with (W)SDR. The first algorithm we discuss, which solves \emph{non-generic} \PS\ instances (\Cref{thm:algInformal}), begins with the same approach as~\cite{AdBGS21}. At a high-level, this approach takes the qubits comprising the hard ``core'' of the instance, sets these qubits in a specific manner so as project onto a smaller space, and subsequently forces assignments onto all other qubits via transfer functions. 
This approach breaks down in the non-generic case, which can have unentangled constraints that can prevent this propagation of assignments. The classical analogue to this problem can be seen with constraint $x\vee y$: When $x=1$, the constraint is already satisfied, and thus no assignment is propagated onto $y$. (Note all such classical SAT constraints are unentangled when embedded into QSAT.) 
To overcome this, the key idea we introduce is that, when this algorithm gets stuck, we prove that we can actually \emph{recurse} the entire process, as its existing ``almost extending order'' remains valid. 

The second algorithmic contribution we discuss is our extension of using \emph{transfer filtrations} (the framework enabling transfer functions) to QSAT on qu\emph{d}its.
This requires a careful arrangement of clauses into a convenient order (exploiting the geometry of the instance) so as to reduce the problem to a system with fewer equations in fewer variables. The trade-off is that the degree of the resulting equations can be rather large. Nevertheless, we show that for certain non-trivial infinite families of interaction hypergraphs, such as the Pinwheel graph (\Cref{fig:pinwheel}), we can efficiently solve the corresponding instance of \PS, exponentially outperforming the brute force approach.

\vspace{-1mm}
\paragraph{Organization.}
\Cref{sscn:solvableprelims} introduces necessary definitions and lemmas.
\Cref{sscn:nonGeneric} solves \emph{non-generic} special cases of QSAT on qubits with an SDR; this improves on \cite{AdBGS21}, which worked only for generic instances.
\Cref{sscn:solvingtransferk} returns to the generic setting with SDR, but instead widens the class of qubit QSAT instances one can efficiently solve generically beyond \cite{AdBGS21}.
\Cref{sscn:wtransfer} shows how to extend the transfer filtration technique of \cite{AdBGS21} from qu\emph{b}its to qu\emph{d}its and WSDRs, solving the Pinwheel graph in \Cref{ssscn:efficient} exponentially faster than via brute force.

\subsection{Transfer functions, filtrations, and extending edge orders}\label{sscn:solvableprelims}

We begin by restating the notion of transfer functions for convenience:

\lemTransferFunction*

\paragraph{Transfer filtrations.} In the qubit setting, \cite{AdBGS21} efficiently solves QSAT with SDR
for \emph{generic} instances of \emph{transfer type} $b=n-m+1$ (Definition~\ref{def:transfer-type} below), where $m$ denotes the number of constraints and $n$ the number of qubits.
This transfer type restriction is important, as it allows~\cite{AdBGS21} to reduce the entire QSAT with SDR instance to approximating a root of a single univariate polynomial.
Note also the algorithm is parameterized, i.e. its runtime is polynomial in the input size but exponential in the \emph{foundation size} (\Cref{def:transfer-type}) and \emph{radius} (Definition \ref{def:radius}).

We begin by stating the required definitions, and give intuition as to why transfer type $b=n-m+1$ allows reductions to the univariate polynomial case in~\cite{AdBGS21}.
We first recall the definition of a \emph{transfer filtration}, which is a particular type of hyperedge ordering useful for solving \PS.

\begin{definition}[Transfer filtration \cite{AdBGS21}]\label{def:transfer-type}
A hypergraph $G=(V,E)$ is of {\it transfer type $b$} if there exists a chain of subhypergraphs (denoted a {\it transfer filtration of type $b$})  $G_0\subseteq G_1\subseteq \cdots \subseteq G_m=G$ and an ordering of the edges $E(G)=\{e_1,\dots,e_m\}$ such that
\begin{enumerate}[label=(\arabic*)]
\item $E(G_i)=\{e_1,\ldots,e_i\}$ for each $i\in \{0,\ldots,m\}$,
\item $|V(G_i)|\le |V(G_{i-1})|+1$ for each $i\in \{1,\ldots,m\}$,
\item if $|V(G_i)|= |V(G_{i-1})|+1$, then $V(G_i)\setminus V(G_{i-1}) \subseteq e_i$,
\item $|V(G_0)|=b$, where we call $V(G_0)$ the \emph{foundation},
\item and each edge of $G$ has at least one vertex not in $V(G_0)$.
\end{enumerate}
\end{definition}

\begin{definition}[Radius of transfer filtration \cite{AdBGS21}]\label{def:radius}
Let $G$ be a hypergraph admitting a transfer filtration $G_0\subseteq \cdots \subseteq G_m=G$ of type $b$. Consider the function $r:\{0,\ldots,m\}\to \{0,\ldots,m-1\}$ such that $r(0)=0$ and $r(i)$ is the smallest integer such that $|e_i\setminus V(G_{r(i)})|= 1$ $\forall i\in\{1,\ldots,m\}$. The {\it radius of the transfer filtration $G_0\subseteq \cdots \subseteq G_m=G$ of type $b$} is the smallest integer $\beta$ such that $r^\beta(i)=0$ for all $i\in \{1,\ldots,m\}$ ($r^\beta$ denotes composition of $r$ with itself $\beta$ times). The {\it type $b$ radius of $G$} is the minimum value $\rho(G,b)$ of $\beta$ over the set of all possible transfer filtrations of type $b$ on $G$.
\end{definition}

\noindent\emph{Intuition.} We can view the transfer filtration as a sequence of edges wherein each edge adds at most one extra node, as enforced by condition (2) above.
The foundation is made up by all but one of the vertices in edge $e_1$.
Then transfer type $b=n-m+1$ implies that $n=b+m-1$ and thus one edge does not add an additional vertex (i.e. $V(G_i) = V(G_{i+1})$ in (2)).
Note, given a product assignment to the qubits in $V(G_{i-1})$, we can satisfy the constraint of edge $e_i$ using the corresponding transfer function (see Lemma~\ref{lem:transfer-function}).
This leaves a single \emph{non-extending} constraint that does not add a new qubit, and thus cannot immediately be satisfied.
To solve the system, assign the foundation qubits $v_1=\dotsm=v_{b-1}=\ket{0}$, and $v_b=\ket{0} +x\ket{1}$.
The transfer functions then set each qubit to a polynomial expression in $x$.
Satisfying the non-extending constraint then reduces to finding a root of a univariate polynomial of degree exponential in the radius.
Note, the above algorithm outline does not quite match~\cite{AdBGS21}, where qubits are duplicated so that every edge adds a new qubit and then equality of copies is enforced via \emph{qualifier constraints}.

\paragraph{Extending edge order.}
As outlined above, the transfer filtration gives us an order of the constraints that we can use to solve the system.
We formalize this notion by defining the \emph{extending edge order}, which turns out to be equivalent to the transfer filtration, but is useful in handling vanishing transfer functions algorithmically.

\begin{definition}[Extending Edge Order]\label{def:edgeorder}
  Let $G = (V, E)$ be a hypergraph.
  An edge order $e_1,\dots,e_m$ is \emph{extending} if $e_i\setminus V_{i-1} \ne \emptyset$ for $i\in [m]$, where $V_i \coloneqq \bigcup_{j=1}^i V(e_i)$ and $V_0=\emptyset$.
  We say the order is $a$-\emph{almost extending} if $|\{i:V_i = V_{i-1}\}|\le a$.
  We say it is \emph{almost extending} if $a=1$.
\end{definition}

\begin{lemma}\label{lem:transfer-type-almost-extending}
  Let $G=(V,E)$ be a hypergraph, $b^*$ its minimum transfer type and $a^*$ minimal such that $G$ has an $a^*$-almost extending edge order.
  Then $b^* = n-m+a^*$.
\end{lemma}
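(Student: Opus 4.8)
Write $n=|V|$, $m=|E|$, and for an edge order $e_1,\dots,e_m$ set $V_i:=\bigcup_{j\le i}V(e_j)$ with $V_0=\emptyset$. I will prove the two inequalities $b^*\ge n-m+a^*$ and $b^*\le n-m+a^*$ separately.

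\emph{The easy direction, $b^*\ge n-m+a^*$.} Fix a transfer filtration $G_0\subseteq\cdots\subseteq G_m=G$ of type $b^*$ with edge order $e_1,\dots,e_m$. Since $|V(G_0)|=b^*$, $|V(G_m)|=n$, and each step increases the vertex count by at most one, exactly $n-b^*$ of the $m$ steps increase it, hence $m-n+b^*$ of them do not. Moreover, if step $i$ adds a vertex $w$, then condition~(3) of \Cref{def:transfer-type} gives $w\in e_i$, while $w\notin V(G_{i-1})\supseteq V_{i-1}$, so $w\in e_i\setminus V_{i-1}$ and $V_i\supsetneq V_{i-1}$; contrapositively, every non-extending edge of this order sits at a non-increasing step. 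Hence the order has at most $m-n+b^*$ non-extending edges, i.e.\ it is $(m-n+b^*)$-almost extending, so $a^*\le m-n+b^*$, which is the claim.

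\emph{The hard direction, $b^*\le n-m+a^*$.} Put $p^*:=m-a^*$, the maximum over all edge orders of the number of extending edges; the extending edges of an optimal order form a set $S\subseteq E$ with $|S|=p^*$ that is \emph{growth-orderable}, i.e.\ its edges can be listed $f_1,\dots,f_{p^*}$ so that $N_i:=f_i\setminus(f_1\cup\cdots\cup f_{i-1})\neq\emptyset$ for all $i$. Maximality forces $\bigcup S=\bigcup E$ (otherwise an edge containing a vertex outside $\bigcup S$ could be appended to this listing, enlarging $S$). List the edges of $G$ by placing $f_1,\dots,f_{p^*}$ first and the edges of $E\setminus S$ afterwards in any order; then $e_1,\dots,e_{p^*}$ are precisely the extending edges, the sets $N_1,\dots,N_{p^*}$ are pairwise disjoint, and $\bigcup_i N_i=\bigcup E$. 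Pick a \emph{designated vertex} $v_i\in N_i$ for each $i$, set $D:=\{v_1,\dots,v_{p^*}\}$, and define $V(G_0):=V\setminus D$, $V(G_i):=V(G_0)\cup V_i$, and $E(G_i):=\{e_1,\dots,e_i\}$. Conditions~(1)--(4) of \Cref{def:transfer-type} are then immediate: $|V(G_0)|=n-p^*=n-m+a^*$; for $i\le p^*$ one has $V(G_i)\setminus V(G_{i-1})=N_i\setminus V(G_0)=\{v_i\}\subseteq e_i$; and for $i>p^*$, $e_i\subseteq\bigcup E=\bigcup S\subseteq V_{i-1}$, so $V(G_i)=V(G_{i-1})$.

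\emph{The remaining point --- and the main obstacle.} What is left is condition~(5): every edge of $G$ must contain a vertex outside $V(G_0)$, i.e.\ a vertex of $D$. For the extending edges this is automatic since $v_i\in e_i$; the crux is to choose the designated vertices $v_i\in N_i$ so that $D$ also meets every edge of $E\setminus S$ --- equivalently, so that the transversal $D$ of $\{N_1,\dots,N_{p^*}\}$ is a hitting set for $E$. The template is the graph case: there a maximum extending set is a spanning forest, and if one roots each tree and designates every non-root vertex, then every edge joins two vertices of a single component, a component has a unique root, so every edge contains a designated vertex. For general hypergraphs one must argue that a \emph{maximum} extending set still admits such a transversal; maximality is essential, since for a non-maximum extending set the statement can fail (e.g.\ taking $S$ to be two disjoint edges inside $K_4$). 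The idea is that any $e\in E\setminus S$ cannot be inserted into a growth-ordering of $S$, so $e$ is ``covered by $S$ with no slack to spare'', and this lets one reroute $D$ through a vertex of $e$ while keeping all other edges hit. I expect this hitting-set argument for maximum extending sets to be the technical heart of the proof; with it in hand, all of conditions~(1)--(5) hold, so $(G_i)$ is a transfer filtration of type $n-m+a^*$ and $b^*\le n-m+a^*$.
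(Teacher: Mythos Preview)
Your setup is right and the easy direction is complete. The gap you flag in the hard direction is real and is the only missing piece, but your ``rerouting'' picture points in a harder direction than necessary. You do not need to search for a clever transversal $D$: \emph{every} choice of designated vertices $v_i\in N_i$ already works, by the minimality of $a^*$.

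Here is the argument. Suppose some non-extending $e\in E\setminus S$ misses $D$, i.e.\ $e\subseteq V\setminus D=\bigcup_i (N_i\setminus\{v_i\})$. Let $j$ be minimal with $e\subseteq R_j:=\bigcup_{l\le j}(N_l\setminus\{v_l\})$. Insert $e$ into the growth-ordering between $f_{j-1}$ and $f_j$. By minimality of $j$, $e$ contains a vertex of $N_j\setminus\{v_j\}\subseteq V\setminus V_{j-1}$, so $e$ becomes extending. The edge $f_j$ still has $v_j\notin V_{j-1}\cup e$ (because $v_j\notin R\supseteq e$), so $f_j$ remains extending. For $l>j$ the prefix gains only $e\subseteq R_j\subseteq V_j\subseteq V_{l-1}$, so $f_l$ is unaffected. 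The new order therefore has one more extending edge than $S$, contradicting the minimality of $a^*$. This is precisely the paper's proof of condition~(5).

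So the correct logic is the contrapositive of your intuition: rather than ``$e$ cannot be inserted, hence $D$ can be rerouted through $e$'', argue ``if $D$ misses $e$, then $e$ \emph{can} be inserted''. The latter is a three-line check once you pick $j$ as above; the former would require a simultaneous system-of-representatives argument across all non-extending edges, which is unnecessary here.
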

\begin{proof}
  First, show $a^* \le b^* - n + m$ by constructing an $a$-almost extending order given a transfer filtration of type $b=n-m+a$.
  Let $G_0\subseteq\dotsm\subseteq G_m=G$ be a transfer filtration of type $b$.
  By Definition~\ref{def:transfer-type}, $E(G_i) = \{e_1,\dots,e_i\}$.
  Let $V_i = \bigcup_{j=1}^i e_j$.
  Then $V(G_i) = V_i \cup V(G_0)$.
  We have $n = b+m-a$.
  So if $a=0$, every edge must cover one additional vertex and $e_1,\dots,e_m$ is an extending edge ordering.
  If $a>0$, then there are exactly $a$ edges that do not cover a new vertex, since one edge can add at most one new vertex.
  Let $i_1<\dotsm<i_{m-a}$ the indices of edges that add a new vertex (i.e. $\abs{V(G_i)}= \abs{V(G_{i-1})} = 1$), and $j_1<\dotsm <j_a$ the indices of the remaining edges (i.e. $V(G_i) = V(G_{i-1})$).
  Note, by definition $e_1$ always adds at least one vertex.
  Then $e_{i_1},\dots,e_{i_{m-a}},e_{j_1},\dots,e_{j_a}$ is $a$-almost extending.

  Second, we show $b^* \le n - m + a^*$.
  Let $e_1,\dots,e_m$ be an $a^*$-almost extending order.
  Without loss of generality, $e_1,\dots,e_{m-a^*}$ are extending.
  Define vertices $u_1,\dots,u_{m-a^*}$ such that $u_i\in e_i\setminus V_{i-1}$.
  Then we argue a valid foundation is given by the ``redundant vertices'' $R \coloneqq \bigcup_{i=1}^{m-a^*} (e_i\setminus V_{i-1}\setminus \{u_i\})$.
  Hence, the transfer filtration is defined with $V(G_0) = R$ and $E(G_i) = \{e_1,\dots,e_i\}$.
  The transfer type is then $b = |R| = n - m + a^*$.
  Conditions (1) to (4) are satisfied by construction.
  For condition (5) we have to show that $e\nsubseteq R$ for all $e\in E$.
  For an extending edge $e_i$, we have $u_i\notin R$, because $u_i\notin V_{j<i}$ and $u_i\in V_{j\ge i}$, and thus $u_i\notin R$.
  For a non-extending edge $e_i$, we argue that that $e_i\subseteq R$ would violate minimality of $a^*$:
  Suppose there exists a minimal $j$ such that $e_i \subseteq R_j \coloneqq \bigcup_{i=1}^{j} (e_i\setminus V_{i-1}\setminus \{u_i\})$.
  Then we could construct a new ($a^*-1$)-almost extending edge order by moving $e_i$ in between $e_{j-1}$ and $e_j$.
  Then $e_i$ would be extending because it contains at least one of the ``redundant vertices'' of $e_j$ and $e_j$ is still extending as it adds $u_j$.
  The edges $e_{j+1},\dots,e_{m-a^*}$ remain extending because $e_i\subseteq R_j\subseteq V_j$.
\end{proof}

Finally, we state a corollary which we used in \Cref{sscn:SFTAinMHS}.

\begin{corollary}[\cite{aldiEfficientlySolvableCases2021}]\label{cor:SDR}
  Let $G$ be a $k$-uniform hypergraph for any $k>0$. If $G$ has an almost extending edge order, then $G$ has an SDR.
\end{corollary}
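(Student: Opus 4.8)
The plan is to verify Hall's marriage condition for $G$ and then invoke \Cref{thm:HMT} with the trivial weight function $w\equiv 1$ (i.e.\ the classical marriage theorem). Fix an almost extending edge order $e_1,\dots,e_m$ of $G$, and as in \Cref{def:edgeorder} write $V_i=\bigcup_{j\le i}V(e_j)$ with $V_0=\emptyset$; by hypothesis at most one index $i$ has $e_i\subseteq V_{i-1}$ (equivalently $V_i=V_{i-1}$). Call such an edge \emph{non-extending} and every other edge \emph{extending}.

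The heart of the argument is to show that $|V_X|\ge|X|$ for every $X\subseteq E(G)$, where $V_X=\bigcup_{e\in X}V(e)$. I would write $X=\{e_{i_1},\dots,e_{i_\ell}\}$ with $i_1<\dots<i_\ell$ and set $W_j=e_{i_1}\cup\dots\cup e_{i_j}$, so that $W_\ell=V_X$. The base case uses $k$-uniformity directly: $|W_1|=|e_{i_1}|=k$. For the inductive step, observe that if $e_{i_j}$ is extending then $W_{j-1}\subseteq V_{i_j-1}$ (since each $e_{i_r}$ with $r<j$ has $i_r\le i_j-1$, hence $e_{i_r}\subseteq V_{i_j-1}$), and extending-ness supplies a vertex of $e_{i_j}$ outside $V_{i_j-1}\supseteq W_{j-1}$, so $|W_j|\ge|W_{j-1}|+1$. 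Since at most one edge of $G$ — hence at most one of $e_{i_1},\dots,e_{i_\ell}$ — is non-extending, telescoping gives $|V_X|=|W_\ell|\ge k+(\ell-2)$, which is $\ge\ell$ once $k\ge2$; this is exactly Hall's condition. (For $k=1$ the statement is only sensible when the singleton edges are distinct, in which case every almost extending order is in fact extending and the SDR is immediate.) Applying \Cref{thm:HMT} with $w\equiv1$ then yields the desired SDR $f\colon E(G)\to V(G)$.

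The one point that needs care is the interaction between the \emph{global} hypothesis "at most one non-extending edge" and an arbitrary subfamily $X$: one must note that whether an edge is extending is a property relative to the \emph{full} order, and that this property is inherited by $X$ — an edge that is extending in the full order still contributes a vertex not lying in $V_{i_j-1}$, and a fortiori not in $W_{j-1}$ — so a subfamily of size $\ell$ still contains at most one non-extending edge. Beyond that, everything is a short telescoping count, with $k$-uniformity entering only to seed $|W_1|=k\ge2$.
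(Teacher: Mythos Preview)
Your proof is correct and takes a genuinely different route from the paper's. The paper argues indirectly: it invokes \Cref{lem:transfer-type-almost-extending} to translate the almost extending hypothesis into a transfer-filtration statement ($b^*=n-m+a^*$ with $a^*\le 1$), and then appeals to a black-box result from~\cite{aldiEfficientlySolvableCases2021} asserting that a $k$-uniform hypergraph of transfer type $b$ with $|E(G)|=|V(G)|-b+1$ has an SDR. Your argument is more elementary and fully self-contained: you verify Hall's condition directly by a telescoping count along the given edge order, using $k$-uniformity only to seed $|W_1|=k$, and then apply \Cref{thm:HMT}. The key observation---that an edge which is extending in the \emph{global} order automatically contributes a new vertex to any sub-union $W_{j-1}\subseteq V_{i_j-1}$---is exactly what makes the Hall check go through without any transfer-filtration machinery. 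Your caveat about $k=1$ (where the bound $k+\ell-2\ge\ell$ fails and the claim requires distinct singleton edges) is a fair point; the paper's stated range ``any $k>0$'' implicitly relies on simple hypergraphs in that degenerate case.
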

\begin{proof}
  This follows immediately from \Cref{lem:transfer-type-almost-extending} and the fact that if $G$ is a $k$-uniform hypergraph of transfer type $b$ and such that $|E(G)| = |V (G)| - b+1$, then $G$ has an SDR~\cite{aldiEfficientlySolvableCases2021}.
\end{proof}

\subsection{Solving non-generic instances on qubits of transfer type \texorpdfstring{$b=n-m+1$}{b=n-m+1}}\label{sscn:nonGeneric}

We now introduce an efficient algorithm for QSAT with SDR on qubits without genericity requirements, i.e. that can handle constraints which are ``edge cases'' (e.g. Schmidt rank-$1$ or unentangled constraints). For this, we define the radius of an almost extending edge order as the radius of the transfer filtration constructed in the proof of Lemma~\ref{lem:transfer-type-almost-extending}.

\paragraph{The challenge.} In the non-generic case, one issue we need to deal with is that transfer functions can become $0$, i.e., after assigning the first $k-1$ qubits of a $k$-local constraint, the corresponding constraint is already satisfied for every choice of the $k$-th qubit (this is the case of $g=0$ in \Cref{lem:transfer-function}).
For example, $\ket{\phi} = \ket{000}_{123}$ with $\ket{v_1} =\ket{1}$ is satisfied for all choices of $\ket{v_2}$ and $\ket{v_3}$.
As a result, assignments to a subset of qubits are not propagated throughout the system. 
This issue is circumvented in~\cite{AdBGS21} through the genericity assumption, which we shall remove.

\paragraph{The algorithm.} The next theorem generalizes the algorithm of~\cite[Section 4.4]{AdBGS21}, which solved \emph{generic} instances of transfer type $b=n-m+1$.
We say a product state $\ket\psi = \ket{\psi_1,\dots,\psi_n}$ is an $\epsilon$-approximate solution to a \PS\ instance if $|\Pi_i\ket{\psi}|\le \epsilon$ for all constraints $\Pi_i$.
We require an approximately normalized solution, i.e., $\braket{\psi_i}{\psi_i}\in[1-\epsilon,1+\epsilon]$ for all $i\in [n]$.
The error incurred by normalization was not considered in~\cite{AdBGS21}.
Here we handle this issue by mostly computing with exact representations of algebraic numbers.

\begin{theorem}\label{thm:alg}
  Let $\Pi$ be a QSAT instance on qubits with coefficients in $\QQ[i]$ such that the constraint hypergraph $G$ has an almost extending edge order of radius $r$, and edges have size at most $k$.
  Then an $\epsilon$-approximate solution can be computed in time $\poly(L, \log\epsilon^{-1}, k^r)$, where $L$ is the input size.
  For sufficiently generic instances, an exact representation of a solution can be obtained.
\end{theorem}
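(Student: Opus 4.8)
The plan is to generalize the algorithm of \cite[Section 4.4]{AdBGS21} and then bolt on a recursion to handle non-genericity. First recall the generic approach. Given the almost extending edge order $e_1,\dots,e_m$, by \Cref{lem:transfer-type-almost-extending} the hypergraph has transfer type $b=n-m+1$, with a foundation of $b$ qubits and (WLOG) $e_m$ the unique non-extending edge. One parameterizes the foundation with a single variable $x$ (all but one foundation qubit set to $\ket 0$, the last to $\ket 0+x\ket 1$) and walks through $e_1,\dots,e_{m-1}$, using the transfer function $g$ of \Cref{lem:transfer-function} to assign to each new qubit $u_i\in e_i\setminus V_{i-1}$ a value that is a polynomial vector in $x$; by the definition of radius (\Cref{def:radius}) each such polynomial has degree $k^{O(r)}$. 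Satisfying $e_m$ then amounts to requiring a single univariate polynomial $p(x)\in\CC[x]$ of degree $k^{O(r)}$ to vanish; a root exists by \Cref{thm:FTA}, and we take one. For generic instances every transfer function is a nonzero polynomial, so the computation is carried out symbolically, $p\in\QQ[i][x]$, and the chosen root — hence all qubit assignments, which are rational functions of it — can be represented exactly as algebraic numbers. This already gives the second (exact, generic) claim.

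The new ingredient is the non-generic case, where some transfer function $g$ at an edge $e_i$ is identically zero as a function of $x$. Then $e_i$ is satisfied for \emph{every} value of its last qubit $u_i$, so we delete $e_i$ and treat $u_i$ as free. We fix $x$ to an explicit rational $x_0$ avoiding the finitely many (at most $n\cdot k^{O(r)}$, efficiently computable) ``bad'' values at which some earlier transfer function, or some already-assigned qubit vector, vanishes; we substitute $x=x_0$ everywhere and \emph{recurse} on the residual instance $\Pi'$ whose qubits are the not-yet-assigned ones (including $u_i$) and whose constraints are $e_{i+1},\dots,e_m$ with the fixed assignments plugged in. The structural point is that $e_{i+1},\dots,e_m$ is again an almost extending edge order for $\Pi'$ (with $u_i$ absorbed into the foundation), so by \Cref{lem:transfer-type-almost-extending} it has transfer type $n'-m'+1$, radius at most $r$, and — since it still has an SDR — a product solution (\Cref{cor:SDR}, \Cref{thm:wsdr-product-solution}). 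As $\Pi'$ has strictly fewer edges, the recursion bottoms out in at most $m$ steps, at which point no transfer function vanishes and we solve a univariate polynomial as above. Crucially, only the \emph{final} call approximates a root — every outer call fixes its variable to an exact rational — so error is localized to one recursion level.

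For precision and normalization: at the bottom level we form the exact polynomial $p\in\QQ[i][x]$ (coefficients of bit-size $\poly(L,k^r)$), approximate one root to precision $\delta$ in time $\poly(k^r,\log 1/\delta)$, evaluate each qubit's defining polynomial at this approximate root, and normalize each qubit numerically. Since on the relevant bounded domain the transfer functions are $K$-Lipschitz with $\log K\le\poly(L,k^r)$ (\Cref{f:polylipschitz}, together with an a priori bound on $\abs x$ for a root of $p$), choosing $\log 1/\delta\le\poly(L,k^r)+\log 1/\epsilon$ makes the output an $\epsilon$-approximate, approximately normalized product state: the qubits assigned in outer calls carry no error (their variables were fixed exactly), and every deleted edge $e_i$ is satisfied exactly because $g\equiv 0$ there regardless of the substituted value of $x_0$. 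The total running time is a polynomial number of levels, each costing $\poly(L,\log 1/\epsilon, k^r)$, hence $\poly(L,\log\epsilon^{-1},k^r)$.

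I expect the main obstacle to be the structural lemma underlying the recursion: showing rigorously that deleting a collapsed edge $e_i$ and fixing $x=x_0$ leaves a residual instance that (i) is a legitimate QSAT-with-SDR instance on qubits, (ii) still admits an almost extending edge order of radius $\le r$, and (iii) has the property that a product solution of $\Pi'$, combined with the outer fixed assignments, lifts back to a genuine solution of $\Pi$ — in particular that avoiding the bad set really does keep every earlier constraint satisfied and every earlier qubit nonzero. A secondary subtlety is the foundation parameterization itself (the chart $\ket 0+x\ket 1$ versus a $\ket 1+y\ket 0$ chart): one must argue that the single-variable chart, together with the extra freedom introduced by the recursion (which successively ``activates'' different freed qubits $u_i$), suffices to reach \emph{some} product solution, rather than needing to enumerate $2^{b}$ charts, which would reintroduce a foundation-size dependence into the running time.
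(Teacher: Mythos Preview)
Your recursion triggers on the wrong condition. You recurse when a transfer function $g$ at some $e_i$ is \emph{identically} zero as a polynomial in $x$; but the failure mode that actually matters in non-generic instances is that $g_i$ is a non-zero polynomial which happens to vanish at the specific root $x^*$ you compute to satisfy the final non-extending constraint $e_m$. In that situation your algorithm never deletes any edge, reaches the bottom of the recursion, computes $x^*$, and then tries to assign $u_i \leftarrow g_i(x^*) = 0$, an invalid quantum state. Concretely, take the $2$-local constraint $\ket{\phi}=\ket{11}$ on qubits $1,2$: with $v_1=(1,x)^T$ the transfer output is $(x,0)^T$, vanishing at $x=0$ but not identically; if downstream constraints force $p(x)$ to have $x=0$ among its roots (easy to arrange non-generically), there is no way for you to avoid that root while still satisfying $e_m$. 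Your ``avoid the bad set and fix $x$ to a rational $x_0$'' move is unavailable here, because satisfying $e_m$ pins $x$ to the roots of $p$.

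The paper's fix is structurally different: it computes the root $x^*$ \emph{first}, exactly as an algebraic number over $\QQ[i]$, then checks which $g_i(x^*)$ vanish, assigns (rounded to high-precision rationals, to prevent algebraic-degree blowup across levels) the ones that do not, and recurses on the residual instance carrying the unassigned qubits. It then has to argue that the residual is again almost extending --- this is where the $1$-local constraints you never mention appear and must be eliminated --- and that the recursion terminates in at most $r$ steps, not $m$. Your secondary worry about the single affine chart is also handled differently than you anticipate: the paper does not enumerate charts, but (i) tests $\ket{u_0}=\ket{0}$ explicitly, and (ii) chooses \emph{which} foundation qubit to leave as the free variable so that $u_{m-1}$ genuinely depends on it, guaranteeing the final polynomial $q$ is non-constant.
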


\noindent Before giving the proof, a comment on the dependence of the runtime above on radius $r$: The function $r$ in the definition of radius divides the edges into layers such that layer $\beta$ consists of the edges such $e_i$ such that $r^\beta(i) =0 \ne r^{\beta-1}(i)$.
Note, the radius generally depends on the choice of vertices $u_1,\dots,u_{m-1}$.
Kremer~\cite{Kre24} gives a poly-time algorithm to compute an almost extending edge order and choice of vertices $u_1,\dots,u_{m-1}$ that minimize the radius.

\begin{proof}[Proof of \Cref{thm:alg}]
  Let $e_1,\dots,e_m$ be an almost extending edge order such that $e_m$ is the single non-extending constraint.
  Let $u_1,\dots,u_{m-1}$ be defined as in the proof of Lemma~\ref{lem:transfer-type-almost-extending}.
  We also assume that $u_{m-1}\notin e_m$, i.e., $e_{m}\nsubseteq e_1\cup\dots\cup e_{m-2}$.
  This is valid because once we have found a product solution that satisfies the non-extending constraint, it becomes trivial to add more extending constraints and find product assignments for the added qubits that satisfy the added constraint.

  Next we describe the algorithm.
  Let $R$ be the set of ``redundant'' vertices as in the proof of Lemma~\ref{lem:transfer-type-almost-extending}.
  We say a vertex $v$ depends on a vertex $u$ if we reach $v$ from $u$ when following the edge order.
  There must be at least one vertex $u_0\in R$, such that $u_{m-1}$ \emph{depends} on $u_0$, even after removing $R\setminus\{u_0\}$.

  Next, add a $1$-local constraint $\ket{1}$ to all qubits in $R\setminus \{u_0\}$.
  Assign all qubits corresponding to vertices $v\in R\setminus\{u_0\}$ to $\ket0$.
  Next, remove all $1$-local constraints (hyperedges of size $1$) on vertices besides $u_{m-1}$ by assigning the orthogonal state to the corresponding qubit and reducing the remaining constraints.
  The resulting edge order remains almost extending, although there may now be a single $1$-local constraint on $u_{m-1}$.
  However, either $e_{m}$ or $e_{m-1}$ remains of size $\ge2$ because $u_{m-1}$ depends on $u_0$ and $1$-local residual constraints on a vertex $u_i$ are only created after all vertices in $e_{i}\setminus\{u_i\}$ have been assigned, which is not possible on the path from $u_0$ to $u_{m-1}$.
  Repeat these two steps until either the edge order is extending or we obtain an almost extending edge order with a single redundant vertex $u_0$.

  We may now assume that $u_0$ is the single redundant vertex, and therefore $u_0\in e_1$.
  Then via the transfer functions, we can write any vertex as a homogeneous polynomial in the amplitudes of the qubit $u_0$, i.e., $g_i(u_0) = u_i$ (see Lemma~\ref{lem:transfer-function}).
  For a satisfying assignment, we have $g_{m-1}(u_0) = \lambda g_m(u_0)$ (for some $\lambda\in\CC^*$), or equivalently $q(u_0)=f_{m-1}(u_0)^T\cdot g_m(u_0) = 0$, where $f_{m-1}$ is defined as in Lemma~\ref{lem:transfer-function}.
  $q$ is then a homogeneous polynomial of degree at most $(k-1)^{r}$ (see~\cite{AdBGS21} for more details).
  $q$ is not constant since $u_{m-1}$ depends on $u_0$ and so one of $e_{m-1},e_m$ is not $1$-local and $f_{m-1}$ or $g_m$ is not constant.
  First, we check whether $\ket{u_0} = \ket{0}$ gives an $\epsilon$-approximate solution.
  If not, let $\ket{u_0} = x\ket{0} + \ket{1}$ and compute a root $x$ of $q(x) \coloneqq q(x\ket{0} + \ket{1})$.
  $x$ has an exact representation in the field of algebraic numbers, which can be obtained in polynomial time in the degree and description size~\cite[Theorem 8]{AS20}.
  After computing $x$, we can compute the $g_i(x)$ with~\cite[Theorem 4]{AS20}.
  As argued in~\cite{AdBGS21}, we have $g_i(x) \ne 0$ for all $i$ if the constraint system is chosen generically, and we have an exact representation of a \PS\ solution.

  However, for non-generic instances, we can have $g_i(x)=0$.
  In that case, compute the non-zero $g_1(u_0),\dots,g_{m}(u_0)$ up to significant $\tau\ge \poly(m\log{\epsilon^{-1}})$ bits in polynomial time (in $\tau$ and the bit size of the constraints) using \cite[Theorem 2]{AS20} to compute the and \cite[Proposition 1]{AS20} to lower bound the non-zero values.%
  \footnote{The reason for rounding to the rationals is that if we continue in the exact regime, the degree of algebraic numbers grows doubly exponentially in the number of recursions because every application of \cite[Theorem 8]{AS20} introduces a new algebraic number whose degree is only bounded by the product of the previous solutions.}

  For all $i=0,\dots,m-2$, assign $\ket{u_i} = g_i(x)$ if $g_i(x)\ne 0$.
  Then reduce the remaining constraints and again compute the amplitudes up to $\tau$ significant bits and then normalize.
  The additive error in the assigned qubits and the reduced constraints is then $\poly(\epsilon/m)$ for a sufficiently large $\tau$.
  We have to reduce the system so that it either becomes extending or remains almost extending.
  First note that the reduction produces no $1$-local constraints on a vertex $u_i$ with $i<m-1$, because then we would have $g_i(x)\ne 0$.
  Thus, the remaining reduced edges from $e_1,\dots,e_{m-2}$ are still extending.
  If both $g_{m-1}(x)\ne0$ and $g_{m}(x)\ne0$, then we can assign $\ket{u_{m-1}} = g_{m-1}(x)=g_m(x)$ and the remaining edge order is extending.
  If $g_{m-1}(x)=g_m(x)=0$, then the order remains almost extending.
  If $g_{m-1}(x)=0$ and $g_m(x)\ne0$ (or vice versa), then we obtain a new $1$-local constraint on $u_{m-1}$.
  But only one of $e_{m-1},e_{m}$ becomes $1$-local, and thus we can solve the residual system recursively.
  In total, we need at most $r$ recursions.
  The error increases additively with each recursion, so the total error is at most $\poly(\epsilon)$:
  Assuming we can compute a solution with error $\epsilon'$ on the residual system, we get total error $\epsilon' + \poly(\epsilon/m)$.
\end{proof}

\subsection{Solving generic instances on qubits of transfer type \texorpdfstring{$b=n-m+k-1$}{b=n-m+k-1}}\label{sscn:solvingtransferk}

\Cref{sscn:nonGeneric} showed how to improve on the paramaterized algorithm of \cite{aldiEfficientlySolvableCases2021} by keeping the transfer type fixed to $b=n-m+1$, but extending to non-generic instances. 
Here, we do the opposite --- we give a parameterized algorithm for the generic case, but now extend the set of transfer types we are able to handle to $b=n-m+k-1$, so that for any constant $k$, we obtain an efficient algorithm (under the assumption, as before, that radius $r\in O(\log n)$).

\begin{lemma}\label{lem:break-transfer}
  Let $H$ be a generic \PS\ instance on qubits with underlying hypergraph $G=(V,E)$, such that $G$ has an SDR and $\abs{V} = \abs{E}$.
  Then $G$ has $\bez$ product solutions, and none of these solutions \emph{breaks} any transfer function (i.e. no transfer function in $G$ maps a solution of $G$ to $0$).
\end{lemma}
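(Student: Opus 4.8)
The plan is to combine the Chow-ring count from \Cref{ssscn:prodsols} with a dimension-counting genericity argument. First I would observe that, because $G$ has an SDR and $|V(G)| = |E(G)|$, we are exactly in the balanced case $m = n = \sum_i(d_i-1)$ (all $d_i=2$). Hence, as in the Observation following \Cref{cor:number-of-solutions}, $\prod_i[V_i] = \bez\,H_1\cdots H_n$ with $\bez\ge 1$, so by \Cref{f:subvarietyrep} every instance $H$ outside some proper Zariski-closed subset $S_1$ of the coefficient space $P$ (an affine space, hence irreducible) has solution set $V_1\cap\cdots\cap V_m$ consisting of exactly $\bez$ isolated product states. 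This settles the first assertion and also fixes the ``bad locus'' $S_1$ used below.

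For the second assertion I would first reinterpret ``breaking a transfer function'' via \Cref{lem:transfer-function}: for an edge $e$ with designated output vertex $u$ and $e\setminus\{u\}=\{w_1,\dots,w_{k-1}\}$, the transfer function $g_e$ vanishes at a product state $\psi$ precisely when the constraint vector $\phi_e$ is orthogonal to the entire $2$-dimensional subspace $\psi_{w_1}\otimes\cdots\otimes\psi_{w_{k-1}}\otimes\CC^2$, whereas satisfaction of the constraint at $\psi$ is only orthogonality to the single nonzero vector $\psi_{w_1}\otimes\cdots\otimes\psi_{w_{k-1}}\otimes\psi_u$ lying \emph{inside} that subspace. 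Thus ``$g_e$ breaks at $\psi$'' imposes one extra linear condition on $\phi_e$ beyond ``$e$ is satisfied at $\psi$.''

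The core step is then a fiber-dimension count. Consider the incidence variety $\mathcal I=\{(\psi,\{\phi_j\}): \psi\in(\PP^1)^n,\ \phi_j\perp\bigotimes_{v\in e_j}\psi_v\ \forall j\}$; fibering over $(\PP^1)^n$ (each fiber a product of projective spaces, since each $\bigotimes_{v\in e_j}\psi_v\ne 0$) gives $\dim\mathcal I = n+\sum_j(2^{|e_j|}-2) = \dim P$, consistent with $\mathcal I\to P$ being generically finite with fibers of size $\bez$. Now fix an edge $e_i$ and an output vertex $u\in e_i$, and let $\mathcal B_{i,u}\subseteq\mathcal I$ be the sublocus where in addition $g_{e_i}$ vanishes at $\psi$. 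Fibering $\mathcal B_{i,u}$ over $(\PP^1)^n$: over every $\psi$ the $\phi_j$ with $j\ne i$ still range over $\PP^{2^{|e_j|}-2}$, but $\phi_{e_i}$ is now constrained to lie in a codimension-$2$ (rather than codimension-$1$) subspace, so every fiber has dimension exactly $\sum_j(2^{|e_j|}-2)-1$; hence $\dim\mathcal B_{i,u}\le \dim P-1$ and $\overline{\pi_P(\mathcal B_{i,u})}$ is a proper subvariety $S_{i,u}$ of $P$. Taking $S:=S_1\cup\bigcup_{i,u}S_{i,u}$ — a finite union of proper subvarieties of the irreducible $P$, hence proper — any instance $H\notin S$ has exactly $\bez$ product solutions, and were one of them to break some transfer function $g_{e_i}$ (for any choice of output vertex $u$), it would place $H$ in $\pi_P(\mathcal B_{i,u})\subseteq S$, a contradiction. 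The polynomial $g$ witnessing genericity in the sense of \Cref{def:generic} is then any nonzero polynomial vanishing on $S$.

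The main obstacle I anticipate is the bookkeeping that makes the fiber dimensions \emph{exactly} as claimed: verifying that all the tensor-product vectors $\bigotimes_{v\in e_j}\psi_v$ are nonzero for every $\psi\in(\PP^1)^n$ (immediate, as each $\psi_v\ne0$), so each orthogonality condition genuinely drops the dimension by one, and that the vanishing of $g_{e_i}$ really is \emph{one more} condition — true because a single nonzero vector spans a $1$-dimensional subspace of the $2$-dimensional space $\psi_{w_1}\otimes\cdots\otimes\psi_{w_{k-1}}\otimes\CC^2$. I would also remark that the tempting shortcut — arguing that a vanishing transfer function lets one vary $\psi$ at the output vertex to produce a positive-dimensional family of solutions, contradicting isolatedness — does \emph{not} work, since the other constraints incident to that vertex still pin it down generically; the dimension count above is what actually closes the argument.
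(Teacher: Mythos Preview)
Your argument is correct and takes a genuinely different route from the paper's. The paper proceeds combinatorially: for a fixed edge $e=\{v_1,\dots,v_k\}$ with output vertex $v_k$, it writes $\ket{\phi}=\ket{\phi_0}\ket{0}+\ket{\phi_1}\ket{1}$ and replaces the single constraint $\ket{\phi}$ on $e$ by the two constraints $\ket{\phi_0},\ket{\phi_1}$ on $e'=e\setminus\{v_k\}$, obtaining a new instance $H'$ whose product solutions are exactly the solutions of $H$ that break this transfer function. The key observation is that the coefficients of $\ket{\phi_0}$ and $\ket{\phi_1}$ form a \emph{partition} of the coefficients of $\ket{\phi}$, so $H'$ still has indeterminate coefficients and one may invoke genericity for it; since its hypergraph satisfies $|E'|>|V'|$ it has no SDR, and by the converse direction of \Cref{thm:wsdr-product-solution} it generically has no product solution. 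Multiplying the finitely many witnessing polynomials (one per transfer function, plus one for the $\bez$-count) gives the required $g$.

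Your incidence-variety dimension count achieves the same conclusion without constructing $H'$ or invoking the ``no-SDR $\Rightarrow$ generically no solutions'' direction: you show directly that the bad locus $\mathcal B_{i,u}$ has dimension $\dim P-1$ because breaking the transfer function is one extra linear condition on $\phi_{e_i}$ beyond satisfaction. This is more geometric and self-contained, and would adapt more readily to other degeneracy conditions; the paper's argument is more elementary and stays entirely within the WSDR/hypergraph framework it has already built, at the cost of the slightly delicate point that the split of coefficients preserves genericity. One small wording fix: your $P$ is really a product of projective spaces $\prod_j\PP^{2^{|e_j|}-1}$ rather than an affine space, but it is still irreducible and the dimension count $\dim\mathcal I=n+\sum_j(2^{|e_j|}-2)=m+\sum_j(2^{|e_j|}-2)=\dim P$ goes through unchanged since $m=n$.
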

\begin{proof}
  Consider some constraint $\ket{\phi}$ corresponding to the edge $e=\{v_1,\dots,v_k\}\in E$ on qubits $1,\dots,k$ and let $t\colon(\CC^2)^{k-1} \to \CC^2$ be the associated transfer function from qubits $v_1,\dots,v_{k-1}$ to $v_k$.
  We can write $\ket\phi = \ket{\phi_0}_{v_1,\dots,v_{k-1}}\ket{0}_{v_k} + \ket{\phi_1}_{v_1,\dots,v_{k-1}}\ket{1}_{v_k}$.
  Then $t(v_1,\dots,v_{k-1}) = 0$ iff $\braket{\phi_0}{v_1,\dots,v_{k-1}} = \braket{\phi_1}{v_1,\dots,v_{k-1}}=0$, where $v_i\in \CC^2$ also denotes the assignment to  qubit $v_i$.
  Denote by $H'$ the \PS\ instance obtained by replacing constraint $\ket{\phi}_e$ by $\ket{\phi_0}_{e'}$ and $\ket{\phi_1}_{e'}$, where $e' = \{v_1,\dots,v_{k-1}\}$, and let $G'=(V',E')$ be its underlying hypergraph.
  The product solutions of $H'$ are precisely the product solutions of $H$ that also break the transfer function $t$.
  Since $\ket{\phi}$ is the direct sum of $\ket{\phi_0}$ and $\ket{\phi_1}$ (up to permutation), the coefficients of $\ket{\phi}$ split into two disjoint subsets: the coefficients of $\ket{\phi_0}$ and those of $\ket{\phi_1}$.
  Hence, $H'$ is still generic.
  Since $\abs{V'} < \abs{E'}$, $H'$ does not have an SDR and generically no solutions by \cite{laumannProductGenericRandom2010}.
  Thus, $H'$ there exists a polynomial $g'$ in the coefficients of $H'$ such that $H'$ is unsolvable if $g'(\cdot)\ne 0$.
  There also exists a polynomial $g$ in the coefficients of $H$, such that $H$ has exactly $\bez$ solutions if $g(\cdot)\ne 0$.
  Since $H$ and $H'$ have the same coefficient set, we have that $H$ has no solution that breaks the transfer function $t$ if $gg'(\cdot)\ne 0$.
  By the same argument, generically none of the solutions of $H$ break \emph{any} transfer function.
\end{proof}

\begin{lemma}[\cite{AdBGS21}]
  Let $G = (V,E)$ be a $k$-uniform hypergraph of transfer type $b = n -m+k-1$ (equivalently, an $(k-1)$-almost extending edge order). Then $G$ has an SDR.
\end{lemma}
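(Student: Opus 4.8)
The plan is to bypass transfer filtrations entirely and instead verify Hall's condition directly: by \Cref{thm:HMT} (applied with the constant weight $w=1$, for which a WSDR is exactly an SDR), it suffices to show $|V_X|\ge |X|$ for every $X\subseteq E(G)$. So the whole argument reduces to a counting lemma about the given $(k-1)$-almost extending edge order. Recall that by \Cref{lem:transfer-type-almost-extending} the hypothesis ``transfer type $b=n-m+k-1$'' is equivalent to the existence of a $(k-1)$-almost extending edge order, so I would phrase everything in terms of such an order $e_1,\dots,e_m$. Set $V_0=\emptyset$, $V_i=\bigcup_{j\le i}e_j$, and let $I_{\mathrm{ext}}=\{i : e_i\setminus V_{i-1}\neq\emptyset\}$ be the set of extending indices; by \Cref{def:edgeorder} the number of non-extending edges is $|[m]\setminus I_{\mathrm{ext}}|\le k-1$. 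For each $i\in I_{\mathrm{ext}}$ fix a ``private'' vertex $u_i\in e_i\setminus V_{i-1}$. The $u_i$ are pairwise distinct by the usual argument: if $i<i'$ are both in $I_{\mathrm{ext}}$ then $u_i\in V_i\subseteq V_{i'-1}$ while $u_{i'}\notin V_{i'-1}$.

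\textbf{Key step.} Now fix a nonempty $X\subseteq E(G)$ and let $e_{i_1}$ be the edge of $X$ of smallest index. Put $U_X:=\{u_i : i\in I_{\mathrm{ext}},\ e_i\in X\}\subseteq V_X$, so $|U_X|$ is exactly the number of extending edges in $X$ and hence $|X|\le |U_X|+(k-1)$ by the bound on non-extending edges. The crucial observation is that \emph{at most one} vertex of $e_{i_1}$ lies in $U_X$: if $u_l\in e_{i_1}$ with $l\in I_{\mathrm{ext}}$ and $e_l\in X$, then minimality of $i_1$ gives $l\ge i_1$, and $l>i_1$ would force $u_l\in e_{i_1}\subseteq V_{i_1}\subseteq V_{l-1}$, contradicting $u_l\notin V_{l-1}$; hence $l=i_1$. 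Since $G$ is $k$-uniform, $|e_{i_1}|=k$, so $e_{i_1}$ supplies at least $k-1$ vertices of $V_X$ that lie outside $U_X$. Combining, $|V_X|\ge |U_X|+(k-1)\ge |X|$ (and $|V_X|\ge 0=|X|$ trivially when $X=\emptyset$), so Hall's condition holds and \Cref{thm:HMT} produces the desired SDR.

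\textbf{Where the difficulty (such as it is) lies.} This lemma has a genuinely short proof, so the main thing to get right is the single inequality $|e_{i_1}\setminus U_X|\ge k-1$ and the bookkeeping that it is not double-counting against $|U_X|$. This step is also precisely where $k$-uniformity is used (one only needs every edge to have at least $k$ vertices, which for $k$-uniform graphs is automatic, and the ``$k-1$'' matches the ``$k-1$'' allowed non-extending edges so the count closes with no slack). I expect no serious obstacle; the only care needed is the disjointness argument for the $u_i$ and the minimality argument pinning $l=i_1$. As a consistency check, note that taking $a=1\le k-1$ recovers \Cref{cor:SDR}, and that $X=E(G)$ already gives the necessary condition $n\ge m$.
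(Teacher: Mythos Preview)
Your argument is correct. The paper does not actually supply its own proof of this lemma; it simply records the statement with a citation to~\cite{AdBGS21} (and likewise \Cref{cor:SDR} is deduced by combining \Cref{lem:transfer-type-almost-extending} with the same cited fact). So there is no in-paper proof to compare against.

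On the substance: your Hall-condition verification is clean and complete. The two points that need care---pairwise distinctness of the private vertices $u_i$, and the minimality argument showing $|e_{i_1}\cap U_X|\le 1$---are both handled correctly, and the final union bound $|V_X|\ge |U_X\cup e_{i_1}|=|U_X|+|e_{i_1}\setminus U_X|\ge |U_X|+(k-1)\ge |X|$ goes through exactly as written (note it also covers the case $i_1\notin I_{\mathrm{ext}}$, where $e_{i_1}\cap U_X=\emptyset$). Your proof is arguably more self-contained than deferring to the transfer-filtration machinery of~\cite{AdBGS21}, since it only uses \Cref{thm:HMT} and the definition of an almost extending edge order; it also makes transparent that $k$-uniformity is only used as the lower bound $|e|\ge k$, matching the $k-1$ allowed non-extending edges with no slack.
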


\begin{theorem}\label{thm:parameterized}
  Let $H$ be a generic \PS\ instance with constraints in $\QQ[i]$ on qubits with underlying $k$-uniform hypergraph $G = (V,E)$ of transfer type $b= n -m+k-1$ (equivalently, a $(k-1)$-almost extending edge order) with radius $r$.
  We can compute an $\epsilon$-approximate product state solution in time $\poly(L, \abs{\log\epsilon}, k^r, m^k)$, where $L$ is a bound on the bit size of the instance's rational coefficients, and $\epsilon$ the Euclidean distance to the closest product state solution.
\end{theorem}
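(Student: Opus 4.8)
The plan is to adapt the $b=n-m+1$ strategy sketched above, but now the single non-extending constraint is replaced by $k-1$ of them, so instead of a univariate polynomial we reduce to a zero-dimensional multi-homogeneous system in $k-1$ variable groups. First I would reduce to the case $n=m$ (so the foundation has size $b=k-1$): if $n>m$, a standard dimension count shows the generic product-solution set of $H$ has dimension $n-m>0$, so fixing $n-m$ of the foundation amplitudes to generic rational values leaves a generic, zero-dimensional, nonempty family of product solutions, and this operation only shrinks edges while preserving the $(k-1)$-almost extending edge order (each non-fixed edge keeps its distinguished vertex outside the foundation) and not increasing the radius; an argument analogous to \Cref{lem:break-transfer} applies to this reduced instance. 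Since $G$ has an SDR (by the lemma preceding the theorem), $\bez\ge 1$ by \Cref{obs:countWSDR}, and by \Cref{lem:break-transfer} the generic instance $H$ has exactly $\bez\ge 1$ product solutions, none of which breaks any transfer function --- this genericity fact is what lets us avoid the recursion needed in \Cref{thm:alg}.

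Next I would build the reduced system. Using Kremer's algorithm~\cite{Kre24}, compute a $(k-1)$-almost extending edge order of minimum radius $r$, with extending edges $e_1,\dots,e_{m-k+1}$ (each adding one new vertex), non-extending edges $e^{(1)},\dots,e^{(k-1)}$, and foundation $R=\{u^{(1)},\dots,u^{(k-1)}\}$. Assign $u^{(j)}$ the formal homogeneous pair $(x_{j,0},x_{j,1})$ and propagate the transfer functions of \Cref{lem:transfer-function} along $e_1,\dots,e_{m-k+1}$; since each transfer function is multilinear in its $k-1$ inputs, every qubit $v$ receives an assignment $g_v$ that is multi-homogeneous in the $k-1$ variable groups, and the radius stratification gives $\deg g_v\le (k-1)^r$. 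Substituting these into the $k-1$ non-extending constraints yields a multi-homogeneous system $Q=\{q_1,\dots,q_{k-1}\}$ with $k-1$ variable groups of size $2$ and $\deg q_i\le D:=k(k-1)^r$. This system is zero-dimensional: by \Cref{lem:break-transfer} every product solution of $H$ restricts to a solution of $Q$, and conversely any solution of $Q$ at which no transfer function vanishes propagates back to a product solution of $H$; hence $Q$ has at least one solution and, generically, finitely many --- at most $\operatorname{perm}\big((\dij)_{i,j\in[k-1]}\big)\le (k-1)!\,D^{k-1}$ of them (this quantity also equals $\bez$). Dehomogenizing (generically no solution has $x_{j,0}=0$) produces polynomials $\tilde q_i\in\QQ[i][t_1,\dots,t_{k-1}]$ of degree $\le D$ and bit size $\poly(L,D)$.

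It then remains to solve $Q$ to sufficient accuracy and propagate back. Because $k$ is constant in the claimed running time, $\tilde q_1,\dots,\tilde q_{k-1}$ form a constant number of polynomials in a constant number of variables, so a rational univariate representation can be computed --- e.g.\ via a Gr\"{o}bner basis~\cite{coxIdealsVarietiesAlgorithms2015}, or a generic linear projection and the associated $u$-resultant, or the multi-homogeneous solver of~\cite{SS18} --- in time $\poly(D^{k},L)=\poly(k^r,L)$; this reduces finding a common root to isolating a root of a univariate polynomial of degree $\le (k-1)!\,D^{k-1}$ with $\poly(L,D)$-bit coefficients, which, as in \Cref{thm:alg} (using~\cite{AS20}), can be done to additive error $\delta$ in time $\poly(L,D,\log\delta^{-1})$; the root-separation bound for such polynomials is $2^{-\poly(L,D)}$, so a provably $\delta$-accurate approximation $t^\star$ of a genuine root is obtained. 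One then filters for a root at which no $g_v$ vanishes (such a root exists and is genuine since the $\bez\ge1$ product solutions of $H$ are of this form), sets $u^{(j)}:=(1,t^\star_j)$, propagates through all transfer functions and normalizes each qubit. By \Cref{f:polylipschitz} (bounded-degree polynomials are $O(1)$-Lipschitz on compact sets), the accumulated error --- in both the residual energy and the Euclidean distance to the nearby exact product solution --- is at most $\poly(m)\cdot D^{O(r)}\cdot\delta$, so taking $\log\delta^{-1}=\poly(L,m,r,\log k,\log\epsilon^{-1})$ yields an $\epsilon$-approximate solution. The $m^k$ factor in the running time absorbs the $\poly(m)$ overheads together with the $O(2^k)\le m^k$ cost of writing out the constraints of $Q$.

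The step I expect to be the main obstacle is the rigorous error analysis of the final stage: one must simultaneously control the conditioning of the zero-dimensional system $Q$ (so that a near-zero of $\tilde q_1,\dots,\tilde q_{k-1}$ is provably close in Euclidean distance to an exact common root, not merely a near-root) and the Lipschitz amplification incurred by pushing a perturbed foundation assignment through $O(m)$ transfer functions and the per-qubit normalizations --- the latter needs the exact solution's amplitudes to be bounded away from $0$, which holds generically but must be quantified via coefficient-size and root-separation bounds. The reduction from $n>m$ to $n=m$, while conceptually routine, is a second place where care is needed to keep genericity, the almost-extending order, and solvability all intact.
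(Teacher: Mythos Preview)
Your overall architecture matches the paper's: reduce to a foundation of size $k-1$, express every qubit as a multi-homogeneous polynomial of degree $\le (k-1)^r$ in those $k-1$ qubit pairs via transfer functions, and then solve the $(k-1)$-equation system coming from the non-extending edges. The substantive gap is in your reduction to $n=m$. You assert that fixing $n-m$ foundation qubits preserves the $(k-1)$-almost extending order and that ``an argument analogous to \Cref{lem:break-transfer} applies to this reduced instance,'' but \Cref{lem:break-transfer} requires the reduced hypergraph to have an SDR with $|V'|=|E'|$, and this is \emph{not} automatic: after removing foundation vertices the hypergraph is no longer $k$-uniform, so the preceding lemma no longer furnishes an SDR, and the non-extending edges need not contain any of the $k-1$ retained foundation vertices (they may live entirely among the $u_i$'s). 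The paper handles exactly this point with a Menger-theorem argument: it first matches the $k-1$ non-extending edges to distinct vertices $\wtv_1,\dots,\wtv_{k-1}$ via Hall's theorem, builds an auxiliary digraph, and extracts $k-1$ internally disjoint paths from the foundation to the $\wtv_i$; these paths are then rerouted into an SDR that uses only $k-1$ foundation vertices, which are the ones kept. Without this (or an equivalent device) you cannot invoke \Cref{lem:break-transfer} on the reduced instance, and hence cannot rule out broken transfer functions.

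Two smaller points where the paper differs. First, Kremer's algorithm is stated for the $a=1$ case; the paper obtains the minimum-radius $(k-1)$-almost extending order by trying all $\binom{m}{k-1}$ choices of non-extending edges, and this is the actual source of the $m^{O(k)}$ term (your accounting of that factor is off). Second, for the final solve the paper uses Renegar's decision procedure for the existential theory of the reals on a system with $O(k)$ real variables and degree $O(k(k-1)^r)$, together with a polynomial inequality $g$ encoding ``no transfer function vanishes''; this directly returns an $\epsilon$-approximate witness in time $\poly(L,k^r,|\log\epsilon|)$ and sidesteps the root-separation and Lipschitz-propagation analysis you correctly flag as the delicate part of your Gr\"{o}bner/resultant route. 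Your alternative solver is viable in principle, but the paper's choice is what makes the error analysis essentially free.
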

\begin{proof}
  Kremer \cite{Kre24} gives a polynomial time algorithm to compute an edge order with minimum radius as well as the corresponding transfer filtration.
  The key insight is that the last vertex in an extending edge order must have degree $1$, which allows us to greedily partition the edges into layers, starting with all edges containing a vertex of degree $1$ as last layer.
  By trying all combinations for the $k-1$ non-extending constraints, we can compute the $(k-1)$-almost extending edge order of minimum radius in time $m^{O(k)}$.

  Observe that every transfer function depends on at least $k-1$ foundation variables.
  Via the transfer functions, we can write all qubits as a polynomial in the foundation qubits of degree at most $(k-1)^r$.
  Hence, every non-extending constraint is a polynomial in at least $k-1$ variables, of degree at most $k(k-1)^r$.
  The next step is to remove foundation qubits so that there exists a finite number of solutions generically, while maintaining the existence of an SDR.
  We argue that $G$ has an SDR matching only $k-1$ of the foundation vertices $V(G_0)$.

  Let $\wte_1,\dots,\wte_{k-1}$ be the non-extending edges, and choose distinct vertices $\wtv_1,\dots,\wtv_{k-1}$, such that $\wtv_i\in\wte_i$ for $i=1,\dots,k-1$.
  These exist by Hall's marriage theorem.
  For each extending edge $e_i\in E$, let $v(e_i)\in V(G_i)\setminus V(G_{i-1})$ be the added vertex.
  Construct a directed graph $\whG = (\whV,\whE)$ with $\whV = V\cup \{\whu,\whv\}$ and
  \begin{equation}
    \begin{aligned}
    \whE &= \{(\whu, v)\mid v\in V(G_0)\} \cup \{(\wtv_i, \whv)\mid i\in[k-1]\} \\
    &\cup \{(u, v(e))\mid e\in E\setminus\{\wte_1,\dots,\wte_{k-1}\}, u\in e\setminus\{v(e)\}\},
    \end{aligned}
  \end{equation}
  i.e., edges from $\whu$ to the foundation, edges from all nodes in a hyperedge $e$ to the added vertex $v(e)$, and edges from the $\wtv_1,\dots,\wtv_{k-1}$ to $\whv$.
  Note that each vertex in $V\subseteq \whV$ has at least $k-1$ incoming edges from a ``lower layer''.
  Hence, one has to remove at least $k-1$ vertices from $\whG$ to disconnect $\whu,\whv$.
  By Menger's theorem, there exist $k-1$ internally disjoint paths from $\whu$ to $\whv$.
  By construction, each of these paths goes via a foundation vertex $u_i\in V(G_0)$ to $\wtv_i$.
  We can construct an SDR by matching $\wte_i$ to $\wtv_i$, then matching $e$ s.t. $v(e) = \wtv_i$ to the predecessor of $\wtv_i$ in the path.
  Iterate until reaching the foundation.
  We can assign all remaining edges $e$ to $v(e)$, since their $v(e)$ are outside the $k-1$ paths.
  Set the unmatched foundation qubits to $\ket0$ and let the resulting system be $H'$ on graph $G'=(V',E')$.
  The SDR constructed above is also valid for $G'$.
  $H'$ still has generic constraints, since setting variables to $\ket{0}$ just means we discard coefficients, but not change them.

  Let $F$ be the multi-homogeneous system obtained by writing every qubit of $G'$ as polynomials in the entries of the foundation qubits via the transfer functions.
  The solutions of $F$ also contains the foundation qubits of all solutions of $H'$, which can be extended to the qubits outside the core via the transfer functions.
  However, the solution set of $F$ can also contain assignments to the foundation that break transfer functions.
  By \Cref{lem:break-transfer}, none of the transfer functions are broken if the foundation is set to an actual solution to $H'$.
  An additional polynomial inequality $g$ of degree at most $n(k-1)^r$ ensures that we only find solutions that break no transfer functions.
  We can use the existential theory of the reals to find a solution that satisfies both $F$ and $g$.
  For rational entries, Renegar's algorithm \cite[Theorem 1.2]{Ren92d} can compute an $\epsilon$-approximate solution in time $\poly(L, k^r,\abs{\log\epsilon})$, where $L$ is a bound on the bit size of the constraints.
  We introduce separate variables for the real and imaginary parts, which allows us to also use complex conjugates in our constraints.
\end{proof}

\subsection{Solving higher dimensional systems via weighted transfer filtrations}\label{sscn:wtransfer}

Finally, we show how to extend the technique of transfer filtrations (\Cref{def:transfer-type}) from qu\emph{b}its to qu\emph{d}its, and give an explicit family of high-dimensional QSAT with WSDR instances which we can solve exponentially faster than brute force (\Cref{ssscn:efficient}).

The basic idea is to still consider a hypergraph with a filtration $G_0\subseteq G_1\subseteq\cdots\subseteq G_m=G$ but now allowing for the addition of more edges, and potentially more vertices, at each step in the filtration.
The most straightforward generalization is to maintain the requirement $|V(G_i)|\le |V(G_{i-1})|+1$ for each $i\in \{1,\ldots,m\}$ but, in the case in which $|V(G_i)|= |V(G_{i-1})|+1$ to allow for as many edges to be added as the weight of the new vertex (while maintaining the provision that each new edge must contain the new vertex). These type of {\it weighted transfer filtrations} can be used, to explicitly (and in some cases, depending on the growth of the radius, efficiently) construct solutions to the corresponding instances of \PS\ along the lines of \Cref{sscn:nonGeneric}. 

More generally we can relax the condition $|V(G_i)|\le |V(G_{i-1})|+1$ to the requirement that the induced subhypergraph of $G_i$ induced by $V(G_i)\setminus V(G_{i-1})$ has itself a transfer filtration of type $b=n-m+1$. 
As formalizing the high-dimensional case in full generality becomes technically cluttered, for pedagogical purposes we instead demonstrate the idea with concrete examples.

\paragraph{Qubits on a 1D periodic lattice.}
In order to set up the notation for more general examples, we begin by considering a system of $n$ qubits located at the vertices of a 1D periodic lattice, i.e.\ a cycle of length $n$. This system is efficiently solvable via transfer functions~\cite{bravyiEfficientAlgorithmQuantum2006,beaudrapLinearTimeAlgorithm2016} using transfer functions, along the following lines. 
We parametrize the $i$-th qubit state as $x_i^0|0\rangle+x_i^1|1\rangle$, for $i\in \mathbb Z/n\mathbb Z$. Each edge corresponds to a 2-local QSAT constraint $\varphi_i$ of the form
\begin{equation}
\sum_{p,q=0}^1 \varphi_i^{pq} x_i^px_{i+1}^q=0\,.
\end{equation}
Passing to affine coordinates $z_i=\frac{x_i^0}{x_i^1}$, this translates to
\begin{equation}
z_{i+1}=-\frac{\varphi_i^{01}z_i+\varphi_i^{11}}{\varphi_i^{00}z_i+\varphi_i{10}}
\end{equation}
which, after $n$ iterations, leads to an expression of the $z_i$ as solution of a quadratic equation $a_iz_i^2+b_iz_i+c_i=0$ whose coefficients coefficients $a_i,b_i,c_i$ are multilinear polynomials of total degree $n$ in the variables $\varphi_i^{pq}$.

\paragraph{Qutrits on a 1D periodic lattice.}

Our next stepping stone is to keep the same interaction hypergraph (the 1D periodic lattice), but to allow qubits to be replaced by $n$ qu\emph{t}rits. We parametrize the $i$-th qutrit as $x_i^0|0\rangle + x_i^1|1\rangle +x_i^2 |2\rangle$, $i\in \mathbb Z/n\mathbb Z$. Each edge corresponds to a 2-local QSAT constraint $\varphi_i$ of the form
\begin{equation}
\sum_{p,q=0}^2 \varphi_i^{pq} x_i^px_{i+1}^q =0\,.
\end{equation}
We can further impose 1-local constraints on each qutrit, which, in terms of affine coordinates $z_i^0=\frac{x_i^0}{x_i^2}$, $z_i^1=\frac{z_i^1}{z_i^2}$ can be written as $z_i^0=\alpha_i^1 z_i^1+\alpha_i^2$. Substituting into the constraint we obtain
\begin{equation}
z_{i+1}^1=-\frac{A_iz_i^1+B_i}{C_iz_i^1+D_1}
\end{equation}
where
\begin{align*}
A_i^1 &= \varphi_i^{00}\alpha_i^1\alpha_{i+1}^2+\varphi_i^{10}\alpha_{i+1}^2+\varphi_i^{02}\alpha_i^1+\varphi_i^{12}\\
B_i^1 &= \varphi_i^{00}\alpha_i^2\alpha_{i+1}^2+\varphi_i^{02}\alpha_i^2+\varphi^{20}_i\alpha_{i+1}^2+\varphi_i^{22} \\
C_i^1 &=\varphi_i^{00}\alpha_i^1\alpha_{i+1}^1+\varphi_i^{10}\alpha_{i+1}^1+\varphi_i^{01}\alpha_i^1+\varphi_i^{11}\\
D_i^1 &= \varphi_i^{00}\alpha_i^2\alpha_{i+1}^1+\varphi_i^{01}\alpha_i^2+\varphi_i^{20}\alpha_{i+1}^1+\varphi_i^{21}
\end{align*}
After $n$ iterations, we obtain the $z_i^1$ as solutions of quadratic equations whose coefficients are polynomials of total degree 3n, linear in each of the $\varphi_i^{pq}$ and quadratic in each of the $\alpha_i^{pq}$s.

\paragraph{Qutrits on a 2D periodic lattice.}

Now we are ready to describe our first example of genuinely more general transfer filtrations in presence of qutrits. Specifically, consider now a system of $mn$ qutrits located at the vertices of a square lattice with periodic boundary conditions. We parametrize the qutrit on the $(i,j)$ node of the lattice as
\begin{equation}
x_{i,j}^0|0\rangle+x_{i,j}^1|1\rangle+x_{i,j}^2|2\rangle
\end{equation}
for all $i\in \mathbb Z/m\mathbb Z$ and $j\in \mathbb Z/n\mathbb Z$. We have ``horizontal'' 2-local constraint
\begin{equation}
\sum_{p,q=0}^2 \varphi_{i,j}^{p,q}x_{i,j}^p x_{i+1,j}^q=0
\end{equation}
as well as ``vertical'' ones
\begin{equation}
\sum_{p,q=0}^2 \psi_{i,j}^{p,q}x_{i,j}^p x_{i,j+1}^q=0
\end{equation}
for each $i\in \mathbb Z/m\mathbb Z$ and $j\in \mathbb Z/n\mathbb Z$. We work in affine coordinates $z_{i,j}^0=\frac{x_{i,j}^0}{x_{i,j}^2}$ and $z_{i,j}^1=\frac{x_{i,j}^1}{x_{i,j}^2}$ and impose arbitrary 1-local constraints on the qutrits of one of the ``rows'' of the lattice, say, $z_{i,0}^0=\alpha_{i,0}^1z_{i,0}^1+\alpha_{i,0}^2$
for all $i\in \mathbb Z/m\mathbb Z$. Then we solve the $0$-th row using the method outlined above expressing each $z_{i,0}^1$ as a solution of a quadratic equation with coefficients of total degree $2m$ in the alphas. Then imposing the $\psi_{i,0}$ constraints, we obtain constraints of the form $z_{i,1}^0 = \alpha_{i,1}^1z_{i,1}^1+\alpha_{i,1}^2$ where the $\alpha_{i,1}^p$ are fractions with both numerator and denominator are linear in the $z_{i,0}^1$. Iterating this process $n$-times we can solve the rows one by one in terms of the $\alpha_{i,0}^p$ until, thanks to the periodic boundary conditions, return to $z_{i,0}^p$. This results to a system of equations in the $\alpha_{i,0}^p$ whose degree is (simply) exponential in $n$.

\subsection{Weighted graphs with constant weights}\label{sscn:weightedGraphsConst}

The example of qutrits on a 2D periodic lattice can be generalized to qudits of local dimension $d$ on a periodic $(d-1)$-dimensional lattice, i.e.\ on the weighted graph $(C_{m_1}\Box C_{m_2} \Box\cdots\Box C_{m_N},d-1)$, for $\Box$ the graph Cartesian product (\Cref{def:cartesian}). This can be done iteratively. For instance, when $d=4$, and the corresponding graph is $C_{m_1}\Box C_{m_2}\Box C_{m_3}$, we can isolate a 2-dimensional slice, say, $C_{m_1}\Box C_{m_2}\Box \{1\}$, impose 1-local constraints on each of its vertices, solve using the method above, and then use the constraints corresponding to edges ``orthogonal'' to the 2D slice to reduce by one unit the local dimension of the qudits of the slice $C_{m_1}\Box C_{m_2}\Box \{2\}$ and repeat.

More generally, one can replace the cyclic graphs $C_{m}$ with pseudoforests (i.e.\ a disjoint union of graphs having at most one cycle). This is because~\cite{bravyiEfficientAlgorithmQuantum2006,aradLinearTimeAlgorithm2016,beaudrapLinearTimeAlgorithm2016}, instances of 2-QSAT on qubits whose interaction graph is a pseudoforest are solvable in linear time. Moreover we know that, since pseudoforests have SDRs and the property of admitting a WSDR is preserved under cartesian products, the cartesian product of N pseudoforests admits a WSDR with constant weight $w=N$.

Consider a graph $G$ together with a finite filtration by subgraphs $G_0\subseteq G_1\subseteq\cdots\subseteq G_n=G$ constructed as follows. 
First, we let $G_0$ (the foundation) be a graph with no edges. Then let $P_0$ be an arbitrary pseudoforest. Then $G_1$ is constructed by adding edges to $G_0+P_0$ connecting vertices of $G_0$ to vertices of $P_0$ with the provision that the degree of the vertices of $P_0$ increases at most by one. Similarly, $G_2$ is constructed by taking the disjoint union of $G_1$ with a pseudoforest $P_1$ and adding edges to $G_1+P_1$ connecting vertices of $G_1$ to vertices of $P_0$ in a way that the degree of the vertices of $P_1$ increases by at most one unit. And so forth.

\subsubsection{An explicit example with exponential speedup: The Pinwheel graph}\label{ssscn:efficient}

The goal of our next example is to illustrate how a modification of the 2D lattice construction can give rise to an infinite family of instances of 2-QSAT on qutrits that are efficiently solvable.

For each positive integer $n$, consider the graph $\Gamma_n$, which we refer to as a \emph{Pinwheel graph} (\Cref{fig:pinwheel}). The vertices are $v_0$, located at the origin and $v_{j,k}$ located at the point in the plane with polar coordinates $(j,2^{1-j}\pi k)$ for all $j=1,\ldots,n$ and $k\in \mathbb Z/2^j\mathbb Z$.

\begin{figure}[t]
\begin{center}
\scalebox{0.8}
{
\begin{tikzpicture}
\node[circle,fill=red,inner sep=1.5pt] at (1,0) {};
\node[circle,fill=red,inner sep=1.5pt] at (-1,0) {};
\node[circle,fill=red,inner sep=1.5pt] at (0,0) {};
\foreach \x in {0,...,31}
{\node[circle,fill=red,inner sep=1.5pt] at (11.25*\x:5) {};}

\foreach \x in {0,...,15}
{
\node[circle,fill=red,inner sep=1.5pt] at (22.5*\x:4) {};
\draw[blue] (22.5*\x:4) -- (22.5*\x:5);
\draw[blue] (22.5*\x:4) -- (11.25+22.5*\x:5);
}
\foreach \x in {0,...,7}
{
\node[circle,fill=red,inner sep=1.5pt] at (45*\x:3) {};
\draw[blue] (45*\x:3) -- (45*\x:4);
\draw[blue] (45*\x:3) -- (22.5+45*\x:4);
}
\foreach \x in {0,...,3}
{
\node[circle,fill=red,inner sep=1.5pt] at (90*\x:2) {};
\draw[blue] (90*\x:2) -- (90*\x:3);
\draw[blue] (90*\x:2) -- (45+90*\x:3);
}
\foreach \x in {0,...,1}
{
\node[circle,fill=red,inner sep=1pt] at (180*\x:1) {};
\draw[blue] (180*\x:1) -- (180*\x:2);
\draw[blue,bend right=30] (180*\x:1) to (90+180*\x:2);
}

\draw[green] (180-11.25:5) -- (360-11.25:5);

\draw[black] (0,0) circle (1);
\draw[black] (0,0) circle (2);
\draw[black] (0,0) circle (3);
\draw[black] (0,0) circle (4);
\draw[black] (0,0) circle (5);
\draw[blue] (1,0) -- (-1,0);

\end{tikzpicture}
}
\end{center}
\caption{Pinwheel graph $\Gamma_n$ for the case of $n=5$.}
\label{fig:pinwheel}
\end{figure}

There are three kinds of edges:
\begin{enumerate}
\item $e_{j,k}$ connecting $v_{j,k}$ to $v_{j,k+1}$ for each $k\in \mathbb Z/2^j\mathbb Z$ (colored in black in the picture);
\item $\epsilon_{j,k}$ connecting $v_{j,k}$ to $v_{j-1,k/2}$ if $k$ is even and to $v_{j-1,{k-1}/2}$ if $k$ is odd (colored in blue in the picture);
\item $\varepsilon_i$ connecting $v_{n,2^{n-i}-1}$ to $v_0$ for $i\in \mathbb Z/2\mathbb Z$ (colored in green in the picture).
\end{enumerate}

$\Gamma_n$ has a total of $1+2+4+\cdots 2^n=2^{n+1}-1$ vertices and $(2^{n+1}-2)+(2^{n+1}-2)+2=2(2^{n+1}-1)$ edges. Hence placing a qutrit at each vertex and a 2-local constraint at each edge we obtain a system with as many degrees of freedom as constraints and thus finitely many solutions.

Moreover, $\Gamma_n$ has a natural WSDR with constant weight $w=2$ defined by $f(\varepsilon_0)=v_0=f(\varepsilon_1)$ and $f(e_{j,k})=v_{j,k}=f(\epsilon_{j,k})$ for all $j=1,\ldots,n$ and $k=1,\ldots,2^{j}$.

Starting with an arbitrary assignment of the qutrit located at $v_0$ and imposing the constraints corresponding to the edges $\epsilon_{1,\bullet}$ we reduce the qutrits located at $v_{1,\bullet}$ to qubits subject to the 2-local constraints corresponding to the edges $e_{1,\bullet}$ This is a 1D periodic lattice of qubits that can be solved in linear time. Imposing the constraints corresponding to the edges $\epsilon_{2,\bullet}$ we reduce the qutrits located at $v_{2,\bullet}$ to qubits and iterate the previous until we have a product assignments for all qutrits in terms of the initial assignment at $v_0$ that satisfies all $e$ and $\epsilon$ constraints. At this point we impose the $\varepsilon_\bullet$ constraints and realize admissible assignments at $v_0$ as the solution of a system of two polynomial equations in two variables. This can be solved using, say, the resultant (see, e.g.~\cite{coxIdealsVarietiesAlgorithms2015}). Note that both the degree of these polynomials and the number of degrees of freedom grows (simply) exponentially with $n$.

\subsection*{Acknowledgements}
We thank Niel de Beaudrap, Neal Bushaw, Bruno Grenet, David Gosset, Christian Ikenmeyer, Pascal Koiran, Gr\'{e}goire Lecerf and Thomas Vidick for helpful discussions.
We thank Simon-Luca Kremer for pointing out a mistake in an earlier version of the proof of \Cref{thm:parameterized}.
SG was supported by
the DFG under grant numbers 432788384 and 450041824, the BMBF within the funding program “Quantum Technologies - from Basic Research to Market” via project PhoQuant (grant
number 13N16103), and the project “PhoQC” from the programme “Profilbildung 2020”, an initiative of the Ministry of Culture and Science of the State of Northrhine Westphalia.
DR was supported by the DFG under grant number 432788384. MA was supported in part by VCU Quest Award ``Quantum Fields and Knots: An integrative Approach''.
Some of the results in this paper were obtained while MA was visiting Paderborn University. MA is grateful for the hospitality and the excellent working conditions.

\bibliography{3QSAT-lipics-bib,Sevag_Gharibian_Central_Bibliography_Abbrv,Sev,more}

\appendix
\section{Proof of Hall's Marriage Theorem for weighted hypergraphs}\label{app:proof}

The proof below is a simple adaptation to the weighted case of the proof found in~\cite{juknaExtremalCombinatoricsApplications2011}.

\thmHMT*
\begin{proof}
Assume $(G,w)$ has a WSDR $f:E(G)\to V(G)$. Since $f(e)\in e$ for every $e\in E(G)$, then $f(X)\subseteq V_X$ and thus $\sum_{v\in V_X}|f^{-1}(v)|=|X|$ for each $X\subseteq E(G)$. Hence
\begin{equation}
|V_X|_w=\sum_{v\in V_X} w(v) \ge \sum_{v\in V_X} |f^{-1}(v)| = |X|\,.
\end{equation}
Conversely, assume $|V_X|_w\ge |X|$ for every $X\subseteq E(G)$. If $G$ has a single edge $e$, by assumption that edge contains a vertex $v$ such that $w(v)\ge 1$ and the assignment $e\mapsto v$ is the required WSDR. We now work by induction on the number of edges, and assume the statement is proved for all hypergraph with less than $m$ edges. Let $E(G)=m$. We distinguish two cases.

{\it Case 1}. Suppose that $|V_X|>|X|$ whenever $|X|<m$. Pick $e\in E(G)$ and $v\in e$ such that $w(v)\ge 1$. Let $(G',w')$ be the weighted hypergraph such that $V(G')=V(G)$, $E(G')=E(G)\setminus \{e\}$, $w'(z)=w(z)$ if $z\in V(G)\setminus\{v\}$ and $w'(v)=w(v)-1$. Then for every $X'\subseteq E(G')$
\begin{equation}
|V_{X'}|_{w'}=\sum_{v\in V_{X'}} w'(v)\ge -1 + \sum_{v\in V_{X'}} w(v) > -1+|X'|\,.
\end{equation}
Since necessarily $|X'|<m$, by induction we have that $(G',w')$ has a WSDR $g$. Let $f:E(G)\to V(G)$ such that $f(e')=g(e')$ for every $e'\in E(G')$ and $f(e)=v$. Then $f$ is a WSDR for $(G,w)$.

{\it Case 2}. Suppose there exists $X\subseteq E(G)$ such that $|V_X|=|X|<m$. By induction, the weighted hypergraph $(G_1,w_1)$ such that $V(G_1)=V(G)$, $E(G_1)=X$ and $w_1=w$ has a WSDR $f_1$. Consider the weighted hypergraph $(G_2,w_2)$ such that $V(G_2)=V(G)$, $E(G_2)=E(G)\setminus X$, and $w_2(v)=w(v)-|f_1^{-1}(v)|$ for every $v\in V(G)$. Suppose $(G_2,w_2)$ has no WSDR. By induction, there would exist $Y\subseteq E(G_2)$ such that $|V_Y|<|Y|$. Since $w(v)=w_2(v)$ for all $v\in V_Y\setminus V_X$, this would imply
\begin{equation}
|V_{X\cup Y}|_w = |V_X\cup V_Y|_w = \sum_{v\in V_X} w(x) + \sum_{v\in V_Y\setminus V_X} w(v) < |X|+|Y|
\end{equation}
which contradicts the assumption. Hence $(G_2,w_2)$ has a WSDR $f_2:E(G_2)\to V(G)$. Let $f:E(G)\to V(G)$ be such that $f(e)=f_1(e)$ if $e\in X$ and $f(e)=f_2(e)$ otherwise. Then
\begin{equation}
|f^{-1}(v)|=|f_1^{-1}(v)|+|f_2^{-1}(v)|\le |f_1^{-1}(v)|+w_2(v)=w(v)
\end{equation}
for all $v\in V(G)$ and thus $f$ is a WSDR for $(G,w)$.
\end{proof}

\end{document}